\documentclass[twoside]{article}

\usepackage{amsmath,amssymb,algorithm,amsthm,,cite,cases,bm,float,subfigure,graphicx,url}
\usepackage{pbox}
\numberwithin{equation}{section}
\newtheorem{defn}{Definition}[section]
\newtheorem{fact}{Fact}[section]

\newtheorem{propo}{Proposition}[section]
\newtheorem{conj}{Conjecture}[section]
\newtheorem{thm}{Theorem}[section]
\newtheorem{lem}{Lemma}[section]
\newtheorem{form}{Formula}
\newtheorem{cor}{Corollary}[section]

\newtheorem{assume}{Assumption}[section]

\newcommand{\eqname}[1]{ \textcolor{red}{ #1 }}

\usepackage{amsmath,amssymb,algorithm,amsthm,,cite,subfigure,caption,cases,bm,float,subfigure,graphicx,url,color}
\usepackage[hidelinks]{hyperref}
\usepackage{comment}
\usepackage{thm-restate, enumerate }

\definecolor{darkred}{RGB}{150,0,0}
\definecolor{darkgreen}{RGB}{0,150,0}
\definecolor{darkblue}{RGB}{0,0,200}
\hypersetup{colorlinks=true, linkcolor=darkred, citecolor=darkgreen, urlcolor=darkblue}

\newcommand{\Keywords}[1]{\par\noindent
{\small{\em \indent{\bf{Keywords}}\/}: #1}}

\usepackage[sc]{mathpazo} 
\usepackage[T1]{fontenc} 
\linespread{1.05} 
\usepackage{microtype} 

\usepackage[hmarginratio=1:1,top=32mm,columnsep=20pt]{geometry} 
\addtolength{\oddsidemargin}{-0.60in}
\addtolength{\evensidemargin}{-.60in}
\addtolength{\textwidth}{1.2in}

\addtolength{\topmargin}{-.37in}
\addtolength{\textheight}{0.74in}

\usepackage{multicol} 
\usepackage{hyperref} 

\usepackage{lettrine} 
\usepackage{paralist} 

\usepackage{abstract} 

\usepackage{titlesec} 
\titleformat{\section}[block]{\large\scshape\centering}{\thesection.}{1em}{} 
\titleformat{\subsection}[block]{\large}{\thesubsection.}{1em}{} 

\usepackage{fancyhdr} 
\fancyhead{} 
\fancyfoot{} 
\fancyfoot[RO,LE]{\thepage} 


\title{\vspace{-15mm}\fontsize{24pt}{10pt}\selectfont\textbf{The Squared-Error of Generalized LASSO:\\A Precise Analysis}} 

\author{
\large
\textsc{Samet Oymak, Christos Thrampoulidis and Babak Hassibi}\thanks{ 
This work was supported in part by the National Science Foundation under grants CCF-0729203, CNS-0932428 and CIF-1018927, by the Office of Naval Research under the MURI grant N00014-08-1-0747, and by a grant from Qualcomm Inc.}\\[2mm] 
\normalsize Department of Electrical Engineering \\ \normalsize Caltech, Pasadena -- 91125
 \\ 
\normalsize \href{mailto:soymak@caltech.edu, cthrampo@caltech.edu, hassibi@caltech.edu}{soymak@caltech.edu, cthrampo@caltech.edu, hassibi@caltech.edu} 
\vspace{-5mm}
}
\date{}



\begin{document}

\maketitle 

\thispagestyle{fancy} 

\newcommand{\beq}{\begin{equation}}
\newcommand{\eeq}{\end{equation}}
\newcommand{\bea}{\begin{align}}
\newcommand{\eea}{{\end{align}}}

\newcommand{\vp}{\vspace{4pt}}
\newcommand{\M}{\mathbf{M}}
\newcommand{\tC}{\tilde{\mathbf{C}}}
\newcommand{\tD}{\tilde{\mathbf{D}}}
\newcommand{\Fh}{\mathbf{\hat{F}}}
\newcommand{\ft}{\mathbf{\tilde{f}}}
\newcommand{\Z}{\mathbf{Z}}
\newcommand{\W}{\mathbf{W}}
\newcommand{\U}{\mathbf{U}}
\newcommand{\V}{\mathbf{V}}
\newcommand{\F}{\mathbf{F}}
\newcommand{\Delxf}{{\mathbf{D}}_f(\x_0,{\mathbb{R}}^+)}
\newcommand{\Pelxf}{{\mathbf{P}}_f(\x_0,{\mathbb{R}}^+)}
\newcommand{\Celxf}{{\mathbf{C}}_f(\x_0,{\mathbb{R}}^+)}
\newcommand{\Dlf}{{\mathbf{D}}_f(\x_0,\la )}
\newcommand{\order}[1]{\mathcal{O}\left(#1\right)}

\newcommand{\Dmin}{{\mathbf{D}}^*_f}
\newcommand{\Plf}{{\mathbf{P}}_f(\x_0,\la )}
\newcommand{\Df}{{\mathbf{D}}_f(\x_0,}
\newcommand{\Pf}{{\mathbf{P}}_f(\x_0,}
\newcommand{\Clf}{{\mathbf{C}}_f(\x_0,\la )}
\newcommand{\Cf}{{\mathbf{C}}_f(\x_0,}

\newcommand{\DC}{{\mathbf{D}}(\Cc)}
\newcommand{\CC}{{\mathbf{C}}(\Cc)}
\newcommand{\PC}{{\mathbf{P}}(\Cc)}

\newcommand{\lac}{\lambda_{\text{crit}}}
\newcommand{\labb}{\lambda_{\text{best}}}
\newcommand{\taub}{\tau_{\text{best}}}
\newcommand{\lam}{\lambda_{{\max}}}

\newcommand{\DelxC}{{\mathbf{\Delta}}_\Cc(\x_0)}
\newcommand{\Gb}{\mathbf{G}}
\newcommand{\Sb}{\mathbf{S}}
\newcommand{\X}{\mathbf{X}}
\newcommand{\A}{\mathbf{A}}
\newcommand{\Lb}{\mathbf{L}}
\newcommand{\Y}{\mathbf{Y}}

\newcommand{\bu}{\text{Proj}}
\newcommand{\corr}{\text{corr}}
\newcommand{\bi}{{\Pi}}
\newcommand{\w}{\mathbf{w}}
\newcommand{\heta}{\hat{E}}
\newcommand{\geta}{\hat{\zeta}}
\newcommand{\feta}{{E^*}}
\newcommand{\x}{\mathbf{x}}
\newcommand{\ub}{\mathbf{u}}
\newcommand{\g}{\mathbf{g}}
\newcommand{\vb}{\mathbf{v}}
\newcommand{\q}{\mathbf{q}}
\newcommand{\p}{\mathbf{p}}
\newcommand{\bb}{\mathbf{b}}
\newcommand{\e}{\mathbf{e}}
\newcommand{\fh}{{\hat{f}}}
\newcommand{\tb}{\mathbf{t}}
\newcommand{\y}{\mathbf{y}}
\newcommand{\s}{\mathbf{s}}
\newcommand{\z}{\mathbf{z}}
\newcommand{\bseta}{{\boldsymbol{\eta}}}
\newcommand{\Iden}{\mathbf{I}}
\newcommand{\cb}{\mathbf{c}}  
\newcommand{\ab}{\mathbf{a}}
\newcommand{\oneb}{\mathbf{1}}
\newcommand{\h}{\mathbf{h}}
\newcommand{\hx}{{\mathbf{x}}^{*}(\la,\z)}
\newcommand{\fx}{\hat{\mathbf{x}}(\la,\z)}
\newcommand{\hw}{{\mathbf{w}}^{*}(\la,\z)}
\newcommand{\wh}{\hat{\mathbf{w}}}
\newcommand{\fw}{\hat{\mathbf{w}}(\la,\z)}
\newcommand{\ffw}{\hat{\mathbf{w}_r}(\la,\z)}
\newcommand{\xn}{{\mathbf{x}}_0}

\newcommand{\tih}{\tilde{\mathbf{h}}}
\newcommand{\tz}{\tilde{\z}}
\newcommand{\tw}{\tilde{\w}}
\newcommand{\bT}{{\bar{T}}}
\newcommand{\bP}{{\bar{P}_\w}}
\newcommand{\wP}{{{P}_\w}}
\newcommand{\bS}{{\bar{S}}}

\newcommand{\Sc}{{\mathcal{S}}}
\newcommand{\map}{{\text{map}}}
\newcommand{\call}{{\text{calib}}}
\newcommand{\Rca}{{\mathcal{R}}}
\newcommand{\Rco}{{\mathcal{R}}_{\text{ON}}}
\newcommand{\Rcf}{{\mathcal{R}}_{\text{OFF}}}
\newcommand{\Rci}{{\mathcal{R}}_{\infty}}
\newcommand{\Rr}{\mathbf{\rho}}
\newcommand{\Bc}{{\mathcal{B}}}
\newcommand{\Yc}{\mathcal{Y}}
\newcommand{\Dc}{\mathcal{D}}
\newcommand{\Lc}{\hat{\mathcal{L}}}
\newcommand{\Lco}{\mathcal{{L}}}

\newcommand{\wo}{\mathbf{\w}^*}

\newcommand{\Fco}{\mathcal{F}}
\newcommand{\Uco}{\mathcal{{U}}}
\newcommand{\fone}{L}
\newcommand{\Ldev}{\mathcal{L}_{dev}}
\newcommand{\Xc}{\mathcal{X}}
\newcommand{\Zc}{\mathcal{Z}}
\newcommand{\Uc}{\hat{\mathcal{U}}}
\newcommand{\pol}{^\circ}
\newcommand{\Kc}{\mathcal{K}}
\newcommand{\Nc}{\text{Null}}
\newcommand{\Rc}{\text{Range}}
\newcommand{\Nn}{\mathcal{N}}
\newcommand{\Cc}{\mathcal{C}}
\newcommand{\Gc}{\mathcal{G}}
\newcommand{\Ac}{\mathcal{A}}
\newcommand{\Pc}{\mathcal{P}}

\newcommand{\R}{\mathbb{R}}
\newcommand{\Pro}{{\mathbb{P}}}
\newcommand{\E}{{\mathbb{E}}}

\newcommand{\xu}{\hat{x}_i}
\newcommand{\xz}{x_{0,i}}
\newcommand{\wi}{\hat{w}_i}

\newcommand{\ti}{\tilde}
\newcommand{\el}{\eta_{DN}(\la,\z)}
\newcommand{\paf}{\pa f(\x_0)}
\newcommand{\pef}{\pa f(\x)}
\newcommand{\dir}{D_{f,\x}(\w)}
\newcommand{\din}{D_{f,\x_0}(\w)}
\newcommand{\paw}{\pa f'_{\x_0}(\w)}
\newcommand{\fp}{f'_{\x_0}}
\newcommand{\ff}{\hat{f}_{\x_0}}
\newcommand{\ffx}{\hat{f}_{\x}}
\newcommand{\ffr}{f_{r,\x_0}}
\newcommand{\ffrx}{f_{r,\x}}
\newcommand{\la}{{\lambda}}
\newcommand{\lab}{{\bm{\lambda}}}
\newcommand{\eps}{\epsilon}
\newcommand{\om}{\omega}
\newcommand{\si}{\sigma}
\newcommand{\st}{\star}
\newcommand{\Fc}{\hat{{\mathcal{F}}}}
\newcommand{\Fch}{\hat{{\mathcal{F}}}}
\newcommand{\Tc}{\mathcal{T}}
\newcommand{\pa}{\partial}

\newcommand{\mub}{\boldsymbol{\mu}}
\newcommand{\sg}{\text{sgn}}
\newcommand{\hg}{\hat{g}}
\newcommand{\Du}{\text{dual}}
\newcommand{\cl}{\text{Cl}}
\newcommand{\tr}{\text{trace}}
\newcommand{\rk}{\text{rank}}
\newcommand{\cn}{\text{cone}}
\newcommand{\dbt}{\text{\bf{d}}}
\newcommand{\dt}{\text{{dist}}}
\newcommand{\dtR}{\dt_{\R^+}(\h)}
\newcommand{\vs}{\vspace}
\newcommand{\hs}{\hspace}
\newcommand{\has}{\hat{\s}}
\newcommand{\nn}{\nonumber}
\newcommand{\li}{\left<}
\newcommand{\ri}{\right>}
\newcommand{\nor}{\|\cdot\|}

\begin{abstract} 

We consider the problem of estimating an unknown signal $\x_0$ from noisy linear observations $\y = \A\x_0 + \z\in \mathbb{R}^m$. In many practical instances of this problem,  $\x_0$ has a certain structure that can be captured by a structure inducing function $f(\cdot)$. For example, $\ell_1$ norm can be used to encourage a sparse solution. To estimate $\x_0$ with the aid of a convex $f(\cdot)$, we consider three variations of the widely used $LASSO$ estimator and provide sharp characterizations of their performances.
 Our study falls under a generic framework, where the entries of the measurement matrix $\A$ and the noise vector $\z$ have zero-mean normal distributions with variances $1$ and $\sigma^2$, respectively.
For the LASSO estimator $\x^*$, we ask: ``What is the precise estimation error as a function of the noise level $\sigma$, the number of observations $m$ and the structure of the signal?". In particular, we attempt to calculate the Normalized Square Error (NSE) defined as $\frac{\|\x^*-\x_0\|_2^2}{\sigma^2}$. We show that, the structure of the signal $\x_0$ and choice of the function $f(\cdot)$ enter the error formulae through the summary parameters $\Delxf$ and $\Dlf$, which are defined as the ``Gaussian squared-distances'' to the subdifferential cone and to the $\lambda$-scaled subdifferential of $f$ at $\x_0$, respectively. The first estimator assumes a-priori knowledge of $f(\x_0)$ and is given by $\arg\min_{\x}\left\{{\|\y-\A\x\|_2}~\text{subject to}~f(\x)\leq f(\x_0)\right\}$. We prove that its worst case NSE is achieved when $\sigma\rightarrow 0$ and concentrates around $\frac{\Delxf}{m-\Delxf}$. Secondly, we consider $\arg\min_{\x}\left\{\|\y-\A\x\|_2+\la f(\x)\right\}$, for some penalty parameter $\la\geq 0$. This time, the NSE formula depends on the choice of $\lambda$ and is given by $\frac{\Dlf}{m-\Dlf}$ over a range of $\la$. The last estimator is $\arg\min_{\x}\left\{\frac{1}{2}\|\y-\A\x\|_2^2+ \sigma\tau f(\x)\right\}$.
 We establish a mapping between this and the second estimator and propose a formula for its NSE. As useful side results, we find
      explicit formulae for the optimal estimation performance
      and the optimal penalty parameters $\labb$ and $\tau_{best}$. Finally, for a number of important structured signal classes, we translate our abstract formulae to closed-form upper bounds on the NSE.

\vspace{5pt}
\Keywords{convex optimization, generalized LASSO, structured sparsity, Gaussian processes, statistical estimation, duality, model fitting, linear inverse, first order approximation, noisy compressed sensing, random noise}

\end{abstract}

\newpage

%
%
%
%
%
%
%



\setcounter{tocdepth}{2}
\tableofcontents

%
%
%
%
\newpage

\section{Introduction}

\subsection{The Generalized LASSO Problem}
Recovering a structured signal $\x_0\in\mathbb{R}^n$ from a vector of limited and noisy linear observations $\y=\A\x_0+\z\in\mathbb{R}^m$, is a problem of fundamental importance
encountered in several disciplines including machine learning, signal processing, network inference and many more \cite{CandesMain,candes-tao,Comp,Don1}.
 A typical approach for estimating the structured signal $\x_0$ from the measurement vector $\y$, is picking some proper structure inducing function $f(\cdot)$ and solving the following problem
\beq
\x^*_{LASSO} = \operatorname*{argmin}_{\x}\left\{~\frac{1}{2}\|\y-\A\x\|_2^2+\la f(\x)~\right\}\label{general model},
\eeq
for some nonnegative penalty parameter $\lambda$.

In case $\x_0$ is a sparse vector, the associated structure inducing function is the $\ell_1$ norm, i.e. $f(\x)=\|\x\|_1$. The resulting $\ell_1$-penalized quadratic program in \eqref{general model} is known as the  LASSO in the statistics literature. LASSO was originally introduced in \cite{Tikh} and has since then been subject of great interest as a natural and powerful approach to do noise robust compressed sensing (CS), \cite{Tikh, BicRit,BunTsy,BayMon,Mon,ExampleLAS,HighLAS,OracleLAS,ZhaoLasso,WainLasso,DonLasso}. There are also closely related algorithms such as SOCP variations and the Dantzig selector \cite{CanTao,BoydSOCP}. 
Of course, applications of \eqref{general model} are not limited to sparse recovery; they extend to various problems including the recovery of block sparse signals \cite{generalLAS,blocklasso}, the matrix completion problem \cite{CandesMatrixComp1,Koltc} and the total variation minimization \cite{TotalCompress1}. In each application, $f(\cdot)$ is chosen in accordance to the structure of $\x_0$. See \cite{Cha} for additional examples and a principled approach to
constructing such penalty functions. 
In this work, we consider arbitrary convex penalty functions $f(\cdot)$ and we commonly refer to this generic formulation
in \eqref{general model} 
  as the ``Generalized LASSO" or simply ``LASSO" problem.

\subsection{Motivation}
The LASSO problem can be viewed  as a ``merger" of two closely related problems, which have both recently attracted a lot of attention by the research community;
the problems of noiseless CS and that of proximal denoising.


\subsubsection{Noiseless compressed sensing}
In the noiseless CS problem one wishes to recover $\x_0$ from the random linear measurements $\y=\A\x_0$. A common approach is solving the following convex optimization problem
\beq
\min_{\x} f(\x)~~~\text{subject to}~~~\y=\A\x\label{LinInv}.
\eeq
A critical performance criteria for the problem \eqref{LinInv} concerns the minimum number of measurements needed to guarantee successful recovery of $\x_0$ \cite{Sto1,StojSharp,DonPhaseTrans,DonNonNegative,DonCentSym, Cha,McCoy}. Here, success means that $\x_0$  is the unique minimizer of \eqref{LinInv}, with high probability, over the realizations of 
the random matrix 
$\A$. 


\vp

\subsubsection{Proximal denoising}
The proximal denoising problem tries to estimate $\x_0$ from noisy but uncompressed observations $\y=\x_0+\z$, $\z\sim\Nn(0,\sigma^2 \mathbf{I}_{n})$, where we write $\mathbf{I}_{k}$ for  the identity matrix of size $k\times k$, $k\in\mathbb{Z}^+$. In particular, it solves,
\beq
\min_{\x}\left\{ \frac{1}{2}\|\y-\x\|_2^2\label{DirDen} + \la\sigma f(\x)\right\}.
\eeq
A closely related approach to estimate $\x_0$, which requires prior knowledge $f(\x_0)$ about the signal of interest $\x_0$, is solving the constrained denoising problem:
\beq
\min_{\x}\|\y-\x\|_2^2~~~\text{subject to}~~~ f(\x)\leq f(\x_0)\label{DirCon}.
\eeq
The natural question to be posed 
in both cases 
is how well can one estimate $\x_0$ via \eqref{DirDen}
 (or \eqref{DirCon})
 \cite{soft-thresh,DonAcu,ChaJor,Oymak,OymAller}? The minimizer $\x^*$ of \eqref{DirDen} 
 (or \eqref{DirCon})
  is a function of the noise vector $\z$ and the common measure of performance, is the normalized mean-squared-error which is defined as 
$
\frac{\E\|\x^*-\x_0\|_2^2}{\sigma^2}.
$
\subsubsection{The ``merger" LASSO}
The Generalized LASSO problem is naturally merging the problems of noiseless CS and proximal denoising.  The compressed nature of measurements, poses the question of finding the minimum number of measurements required to recover $\x_0$ \emph{robustly}, that is with error proportional to the noise level. When recovery is robust, it is of importance to be able to explicitly characterize how good the estimate is. In this direction, when $\z\sim\Nn(0,\sigma^2 \mathbf{I}_{m})$, a common measure of performance for the LASSO estimate $\x^*_{LASSO}$ is defined to be the \textbf{normalized squared error} (NSE) :
$$NSE = \frac{1}{\sigma^2}\|\x^*_{LASSO} - \x_0\|^2_2.$$ This is exactly the main topic of this work: \emph{proving precise bounds for the NSE of the Generalized LASSO problem.}

In the specific case of $\ell_1$-penalization in \eqref{general model}, researchers have considered other performance criteria additional to the NSE \cite{WainLasso,DonLasso,ZhaoLasso}. As an example, we mention the support recovery  criteria \cite{WainLasso}, which measures how well \eqref{general model} recovers the subset of nonzero indices of $\x_0$. However, under our general setup, where we allow arbitrary structure to the signal $\x_0$, the NSE serves as the most natural measure of performance and is, thus, the sole focus in this work. In the relevant literature,  researchers have dealt with the analysis of the NSE of \eqref{general model} under several settings (see Section \ref{sec:relLit}). Yet, \emph{we still lack a general theory that would yield precise bounds for the squared-error of \eqref{general model} for arbitrary convex regularizer $f(\cdot)$. This paper aims to close this gap.} Our answer involves inherent quantities regarding the geometry of the problem which, in fact, have recently appeared in the related literature, \cite{Cha,McCoy,Foygel,Mon,BayMon,Oymak}.

\subsection{Three Versions of the LASSO Problem}
Throughout the analysis, we assume $\A\in\R^{m\times n}$ has independent standard normal entries and $\z\sim\Nn(0,\sigma^2\mathbf{I}_{m})$. 
Our approach 
tackles various forms of the LASSO all at once, and relates them to each other. In particular, we consider the following three versions:
\vspace{-2pt}
\begin{itemize}
\item[$\star$] \textbf{C-LASSO}: Assumes a-priori knowledge of $f(\x_0)$ and solves,
\beq
 \x_c^*(\A,\z) = \text{arg}\min_{\x}\|\y-\A\x\|_2~~~\text{subject to}~~~f(\x)\leq f(\x_0).\label{cmodel}\vspace{-1pt}
 \eeq
\item[$\star$] \textbf{$\ell_2$-LASSO}: Uses $\ell_2$-penalization rather than $\ell_2^2$ and solves,
\beq
 \x_{\ell_2}^*(\la,\A,\z) = \text{arg}\min_{\x}\left\{ ~\|\y-\A\x\|_2 + \la f(\x) ~\right\}\label{ell2model}.\vspace{-8pt}
 \eeq
\item[$\star$] \textbf{$\ell_2^2$-LASSO}: the original form given in \eqref{general model}
:
\begin{align}
\x_{\ell_2^2}^*(\tau,\A,\z) = \text{arg}\min_{\x}\left\{~\frac{1}{2}\|\y-\A\x\|_2^2+\sigma\tau f(\x)~\right\}.\label{ell22model}
\end{align}
\end{itemize}
C-LASSO in \eqref{cmodel} stands for ``Constrained LASSO". This version of the LASSO problem assumes some a-priori knowledge about $\x_0$, which makes the analysis of the problem arguably simpler than that of the other two versions, in which the role of the penalty parameter (which is meant to  compensate for the lack of a-priori knowledge) has to be taken into consideration. 
To distinguish between the $\ell_2$-LASSO and the  $\ell_2^2$-LASSO, we use $\la$ to denote the penalty parameter of the former and $\tau$ for the penalty parameter of the latter.
Part of our contribution is establishing useful connections between these three versions of the LASSO problem. We will often drop the arguments $\la,\tau,\A,\z$ from the LASSO estimates defined in \eqref{cmodel}--\eqref{ell22model}, when clear from context.

\subsection{Relevant Literature}\label{sec:relLit}
Precise characterization of the NSE of the LASSO is closely related to the precise performance analysis of noiseless CS and proximal denoising. To keep the discussion short, we defer most of the comments on the connections of our results to these problems to the main body of the paper. Table \ref{table:lit} provides a summary of the relevant literature and highlights the area of our contribution.

\begin{center}
\begin{table}[h]
\begin{center}
  \begin{tabular}{ | c |  c  |  c |}
     \cline{2-3}
     \multicolumn{1}{c|}{}&\pbox{20cm}{\vp{\bf{{{Convex functions}}}}\vp}& \pbox{20cm}{\bf{{{$\ell_1$-minimization}}}}\\ \hline
    {\color{darkred}{{Noiseless CS}}} & \pbox{20cm}{ \vp Chandrasekaran et al. \cite{Cha}\\ Amelunxen et al. \cite{McCoy}\vp}  &\pbox{20cm}{Donoho and Tanner, \cite{DonPhaseTrans}\\ Stojnic, \cite{Sto1}} \\    \hline
     {\color{darkred}{{\begin{tabular}{l} Proximal \\ denoising\end{tabular}}}}  & \pbox{20cm} {\vp Donoho et al. \cite{DonAcu}\\Oymak and Hassibi\cite{Oymak}} & \pbox{20cm}{\vp Donoho \cite{soft-thresh}\vp} \\ \hline    
     {\color{darkred}{{LASSO}}}  & \color{red}{Present paper} & \pbox{20cm}{\vp Bayati and Montanari, \\ \cite{BayMon}, \cite{Mon}\\ Stojnic, \cite{StojLAS}\vp} \\
    \hline
  \end{tabular}
  \end{center}
    \caption{Relevant Literature.}
  \label{table:lit}

\end{table}
  \end{center}


The works closest in spirit to our results include \cite{BayMon,Mon,StojLAS,Bar},  which focus on the exact analysis of the LASSO problem, while restricting the attention on  sparse recovery where
 $f(\x)=\|\x\|_1$ .
 In \cite{BayMon,Mon}, 
 Bayati and Montanari are able to show that the mean-squared-error of the LASSO problem is equivalent to the one achieved by a properly defined ``Approximate Message Passing'' (AMP) algorithm. Following this connection and after evaluating the error of the AMP algorithm, they obtain an explicit expression for the mean squared error of the LASSO algorithm in an asymptotic setting. In \cite{Bar}, Maleki et al. proposes Complex AMP, and characterizes the performance of LASSO for sparse signals with complex entries.
In \cite{StojLAS}, Stojnic's approach relies on results on Gaussian processes \cite{Gor,Gor2} to derive sharp bounds for the \emph{worst case NSE} of the $\ell_1$-constrained LASSO problem in \eqref{cmodel}.
Our approach in this work builds on the framework proposed by Stojnic, but extends the results in multiple directions as noted in the next section.

\subsection{Contributions}
This section summarizes our main contributions.
 In short, this work:
\begin{itemize}
\item \emph{generalizes} the results of \cite{StojLAS} on the constrained LASSO for arbitrary convex functions; proves that the worst case NSE is achieved when the noise level $\sigma\rightarrow 0$, and derives sharp bounds for it.
\item \emph{extends} the analysis to the NSE of the more challenging $\ell_2$-LASSO; provides 
bounds as a function of the penalty parameter $\la$, which are sharp when $\sigma\rightarrow 0$.
\item identifies a \emph{connection} between the $\ell_2$-LASSO to the $\ell_2^2$-LASSO; proposes a formula for precisely calculating the NSE of the latter when $\sigma\rightarrow 0$.
\item provides simple \emph{recipes} for the optimal tuning of the penalty parameters $\la$ and $\tau$ in the $\ell_2$ and $\ell_2^2$-LASSO problems.
\item analyzes the regime in which \emph{stable} estimation of $\x_0$ fails. 
\end{itemize}

\subsection{Motivating Examples}
Before going into specific examples, it is instructive to consider the scenario where $f(\cdot) = 0$. This reduces the problem to a regular least-squares estimation problem, the analysis of which is easy to perform. When $m<n$, the system is underdetermined, and one cannot expect $\x^*$ to be a good estimate. When $m\geq n$, the estimate can be given by $\x^*=(\A^T\A)^{-1}\A^T\y$. In this case, the normalized mean-squared-error takes the form,
			\beq
			\frac{\E\|\x^*-\x_0\|^2}{\sigma^2}=\frac{\E[\z^T\A(\A^T\A)^{-2}\A^T\z]}{\sigma^2}=\E[{\tr(\A(\A^T\A)^{-2}\A^T)}]=\E[{\tr((\A^T\A)^{-1})}]. \nn
			\eeq
			$\A^T\A$ is a Wishart matrix and its inverse is well studied. In particular, when $m\geq n+2$, we have $\E[(\A^T\A)^{-1}]=\frac{\Iden_n}{m-n-1}$ (see \cite{multiv}). Hence,
\beq
\frac{\E\|\x^*-\x_0\|^2}{\sigma^2}=\frac{n}{m-n-1}. \label{eq:LS}
\eeq

\begin{center}
\emph{ How does this result change when a nontrivial convex function $f(\cdot)$ is introduced?}
\end{center}

\noindent Our message is simple: 		
%
%
\emph{when $f(\cdot)$ is an arbitrary convex function, the LASSO error formula is obtained by simply replacing the ambient dimension $n$ in \eqref{eq:LS} with a summary parameter $\Delxf$ or $\Dlf$ }. These parameters are defined as the expected squared-distance of a standard normal vector in $\mathbb{R}^n$ to the conic hull of the subdifferential  $\text{cone}(\paf)$ and to the scaled subdifferential $\la\paf$, respectively. They summarize the effect of the structure of the signal $\x_0$ and choice of the function $f(\cdot)$ on the estimation error.


To get a flavor of the (simple) nature of our results, we briefly describe how they apply in three commonly encountered settings, namely the ``sparse signal",  ``low-rank matrix" and ``block-sparse signal" estimation problems. For simplicity of exposition, let us focus on the C-LASSO estimator in \eqref{cmodel}. A more elaborate discussion, including estimation via $\ell_2$-LASSO and $\ell_2^2$-LASSO, can be found in Section \ref{sec:closedForm}. The following statements are true with high probability in $\A,\vb$ and hold under mild assumptions.

 \noindent{$~~\emph1$}.  {\bf{Sparse signal estimation:}} Assume $\x_0\in\R^n$ has $k$ nonzero entries. In order to estimate $\x_0$, use the Constrained-LASSO and pick $\ell_1$-norm for $f(\cdot)$. Let $m>2k(\log\frac{n}{k}+1)$. 
 Then, 
\beq\frac{\|\x^*_c-\x_0\|_2^2}{\sigma^2}\lesssim \frac{2k(\log{\frac{n}{k}+1)}}{m-2k(\log\frac{n}{k}+1)}.\label{eq:clo1}\eeq

 \noindent{$~~\emph2$}. {\bf{Low-rank matrix estimation:}} Assume $\X_0\in\R^{d\times d}$ is a rank $r$ matrix, $n=d\times d$. This time, $\x_0\in\R^n$ corresponds to vectorization of $\X_0$ and $f(\cdot)$ is chosen as the nuclear norm $\|\cdot\|_\st$ (sum of the singular values of a matrix) \cite{Fazel,RechtFazel}. Hence, we observe $\y=\A\cdot\text{vec}(\X_0)+\z$ and solve,
\beq
\min_{\X\in\R^{d\times d}} \|\y-\A\cdot\text{vec}(\X)\|_2~~~\text{subject to}~~~\|\X\|_\st\leq \|\X_0\|_\st\nn
\eeq
 Let $m>6dr$. Denote the LASSO estimate by $\X_c^*$ and use $\|\cdot\|_F$ for the Frobenius norm of a matrix. Then, 
\beq\frac{\|\X_c^*-\X_0\|^2_F}{\sigma^2}\lesssim \frac{6dr}{m-6dr}.\label{eq:clo2}\eeq

 \noindent{$~~\emph3$}. {\bf{Block sparse estimation:}} Let $n=t\times b$ and assume the entries of $\x_0\in\R^n$ can be grouped into $t$ known blocks of size $b$ so that only $k$ of these $t$ blocks are nonzero. To induce the structure, the standard approach is to use the $\ell_{1,2}$ norm which sums up the $\ell_2$ norms of the blocks, \cite{Yonina,Block,StojBlock,RechtBlock}. In particular, denoting the subvector corresponding to $i$'th block of a vector $\x$ by $\x_i$, the $\ell_{1,2}$ norm is equal to $\|\x\|_{1,2}=\sum_{i=1}^t\|\x_i\|_2$.
Assume $m>4k(\log\frac{t}{k}+b)$ .
Then,
\beq\frac{\|\x_c^*-\x_0\|^2_2}{\sigma^2}\lesssim \frac{4k(\log\frac{t}{k}+b)}{m-4k(\log\frac{t}{k}+b)}.\label{eq:clo3}\eeq

Note how \eqref{eq:clo1}-\eqref{eq:clo3} are similar in nature to \eqref{eq:LS}.

\section{Our Approach}
\label{sec:app}
In this section we introduce the main ideas that underlie our approach. This will also allow us to introduce important concepts from convex geometry required for the statements of our main results in Section \ref{sec:mainResults}. 
%
The details of most of the technical discussion in this introductory section are deferred to later sections. To keep the discussion concise, we focus our attention on the $\ell_2$-LASSO. 
Throughout, we use boldface lowercase letters to denote vectors and boldface capital letters to denote matrices. 
Also, to simplify the notation the $\ell_2$-norm will be denoted as $\| \cdot \|$ from now on. 
 

\subsection{First-Order Approximation}
Recall the $\ell_2$-LASSO problem introduced in \eqref{ell2model}:
\beq\label{eq:ell2}
\x^*_{\ell_2} = \text{arg}\min_{\x}\left\{ ~\|\y-\A\x\| + \la f(\x) ~\right\}.\vspace{-8pt}
\eeq
A key idea behind our approach is using the linearization of the convex structure inducing function $f(\cdot)$ around the vector of interest $\x_0$ \cite{Roc70,Lewis}:
\beq\label{eq:fodef}
\fh(\x)=f(\x_0)+\sup_{\s\in\la\paf}\s^T(\x-\x_0).
\eeq
$\partial f(\x_0)$ denotes the subdifferential of  $f(\cdot)$ at $\x_0$ and is always a compact and convex set \cite{Roc70}. Throughout, we assume that $\x_0$ is not a minimizer of $f(\cdot)$, hence, $\paf$ does not contain the origin. From convexity of $f(\cdot)$, $f(\x)\geq \hat f(\x)$, for all $\x$. What is more, when $\|\x-\x_0\|$ is sufficiently small, then $\hat f(\x)\approx f(\x)$.
We substitute $f(\cdot)$ in \eqref{eq:ell2} by its first-order approximation $\fh(\cdot)$, to get a corresponding ``\emph{Approximated LASSO}" problem.  To write the approximated problem in an easy-to-work-with format, recall that $\y=\A\x_0 + \z = \A\x_0 + \sigma\vb$, for $\vb\sim\Nn(0,\mathbf{I}_{m})$ and change the optimization variable from $\x$ to $\w=\x-\x_0$:
\begin{align}\label{eq:ell2Approx}
\hat\w_{\ell_2}(\la,\sigma,\A,\vb) = \text{arg}\min_{\w}\left\{ ~\|\A\w-\sigma\vb\| + \sup_{\s\in\la\paf}\s^T\w ~\right\}.\vspace{-8pt}
\end{align}
We will often drop all or part of the arguments $\la,\sigma,\A,\vb$ above, when it is clear from the context. We denote $\hat\w_{\ell_2}$ for the optimal solution of the approximated problem in \eqref{eq:ell2Approx} and $\w^*_{\ell_2}=\x^*_{\ell_2} - \x_0$ for the optimal solution of the original problem in \eqref{eq:ell2}\footnote{We follow this conventions throughout the paper: use the symbol ``$~\hat{}~$'' over variables that are associated with the approximated problems. To distinguish, use the symbol ``$~^*~$''  for the variables associated with the original problem
.}. Also, denote the optimal cost achieved in \eqref{eq:fodef} by $\hat\w_{\ell_2}$, as $\Fc_{\ell_2}(\A,\vb)$.

Taking advantage of the simple characterization of $\fh(\cdot)$ via the subdifferential $\paf$, we are able to \emph{precisely} analyze the optimal cost and the normalized squared error of the resulting approximated problem. The approximation is tight when $\|\x_{\ell_2}^*-\x_0\|\rightarrow 0$ and we later show that this is the case when the noise level $\sigma\rightarrow 0$. This fact allows us to translate the results obtained for the Approximated LASSO problem to corresponding \emph{precise} results for the Original LASSO problem, in the small noise variance regime.

\subsection{Importance of $\sigma\rightarrow 0$}
In this work, we focus on the precise characterization of the NSE. While we show that the first order characteristics of the function, i.e. $\paf$, suffice to provide sharp and closed-form bounds for small noise level $\sigma$, we believe that higher order terms are required for such precise results when $\sigma$ is arbitrary. On the other hand, we empirically observe that the worst case NSE for the LASSO problem is achieved when $\sigma\rightarrow 0$. While we do not have a proof for the validity of this statement for the $\ell_2$- and $\ell_2^2$-LASSO, we \emph{do prove} that this is indeed the case for the C-LASSO problem. Interestingly, the same phenomena has been observed and proved to be true for related estimation problems, for example for the proximal denoising problem \eqref{DirDen} in \cite{Oymak,DonAcu,Matan} and, closer to the present paper, for the LASSO problem with $\ell_1$ penalization (see Donoho et al. \cite{NoiseSense}).

Summarizing, for the C-LASSO problem, we derive a formula that sharply characterizes its NSE for the small $\sigma$ regime and we show that the same formula upper bounds the NSE when $\sigma$ is arbitrary. Proving the validity of this last statement for the $\ell_2$- and $\ell_2^2$-LASSO would ensure that our corresponding NSE formulae for small $\sigma$ provide upper bounds to the NSE for arbitrary $\sigma$. 

\subsection{Gordon's Lemma}

Perhaps the most important technical ingredient of the analysis presented in this work is a lemma proved by Gordon in \cite{Gor}. Gordon's Lemma establishes a very useful (probabilistic) inequality for Gaussian processes.
\begin{lem}[Gordon \cite{Gor}]\label{lemma:Gordon}
Let $\Gb\in\R^{m\times n}, g\in\R, \g\in\R^m, \h\in\R^n$ be independent of each other and have independent standard normal entries. Also, let $\mathcal{S}\subset\mathbb{R}^n$ be an arbitrary set and $\psi:\mathcal{S}\rightarrow\mathbb{R}$ be an arbitrary function. Then, for any $c\in \mathbb{R}$,
\begin{align}\label{eq:gorap}
&\Pro\left( \min_{\x\in\mathcal{S}}\left\{ \| \Gb\x \| + \|\x\| g - \psi(\x) \right\} \geq c \right) 
\geq
\Pro\left( \min_{\x\in\mathcal{S}}\left\{ \|\x\|\|\g\| - \h^T\x - \psi(\x) \right\} \geq c \right) .
\end{align}
\end{lem}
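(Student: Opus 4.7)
My approach is the standard dualize-and-compare strategy from which Gordon's Gaussian min-max comparison theorem derives its usefulness. The first step is to dualize the Euclidean norms on both sides of \eqref{eq:gorap}. Writing
$$\|\Gb\x\| = \max_{\ab:\|\ab\|=1}\ab^T\Gb\x,\qquad \|\x\|\,\|\g\|=\max_{\ab:\|\ab\|=1}\|\x\|\,\ab^T\g,$$
the two random quantities inside the probabilities become min-max values of the centered Gaussian processes
$$Y_{\x,\ab} := \ab^T\Gb\x + \|\x\|\,g, \qquad X_{\x,\ab} := \|\x\|\,\ab^T\g - \h^T\x,$$
indexed by $(\x,\ab)\in\mathcal{S}\times\{\ab\in\R^m:\|\ab\|=1\}$. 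The deterministic term $\psi(\x)$ is absorbed into thresholds $\lambda_{\x,\ab}:=c+\psi(\x)$ that depend only on the outer index $\x$, so the events appearing in \eqref{eq:gorap} take the form $\bigcap_{\x}\bigcup_{\ab}\{Y_{\x,\ab}\geq\lambda_{\x,\ab}\}$ and the analogue with $X$ in place of $Y$, which is precisely the form compared by Gordon's theorem.

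The heart of the argument is a short covariance computation verifying the hypotheses of Gordon's theorem. The scalar $g$ on the $Y$-side is present precisely to match marginal variances, so one immediately obtains $\E Y_{\x,\ab}^2=\E X_{\x,\ab}^2=2\|\x\|^2$, and the same-$\x$ (i.e.\ inner-max) increments are equal:
$$\E(Y_{\x,\ab}-Y_{\x,\ab'})^2=\|\ab-\ab'\|^2\|\x\|^2=\E(X_{\x,\ab}-X_{\x,\ab'})^2.$$
The key inequality is the different-$\x$ one. Expanding the squares, using independence of $\Gb,\g,\h,g$, and $\|\ab\|=\|\ab'\|=1$, I expect to arrive at the clean identity
$$\E(X_{\x,\ab}-X_{\x',\ab'})^2-\E(Y_{\x,\ab}-Y_{\x',\ab'})^2 \;=\; 2\bigl(1-\ab^T\ab'\bigr)\bigl(\|\x\|\,\|\x'\|-\x^T\x'\bigr) \;\geq\; 0,$$
both factors being nonnegative by Cauchy--Schwarz (applied on the sphere for the first factor and on $\R^n$ for the second). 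This is exactly the outer-min covariance condition in Gordon's theorem. Invoking Gordon's inequality \cite{Gor} with the threshold array $\lambda_{\x,\ab}=c+\psi(\x)$ yields the comparison at the level of the intersection-over-$\x$ / union-over-$\ab$ events, and undualizing the suprema over $\ab$ recovers precisely \eqref{eq:gorap}.

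The main technical obstacle is the generality of $\mathcal{S}$: Gordon's theorem is originally stated for finite index sets, proved by smooth Gaussian interpolation and a Slepian-style integration-by-parts argument, whereas here $\mathcal{S}$ and the unit sphere in $\R^m$ are typically infinite. I would handle this by the usual approximation: apply the finite-index version of Gordon's inequality to increasing finite subsets of $\mathcal{S}$ and of the sphere, then pass to the limit by monotone convergence of the nested events, using almost-sure continuity of $(\x,\ab)\mapsto Y_{\x,\ab},X_{\x,\ab}$ on bounded subsets together with the dualization identity for the Euclidean norm. Provided $\mathcal{S}$ is, say, a Borel subset of $\R^n$ so that the relevant infima are genuine random variables, this limiting argument is routine and preserves the probabilistic comparison.
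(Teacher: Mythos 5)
Your proposal is correct and follows essentially the same route as the paper: the paper's Appendix C proof of the modified Lemma \ref{lemma:Gor} dualizes the norms, introduces the same pair of processes $Y_{\x,\ab}=\ab^T\Gb\x+\|\ab\|\|\x\|g$ and $X_{\x,\ab}=\|\x\|\g^T\ab-\|\ab\|\h^T\x$, verifies the covariance conditions of Theorem \ref{thm:GordonMain} via the identical factored identity $(\|\x\|\|\x'\|-\x^T\x')(\ab^T\ab'-\|\ab\|\|\ab'\|)$, sets $\lambda_{\x,\ab}=\psi+c$, and disposes of the discrete-index restriction by a covering/limiting argument (Lemma \ref{discrete to cont}), exactly as you outline. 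The only caution is that your limit over "increasing finite subsets" is monotone in opposite directions for the outer and inner index sets, so the two limits must be taken in the right order (inner sphere first, then $\mathcal{S}$), but this is the same routine the paper defers to Gordon.
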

It is worth mentioning that the ``escape through a mesh" lemma, which has been the backbone of the approach introduced by Stojnic \cite{Sto1} (and subsequently refined in \cite{Cha}) for computing an asymptotic upper bound to the minimum number of measurements required in the Noiseless CS problem, is a corollary of Lemma \ref{lemma:Gordon}

For the purposes of our analysis, we require a slight modification of this lemma. To avoid technicalities at this stage, we defer its precise statement  to Section \ref{sec:gorMain}. Here, it suffices to observe that the original Gordon's Lemma \ref{lemma:Gordon} is (almost) directly applicable to the LASSO problem in \eqref{eq:ell2Approx}. First, write $\|\A\w-\sigma\vb\|=\max_{\|\ab\|=1}{\ab^T[\A, -\vb] \begin{bmatrix} \w \\ \sigma \end{bmatrix}}$ and take function $\psi(\cdot)$ in the lemma to be $\sup_{\s\in\la\paf} \s^T\w$. Then, the optimization problem in the left hand side of \eqref{eq:gorap} takes the format of the LASSO problem in \eqref{eq:ell2Approx}, except for the ``distracting" factor $\|\x\| g$. A simple argument shows that this term can be discarded without affecting the essence of the probabilistic statement of Lemma \ref{lemma:Gordon}. Details being postponed to the later sections (cf. Section \ref{sec:intro2tech}), Corollary \ref{cor:keylow} below summarizes the result of applying Gordon's Lemma to the LASSO problem.
\begin{cor}[Lower Key Optimization]\label{cor:keylow}
Let $\g\sim\Nn(0,\mathbf{I}_{m})$, $\h\sim\Nn(0,\mathbf{I}_{n})$ and $h\sim\Nn(0,1)$ be independent of each other. Define the following optimization problem:
\begin{align}\label{eq:keylow}
\hspace*{-3pt}\Lc(\g,\h) = \min_{\w}\left\{ \sqrt{\|\w\|^2_2 + \sigma^2}\|\g\| - \h^T\w + \max_{\s\in\la\paf} \s^T\w \right\}. 
\end{align}
Then, for any $c\in\mathbb{R}$:
\begin{align}\notag
\Pro\left( ~ \Fc_{\ell_2}(\A,\vb) \geq c~\right) \geq 2\cdot\Pro\left( ~ \Lc(\g,\h) - h\sigma \geq c ~ \right) - 1.
\end{align}
\end{cor}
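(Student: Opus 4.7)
The plan is to apply Gordon's Lemma to a suitable augmentation of the optimization variable and then post-process the resulting inequality via a sign-symmetrization trick to eliminate the extra $\|\x\|g$ term that Gordon's Lemma produces on its left-hand side.

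First I would collapse the data $(\A,\vb)$ into a single Gaussian matrix by writing $\|\A\w - \sigma\vb\| = \|[\A,\,-\vb]\tilde{\x}\|$, where $\tilde{\x} := [\w^T, \sigma]^T \in \R^{n+1}$, so that $\Gb := [\A,\,-\vb]\in\R^{m\times(n+1)}$ has i.i.d.\ $\Nn(0,1)$ entries. With the constraint set $\mathcal{S}:=\{[\w^T,\sigma]^T:\w\in\R^n\}$, $\|\tilde{\x}\| = \sqrt{\|\w\|^2+\sigma^2}$, and the deterministic function $\psi(\tilde{\x}) := -\max_{\s\in\la\paf}\s^T\w$ (so that $-\psi$ reproduces the supremum appearing in the LASSO cost), Lemma~\ref{lemma:Gordon} applies directly. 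On its right-hand side, the $(n+1)$-dimensional standard Gaussian splits as $[\h^T,h]^T$ with $\h\in\R^n$ and $h\in\R$, so that $\h^T \tilde{\x}$ contributes $\h^T\w + h\sigma$ and the RHS minimum becomes exactly $\Lc(\g,\h) - h\sigma$. Gordon's inequality then reads
\begin{align*}
\Pro\!\left(\min_{\w}\!\left\{\|\A\w-\sigma\vb\| + \sqrt{\|\w\|^2+\sigma^2}\,g + \max_{\s\in\la\paf}\s^T\w\right\} \geq c \right) \geq \Pro\!\left(\Lc(\g,\h) - h\sigma \geq c\right).
\end{align*}

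Next, to bridge the LHS above with $\Fc_{\ell_2}(\A,\vb)$, I would exploit the sign symmetry of $g$. Denote the LHS minimum by $A(g)$, making the dependence on $g$ explicit, and recall that $g$ is independent of $(\A,\vb,\g,\h)$ by Gordon's hypotheses. Let $A'' := A(-|g|)$; since the added summand $\sqrt{\|\w\|^2+\sigma^2}\,(-|g|)$ is pointwise nonpositive in $\w$, we have $A'' \leq \Fc_{\ell_2}(\A,\vb)$ deterministically, and hence $\Pro(\Fc_{\ell_2}(\A,\vb) \geq c) \geq \Pro(A'' \geq c)$. Because $-|g|$ and $g$ conditioned on $\{g\leq 0\}$ share the same distribution, and both are independent of $(\A,\vb)$, one obtains the distributional identity $\Pro(A'' \geq c) = \Pro(A(g) \geq c \mid g \leq 0)$. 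Decomposing $\Pro(A(g)\geq c) = \tfrac{1}{2}\Pro(A(g)\geq c \mid g\leq 0) + \tfrac{1}{2}\Pro(A(g)\geq c \mid g\geq 0)$ and trivially bounding the second summand by $\tfrac{1}{2}$ yields $\Pro(A(g)\geq c \mid g\leq 0) \geq 2\Pro(A(g)\geq c) - 1$. Chaining the three inequalities produces the stated bound.

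The main obstacle I anticipate is keeping the symmetrization bookkeeping clean: one must invoke the independence between $g$ and $[\A,\,-\vb]$ (and separately between $g$ and $(\g,\h)$) at the right moments in order to legitimize both the pointwise comparison $A'' \leq \Fc_{\ell_2}$ and the conditional-distribution identity $\Pro(A'' \geq c) = \Pro(A(g) \geq c \mid g \leq 0)$. The factor $2$ and the $-1$ in the target bound are precisely the price paid for discarding the $\|\x\|g$ term that Gordon's Lemma leaves on its left-hand side; everything else is a mechanical specialization of Lemma~\ref{lemma:Gordon} to the augmented vector $\tilde{\x}$.
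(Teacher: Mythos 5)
Your proposal is correct and follows essentially the same route as the paper: the augmentation $\Gb=[\A,\,-\vb]$, $\w_\sigma=[\w^T,\sigma]^T$ with the $(n+1)$-dimensional Gaussian splitting into $(\h,h)$ is exactly how the paper proves Corollary~\ref{cor:low}, and your sign-symmetrization to discard the $\|\x\|g$ term (conditioning on $g\le 0$ and paying the factor $2$ and the $-1$) is precisely the argument the paper packages into its Modified Gordon's Lemma~\ref{lemma:Gor} in Appendix~\ref{sec:proofGor}. The only cosmetic difference is that you apply the min-form of Lemma~\ref{lemma:Gordon} directly and symmetrize inline, whereas the paper first restates the problem in minimax form over the unit sphere and cites the modified lemma.
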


Corollary \ref{cor:keylow} establishes a probabilistic connection between the LASSO problem and the minimization \eqref{eq:keylow}. In the next section, we argue that the latter is much easier to analyze than the former. Intuitively, the main reason is that instead of an $m\times n$ matrix, \eqref{eq:keylow} only involves two vectors of sizes $m\times 1$ and $n\times 1$. Even more, those vectors have independent standard normal entries and are independent of each other, which greatly facilitates probabilistic statements about the value of $\Lc(\g,\h)$. Due to its central role in our analysis, we often refer to problem \eqref{eq:keylow} as ``\emph{key optimization}" or ``\emph{lower key optimization}". The term ``lower" is attributed to the fact that analysis of  \eqref{eq:keylow} results in a probabilistic lower bound for the optimal cost of the LASSO problem.
%

\subsection{Analyzing the Key Optimization}

\subsubsection{Deterministic Analysis}
First, we perform the deterministic analysis of $\Lc(\g,\h)$ for fixed $\g\in \mathbb{R}^m$ and $\h\in\mathbb{R}^n$.
In particular, we reduce the optimization in \eqref{eq:keylow} to a \emph{scalar} optimization. To see this, perform the optimization over a fixed $\ell_2$-norm of $\w$ to equivalently write
\begin{align*}
\hspace{-3pt}\Lc(\g,\h) = \min_{\alpha\geq 0}\left\{
\sqrt{\alpha^2+\sigma^2}\|\textcolor{black}{\g}\| - 
\max_{\|\w\|=\alpha}~\min_{\s\in\la\paf}(\h-\s)^T\w
 \right\}.
\end{align*} 
The maximin problem that appears in the objective function of the optimization above has a simple solution. It can be shown that 
\begin{align*}
\hspace*{-5pt}\max_{\|\w\|=\alpha}~\min_{\s\in\la\paf}(\h-\s)^T\w &= 
\min_{\s\in\la\paf}~\max_{\|\w\|=\alpha}(\h-\s)^T\w \\
&= \alpha \min_{\s\in\la\paf} \|\h-\s\|.
\end{align*}
This reduces \eqref{eq:keylow} to a scalar optimization problem over $\alpha$, for which one can compute the optimal value $\hat{\alpha}$ and the corresponding optimal cost. The result is summarized in Lemma \ref{lemma:lowdet} below. For the statement of the lemma, for any vector $\vb\in\mathbb{R}^n$ define its projection and its distance to a convex and closed set $\Cc\in\R^n$ as
\vspace{-6pt}
\begin{align*}
&\hspace{28pt}\bu(\vb,\Cc) := \operatorname{argmin}_{\s\in\Cc}{\| \vb-\s \| }\quad
\text{ and }\quad\dt(\vb,\Cc) := \| \vb - \bu(\vb,\Cc) \|.
\end{align*}
  
\begin{lem}[Deterministic Result] \label{lemma:lowdet}
Let $\wh(\g,\h)$ be a minimizer of the problem in \eqref{eq:keylow}.
If $\|\g\| > \dt(\h,\la\paf)$, then,
\begin{align*}
\text{a)}~~~ &\wh(\g,\h) = \sigma\frac{\h-\bu( \h , \la\paf )}{\sqrt{\|{\g}\|^2-\dt^2( \h , \la\paf )}},\\
\text{b)}~~~ &\|\wh(\g,\h) \|^2 = \sigma^2\frac{\dt^2( \h , \la\paf )}{\|{\g}\|^2-\dt^2( \h , \la\paf )},\\
 \\ \text{c)}~~~ &
\Lc(\textcolor{black}{\g},\textcolor{black}{\h}) = \sigma\sqrt{{\|\g}\|^2-\dt^2( \h , \la\paf )}.
\end{align*}
\end{lem}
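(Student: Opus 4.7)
My plan is to reduce the vector optimization in \eqref{eq:keylow} to a scalar problem in the norm $\alpha=\|\w\|$, and then solve that scalar problem by calculus. Concretely, I would first write
\[
\Lc(\g,\h)=\min_{\alpha\geq 0}\left\{\sqrt{\alpha^2+\sigma^2}\|\g\|-\max_{\|\w\|=\alpha}\min_{\s\in\la\paf}(\h-\s)^T\w\right\},
\]
by separating the optimization over the magnitude and the direction of $\w$. All that remains after this is to evaluate the inner max-min and then minimize a one-variable convex function in $\alpha$.

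The inner quantity is not immediately a standard min-max because the sphere $\{\w:\|\w\|=\alpha\}$ is non-convex, so one direction of equality needs a direct argument. I would proceed as follows. Weak duality gives $\max\min \le \min\max$, and the right side is easy: for fixed $\s$, $\max_{\|\w\|=\alpha}(\h-\s)^T\w=\alpha\|\h-\s\|$, hence $\min_{\s\in\la\paf}\max_{\|\w\|=\alpha}(\h-\s)^T\w=\alpha\cdot\dt(\h,\la\paf)$. For the matching lower bound on $\max\min$, I exhibit a concrete maximizer: take $\w_\star=\alpha(\h-\bu(\h,\la\paf))/\dt(\h,\la\paf)$. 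By the standard variational characterization of the projection onto the convex set $\la\paf$, namely $(\h-\bu(\h,\la\paf))^T(\s-\bu(\h,\la\paf))\leq 0$ for all $\s\in\la\paf$, a short computation yields $(\h-\s)^T\w_\star\geq\alpha\cdot\dt(\h,\la\paf)$ for every $\s\in\la\paf$. Therefore the inner max-min equals $\alpha\cdot\dt(\h,\la\paf)$.

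With this, the problem collapses to
\[
\Lc(\g,\h)=\min_{\alpha\geq 0}\Big\{\|\g\|\sqrt{\alpha^2+\sigma^2}-\dt(\h,\la\paf)\,\alpha\Big\}.
\]
Writing $G=\|\g\|$, $D=\dt(\h,\la\paf)$, the objective is convex in $\alpha$ with derivative $G\alpha/\sqrt{\alpha^2+\sigma^2}-D$. Under the hypothesis $G>D$, setting the derivative to zero gives the unique stationary point $\hat\alpha=\sigma D/\sqrt{G^2-D^2}$. Substituting back yields $\sqrt{\hat\alpha^2+\sigma^2}=\sigma G/\sqrt{G^2-D^2}$ and $\Lc(\g,\h)=\sigma\sqrt{G^2-D^2}$, which is (c). Since the optimal direction of $\w$ was $\w_\star/\alpha=(\h-\bu(\h,\la\paf))/D$, scaling by $\hat\alpha$ gives $\wh(\g,\h)=\sigma(\h-\bu(\h,\la\paf))/\sqrt{G^2-D^2}$, matching (a), and (b) follows by taking norms.

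The only genuinely delicate step is the max-min equality, since it is a nonconvex outer problem; all the rest is bookkeeping. The argument above handles it by explicitly producing the maximizer from the projection of $\h$ onto $\la\paf$ and then invoking the projection inequality, which is the natural tool given that $\la\paf$ is compact and convex (a fact already noted after \eqref{eq:fodef}). The calculus step is straightforward once one checks that the hypothesis $G>D$ ensures the stationary point lies in $[0,\infty)$ and is a minimizer.
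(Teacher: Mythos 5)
Your proof is correct and follows essentially the same route as the paper's: the paper also scalarizes over $\alpha=\|\w\|$, resolves the max--min by exhibiting the explicit maximizer $\alpha\,\bi(\h,\la\paf)/\dt(\h,\la\paf)$ together with the variational characterization of the projection (its Lemma on the max--min equality and Fact on projections), and then solves the resulting one-dimensional convex problem by setting the derivative to zero under $\|\g\|>\dt(\h,\la\paf)$. No gaps; the delicate step you flagged is handled the same way in the paper.
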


\hspace{0pt}
\subsubsection{Probabilistic Analysis}\label{sec:prob}
Of interest is making probabilistic statements about $\Lc(\g,\h)$ and the norm of its minimizer $\|\wh(\g,\h)\|$. Lemma \ref{lemma:lowdet} provided closed form deterministic solutions for both of them, which only involve the quantities $\|\g\|^2$ and $\dt^2(\h,\la\paf)$. For $\g\sim\Nn(0,\mathbf{I}_{m})$ and $\h\sim\Nn(0,\mathbf{I}_{n})$, standard results on Gaussian concentration show that, these quantities concentrate nicely around their means $\E\left[ \|\g\|^2 \right] = m$ and $\E\left[ \dt^2(\h,\la\paf) \right] =: \Dlf$, respectively. Combining these arguments with Lemma \ref{lemma:lowdet}, we conclude with Lemma \ref{lemma:lowprob} below. 
%
\begin{lem}[Probabilistic Result] \label{lemma:lowprob}
Assume that $(1-\eps_L)m\geq\Dlf  \geq \eps_Lm$ for some constant $\eps_L> 0$.  Define\footnote{Observe that the dependence of $\eta$ and $\gamma$ on $\la$, $m$ and $\paf$, is implicit in this definition. },
\begin{align*}
\eta = \sqrt{ m - \Dlf } \quad \text{ and } \quad \gamma =  \frac{\Dlf}{{ m - \Dlf }}.
\end{align*}
Then, for any $\eps>0$, there exists a constant $c>0$ such that, for sufficiently large $m$, with probability  $1 - \exp(-c m)$,
\begin{align*}
\left| \Lc(\g,\h) - \sigma\eta \right| \leq \eps \sigma\eta,\quad\text{ and }\quad
\left| \frac{\|\wh(\g,\h)\|^2}{\sigma^2} - \gamma \right| \leq \eps\gamma.
\end{align*}
\end{lem}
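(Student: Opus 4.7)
The plan is to apply Lemma \ref{lemma:lowdet} to express both $\Lc(\g,\h)$ and $\|\wh(\g,\h)\|^2$ in closed form as simple functions of the two scalar random quantities $\|\g\|^2$ and $\dt^2(\h,\la\paf)$, and then establish concentration of each around its mean via standard Gaussian concentration tools. Once both concentration events are shown to hold simultaneously with probability $1-\exp(-cm)$, the conclusion will follow by direct algebraic manipulation, using the hypothesis $(1-\eps_L)m\geq \Dlf\geq \eps_L m$ to keep the relevant denominators uniformly bounded away from zero.

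For the first quantity, $\|\g\|^2$ is a chi-squared random variable with $m$ degrees of freedom, so for any $\delta\in(0,1)$ one has $\Pro(|\|\g\|^2-m|\geq \delta m)\leq 2\exp(-c_1\delta^2 m)$ with $c_1>0$ an absolute constant. For the second, the map $\h\mapsto \dt(\h,\la\paf)$ is $1$-Lipschitz (since $\la\paf$ is closed and convex, so its Euclidean projection is well defined and non-expansive), whence Gaussian Lipschitz concentration yields $\Pro(|\dt(\h,\la\paf)-\E[\dt(\h,\la\paf)]|\geq t)\leq 2\exp(-t^2/2)$. The Gaussian Poincar\'e inequality gives $\text{Var}(\dt(\h,\la\paf))\leq 1$, so $(\E[\dt(\h,\la\paf)])^2\geq \Dlf-1$. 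Choosing $t=\delta\sqrt{m}$ and expanding $\dt^2(\h,\la\paf)$ as $(\E[\dt]+(\dt-\E[\dt]))^2$, a short calculation using $\E[\dt]\leq \sqrt{\Dlf}\leq \sqrt{m}$ shows $|\dt^2(\h,\la\paf)-\Dlf|\leq C\delta m$ on this event, for all sufficiently large $m$.

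On the intersection of the two concentration events, which holds with probability at least $1-4\exp(-c_2\delta^2 m)$, we have $\|\g\|^2-\dt^2(\h,\la\paf)=(m-\Dlf)+O(\delta m)$. Combined with $m-\Dlf\geq \eps_L m$, this guarantees $\|\g\|>\dt(\h,\la\paf)$, so that the formulas in Lemma \ref{lemma:lowdet}(b),(c) apply. For the cost, writing $\Lc(\g,\h)=\sigma\sqrt{\|\g\|^2-\dt^2(\h,\la\paf)}$ and using $|\sqrt{a}-\sqrt{b}|=|a-b|/(\sqrt{a}+\sqrt{b})$ bounds the relative deviation from $\sigma\eta$ by a multiple of $\delta/\eps_L$. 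For the squared norm, one writes
\begin{align*}
\frac{\dt^2(\h,\la\paf)}{\|\g\|^2-\dt^2(\h,\la\paf)}-\frac{\Dlf}{m-\Dlf}
= \frac{m(\dt^2(\h,\la\paf)-\Dlf)-\Dlf(\|\g\|^2-m)}{(\|\g\|^2-\dt^2(\h,\la\paf))(m-\Dlf)},
\end{align*}
and using the concentration estimates together with $\|\g\|^2-\dt^2(\h,\la\paf)\geq \tfrac{1}{2}(m-\Dlf)\geq \tfrac{\eps_L}{2} m$, the relative deviation from $\gamma=\Dlf/(m-\Dlf)$ is bounded by a multiple of $\delta/\eps_L^2$. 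Choosing $\delta$ small enough as a function of $\eps$ and $\eps_L$ then yields the claim.

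The main obstacle is the step from Lipschitz concentration of $\dt(\h,\la\paf)$ to concentration of $\dt^2(\h,\la\paf)$: the naive tail bound gives deviations of order $\sqrt{m}$ for $\dt$, and one must argue carefully (via $\text{Var}(\dt)\leq 1$ and $\E[\dt]\leq \sqrt{m}$) that this translates to deviations of order $\delta m$ for $\dt^2$, so that it matches the scale of the chi-squared concentration of $\|\g\|^2$. Everything else is algebra, made robust by the two-sided bound on $\Dlf/m$ in the hypothesis.
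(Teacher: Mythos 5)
Your proposal is correct and follows essentially the same route as the paper: reduce to the closed-form expressions of Lemma \ref{lemma:lowdet}, concentrate $\|\g\|^2$ and $\dt^2(\h,\la\paf)$ around $m$ and $\Dlf$ via Gaussian Lipschitz concentration (with the variance/Jensen step $(\E[\dt])^2\geq \Dlf-1$, which is the paper's Fact \ref{fact:lipVar} combined with Lemmas \ref{lemma:conc3}--\ref{lemma:concentrationALL}), and then do the ratio algebra exactly as in the paper's Lemma \ref{repeatlem}, using $(1-\eps_L)m\geq\Dlf\geq\eps_L m$ to keep denominators of order $m$. The step you flag as the main obstacle — passing from order-$\sqrt{m}$ deviations of $\dt$ to order-$\delta m$ deviations of $\dt^2$ — is handled identically in the paper.
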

\noindent{\emph{Remark}:} In Lemma \ref{lemma:lowprob}, the condition ``$(1-\eps_L)m\geq\Dlf$''  ensures that $\|\g\|>\dt(\h,\la\paf)$ (cf. Lemma \ref{lemma:lowdet}) with high probability over the realizations of $\g$ and $\h$.

\subsection{Connecting back to the LASSO: The ``Predictive Power of Gordon's Lemma''}\label{sec:predictivePower}
Let us recap the last few steps of our approach. Application of Gordon's Lemma to the approximated LASSO problem in \eqref{eq:ell2Approx} introduced the simpler lower key optimization \eqref{eq:keylow}. Without much effort, we found in Lemma \ref{lemma:lowprob} that its cost $\Lc(\g,\h)$ and the normalized squared norm of its minimizer $\frac{\|\wh(\g,\h)\|^2}{\sigma^2}$ concentrate around $\sigma\eta$ and $\gamma$, respectively. This brings the following question:
\begin{itemize}
\item[-] \emph{To what extent do such results on $\Lc(\g,\h)$ and $\wh(\g,\h)$ translate to useful conclusions about $\Fc_{\ell_2}(\A,\vb)$ and $\hat\w_{\ell2}(\A,\vb)$? }
\end{itemize}
Application of Gordon's Lemma as performed in Corollary \ref{cor:keylow} when combined with Lemma \ref{lemma:lowprob}, provide a preliminary answer to this question: 
$\Fc_{\ell_2}(\A,\vb)$ is lower bounded  by $\sigma\eta$ with overwhelming probability. Formally,
\begin{lem}[Lower Bound] \label{lemma:cost_lowbound}
Assume $(1-\eps_L)m\geq \Dlf \geq \eps_L m$ for some constant $\eps_L>0$ and $m$ is sufficiently large. Then,
for any $\eps>0$, there exists a constant $c>0$ such that, with probability $1-\exp(-cm)$,
\begin{align*}
\Fc_{\ell_2}(\A , \vb ) \geq (1-\eps)\sigma \eta.
\end{align*}
\end{lem}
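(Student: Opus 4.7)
The plan is to combine the Gordon-type lower bound of Corollary~\ref{cor:keylow} with the concentration statement for the lower key optimization provided by Lemma~\ref{lemma:lowprob}, while controlling the auxiliary Gaussian scalar $h$ that appears in Corollary~\ref{cor:keylow}. Applying Corollary~\ref{cor:keylow} with $c=(1-\eps)\sigma\eta$ reduces the task to showing that
\[
\Pro\!\left(\Lc(\g,\h)-h\sigma\ \geq\ (1-\eps)\sigma\eta\right)\ \geq\ 1-\tfrac{1}{2}\exp(-cm),
\]
since this implies the desired bound $\Pro(\Fc_{\ell_2}\geq(1-\eps)\sigma\eta)\geq 1-\exp(-cm)$ after inverting the ``$2P-1$'' form of Gordon's inequality.

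To that end, I would split the budget $\eps\sigma\eta$ symmetrically between the $\Lc$ term and the $h\sigma$ term. First, Lemma~\ref{lemma:lowprob} applied with $\eps/2$ in place of $\eps$ gives $\Lc(\g,\h)\geq (1-\eps/2)\sigma\eta$ with probability at least $1-\exp(-c_1 m)$, using precisely the hypothesis $(1-\eps_L)m\geq\Dlf\geq\eps_L m$ that is assumed here. Second, since $h\sim\Nn(0,1)$ is independent of $\g,\h$, the standard Gaussian tail bound yields
\[
\Pro\!\left(|h|>\tfrac{\eps}{2}\eta\right)\ \leq\ 2\exp\!\left(-\tfrac{\eps^2\eta^2}{8}\right)
\ =\ 2\exp\!\left(-\tfrac{\eps^2(m-\Dlf)}{8}\right)\ \leq\ 2\exp\!\left(-\tfrac{\eps^2\eps_L m}{8}\right),
\]
where the final inequality uses $m-\Dlf\geq \eps_L m$ from the lower hypothesis on $\Dlf$. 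Thus $h\sigma\leq \tfrac{\eps}{2}\sigma\eta$ with overwhelming probability.

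A union bound over these two events produces $\Lc(\g,\h)-h\sigma\geq(1-\eps)\sigma\eta$ with probability at least $1-\exp(-c_2 m)$ for some $c_2>0$ depending only on $\eps$ and $\eps_L$, provided $m$ is large enough to absorb the factor $2$ in front of the tail bound and the $\tfrac{1}{2}$ loss from Corollary~\ref{cor:keylow}. Plugging this into Corollary~\ref{cor:keylow} and choosing, e.g., $c=c_2/2$ completes the argument.

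I do not anticipate a real obstacle: Gordon's inequality (as packaged in Corollary~\ref{cor:keylow}) and the concentration of $\|\g\|^2$ and $\dt^2(\h,\la\paf)$ (used inside Lemma~\ref{lemma:lowprob}) already do all the heavy lifting. The only point requiring care is the deterioration from $1-\exp(-cm)$ to $2P-1$ in Gordon's lemma, which is handled by requesting the intermediate probability to be at least $1-\tfrac{1}{2}\exp(-cm)$; this is automatic for large $m$ because both of the two events above fail only with probability exponentially small in $m$, with rates controlled uniformly by $\eps_L$.
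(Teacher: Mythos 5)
Your proof is correct and follows essentially the same route as the paper: Gordon's inequality in the form of Corollary~\ref{cor:keylow}, the concentration of $\Lc(\g,\h)$ from Lemma~\ref{lemma:lowprob}, a Gaussian tail bound on the auxiliary scalar $h$, and a union bound to invert the ``$2P-1$'' factor (the paper packages the $h$-step into Lemma~\ref{lemma:applied_Gordon} with threshold $\eps\sqrt{m}$ rather than $\tfrac{\eps}{2}\eta$, but since $\eta\geq\sqrt{\eps_L m}$ these are interchangeable). One cosmetic slip: the bound $m-\Dlf\geq\eps_L m$ comes from the \emph{upper} hypothesis $(1-\eps_L)m\geq\Dlf$, not the lower one.
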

But is that all? A major part of our technical analysis in the remainder of this work involves showing that the connection between  the LASSO problem and the simple optimization
\eqref{eq:keylow} is much \emph{deeper} than Lemma \ref{lemma:cost_lowbound} predicts. In short, under certain conditions on $\la$ and $m$ (similar in nature to those involved in the assumption of Lemma \ref{lemma:cost_lowbound}),
we prove that the followings are true:
\begin{itemize}
\item Similar to $\Lc(\g,\h)$, the optimal cost $\Fc_{\ell_2}(\A,\vb)$ of the approximated $\ell_2$-LASSO concentrates around $\sigma\eta$.
\item Similar to $\frac{\|\wh(\g,\h)\|^2}{\sigma^2}$, the NSE of the approximated $\ell_2$-LASSO $\frac{\|\hat\w_{\ell_2}(\A,\vb)\|^2}{\sigma^2}$ concentrates around $\gamma$.
\end{itemize}
In some sense, $\Lc(\g,\h)$ ``predicts" $\Fc_{\ell_2}(\A,\vb)$ and  $\|\wh(\g,\h)\|$ ``predicts" $\|\hat\w_{\ell_2}(\A,\vb)\|$, which attributes Gordon's Lemma (or more precisely to the lower key optimization) a ``predictive power". This power is not necessarily restricted to the two examples above. In Section \ref{any sigma}, we extend the applicability of this idea to prove that worst case NSE of the C-LASSO is achieved when $\sigma\rightarrow 0$ . Finally, in Section \ref{justifyl22} we rely on this predictive power of Gordon's Lemma to motivate our claims regarding the $\ell_2^2$-LASSO.
%
%

The main idea behind the framework that underlies the proof of the above claims was originally introduced by Stojnic in his recent work \cite{StojLAS} in the context of the analysis of the $\ell_1$-constrained LASSO. While the fundamentals of the approach remain similar, we significantly extend the existing results in multiple directions by analyzing the more involved $\ell_2$-LASSO and $\ell_2^2$-LASSO problems and by generalizing the analysis to arbitrary convex functions. A synopsis of the framework is provided in the next section, while the details are deferred to later sections.

\subsection{Synopsis of the Technical Framework}\label{synops}
We highlight the main steps of the technical framework.
\begin{enumerate}
\item Apply Gordon's Lemma to $\Fc_{\ell_2}(\A,\vb)$ to find a \emph{high-probability} \emph{lower bound} for it. (cf. Lemma \ref{lemma:cost_lowbound})
\item Apply Gordon's Lemma to the \emph{dual} of $\Fc_{\ell_2}(\A,\vb)$ to find a \emph{high-probability} \emph{upper bound} for it.
\item Both lower and upper bounds can be made arbitrarily close to $\sigma\eta$. Hence, $\Fc_{\ell_2}(\A,\vb)$ concentrates with high probability around $\sigma\eta$ as well.
\item Assume $\frac{\|\hat\w_{\ell_2}\|^2}{\sigma^2}$ deviates from $\gamma$. 
A third application of Gordon's Lemma shows that such a deviation would result in a \emph{significant increase} in the optimal cost, namely $\Fc_{\ell_2}(\A,\vb)$ would be significantly larger than $\sigma\eta$.
\item From the previous step, conclude that $\frac{\|\hat\w_{\ell_2}\|^2}{\sigma^2}$ concentrates with high probability around $\gamma$.
\end{enumerate}

\subsection{Gaussian Squared Distance and Related Quantities}
The Gaussian squared distance to the $\la$-scaled set of subdifferential of $f(\cdot)$ at $\x_0$, 
\beq
\Dlf := \E\left[ \dt^2(\h,\la\paf) \right], \label{eq:dlf_def}
\eeq 
has been key to our discussion above. Here, we explore some of its  useful properties and introduce some other relevant quantities that altogether capture the (convex) geometry of the problem. Given a set $\Cc\in\R^n$, denote its conic hull by $\text{cone}(\Cc)$. Also, denote its polar cone by $\Cc^\circ$, which is the closed and convex set $\{\ub\in\R^n\big|\ub^T\vb\leq 0~\text{for all}~\vb\in\Cc\}$.

Let $\h\sim\mathcal{N}(0,\mathbf{I}_n)$. 
Then, define,
\begin{align}
&\Clf := \E\left[\left( \h - \bu(\h,\la\paf) \right)^T \bu(\h,\la\paf)\right] \label{eq:clf_def},
\\
&\Delxf := \E\left[~ \dt^2(\h,\text{cone}(\paf))~ \right]. \label{eq:delxf_def}
\end{align} 

From the previous discussion, it has become clear how $\Dlf$ appears in the analysis of the NSE of the $\ell_2$-LASSO. $\Delxf$ replaces $\Dlf$ in the case of C-LASSO. This correspondence is actually not surprising as the approximated C-LASSO problem can be written in the format of the problem in \eqref{eq:ell2Approx} by replacing $\la\paf$ with $\text{cone}(\paf)$. 
While $\Delxf$ is the only quantity that appears in the analysis of the C-LASSO, the analysis of the $\ell_2$-LASSO requires considering not only $\Dlf$ but also $\Clf$. $\Clf$ appears in the analysis during the second step  of the framework described in Section \ref{synops}. In fact, $\Clf$ is closely related to $\Dlf$ as the following lemma shows.

\begin{lem}[\cite{McCoy}]\label{lem:Dlf}
 Suppose $\paf$ is nonempty and does not contain the origin.
 Then, 
\begin{enumerate}
\item $\Dlf$ is a strictly convex function of $\la\geq 0$, and is differentiable for $\la>0$.
\item $\frac{\partial\Dlf}{\partial\la} = -\frac{2}{\la}\Clf$.
\end{enumerate}
\end{lem}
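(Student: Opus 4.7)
\medskip

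\noindent\textbf{Proof plan for Lemma \ref{lem:Dlf}.} Throughout, write $C := \partial f(\x_0)$, which by hypothesis is nonempty, compact, convex, and does not contain the origin. Put $g(\la; \h) := \dt^2(\h,\la C) = \min_{\s\in C}\|\h-\la\s\|^2$, so that $\Dlf = \E_\h[g(\la;\h)]$.

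For part (1), the plan is to first show that $g(\cdot;\h)$ is convex in $\la>0$ for each fixed $\h$, by a reparametrization trick. Fix $\la_1,\la_2>0$, $t\in[0,1]$, and let $\s_i\in C$ achieve the minimum in the definition of $g(\la_i;\h)$. Write $\la_t := t\la_1+(1-t)\la_2$ and define
\[
\s_t := \frac{t\la_1\s_1 + (1-t)\la_2\s_2}{\la_t}\,,
\]
which lies in $C$ because it is a convex combination of $\s_1,\s_2$ with weights $t\la_1/\la_t$ and $(1-t)\la_2/\la_t$. Then $\la_t\s_t = t\la_1\s_1+(1-t)\la_2\s_2$, so
\[
g(\la_t;\h)\le \|\h-\la_t\s_t\|^2 = \bigl\|t(\h-\la_1\s_1)+(1-t)(\h-\la_2\s_2)\bigr\|^2\le tg(\la_1;\h)+(1-t)g(\la_2;\h),
\]
where the last step uses convexity of $\|\cdot\|^2$. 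Taking expectations yields convexity of $\Dlf$. Strict convexity then reduces to showing the above inequality is strict on a set of positive $\h$-measure; equality at the second step would force $\h-\la_1\s_1=\h-\la_2\s_2$, i.e.\ $\bu(\h,\la_1 C)=\bu(\h,\la_2 C)$. I would argue that, for $\la_1\neq \la_2$, the set of $\h$ on which this projection identity holds is a measure-zero subset of $\R^n$ (the projections are non-expansive and, since $0\notin C$, cannot coincide on any open neighbourhood of a Gaussian-generic $\h$), and hence Gaussian expectation gives strict inequality.

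For part (2), differentiability and the derivative formula come from Danskin's envelope theorem. Since $\la C$ is convex, the minimizer $\s^*(\la)=\arg\min_{\s\in C}\|\h-\la\s\|^2$ is the unique $\s\in C$ with $\la\s^*(\la)=\bu(\h,\la C)$; Danskin's theorem therefore gives, for each $\h$,
\[
\frac{\partial g(\la;\h)}{\partial\la} = \frac{\partial}{\partial\la}\|\h-\la\s\|^2\Big|_{\s=\s^*(\la)} = -2\,\s^*(\la)^T(\h-\la\s^*(\la)) = -\frac{2}{\la}\bigl(\h-\bu(\h,\la C)\bigr)^T\bu(\h,\la C).
\]
Boundedness of $C$ and a standard dominated-convergence argument let me interchange derivative and expectation (the integrand is bounded uniformly in a neighbourhood of $\la$ by an $L^1$ function of $\h$), which yields $\frac{\partial \Dlf}{\partial\la} = -\frac{2}{\la}\Clf$.

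The main obstacle I expect is the strict-convexity step. The convexity inequality is trivial, and the derivative formula is a routine Danskin application together with dominated convergence; what needs genuine care is certifying that the equality case $\bu(\h,\la_1 C)=\bu(\h,\la_2 C)$ really is negligible for Gaussian $\h$. One clean way to close this gap is to use part (2): if $\Dlf$ were affine on some interval $[\la_1,\la_2]$, then $\Clf$ would satisfy $\Clf = c\la$ there for some constant $c$; combining this with the characterization of projections onto $\la C$ and the fact that $0\notin C$ forces $\bu(\h,\la C)$ to depend non-trivially on $\la$ on a set of positive Gaussian measure, giving a contradiction.
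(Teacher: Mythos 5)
The paper does not actually prove this lemma itself: it is imported from Lemma B.2 of \cite{McCoy}, and the matching Statements 5 and 7 of Lemma \ref{lemma:keyProp} in the appendix are likewise deferred to that reference. So your proposal stands or falls on its own. Your convexity argument (the reparametrization $\s_t=(t\la_1\s_1+(1-t)\la_2\s_2)/\la_t$, which is exactly the standard one) is correct, and part (2) is also fine: the minimizer $\s^*(\la)=\bu(\h,\la\paf)/\la$ is unique because projection onto the convex set $\la\paf$ is unique, Danskin/envelope gives the pointwise derivative $-\frac{2}{\la}\corr(\h,\la\paf)$, and the dominating bound $\frac{2}{\la}(\|\h\|+\la R)\la R$ with $R=\max_{\s\in\paf}\|\s\|$ justifies the interchange with the expectation.

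The genuine gap is in the strict-convexity step. The equality set $E=\{\h:\bu(\h,\la_1\paf)=\bu(\h,\la_2\paf)\}$ is \emph{not} in general Lebesgue-null, even with $0\notin\paf$: whenever $\la_1\paf\cap\la_2\paf$ has nonempty interior, every $\h$ in that interior is its own projection onto both sets, so $E$ contains an open set. This situation is realizable --- take $\paf$ to be any full-dimensional compact convex set avoiding the origin (e.g.\ the subdifferential at $\x_0$ of the support function $\x\mapsto\max_{\s\in\Cc}\s^T(\x-\x_0)$) and $\la_2/\la_1$ close to $1$; then $\la_1\paf\cap\la_2\paf$ is full-dimensional. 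So your parenthetical justification ("cannot coincide on any open neighbourhood of a Gaussian-generic $\h$") fails, and the fallback via part (2) is too vague to close the hole. The repair, however, is short and you should use it instead: strict convexity of the expectation does not require $E$ to be null, only that its complement have positive Gaussian measure, since the pointwise chain $g(\la_t;\h)\le\|t(\h-\la_1\s_1)+(1-t)(\h-\la_2\s_2)\|^2\le tg(\la_1;\h)+(1-t)g(\la_2;\h)$ is strict in the second step off $E$ and integrating then gives $\Df\la_t)<t\Df\la_1)+(1-t)\Df\la_2)$. To see that $\R^n\setminus E$ has positive measure, evaluate at the origin: since $0\notin\paf$, the minimum-norm point $\vb=\bu(0,\paf)$ is nonzero, so $\bu(0,\la_1\paf)=\la_1\vb\neq\la_2\vb=\bu(0,\la_2\paf)$, and by $1$-Lipschitz continuity of projections the two projections remain distinct on a whole ball around the origin. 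With that substitution your proof goes through.
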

\vspace{2pt}


As a last remark, the quantities $\Delxf$ and $\Dlf$ also play a crucial role in the analysis of the Noiseless CS and the Proximal Denoising problems. Without going into details, we mention that it has been recently proved in \cite{McCoy}\footnote{The authors in \cite{McCoy} coined the term ``statistical dimension'' of a cone $\mathcal{K}$ to denote the expected squared distance of a gaussian vector to its polar cone $\mathcal{K}\pol$. In that terminology, $\Delxf$ is the statistical dimension of the $(\text{cone}(\paf))\pol$, or equivalently (see Lemma \ref{tsg}) of the descent cone of $f(\cdot)$ at $\x_0$.
} that the noiseless compressed sensing problem \eqref{LinInv} exhibits a transition from ``failure'' to ``success'' around $m\approx\Delxf$. Also, \cite{Oymak,DonAcu,ChaJor} shows that $\Dlf$ and $\Delxf$ are equal to the worst case normalized mean-squared-error of the proximal denoisers \eqref{DirDen} and \eqref{DirCon} respectively. It is known that under mild assumptions, $\Delxf$ relates to $\Dlf$ as follows \cite{McCoy,Foygel,Oymak},
\begin{align}\label{eq:Delxf2Dlf}
\min_{\la\geq 0}\Dlf \approx \Delxf.
\end{align}



\section{Main Results}\label{sec:mainResults}

\begin{figure}
  \begin{center}
{\includegraphics[scale=0.25]{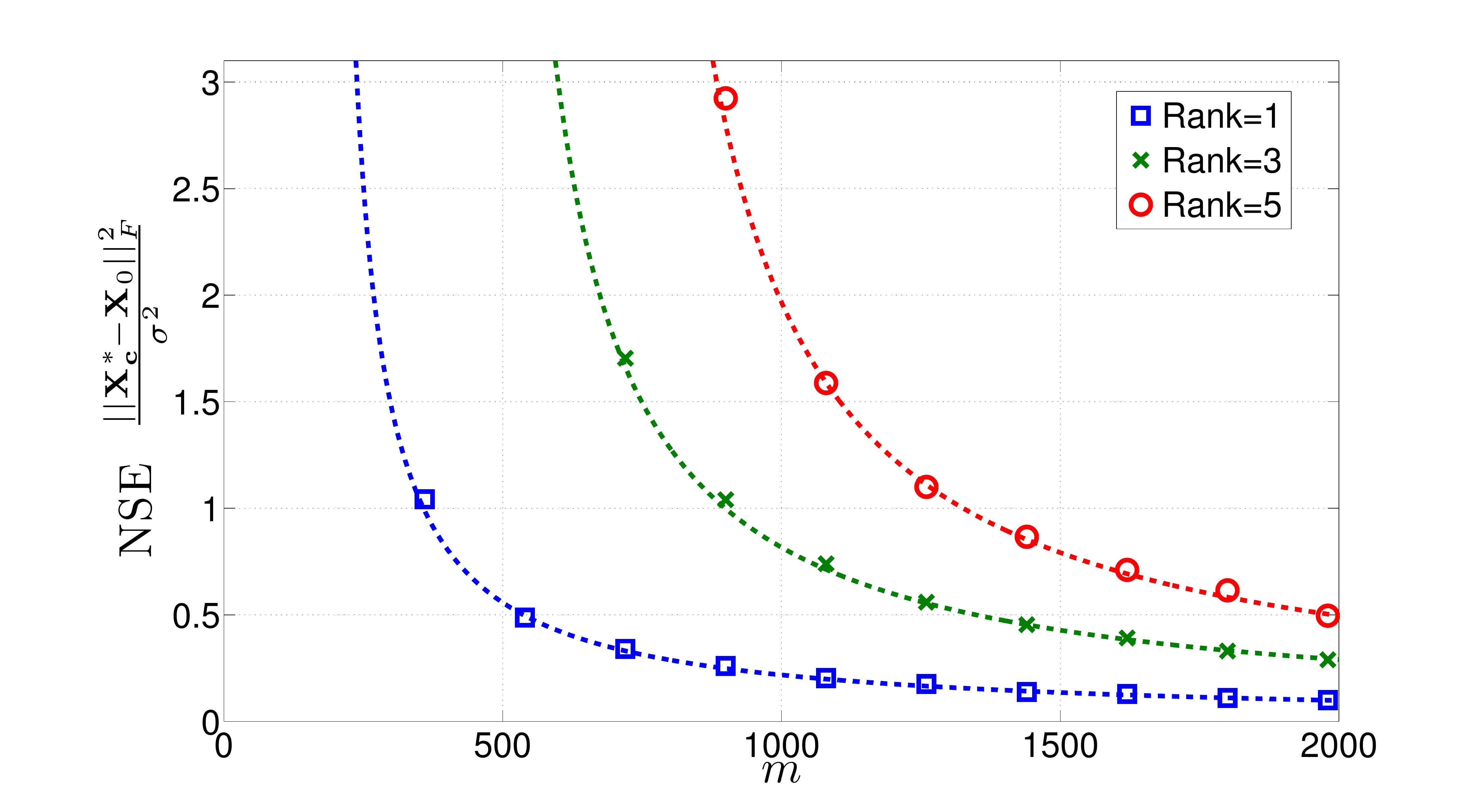}}
  \end{center}
  \caption{\footnotesize{We have considered the Constrained-LASSO with nuclear norm minimization and fixed the signal to noise ratio $\frac{\|\X_0\|_F^2}{\sigma^2}$ to $10^5$. Size of the underlying matrices are $40\times 40$ and their ranks are $1,3$ and $5$. Based on \cite{Oym,Matan}, we estimate ${\bf{D}}_f(\X_0,\R^+)\approx 179,450$ and $663$ respectively. As the rank increases, the corresponding ${\bf{D}}_f(\X_0,\R^+)$ increases and the normalized squared error increases.}}
\label{figCont1}
\end{figure}

This section provides the formal statements of our main results. 
A more elaborate discussion follows in Section \ref{sec:discussion}. 

\subsection{Setup}
Before stating our results, we repeat our basic assumptions on the model of the LASSO problem. Recall the definitions of the three versions of the LASSO problem as given in \eqref{cmodel}, \eqref{ell2model} and \eqref{ell22model}. Therein, assume:
\begin{itemize}
\item $\A\in\R^{m\times n}$ has independent standard normal entries,
\item $\z\sim\Nn(0,\sigma^2\mathbf{I}_{m})$,
\item $f:\mathbb{R}^n\rightarrow\mathbb{R}$ is convex and continuous,
\item $\paf$ does \emph{not} contain the origin.
\end{itemize}
The results to be presented hold with high probability over the realizations of the measurement matrix $\A$ and the noise vector $\vb$. Finally, recall the definitions of the quantities $\Dlf, \Clf$ and $\Delxf$ in \eqref{eq:dlf_def}, \eqref{eq:clf_def} and \eqref{eq:delxf_def}, respectively.



%
\subsection{C-LASSO}
%
\begin{restatable}[NSE of C-LASSO]{thm}{CLASSO} \label{thm:CLASSO}Assume there exists a constant $\eps_L>0$ such that, 
$(1-\eps_L)m\geq \Delxf\geq \eps_L m$
and $m$ is sufficiently large. For any $\eps>0$, there exists a constant $C=C(\eps,\eps_L)>0$ such that, with probability $1-\exp(-Cm)$,
\begin{align}\label{eq:cp2}
\frac{\|\x^*_c - \x_0 \|^2}{\sigma^2} \leq (1+\eps)\frac{\Delxf}{m-\Delxf},
\end{align}
Furthermore, there exists a deterministic number $\sigma_0>0$ (i.e. independent of $\A,\vb$) such that, if $\sigma\leq \sigma_0$, with the same probability,
\begin{align}\label{eq:cp1}
\left| \frac{\|\x^*_c - \x_0 \|^2}{\sigma^2} \times \frac{m-\Delxf}{\Delxf}-1\right|<\eps.
\end{align}
\end{restatable}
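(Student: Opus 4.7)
My plan is to follow the six-step framework of Section \ref{synops}, specialized to the C-LASSO. After the change of variables $\w = \x - \x_0$, the problem reads $\min_{\w \in \mathcal{S}_f}\|\A\w - \sigma\vb\|$ where $\mathcal{S}_f := \{\w : f(\x_0+\w) \leq f(\x_0)\}$; by convexity $\mathcal{S}_f \subseteq \Cc$, with $\Cc$ the descent cone of $f$ at $\x_0$ and polar $\Cc\pol = \cn(\paf)$. The crucial observation is that the \emph{approximated} problem obtained by enlarging $\mathcal{S}_f$ to all of $\Cc$ is the C-LASSO analog of \eqref{eq:ell2Approx} with $\la\paf$ replaced by $\cn(\paf)$, and that $\mathcal{S}_f$ coincides with $\Cc$ locally near the origin up to higher-order terms.

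For the approximated problem, steps 1--3 of the synopsis go through with minor modifications. Gordon's Lemma, applied as in Corollary \ref{cor:keylow} but with $\w$ constrained to $\Cc$ and no penalty term, lower-bounds the cost by $\Lc(\g,\h) = \min_{\w \in \Cc}\{\sqrt{\|\w\|^2 + \sigma^2}\|\g\| - \h^T\w\}$. The scalarization of Lemma \ref{lemma:lowdet} adapts directly via the Moreau identity $\max_{\|\w\|=\alpha,\,\w \in \Cc}\h^T\w = \alpha\,\dt(\h, \cn(\paf))$, so Gaussian concentration of $\|\g\|^2$ around $m$ and of $\dt^2(\h,\cn(\paf))$ around $\Delxf$ yields cost concentration at $\sigma\eta := \sigma\sqrt{m-\Delxf}$ and minimizer-norm concentration at $\sigma^2\gamma$, where $\gamma := \Delxf/(m-\Delxf)$. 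A dual application of Gordon's Lemma, obtained by writing $\|\A\w-\sigma\vb\| = \max_{\|\ab\|\leq 1}\ab^T(\A\w-\sigma\vb)$ and dualizing the conic constraint via a Sion-type minimax exchange, supplies a matching high-probability upper bound on the cost.

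Step 4 of the synopsis then yields sharp NSE concentration for the approximated problem: supposing $\|\hat\w_c\|^2$ deviated from $\sigma^2\gamma$ by a relative factor exceeding $\eps$, a third Gordon application restricted to $\{\w \in \Cc : |\|\w\|^2 - \sigma^2\gamma| > \eps\sigma^2\gamma\}$ produces a lower bound on the constrained cost strictly exceeding $\sigma\eta$ by an $\Omega(\eps^2)$ margin, thanks to the strong convexity of the scalarized objective $\sqrt{\alpha^2+\sigma^2}\|\g\| - \alpha\,\dt(\h,\cn(\paf))$ at its interior minimizer $\alpha^* \approx \sigma\sqrt{\gamma}$. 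This contradicts the cost concentration. Since the approximated minimizer satisfies $\|\hat\w_c\| = O(\sigma)$ and $\mathcal{S}_f$ differs from $\Cc$ only to higher order near the origin, the original and approximated C-LASSO minimizers coincide for all $\sigma \leq \sigma_0$ sufficiently small, giving \eqref{eq:cp1}.

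The arbitrary-$\sigma$ upper bound \eqref{eq:cp2} is the most delicate step, and where I expect the main obstacle to lie. The natural plan is again by contradiction: if $\|\w^*_c\|^2 > (1+\eps)\sigma^2\gamma$, apply Gordon's Lemma to the cone-restricted problem on $\{\w \in \Cc : \|\w\| \geq \sigma\sqrt{(1+\eps)\gamma}\}$ (which contains the hypothetical minimizer since $\mathcal{S}_f \subseteq \Cc$) and conclude that the restricted cost exceeds a suitable high-probability upper bound on the original C-LASSO cost, forcing a contradiction. The difficulty is that the trivial upper bound $\|\A\cdot 0 - \sigma\vb\| \approx \sigma\sqrt{m}$ is too loose beyond the small-$\sigma$ regime, while the sharp $\sigma\eta$ bound proved above is only available for the approximated problem. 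Closing this gap will require either a tighter a~priori upper bound on the original cost, or a scaling/monotonicity argument in $\sigma$ that reduces the worst case to the small-$\sigma$ regime already handled.
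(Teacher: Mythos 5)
Your treatment of the small-$\sigma$ statement \eqref{eq:cp1} follows essentially the paper's route (Section \ref{sec:ConstrainedLASSO}): analyze the cone-constrained approximated problem via the three applications of Gordon's Lemma, then transfer to the original problem using the local agreement of the descent set with its tangent cone. One caution: the original and approximated minimizers do not in general \emph{coincide} for small $\sigma$; the paper instead projects the approximated minimizer onto the feasible set $F_\Cc(\x_0)$, uses Proposition \ref{prop:F2T} to bound the projection error by $\delta\|\hat\w_c\|$, and derives a contradiction from the deviation bound if $\|\w^*_c\|$ strays from its predicted value. This is a cost comparison, not an identification of minimizers, but the substance of your argument is the same.

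The genuine gap is \eqref{eq:cp2}, and you have located it exactly: the sharp upper bound $\approx\sigma\sqrt{m-\Delxf}$ on the optimal cost is only available for the cone-relaxed problem, which minimizes over a \emph{larger} feasible set and hence does not upper-bound the original cost. Neither of your two suggested escapes is what the paper does. Instead (Section \ref{any sigma}), Gordon's Lemma is applied directly to the \emph{original} problem and to its dual, producing a lower key optimization $\Lco^*(\g,\h)=\min_{f_p(\w)\leq 0}\{\sqrt{\|\w\|^2+\sigma^2}\|\g\|-\h^T\w\}$ and an upper key optimization $\Uco^*(\g,\h)$ whose inner problem is additionally restricted to $\|\w\|\leq C_{up}$. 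The values of these problems are no longer computable in closed form, but they do not need to be: the restricted problem $\Lco^*_{up}$ is Lipschitz in $(\g,\h)$ with constant proportional to $\sigma$ (Lemma \ref{lipschitzlem}), hence concentrates around its own mean $\E[\Lco^*_{up}]$, and one shows that $\Uco^*$ and $\Lco^*$ both coincide with $\Lco^*_{up}$ with high probability, so all three concentrate around this common, unknown anchor (Lemma \ref{relateuplow}). The fact that $\Lco^*=\Lco^*_{up}$ w.h.p.\ rests on Lemma \ref{classolow}: a KKT analysis at the minimizer of $\Lco^*$, combined with the characterization of the tangent cone of the descent set at a nonzero feasible point (Lemma \ref{tan cone}), shows $\|\wo_{low}\|^2/\sigma^2\leq \dtR^2/(\|\g\|^2-\dtR^2)$, i.e.\ the original key minimizer is never larger than the approximated one. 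The deviation step likewise does not pass to the cone: given a hypothetical minimizer with $\|\w\|\geq(1+\eps)\sigma\sqrt{\Delxf/(m-\Delxf)}$, one scales it by $c\leq 1$, which preserves feasibility by convexity of $f$, and uses $\h^T\w\leq\dtR\|\w\|$ (Moreau's decomposition on the tangent cone) to show the restricted cost exceeds the common anchor by $\delta_{dev}\sigma\sqrt{m}$ (Lemmas \ref{classodev} and \ref{cordev}); comparing with the upper bound on $\Fco_c^*(\A,\vb)$ closes the contradiction. Without some version of this machinery --- in particular the Lipschitz-concentration anchor and the norm bound on the original key minimizer --- your outline for \eqref{eq:cp2} does not go through.
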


\subsection{$\ell_2$-LASSO}
\begin{defn}[$\Rco$]\label{defn:Ron}
Suppose $m> \min_{\la\geq 0}\Dlf$. Define $\Rco$ as follows,
\begin{align}\nn
\Rco = \left\{ \la> 0 ~|~ m-\Dlf> \max\{0,  \Clf\}  \right\}.
\end{align}
\emph{Remark:} Section \ref{sec:ell2LASSO} fully characterizes $\Rco$ and shows that it is an open interval.
\end{defn}

\begin{restatable}[NSE of $\ell_2$-LASSO in $\Rco$]{thm}{RONLASSO}\label{thm:ell2LASSO}
Assume there exists a constant $\eps_L>0$ such that $(1-\eps_L)m\geq \max\{\Dlf,$ $\Dlf+\Clf\}$ and $\Dlf\geq \eps_Lm$. Further, assume that $m$ is sufficiently large. Then, for 
any $\eps>0$, there exists a constant $C=C(\eps,\eps_L)>0$ and a deterministic number $\sigma_0>0$ (i.e.~independent of $\A,\vb$) such that, whenever $\sigma\leq \sigma_0$, with probability $1-\exp(-C\min\{m,\frac{m^2}{n}\})$,
\beq
\left|\frac{\|\x^*_{\ell_2}-\x_0\|^2}{\sigma^2} \times \frac{m-\Dlf}{\Dlf}-1\right|<\eps.
\eeq
\end{restatable}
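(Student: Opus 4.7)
The strategy is to follow the five-step framework of Section~\ref{synops}, applied first to the approximated $\ell_2$-LASSO \eqref{eq:ell2Approx} and then transferred to the original problem \eqref{eq:ell2} in the small-$\sigma$ limit. The first task is to show that $\Fc_{\ell_2}(\A,\vb)$ concentrates sharply around $\sigma\eta$ with $\eta=\sqrt{m-\Dlf}$. The matching lower bound $\Fc_{\ell_2}\geq(1-\eps)\sigma\eta$ is already supplied by Lemma~\ref{lemma:cost_lowbound}. For the upper bound, I would dualize the norm, writing $\|\A\w-\sigma\vb\|=\max_{\|\ab\|\leq 1}\ab^T(\A\w-\sigma\vb)$, swap min and max (legitimate after restricting $\w$ to a ball, a restriction that must be shown not to bind at the optimum), and apply the reverse direction of Gordon's inequality to the resulting Gaussian max-min process. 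The associated ``upper key optimization'' admits a deterministic analysis parallel to Lemma~\ref{lemma:lowdet}; its minimum concentrates around $\sigma\eta$ as well, and it is precisely the condition $m-\Dlf>\max\{0,\Clf\}$ defining $\Rco$, together with the slack $\eps_L$, that certifies both regularity of this deterministic minimization and the matching of upper and lower bounds.

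Once two-sided concentration of the cost is in hand, the NSE concentration follows from the ``deviation argument'' of step~4 in Section~\ref{synops}. Suppose $\|\hat\w_{\ell_2}\|^2/\sigma^2$ deviates from $\gamma=\Dlf/(m-\Dlf)$ by more than $\eps$, and restrict the key optimization \eqref{eq:keylow} to the set $\Sc_\pm=\{\w:\|\w\|^2/\sigma^2\notin[(1-\eps)\gamma,(1+\eps)\gamma]\}$. The scalar reduction leading to Lemma~\ref{lemma:lowdet} expresses the minimum as a function of $\alpha=\|\w\|$ that is strictly convex in a neighborhood of $\hat\alpha=\sigma\sqrt{\gamma}$, so the restricted minimum exceeds the unrestricted one by a deterministic gap of order $\eps^2\sigma\eta$. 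A third application of Gordon's Lemma converts this deterministic gap into a high-probability statement that $\Fc_{\ell_2}$ restricted to $\Sc_\pm$ strictly exceeds the concentration value of the unrestricted $\Fc_{\ell_2}$; hence $\hat\w_{\ell_2}\notin\Sc_\pm$ with high probability, which is the desired concentration of the approximated NSE around $\gamma$.

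The final task is the transfer from the approximated $\ell_2$-LASSO to the original one. Since $\fh(\x)\leq f(\x)$ with a first-order error of order $o(\|\x-\x_0\|)$ near $\x_0$, and since the preceding step shows $\|\hat\w_{\ell_2}\|=O(\sigma)$, a perturbation argument -- comparing the two optimal values, bounding their difference by a term that is $o(\sigma)$ uniformly on the relevant ball, and invoking the stability of the minimizer coming from the two-sided cost concentration -- should show that for $\sigma$ below a deterministic threshold $\sigma_0$ the original minimizer $\w^*_{\ell_2}$ agrees with $\hat\w_{\ell_2}$ up to a multiplicative $(1\pm\eps)$ factor in norm. This yields the stated NSE bound for $\x^*_{\ell_2}$.

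I expect the most delicate step to be the dual upper bound. Gordon's inequality in its basic form is one-sided, so producing the reverse direction calls either for a careful compactness argument that legitimizes the minimax exchange on a bounded domain, or for an appeal to a complementary Gaussian comparison inequality, and in either case one must be in a regime where the matching deterministic problem has a unique stable minimizer -- this is exactly what $\Rco$ and the slack $(1-\eps_L)m\geq \Dlf+\Clf$ deliver. A secondary subtlety is the probability rate $\exp(-C\min\{m,m^2/n\})$: the $m^2/n$ piece arises in the deviation step, where the Gaussian concentration is controlled over a localized subset of the sphere and the effective Lipschitz constant of the relevant Gaussian functional scales unfavorably in $n/m$.
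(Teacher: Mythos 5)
Your proposal follows essentially the same route as the paper: Gordon's Lemma applied to the primal problem for the lower bound, applied to the Lagrange dual (with the inner maximization restricted to a compact ball of radius tuned to $\sigma\sqrt{\Dlf/(m-\Dlf)}$, which is exactly the ``careful compactness argument'' you anticipate) for the matching upper bound, a third application on the deviation set for the concentration of $\|\hat\w_{\ell_2}\|$, and a uniform first-order approximation (max formula) argument to transfer the conclusion to the original LASSO for $\sigma\leq\sigma_0$. The only slip is minor and does not affect the argument: the $m^2/n$ term in the exponent originates in the upper-bound step, from concentrating $\corr(\h,\la\paf)$ whose control involves $\sqrt{\Plf}=O(\sqrt{n+m})$, rather than in the deviation step.
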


%

\begin{figure}
  \begin{center}
{\includegraphics[scale=0.25]{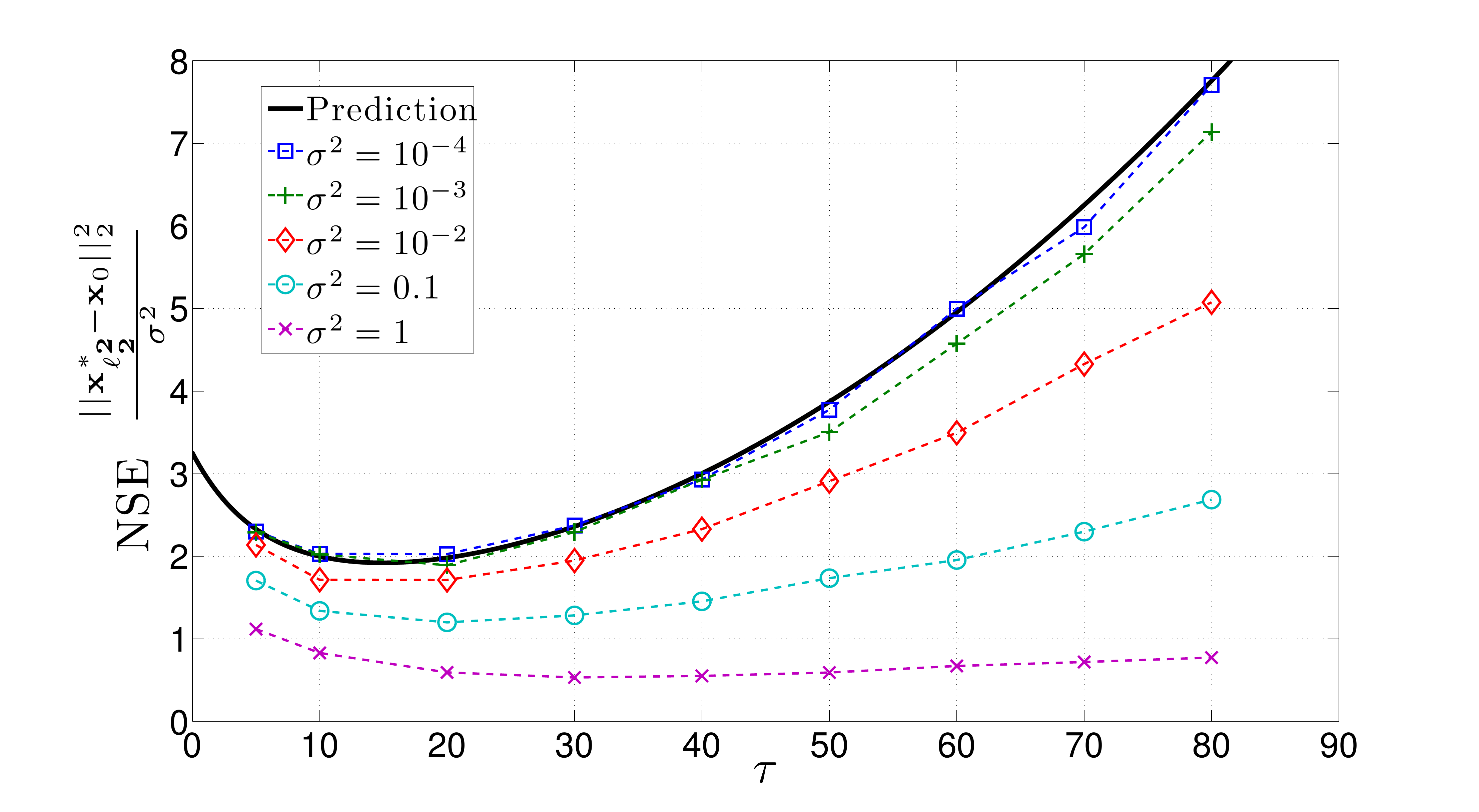}}
  \end{center}
  \caption{\footnotesize{We considered $\ell_2^2$-LASSO problem, for a $k$ sparse signal of size $n=1000$. We let $\frac{k}{n}=0.1$ and $\frac{m}{n}=0.5$ and normalize the signal power by setting $\|\x_0\|=1$. $\tau$ is varied from $0$ to $80$ and the signal-to-noise ratio (SNR) $\frac{\|\x_0\|^2}{\sigma^2}$ is varied from $1$ to $10^4$. We observe that, for high SNR $(\sigma^2\leq 10^{-3})$, the analytical prediction matches with simulation. Furthermore, the lower SNR curves are upper bounded by the high SNR curves. This behavior is fully consistent with what one would expect from Theorem \ref{thm:CLASSO} and Formula \ref{form:ell22LASSO}.}}
\label{L22VarVar}
\end{figure}
\subsection{$\ell_2^2$-LASSO}
\begin{defn}[Mapping Function]\label{defn:map}
For any $\la\in\Rco$, define
\beq\label{eq:map}
\map(\la)=\la~ \frac{m-\Df{\la})-\Cf{\la})}{\sqrt{m-\Df{\la})}}.
\eeq
\end{defn}

\begin{thm}[Properties of $\map(\cdot)$] \label{thm:map}
Assume $m>\min_{\la\geq 0}\Dlf$. The function $\map(\cdot):\Rco\rightarrow \R^+$ is strictly increasing, continuous and bijective. Thus, its inverse function $\map^{-1}(\cdot):\R^+\rightarrow\Rco$ is well defined.
\end{thm}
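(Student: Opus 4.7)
The plan is to analyze $\map(\cdot)$ through its derivative. From Lemma \ref{lem:Dlf}, $\Dlf$ is strictly convex and $C^{1}$ on $(0,\infty)$ with $\frac{d}{d\la}\Dlf=-\tfrac{2}{\la}\Clf$, so $\Clf$ is continuous, and differentiating this identity once more translates strict convexity ($\frac{d^{2}}{d\la^{2}}\Dlf>0$) into the key inequality $\la\,\frac{d\Clf}{d\la}<\Clf$ on $(0,\infty)$. Writing $u(\la):=m-\Dlf$ (so $u'(\la)=\frac{2\Clf}{\la}$) and differentiating \eqref{eq:map} by the product and quotient rules, one obtains after simplification
\[
\map'(\la)\;=\;\frac{u(\la)^{2}\;-\;\la\,\frac{d\Clf}{d\la}\,u(\la)\;+\;\Clf^{2}}{u(\la)^{3/2}}.
\]

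With this formula in hand, strict monotonicity reduces to positivity of the numerator on $\Rco$. On $\Rco$ one has $u>\max\{0,\Clf\}$, so the denominator is well defined and positive. If $\Clf\le 0$, then $\la\,\frac{d\Clf}{d\la}<\Clf\le 0$, hence $-\la\,\frac{d\Clf}{d\la}\,u\ge 0$ and the numerator is at least $u^{2}>0$. If $\Clf>0$, then $\la\,\frac{d\Clf}{d\la}<\Clf<u$ combined with $u>0$ gives $\la\,\frac{d\Clf}{d\la}\,u<u^{2}$, so the numerator strictly exceeds $\Clf^{2}>0$. Continuity of $\map$ on $\Rco$ is immediate from continuity of $\Dlf$ and $\Clf$ together with $u>0$ on $\Rco$.

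Finally, bijectivity onto $\R^{+}$ reduces, by strict monotonicity and continuity on the open interval $\Rco=(\la_{-},\la_{+})$, to identifying the one-sided limits of $\map$ at its endpoints; the goal is to show these are $0$ at $\la_{-}^{+}$ and $+\infty$ at $\la_{+}^{-}$. Each endpoint is of one of three types. (i) If $\la_{-}=0$, then $\la\to 0$ while $u-\Clf$ stays bounded and $\sqrt{u}$ stays bounded below, so $\map\to 0^{+}$. (ii) If the endpoint is a finite $\la_{*}$ with $u(\la_{*})=\Clf(\la_{*})>0$, then $u-\Clf\to 0^{+}$ while $u$ remains bounded away from $0$, so again $\map\to 0^{+}$. (iii) If the endpoint is a finite $\la_{*}$ with $u(\la_{*})=0$, then the continuous inequality $u>\Clf$ forces $\Clf(\la_{*})\le 0$; the hypothesis $m>\min_{\la}\Dlf$ rules out the degenerate case $\Clf(\la_{*})=0=u(\la_{*})$, since that would require $\la_{*}$ to simultaneously minimize $\Dlf$ (as $\Clf(\la_{*})=0$ and $\Dlf$ is strictly convex) and satisfy $\Dlf=m$, contradicting the hypothesis. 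Hence $\Clf(\la_{*})<0$ strictly, and $\sqrt{u}\to 0^{+}$ with $u-\Clf$ bounded below yields $\map\to+\infty$. The main obstacle is this boundary dichotomy: one must invoke the remark after Definition \ref{defn:Ron} that $\Rco$ is a single open interval, together with the strict monotonicity already established, to conclude that the two endpoints must contribute the distinct limits $0$ and $+\infty$ rather than both producing the same limit.
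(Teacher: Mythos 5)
Your overall architecture mirrors the paper's: monotonicity of $\map$ is driven by the single inequality $\la\,\frac{d\Clf}{d\la}\leq\Clf$ (which, via $\Dlf+2\Clf+\Plf=n$ and $\frac{d\Dlf}{d\la}=-\frac{2}{\la}\Clf$, is exactly the paper's statement that $\Plf$ is nondecreasing), and the endpoint limits $0$ and $+\infty$ are obtained from the same boundary identities $m-\Df\lac)=\Cf\lac)$ and $m-\Df\lam)=0$ with $\Cf\lam)<0$. Your derivative formula $\map'=\bigl(u^2-\la\,\Clf'\,u+\Clf^2\bigr)u^{-3/2}$ is algebraically correct, and your endpoint case analysis (including ruling out the degenerate $u=\Clf=0$ endpoint via $m>\min_\la\Dlf$) is sound.

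There is, however, one genuine gap: the entire monotonicity computation presupposes that $\Clf$ is differentiable in $\la$, and nothing in the paper (or in the cited Lemma \ref{lem:Dlf}) provides this. Lemma \ref{lem:Dlf} gives only that $\Dlf$ is $C^1$ and strictly convex, which makes $\Clf=-\frac{\la}{2}\frac{d\Dlf}{d\la}$ continuous but not differentiable; strict convexity of $\Dlf$ does not yield a second derivative, and even where $\Dlf''$ exists, strict convexity gives only $\Dlf''\geq 0$, so your claimed strict inequality $\la\Clf'<\Clf$ is also unjustified. The second point is harmless — the weak inequality already gives numerator $\geq u^2-\Clf u+\Clf^2>0$ — but the first is not: without $\Clf'$ your formula for $\map'$ does not exist. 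The paper avoids this precisely by never differentiating $\Clf$: it writes $\call(\la)=\sqrt{m-\Dlf}-\Clf/\sqrt{m-\Dlf}$, bounds the increments of $\Clf$ using its Lipschitz continuity (Lemma \ref{useful lemon}), and controls $\limsup_{h\to0^+}\frac{\call(\la+h)-\call(\la)}{h}$ using the difference-quotient form $-\tD-2\tC\geq0$ of the monotonicity of $\Plf$. To repair your argument you would either need to rerun your computation with one-sided difference quotients, or invoke that $\frac{d\Dlf}{d\la}$ is monotone (hence differentiable a.e.) and that $\map$ is locally Lipschitz on $\Rco$, so that a.e.\ positivity of your formula integrates up to strict increase; as written, the step is missing.
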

%

\begin{form} [Conjecture on the NSE of $\ell_2^2$-LASSO]\label{form:ell22LASSO}
Assume $(1-\eps_L)m\geq  \min_{\la\geq 0}\Dlf\geq \eps_Lm$ for a constant $\eps_L>0$ and $m$ is sufficiently large. For any value of the penalty parameter $\tau>0$, we claim that, the expression,
\beq\nn
 \frac{\Df\map^{-1}(\tau))}{m-\Df\map^{-1}(\tau))},
\eeq
provides a good prediction of the NSE $\frac{\|\x_{\ell_2^2}^*-\x_0\|^2}{\sigma^2}$ for sufficiently small $\sigma$. Furthermore, we believe that the same expression upper bounds the NSE for arbitrary values of $\sigma$.
\end{form}


\subsection{Converse Results}
\begin{defn} A function $f(\cdot):\R^n\rightarrow \R$ is called Lipschitz continuous if there exists a constant $L>0$ such that, for all $\x,\y\in\R^n$, we have $|f(\x)-f(\y)|\leq L\|\x-\y\|$.
\end{defn}
\noindent {\emph{Remark:}} 
Any norm in $\R^n$ is Lipschitz continuous \cite{normlip}.
\begin{thm}[Failure of Robust Recovery] \label{not robust}
Let $f(\cdot)$ be a Lipschitz continuous convex function
  Assume $m<\Delxf$. Then, for any $C_{max}>0$, there exists a positive number $\sigma_0:=\sigma_0(m,n,f,\x_0,C_{max})$ such that, if $\sigma\leq \sigma_0$, with probability $1-8\exp(-\frac{(\Delxf-m)^2}{4n})$ 
  , we have,
\beq
 \frac{\|\x_{\ell_2}^*(\A,\z)-\x_0\|^2}{\sigma^2}\geq C_{max},
\quad\text{ and }\quad
 \frac{\|\x_{\ell_2^2}^*(\A,\z)-\x_0\|^2}{\sigma^2}\geq C_{max}.
\eeq
\end{thm}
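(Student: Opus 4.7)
The plan is to exploit the failure of noiseless compressed sensing that occurs when $m<\Delxf$: this failure produces a constant-in-$\sigma$ lower bound on $\|\x^{*}-\x_0\|$, which, once divided by $\sigma^{2}\to 0$, forces the NSE to diverge. My first step is to invoke the Gaussian-width/descent-cone machinery recalled just before the theorem. When $m<\Delxf$, standard Gaussian-Lipschitz concentration of the $1$-Lipschitz map $\h\mapsto\dt(\h,\mathrm{cone}(\paf))$, combined with the escape-through-a-mesh lemma of \cite{Cha,McCoy}, gives with probability at least $1-8\exp\bigl(-(\Delxf-m)^{2}/(4n)\bigr)$ a nonzero $\w\in\ker(\A)$ lying in the descent cone of $f$ at $\x_0$, i.e.\ in $(\mathrm{cone}(\paf))^{\circ}$. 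Since $0\notin\paf$ implies that $\x_0$ is not a minimizer of $f$, the descent cone has nonempty relative interior, and continuity of $f$ lets me select $\w$ and $t_0>0$ so that $\x':=\x_0+t_0\w$ satisfies $\A\x'=\A\x_0$ and $\delta:=f(\x_0)-f(\x')>0$.

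With this bad candidate $\x'$ in hand, I compare objectives. Let $L$ denote the Lipschitz constant of $f$. For the $\ell_2$-LASSO, optimality of $\x_{\ell_2}^{*}$ against $\x'$ yields
\begin{align*}
\|\y-\A\x_{\ell_2}^{*}\|+\la f(\x_{\ell_2}^{*})\;\le\;\|\y-\A\x'\|+\la f(\x')\;=\;\sigma\|\vb\|+\la f(\x_0)-\la\delta.
\end{align*}
Assuming, toward a contradiction, that $\|\x_{\ell_2}^{*}-\x_0\|\le \delta/(2L)$, Lipschitz continuity gives $f(\x_{\ell_2}^{*})\ge f(\x_0)-\delta/2$, so the display above collapses to $\|\y-\A\x_{\ell_2}^{*}\|\le \sigma\|\vb\|-\la\delta/2$, whose right-hand side is negative once $\sigma<\la\delta/(2\|\vb\|)$. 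Absorbing the high-probability bound $\|\vb\|\le 2\sqrt{m}$ into the event already in play, I conclude $\|\x_{\ell_2}^{*}-\x_0\|>\delta/(2L)$, whence $\mathrm{NSE}\ge \delta^{2}/(4L^{2}\sigma^{2})\ge C_{\max}$ for $\sigma_0$ chosen small in terms of $\delta,L,m$ and $C_{\max}$. For the $\ell_2^{2}$-LASSO the same comparison, now with the squared residual, yields $\tfrac12\|\y-\A\x_{\ell_2^{2}}^{*}\|^{2}\le \tfrac12\sigma^{2}\|\vb\|^{2}-\sigma\tau\delta/2$ under the same contradictory hypothesis, which is infeasible once $\sigma<\tau\delta/\|\vb\|^{2}$; the conclusion follows identically.

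The main technical obstacle I foresee is the degenerate case in the first step, where every direction in $\ker(\A)$ inside the descent cone of $f$ at $\x_0$ happens to leave $f$ locally constant, so that the strict decrease $\delta>0$ cannot be arranged. Two routes should handle this: either (i) perturb $\w$ into the relative interior of the descent cone, where strict descent follows from convexity together with the non-minimality of $\x_0$ forced by $0\notin\paf$; or (ii) dualize and argue that the failure of noiseless basis pursuit certified by the escape-through-a-mesh event already forces a \emph{strict} drop of the noiseless optimal value below $f(\x_0)$. Either route preserves the stated probability up to an absorbable term, after which the quantitative comparison above goes through unchanged, and the final $\sigma_0$ is a deterministic function of $m,n,f,\x_0$ and $C_{\max}$ (with the fixed estimator parameters $\la,\tau$ implicit).
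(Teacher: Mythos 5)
Your overall strategy is the same as the paper's: use the failure of noiseless recovery when $m<\Delxf$ to produce a null-space direction along which $f$ strictly decreases, then a Lipschitz comparison to force $\|\x^*-\x_0\|$ to stay bounded below by a constant independent of $\sigma$. Two steps, however, have genuine gaps. First, your comparison is against the fixed competitor $\x'$, so the residual $\|\y-\A\x'\|=\sigma\|\vb\|$ enters the inequality and you must additionally condition on $\|\vb\|\leq 2\sqrt{m}$; this costs an extra $\exp(-\order{m})$ that the stated probability $1-8\exp(-\frac{(\Delxf-m)^2}{4n})$ does not contain. The paper avoids this entirely: it compares the minimizer $\x_0+\w^*$ with its translate $\x_0+\w^*+\w'$, where $\w'=\x'-\x_0$ lies in the null space of $\A$, for which the residual term is \emph{identical}; optimality then directly yields $f(\x_0+\w^*+\w')\geq f(\x_0+\w^*)$, and two applications of Lipschitz continuity give $\|\w^*\|\geq (f(\x_0)-f(\x'))/(2L)$ with no reference to $\vb$ or $\sigma$ at all.

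Second, and more seriously, your $\delta=f(\x_0)-f(\x')$ is a random variable (it depends on the realization of $\A$ through the null-space vector you extract), so choosing $\sigma_0$ ``small in terms of $\delta$'' does not produce the deterministic $\sigma_0(m,n,f,\x_0,C_{max})$ that the theorem asserts. You correctly flag this as the main obstacle, but neither of your proposed routes is carried out, and route (i) is shaky: non-minimality of $\x_0$ guarantees \emph{some} strict descent direction, not that the particular direction handed to you by the escape-through-a-mesh event can be perturbed, while remaining both in the null space of $\A$ and in the descent cone, so as to achieve strict descent. The paper's fix (Corollary \ref{usualcor}) is a continuity-of-measure argument: writing the event $\{f(\x^*)<f(\x_0)\}$ as the increasing union of the events $\{f(\x^*)\leq f(\x_0)-\tfrac{1}{n}\}$, one extracts a \emph{deterministic} $\eps>0$ for which the latter event still has probability at least $1-8\exp(-\frac{(\Delxf-m)^2}{4n})$ --- this is precisely where the constant $8$ (rather than $4$) in the probability comes from. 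With that deterministic $\eps$ in hand, $\sigma_0=\eps/(4L\sqrt{C_{max}})$ works and your quantitative conclusion goes through.
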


\subsection{Remarks}
A detailed discussion of the results follows in Section \ref{sec:discussion}. Before this, the following remarks are in place.

\noindent{$\bullet~~$} Known results in the noiseless CS problem \eqref{LinInv} quantify the minimum number of measurements required for successful recovery of the signal of interest. Our Theorems \ref{thm:CLASSO} and \ref{thm:ell2LASSO} hold in the regime where this minimum number of measurements required grows proportional to the actual number of measurements $m$. As Theorem \ref{not robust} shows, when $m$ is less than the minimum number of measurements required, then the LASSO programs fails to stably estimate $\x_0$.
%

\noindent{$\bullet~~$} In Theorem \ref{thm:ell2LASSO}, the exponent in the probability expression grows as $\min\{m,\frac{m^2}{n}\}$. This implies that, we require $m$ to grow at least linearly in $\sqrt{n}$.

\noindent{$\bullet~~$} Theorem \ref{thm:CLASSO} suggests that  the NSE of the Constrained-LASSO is maximized as $\sigma\rightarrow 0$. While we believe, the same statement is also valid for the $\ell_2$- and $\ell_2^2$-LASSO, we do not have a proof yet. Thus, Theorem \ref{thm:ell2LASSO} and Formula \ref{form:ell22LASSO} lack this guarantee.

\noindent{$\bullet~~$} As expected the NSE of the $\ell_2$-LASSO depends on the particular choice of the penalty parameter $\la$. Theorem \ref{thm:ell2LASSO} sharply characterizes the NSE (in the small $\sigma$ regime) for all values of the penalty parameter $\la\in\Rco$. In Section \ref{sec:discussion} we elaborate on the behavior of the NSE for other values of the penalty parameter. Yet, the set of values $\Rco$ is the most interesting one for several reasons, including but not limited to the following:
\begin{enumerate} [(a)]
\item The optimal penalty parameter $\labb$ that minimizes the NSE is in $\Rco$.
\item The function $\map(\cdot)$ defined in Definition \ref{defn:map} proposes a bijective mapping from $\Rco$ to $\R^+$. The inverse of this function effectively maps any value of the penalty parameter $\tau$ of the $\ell_2^2$-LASSO to a particular value in $\Rco$. Following this mapping, the exact characterization of the NSE of the $\ell_2$-LASSO for $\la\in\Rco$, translates (see Formula \ref{form:ell22LASSO}) to a prediction of the NSE of the $\ell_2^2$-LASSO for any $\tau\in\R^+$.
\end{enumerate}

\noindent{$\bullet~~$} We don't have a rigorous proof of Formula \ref{form:ell22LASSO}. Yet, we provide partial justification and explain the intuition behind it in Section \ref{justifyl22}. Section \ref{justifyl22} also shows that, when $m>\min_{\la\geq 0}\Dlf$, $\ell_2^2$-LASSO will stably recover $\x_0$ for any value of $\tau>0$, which is consistent with Formula \ref{form:ell22LASSO}. See also the discussion in Section \ref{sec:discussion}. We, also, present numerical simulations that support the validity of the claim.

\noindent{$\bullet~~$}
Theorem \ref{not robust} proves that  both in the $\ell_2$- and $\ell_2^2$-LASSO problems,  the estimation error does not grow proportionally to the noise level $\sigma$, when the number of measurements is not large enough. This result can be seen as a corollary of Theorem 1 of \cite{McCoy}. A result of similar nature holds for the C-LASSO, as well. For the exact statement of this result and the proofs see Section \ref{notenoughm}. 


\subsection{Paper Organization}
Section \ref{sec:discussion} contains a detailed discussion on our results and on their interpretation. 
Sections \ref{sec:intro2tech} and \ref{sec:KeyIdeas} contain the technical details of the framework as it was summarized in Section \ref{synops}.
 In Sections \ref{sec:ConstrainedLASSO} and \ref{any sigma}, we prove the two parts of Theorem \ref{thm:CLASSO} on the NSE of the C-LASSO. 
 Section \ref{sec:ell2LASSO} analyzes the $\ell_2$-LASSO and Section \ref{sec:ell2LASSO NSE} proves Theorem \ref{thm:ell2LASSO} regarding the NSE over $\Rco$. Section \ref{justifyl22} discusses the mapping between $\ell_2$ and $\ell_2^2$-LASSO, proves Theorem \ref{thm:map} and motivates Formula \ref{form:ell22LASSO}. In Section \ref{notenoughm} we focus on the regime where robust estimation fails and prove Theorem \ref{not robust}. Simulation results presented in
Section \ref{sec:num} support our analytical predictions. Finally, directions for
future work are discussed in Section \ref{sec:conc}. Some of the technical details are deferred to the Appendix.

\section{Discussion of the Results}\label{sec:discussion}

This section contains an extended discussion on the results of this work. We elaborate on their interpretation and implications. 

\subsection{C-LASSO}

We are able to characterize the estimation performance of the Constrained-LASSO in \eqref{cmodel} solely based on $\Delxf$. Whenever $m>\Delxf$, for sufficiently small $\sigma$, 
we prove that,
\beq
\frac{\| \x_c^* - \x_0 \|^2}{\sigma^2} \approx\frac{\Delxf}{m-\Delxf}\label{maineq1}.
\eeq
Furthermore, \eqref{maineq1} holds for arbitrary values of $\sigma$ when $\approx$ is replaced with $\lesssim$. 
Observe in \eqref{maineq1} that as $m$ approaches $\Delxf$, the NSE increases and when $m=\Delxf$, $\text{NSE}=\infty$. This behavior is not surprising as when $m<\Delxf$, one cannot even recover $\x_0$ from noiseless observations via \eqref{LinInv} hence it is futile to expect noise robustness. 
For purposes of illustration, notice that \eqref{maineq1} can be further simplified for certain regimes as follows:
\begin{align*}
\frac{\| \x_c^* - \x_0 \|^2}{\sigma^2}\approx \begin{cases}~~~1&\text{when}~~~m=2\Delxf,\\
\frac{\Delxf}{m} &\text{when}~~~m\gg\Delxf.\end{cases}
\end{align*}

\subsubsection{Relation to Proximal Denoising}\label{sec:relationc}
We want to compare the NSE of the C-LASSO in \eqref{cmodel} to the MSE risk of the constrained proximal denoiser in \eqref{DirCon}. For a fair comparison, the average signal power $\E[\|\A\x_0\|^2]$ in \eqref{cmodel} should be equal to $\|\x_0\|^2$. This is the case for example when $\A$ has independent $\Nn(0,\frac{1}{m})$ entries. This is equivalent to amplifying the noise variance to $m\sigma^2$ while still normalizing the error term $\|\x_c^*-\x_0\|^2$ by $\sigma^2$. Thus, in this case, the formula \eqref{maineq1} for the NSE is multiplied by $m$ to result in $\Delxf\cdot \frac{m}{m-\Delxf}$ (see Section \ref{sec:trans} for further explanation).
%
Now, let us compare this with the results known for proximal denoising. There \cite{Oymak,ChaJor}, it is known that the normalized MSE is maximized when $\sigma\rightarrow0$ and is equal to $\Delxf$. Hence, we can conclude that the NSE of the LASSO problem is amplified compared to the corresponding quantity of proximal denoising by a factor of $\frac{m}{m-\Delxf}>1$. This factor can be interpreted as the penalty paid in the estimation error for using linear measurements.
%


\subsection{$\ell_2$-LASSO}
 Characterization of the NSE of the $\ell_2$-LASSO is more involved than that of the NSE of the C-LASSO. For this problem, choice of $\la$ naturally plays a critical role. We characterize three distinct ``regions of operation'' of the $\ell_2$-LASSO, depending on the particular value of $\la$.
\subsubsection{Regions Of Operation}
 First, we identify the regime in which the $\ell_2$-LASSO can robustly recover $\x_0$. In this direction, the number of measurements should be large enough to guarantee at least noiseless recovery in \eqref{LinInv}, which is the case when $m>\Delxf$ \cite{Cha,McCoy}. To translate this requirement in terms of $\Dlf$, recall \eqref{eq:Delxf2Dlf} and Lemma \ref{lem:Dlf}, and define
$\labb$ to be the \emph{unique} minimizer of $\Dlf$ over $\la\in\R^+$. We, then, write the regime of interest as $m>\Df\labb)\approx\Delxf.$

Next, we identify three important values of the penalty parameter $\la$, needed to describe the distinct regions of operation of the estimator. 
\begin{enumerate}[a)]
\item \emph{$\labb$~}: We show that $\labb$ is optimal in the sense that the NSE is minimized for this particular choice of the penalty parameter. This also explains the term ``best" we associate with it.
\item \emph{$\lam$~}:  Over $\la\geq \labb$, the equation $m=\Dlf$ has a unique solution. We denote this solution by $\lam$. For values of $\la$ larger than $\lam$, we have $m\leq \Dlf$.
\item \emph{$\lac$~}: Over $0\leq\la\leq\labb$, if $m\leq n$, the equation $m-\Dlf=\Clf$ has a unique solution which we denote $\lac$. Otherwise, it has no solution and $\lac:=0$.
%
%
\end{enumerate}
Based on the above definitions, we recognize the three distinct regions of operation of the $\ell_2$-LASSO, as follows,
\begin{enumerate}[a)]
\item $\Rco=\{\la\in\R^+\big|\lac<\la<\lam\}$.
\item $\Rcf=\{\la\in\R^+\big|\la\leq\lac\}$.
\item $\Rci=\{\la\in\R^+\big|\la\geq \lam\}$.
\end{enumerate}
See Figure \ref{LASSOintro2} for an illustration of the definitions above and Section \ref{sec:ell2LASSO} for the detailed proofs of the statements.

\begin{figure}
  \begin{center}
{\includegraphics[scale=0.45]{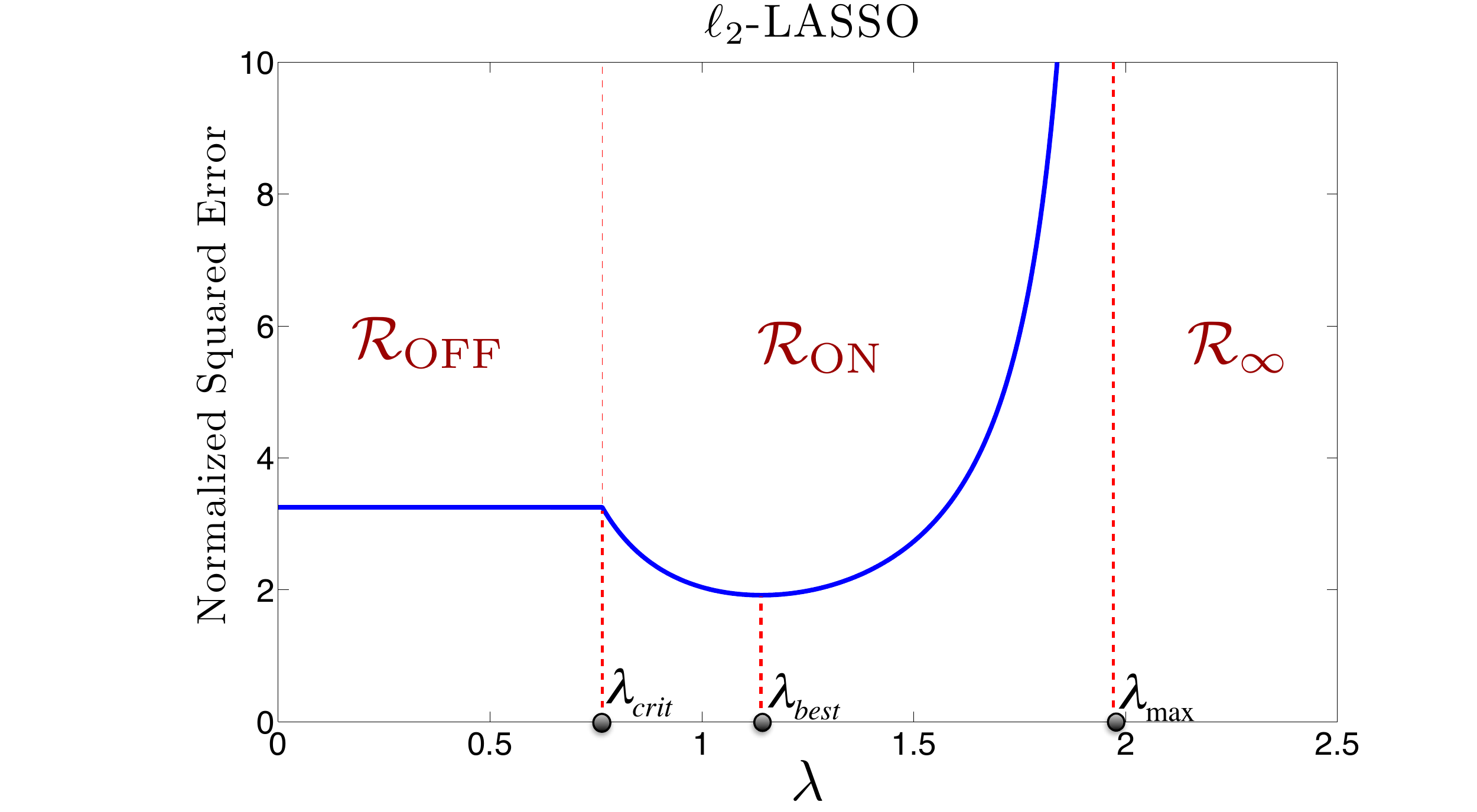}}  
  \end{center}
  \caption{\footnotesize{We consider the $\ell_1$-penalized $\ell_2$-LASSO problem for a $k$ sparse signal in $\R^n$. $x$-axis is the penalty parameter $\la$. For $\frac{k}{n}=0.1$ and $\frac{m}{n}=0.5$, we have $\lac\approx0.76$, $\labb\approx1.14$, $\lam\approx1.97$.}}
\label{figCont2}
\end{figure}

\subsubsection{Characterizing the NSE in each Region}

Our main result on the $\ell_2$-LASSO is for the region $\mathcal{R}_{ON}$ as stated in Theorem \ref{thm:ell2LASSO}. We also briefly discuss on our observations regarding $\Rcf$ and $\Rci$:
\begin{itemize}
\item $\Rcf$: 
For $\la\in\Rcf$, we empirically observe that the LASSO estimate $\x^*_{\ell_2}$ satisfies $\y=\A\x^*_{\ell_2}$ and the optimization \eqref{ell2model} reduces to:
\beq\nn
\min_\x f(\x)~~~\text{subject to}~~~\y=\A\x\label{unreg},
\eeq
which is the standard approach to solving the noiseless linear inverse problems (recall \eqref{LinInv}).
We prove that this reduction is indeed true for values of $\la$ sufficiently small (see Lemma \ref{lemma:Roff}), while our empirical observations suggest that the claim is valid for all $\la \in \Rcf$.
Proving the validity of the claim would show that when $\sigma\rightarrow 0$, the NSE is $\frac{\Df\lac)}{m-\Df\lac)}$, for all $\la\in\Rcf$. Interestingly, this would also give the NSE formula for the particularly interesting problem \eqref{unreg}. Simulation results in Section \ref{sec:num} validate the claim.
%
%


\item $\Rco$: Begin with observing that $\Rco$ is a nonempty and open interval. In particular, $\labb\in\Rco$ since $m>\Df\labb)$. 
%
 We prove that for all $\la\in \Rco$ and $\sigma$ is sufficiently small, 
 \beq\label{eq:rote}
\frac{\|\x_{\ell_2}^*-\x_0\|}{\sigma^2}\approx\frac{\Dlf}{m-\Dlf}.
\eeq
Also, empirical observations suggest that \ref{eq:rote} holds for arbitrary $\sigma$ when $\approx$  replaced with $\lesssim$. Finally, we should note that the NSE formula $\frac{\Dlf}{m-\Dlf}$ is a convex function of $\la$ over $\Rco$. 

\item $\Rci$: Empirically, we observe that the stable recovery of $\x_0$ is not possible for $\la\in\Rci$.
\end{itemize}

\subsubsection{Optimal Tuning of the Penalty Parameter}
It is not hard to see that the formula in \eqref{eq:rote} is strictly increasing in $\Dlf$. Thus, when $\sigma\rightarrow0$,
 the NSE achieves its minimum value when the penalty parameter is set to
$
\labb.
$ Now, recall that $\Df\labb)\approx \Delxf$ and compare the formulae in \eqref{maineq1} and \eqref{eq:rote}, to conclude that the C-LASSO and $\ell_2$-LASSO can be related by choosing $\la=\labb$. In particular, we have,
\beq\label{eq:above}
\frac{\|\x_{\ell_2}^*(\labb)-\x_0\|^2}{\sigma^2}\approx\frac{\Df\labb)}{m-\Df\labb)}\approx \frac{\Delxf}{m-\Delxf}\approx \frac{\|\x_{c}^*-\x_0\|^2}{\sigma^2}.
\eeq


%

%


\subsection{$\ell_2^2$-LASSO}

\subsubsection{Connection to $\ell_2$-LASSO}
We propose a mapping between the penalty parameters $\la$ of the $\ell_2$-LASSO program \eqref{ell2model} and $\tau$ of the $\ell_2^2$-LASSO program \eqref{ell22model}, for which the NSE of the two problems behaves the same. The mapping function was defined in Definition \ref{defn:map}. Observe that $\map(\la)$ is well-defined over the region $\Rco$, since $m>\Dlf$ and $m-\Dlf>\Clf$ for all $\la\in\Rco$. 
Theorem \ref{thm:map} proves that $\map(\cdot)$ defines a bijective mapping from $\Rco$ to $\R^+$. Other useful properties of the mapping function include the following:
\begin{itemize}
\item $\map(\lac)=0$,
\item $\lim_{\la\rightarrow\lam}\map(\la)=\infty$,
\end{itemize}
Section \ref{justifyl22} proves these properties and more, and contains a short technical discussion that motivates the proposed mapping function.

\subsubsection{Proposed Formula}
We use the mapping function in \eqref{eq:map} to translate our results on the NSE of the $\ell_2$-LASSO over $\Rco$ (see formula \eqref{eq:rote}) to corresponding results on the $\ell_2^2$-LASSO for $\tau\in\R^+$.
Assume $m>\Df\labb)$.  We suspect that for any $\tau>0$,
\beq\nn
 \frac{\Df\map^{-1}(\tau))}{m-\Df\map^{-1}(\tau))}\label{ell22min},
\eeq
accurately characterizes $\frac{\|\x^*_{\ell_2^2}-\x_0\|^2}{\sigma^2}$ for sufficiently small $\sigma$,
and upper bounds $\frac{\|\x^*_{\ell_2^2}-\x_0\|^2}{\sigma^2}$ for arbitrary $\sigma$.

\begin{figure}
\centering
\mbox{
\subfigure[]{
\includegraphics[width=3.1in]{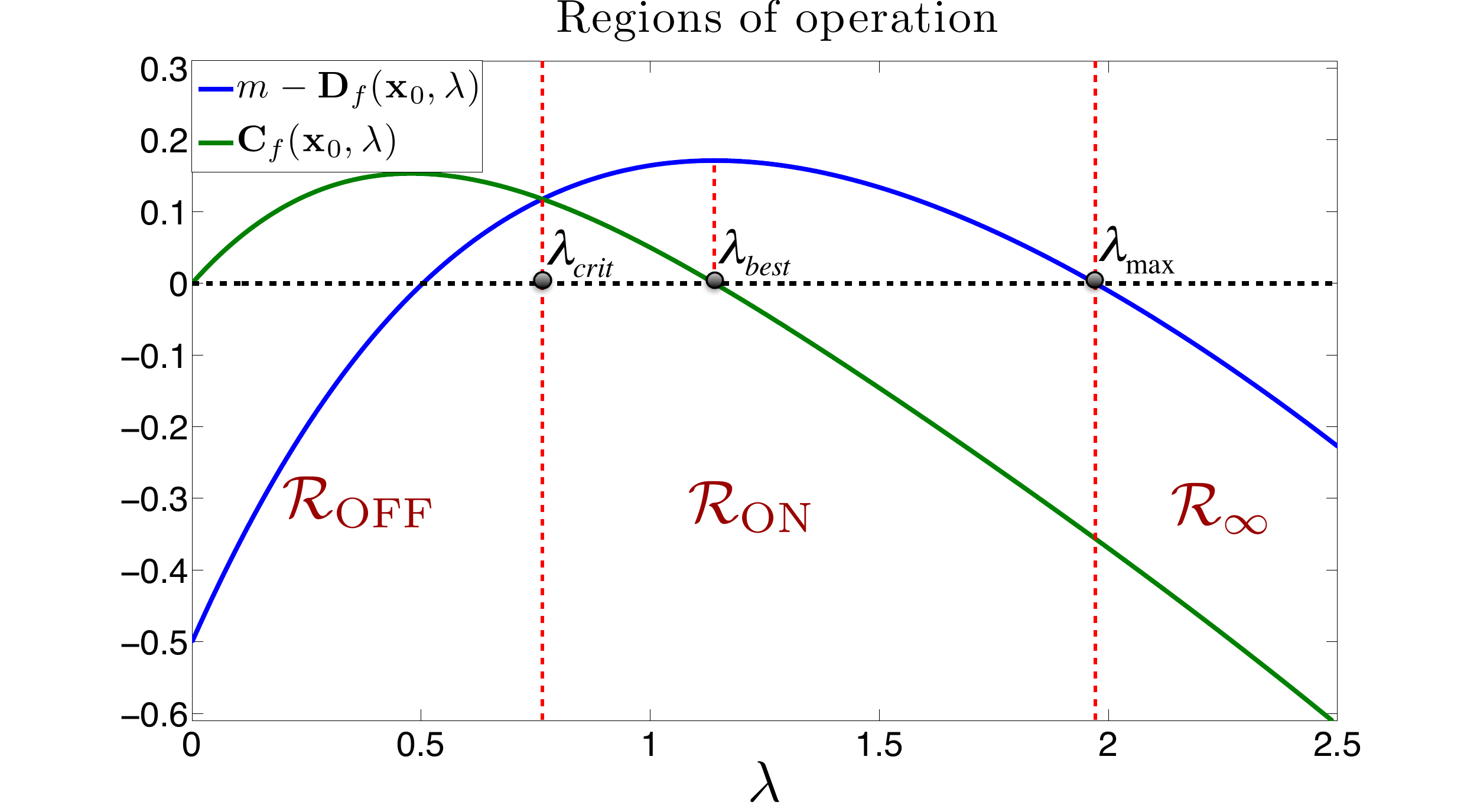}
\label{figCont31}
}\quad
\subfigure[]{
\includegraphics[width=3.1in]{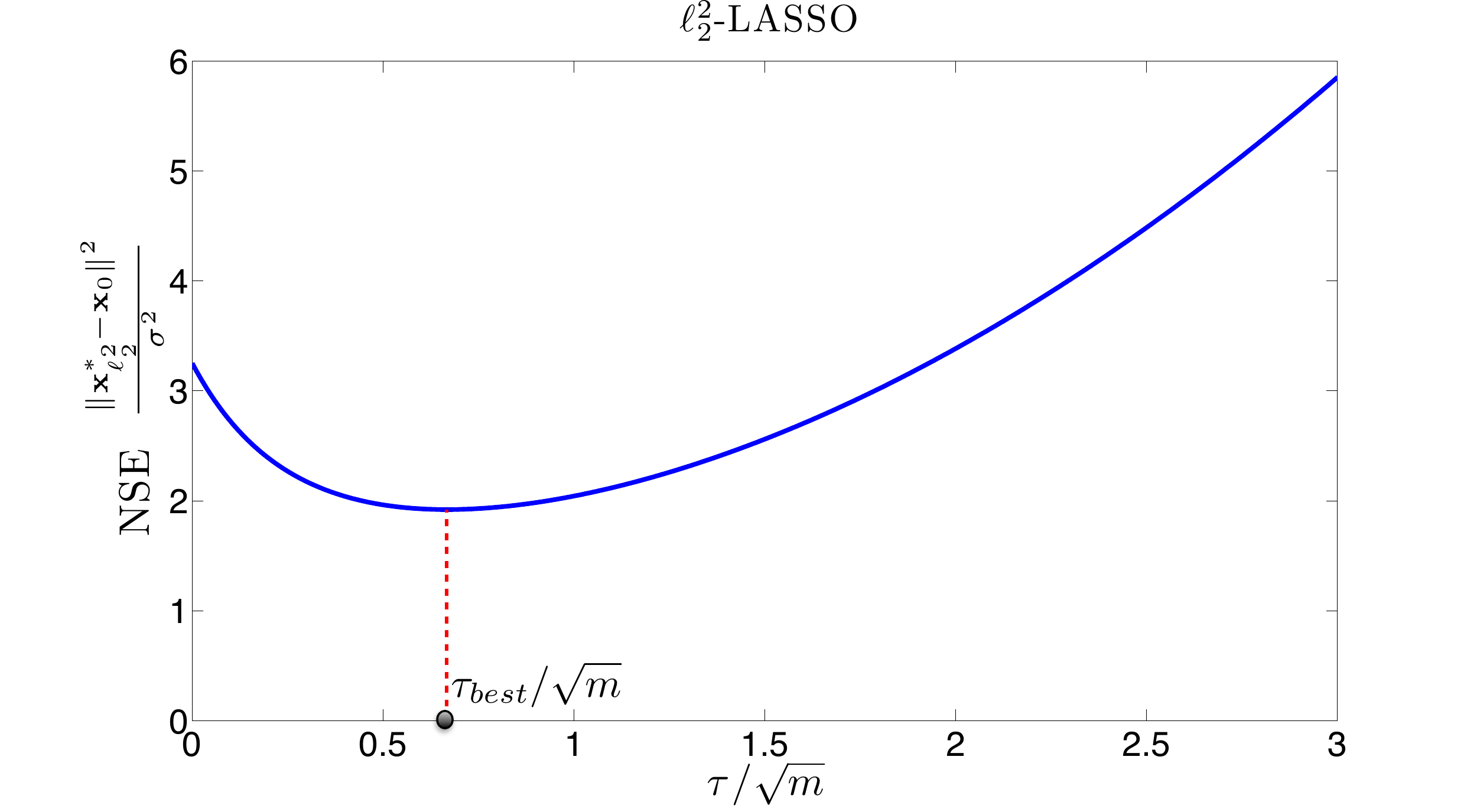}
\label{figCont32}
}
}
\centering
\caption{\small{We consider the exact same setup of Figure \ref{figCont2}. a) We plot $m-\Dlf$ and $\Clf$ as a function of $\la$ to illustrate the important penalty parameters $\lac,\labb,\lam$ and the regions of operation $\Rcf,\Rco,\Rci$. b) We plot the $\ell_2^2$-LASSO error as a function of $\frac{\tau}{\sqrt{m}}$ by using the $\map(\cdot)$ function. The normalization is due to the fact that $\tau$ grows linearly in $\sqrt{m}$.}}
\label{LASSOintro2}
\end{figure}

%
%

\subsubsection{A rule of thumb for the optimal penalty parameter}

Formula \ref{form:ell22LASSO} provides a simple recipe for computing the optimal value of the penalty parameter, which we call $\tau_{{best}}$.
Recall that $\labb$ minimizes the error in the $\ell_2$-LASSO. Then, the proposed mapping between the two problems, suggests that $\tau_{best}=\map(\labb)$. To evaluate $\map(\labb)$ we make use of Lemma \ref{lem:Dlf} and the fact that $\frac{d\Dlf}{d\la} = -\frac{2}{\la}\Clf$ for all $\la\geq 0$. Combine this with the fact that $\labb$ is the unique minimizer of $\Dlf$, to show that $\Cf\labb)=0$, and to conclude with,
%
%
\begin{align}\label{eq:optRegForm}
\taub=\labb \sqrt{m-\Df\labb)}.
\end{align}
As a last comment, \eqref{eq:optRegForm} simplifies even further if one uses the fact $\Df\labb)\approx \Delxf$, which is valid under reasonable assumptions, \cite{Foygel,Oymak,McCoy}. In this case, $\tau_{best}\approx\labb\sqrt{m-\Delxf}$.

%


\subsection{Closed Form Calculations of the Formulae}\label{sec:closedForm}

\begin{center}
\begin{table}[h]
\begin{center}
  \begin{tabular}{ | c |  c  |  }
     \cline{2-2}
     \multicolumn{1}{c|}{}&\pbox{20cm}{\vp{\bf{{Normalized Squared Error}}}\vp}\\ \hline
    {\color{darkred}{C-LASSO}} & \pbox{20cm}{ \vp{\Large{$\frac{\Delxf}{m-\Delxf}$}}\vp}   \\    \hline
     {\color{darkred}{$\ell_2$-LASSO}}  &  \pbox{20cm}{ \vp{\Large{$\frac{\Dlf}{m-\Dlf}$}}~{\text{for}~$\la\in\Rco$}\vp}  \\\hline
          {\color{darkred}{$\ell_2^2$-LASSO}}  & \pbox{20cm}{ \vp{\Large{$\frac{\Df\map^{-1}(\tau))}{m-\Df\map^{-1}(\tau))}$}}~{\text{for}~${\tau}\in\R^+$}\vp}   \\\hline
  \end{tabular}
\end{center}
  \caption{Summary of formulae for the NSE.}
  \label{table:formulae}
\end{table}
\end{center}

Table \ref{table:formulae} summarizes the formulae for the NSE of the three versions of the LASSO problem. While simple and concise, it may appear to the reader that the formulae are rather abstract, because of the presence of  $\Delxf$ and $\Dlf$ ($\Clf$ is also implicitly involved in the calculation of  $\map^{-1}(\cdot)$) which were introduced to capture the convex geometry of the problem. However, as discussed here, for certain critical regularizers $f(\cdot)$, one can calculate (tight) upper bounds or even explicit formulas for these quantities. For example, for the estimation of a $k$-sparse signal $\x_0$ with $f(\cdot)=\|\cdot\|_1$, it has been shown that $\Delxf\lesssim 2k (\log\frac{n}{k}+1)$. Substituting this into the formula for the NSE of the C-LASSO results in the ``closed-form" upper bound given in \eqref{eq:clo1}, i.e. one expressed only in terms of $m$,$n$ and $k$. Analogous results have been derived  \cite{Cha, Oym,StojBlock,Foygel} for other well-known signal models as well, including low rankness (see \eqref{eq:clo2}) and block-sparsity (see \eqref{eq:clo3}). The first row of Table \ref{table:closed} summarizes some of the results for $\Delxf$ found in the literature (see \cite{Cha,Foygel}). The second row provides our closed form results on $\Dlf$ when $\la$ is sufficiently large. The reader will observe that, by setting $\la$ to its lower bound in the second row, one approximately obtains the corresponding result in the first row. For a related discussion on $\Dlf$ and closed form bounds, the reader is referred to \cite{Foygel}. The derivation of these results can be found in Section \ref{explicit_form} of the Appendix. In the same section, we also provide exact formulas for $\Dlf$ and $\Clf$ for the same signal models.  Based on those formulas and Table \ref{table:closed}, one simply needs to substitute $\Delxf$ or $\Dlf$ with their corresponding value to reach the error bounds. 
We should emphasize that, examples are not limited to the ones discussed here (see for instance \cite{Cha}). 

\begin{table}[h]
\begin{center}
\hspace*{-5pt}  \begin{tabular}{ | c |  c  |  c | c |}
     \cline{2-4}
     \multicolumn{1}{c|}{}&\pbox{20cm}{\vp{\bf{{{$k$-sparse, $\x_0\in\R^n$}}}}\vp}& \pbox{20cm}{\bf{{{Rank $r$, $\X_0\in\R^{d\times d}$}}}}   & $k$-{\bf{block sparse}}, $\x_0\in\R^{tb}$ \\ \hline
    {\color{darkred}{{$\Delxf$}}} & \pbox{20cm}{ \vp $2k(\log \frac{n}{k}+1)$\vp}  &\pbox{20cm}{$6dr$} &  $4k(\log\frac{t}{k}+b)$\\    \hline
     {\color{darkred}{{$\Dlf$}}}  & $ (\la^2+3)k$ ~for~ $\la\geq \sqrt{2\log \frac{n}{k}}$ & \pbox{20cm}{\vp \small{$\la^2r+2d(r+1)$ ~for~ $\la\geq 2\sqrt{d}$}\vp} & \small{$(\la^2+b+2)k$~ for~ $\la \geq \sqrt{b}+\sqrt{2\log\frac{t}{k}}$} \\
    \hline
  \end{tabular}
  \caption{Closed form upper bounds for $\Delxf$ (\cite{Cha,Foygel}) and $\Dlf$ corresponding to \eqref{eq:clo1}, \eqref{eq:clo2} and \eqref{eq:clo3}.}
  \label{table:closed}
\end{center}
\end{table}

It follows from this discussion, that establishing new and tighter analytic bounds for $\Dlf$ and $\Delxf$ for more regularizers $f$ is certainly an interesting direction for future research. In the case where such analytic bounds do not already exist in literature or are hard to derive, one can numerically estimate $\Dlf$ and $\Delxf$ once there is an available characterization of the set of subdifferentials $\paf$. More in detail, it is not hard to show that, when $\h\sim\Nn(0,\Iden_n)$, $\dt^2(\h,\la\paf)$ concentrates nicely around $\Dlf$ (see Lemma \ref{lemma:concentrationALL}) . Hence to compute $\Dlf$:
\begin{enumerate}[(a)]
\item draw a vector $\h\sim\Nn(0,\mathbf{I}_n)$,
\item return the solution of the convex program $\min_{\s\in\paf}{\|\h-\la\s\|^2}$.
\end{enumerate}
Computing $\Delxf$ can be built on the same recipe by writing $\dt^2(\h,\text{cone}(\paf))$ as $\min_{\la\geq 0,\s\in\paf}{\|\h-\la\s\|^2}$.

Summing up, our proposed formulae for the NSE of the LASSO problems can be effectively calculated, either analytically or numerically.

\subsection{Translating the Results}\label{sec:trans}
Until this point, we have considered the scenario, in which the measurement matrix $\A$ has independent standard normal entries, and the noise vector $\z$ is equal to $\sigma\vb$ with $\vb\sim\Nn(0,\Iden_{m})$. In related literature, the entries of $\A$ are often assumed to have variance $\frac{1}{m}$ or $\frac{1}{n}$, \cite{Mon,BayMon,CanHow}.
  For example, a variance of $\frac{1}{m}$ 
  ensures that in expectation $\|\A\x\|^2$ is same as $\|\x\|^2$. Hence, it is important to understand, how our setting can be translated to those.
To distinguish our setup from the ``non-unit variance'' setup, we introduce the ``non-unit variance'' variables $\A',\sigma',\la'$ and $\tau'$. Let entries of $\A'$ have variance $\frac{1}{m}$ and consider the $\ell_2$-LASSO problem with these new variables, which can be equivalently written as,
\beq
\min_{\x}\|\A'\x_0+\sigma'\vb-\A'\x\|+\la'f(\x).\nn
\eeq
Multiplying the objective with $\sqrt{m}$, we obtain,
\beq
\min_{\x}\|\sqrt{m}\A'\x_0+\sqrt{m}\sigma'\vb-\sqrt{m}\A'\x\|+\sqrt{m}\la'f(\x).\nn
\eeq
Observe that, $\sqrt{m}\A'$ is now statistically identical to $\A$. Hence, Theorem \ref{thm:ell2LASSO} is applicable under the mapping $\sigma\leftarrow\sqrt{m}\sigma'$ and $\la\leftarrow\sqrt{m}\la'$. Consequently, the NSE formula for the new setting for $\sqrt{m}\la'\in \Rco$ can be given as,
\beq
\frac{\|\x^*_{\ell_2}-\x_0\|^2}{({\sqrt{m}\sigma'})^2}=\frac{\|\x^*_{\ell_2}-\x_0\|^2}{{\sigma}^2}\lesssim \frac{\Dlf}{m-\Dlf}=\frac{\Df\sqrt{m}\la')}{m-\Df\sqrt{m}\la')}.\nn
\eeq
Identical arguments for the Constrained-LASSO and $\ell_2^2$-LASSO results in the following NSE formulas,
\beq
\frac{\|\x^*_c-\x_0\|^2}{m{\sigma'}^2}\lesssim \frac{\Delxf}{m-\Delxf} \quad \text{ and } \quad\frac{\|\x^*_{\ell_2^2}-\x_0\|^2}{m{\sigma'}^2}\lesssim \frac{\Df\map^{-1}(m\tau'))}{m-\Df\map^{-1}(m\tau'))}.\nn
\eeq
In general, reducing the signal power $\|\A\x_0\|^2$ by a factor of $m$, amplifies the proposed NSE upper bound by $m$ times and the penalty parameters should be mapped as $\tau\longleftrightarrow m\tau'$ and $\la\longleftrightarrow\sqrt{m}\la'$.

\section{Applying Gordon's Lemma}
\label{sec:intro2tech}

%
%
First, we introduce the basic notation that is used throughout the technical analysis of our results. Some additional notation, specific to the subject of each particular section is introduced later therein. 
To make explicit the variance of the noise vector $\z$, we denote $\z=\sigma\vb$, where $\vb\sim\mathcal{N}(0,\mathbf{I}_m)$. Also, we reserve the variables $\h$ and $\g$ to denote i.i.d. Gaussian vectors in $\mathbf{R}^n$ and $\mathbf{R}^m$, respectively. In similar flavor, reserve the variable $\s$ to describe the subgradients of $f$ at $\x_0$. Finally, the Euclidean unit ball and unit sphere are respectively denoted as
\begin{align}
\Bc^{n-1} := \left\{ \x\in\mathbb{R}^n~ | ~\| \x\|\leq 1\right\} \quad \text{ and } \quad \Sc^{n-1} := \left\{ \x\in\mathbb{R}^n ~|~ \| \x\|= 1\right\} .\nn
\end{align}

\subsection{Introducing the Error Vector}


For each candidate solution $\x$ of the LASSO algorithm, denote $\w=\x-\x_0$. Solving for $\w$ is clearly equivalent to solving for $\x$, but simplifies considerably the presentation of the analysis. Under this notation, $\|\y-\A\x\| = \|\A\w-\sigma\vb\|.$ Furthermore, it is convenient to subtract the constant factor $\la f(\x_0)$ from the objective function of the LASSO problem and their approximations. In this direction, define the following \emph{``perturbation" functions}:
\begin{align}
f_p(\w) &= f(\x_0 + \w) - f(\x_0),\label{eq:perdef1} \\
\hat{f}_p(\w) &= \hat{f}(\x_0 + \w) - f(\x_0) =  \sup_{\s\in\paf} \s^T\w. \label{eq:perdef2}
\end{align}
Then, the $\ell_2$-LASSO  will write as
\begin{align}\label{eq:well2}
\w^*_{\ell_2} = \arg\min_{\w}\left\{ \|\A\w-\sigma\vb\| + \la  f_p(\w)  \right\}.
\end{align}
and the C-LASSO  as
\begin{align*}
\w_c^* = &\arg\min_{\w}~\|\A\w-\sigma\vb\|~ \\
&~~~~~~~\text{s.t.}~~~ {f}_p(\w)\leq 0 .
\end{align*}
or, equivalently, 
\begin{align}\label{eq:wC}
\w^*_c = \arg\min_{\w}\left\{~\|\A\w-\sigma\vb\| + \max_{\la\geq 0} \la {f}_p(\w) ~\right\} .
\end{align}

\subsection{The Approximate LASSO Problem}

In Section \ref{sec:app}, and in particular in \eqref{eq:ell2Approx} we introduced the approximated $\ell_2$-LASSO problem. We repeat the definition here, and also, we define accordingly the approximate C-LASSO.
The approximated $\ell_2$-LASSO writes:
\begin{align}\label{eq:foEll2_w}
\hat{\w}_{\ell_2} = \arg\min_{\w}\left\{ \|\A\w-\sigma\vb\| + \la \hat{f}_p(\w) \right\}.
\end{align}
Similarly, the approximated C-LASSO writes
\begin{align}\label{eq:foC_w}
\hat{\w}_{c} ~=~ &\arg\min_{\w}\left\{~\|\A\w-\sigma\vb\| + \max_{\la\geq 0} \la \hat{f}_p(\w) ~\right\} .
\end{align}
 Denote $\Fc_c(\A,\vb)$ and $\Fc_{\ell2}(\A,\vb)$ the optimal costs of problems \eqref{eq:foC_w} and \eqref{eq:foEll2_w}, respectively.  
Note our convention to use the symbol ``$~\hat{}~$" over variables that are associated with the approximate problems. To distinguish, we use the symbol ``$~^*~$"  for the variables associated with the original problems. 
%


%
\subsection{Technical Tool: Gordon's Lemma}\label{sec:gorMain}
 As already noted the most important technical ingredient underlying our analysis is a Lemma proved by Gordon in \cite{Gor}; recall Lemma \ref{lemma:Gordon} in  Section \ref{sec:app}. 
In fact, Gordon's key Lemma \ref{lemma:Gordon} is a Corollary of a more general theorem which  establishes a probabilistic comparison between two centered Gaussian processes. The theorem was proved by Gordon in \cite{Gor2} and is stated below for completeness.

\begin{thm}[Gordon's Theorem, \cite{Gor}]\label{thm:GordonMain}
Let $\left\{X_{ij}\right\}$ and $\left\{Y_{ij}\right\}$, $1\leq i\leq n$, $1\leq j\leq m$, be two centered Gaussian processes which satisfy the following inequalities for all choices of indices
\begin{enumerate}
\item $\E\left[X_{ij}^2\right] = \E\left[Y_{ij}^2\right]$,
\item $\E\left[X_{ij}X_{ik}\right] \geq \E\left[Y_{ij}Y_{ik}\right]$,
\item $\E\left[X_{ij}X_{\ell k}\right] \leq \E\left[Y_{ij}Y_{\ell k}\right]$, \quad if $i\neq\ell$.
\end{enumerate}
Then,
$$
\Pro\left( \cap_i \cup_j \left[ Y_{ij} \geq \lambda_{ij} \right] \right) \geq \Pro\left( \cap_i \cup_j \left[ X_{ij} \geq \lambda_{ij} \right] \right),
$$
for all choices of $\lambda_{ij}\in\mathbf{R}$.
\end{thm}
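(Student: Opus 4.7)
The plan is to prove Gordon's Theorem via the standard Gaussian interpolation (Slepian-type) argument, adapted to handle the $\cap_i\cup_j$ structure rather than just a max or a min. First, I would take independent copies of the processes and consider the interpolated Gaussian family $Z_{ij}(t)=\sqrt{t}\,X_{ij}+\sqrt{1-t}\,Y_{ij}$ for $t\in[0,1]$, so that $Z(0)$ has the law of $Y$ and $Z(1)$ has the law of $X$. The covariance structure of $Z(t)$ interpolates linearly between those of $Y$ and $X$. The goal is to show that $t\mapsto\mathbb{E}\bigl[F(Z(t))\bigr]$ is monotone in $t$, where $F$ is a smooth surrogate for the indicator of the event $\cap_i\cup_j[z_{ij}\geq\lambda_{ij}]$; once monotonicity holds, letting the smoothing parameter tend to its limit yields the desired probabilistic comparison.

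The second step is to design this surrogate. The natural choice is
$$
F_\beta(z)\;=\;\prod_{i=1}^{n}\Bigl(1-\prod_{j=1}^{m}\phi_\beta(\lambda_{ij}-z_{ij})\Bigr),
$$
where $\phi_\beta$ is a smooth, nonincreasing approximation of $\mathbf{1}[u\geq 0]$ (for instance a mollified Heaviside) with $\phi_\beta\to\mathbf{1}[\cdot\geq 0]$ as $\beta\to\infty$. The crucial structural property one checks by direct differentiation is that for fixed $i$ and $j\neq k$, the mixed partial $\partial^2 F_\beta/\partial z_{ij}\partial z_{ik}\geq 0$ (both derivatives act inside the same $i$-factor, each pulling out $-\phi_\beta'\geq 0$, multiplied by the remaining nonnegative product), while for $i\neq \ell$ the cross partial $\partial^2 F_\beta/\partial z_{ij}\partial z_{\ell k}\leq 0$ (the extra minus sign in front of each $i$-factor flips the sign when the derivatives fall in different factors).

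The third step is the interpolation computation. Differentiating under the expectation and applying Gaussian integration by parts (Stein's lemma) gives
$$
\frac{d}{dt}\,\mathbb{E}\bigl[F_\beta(Z(t))\bigr]
=\tfrac{1}{2}\sum_{(i,j),(\ell,k)}\bigl(\mathbb{E}[X_{ij}X_{\ell k}]-\mathbb{E}[Y_{ij}Y_{\ell k}]\bigr)\,\mathbb{E}\!\left[\frac{\partial^2 F_\beta}{\partial z_{ij}\partial z_{\ell k}}(Z(t))\right].
$$
Hypothesis (1) kills the diagonal terms $(i,j)=(\ell,k)$ because the variances agree. For the remaining terms I would pair up the sign of the covariance gap with the sign of the mixed partial from the previous paragraph: hypothesis (2) ($i=\ell$, $j\neq k$) gives $\mathbb{E}[X_{ij}X_{ik}]-\mathbb{E}[Y_{ij}Y_{ik}]\geq 0$ multiplied by a nonnegative mixed partial, and hypothesis (3) ($i\neq\ell$) gives $\mathbb{E}[X_{ij}X_{\ell k}]-\mathbb{E}[Y_{ij}Y_{\ell k}]\leq 0$ multiplied by a nonpositive mixed partial. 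In both cases the product is nonnegative, so $\frac{d}{dt}\mathbb{E}[F_\beta(Z(t))]\geq 0$, giving $\mathbb{E}[F_\beta(Z(1))]\geq\mathbb{E}[F_\beta(Z(0))]$.

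Finally, I would send $\beta\to\infty$ so that $F_\beta(z)\to\mathbf{1}[\,\cap_i\cup_j\{z_{ij}\geq\lambda_{ij}\}\,]$, use dominated convergence (since $0\leq F_\beta\leq 1$), and read off the conclusion $\Pr(\cap_i\cup_j[Y_{ij}\geq\lambda_{ij}])\geq\Pr(\cap_i\cup_j[X_{ij}\geq\lambda_{ij}])$. The main obstacle I anticipate is the bookkeeping in step two: verifying the sign pattern of the mixed partials of $F_\beta$ cleanly in both the intra-row and inter-row cases, and making sure boundary contributions at diagonal indices cancel exactly so that no extraneous terms pollute the monotonicity argument. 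A secondary technical point is handling the atoms at $z_{ij}=\lambda_{ij}$ in the limit; these have zero Gaussian measure, so dominated convergence goes through without issue.
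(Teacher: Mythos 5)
First, a point of reference: the paper does not prove Theorem \ref{thm:GordonMain} at all --- it is imported verbatim from Gordon's work \cite{Gor2} as an external ingredient --- so your attempt has to stand on its own rather than be compared to an in-paper argument. Your overall strategy (interpolate $Z(t)=\sqrt{t}\,X+\sqrt{1-t}\,Y$, smooth the indicator of $\cap_i\cup_j$ by a product surrogate, apply Gaussian integration by parts, and match the signs of the covariance gaps against the signs of the mixed partials) is indeed the standard route, essentially Gordon's own, and your interpolation identity in the third step is correct.

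However, the sign analysis in your second step is wrong, and exactly reversed. Write $F_\beta=\prod_i A_i$ with $A_i=1-\prod_j\phi_\beta(\lambda_{ij}-z_{ij})$ and note $0\le A_i\le 1$. For $\phi_\beta$ to approximate $\mathbf{1}[u\ge 0]$ it must be nondecreasing (not nonincreasing, as you wrote), so $\partial_{z_{ij}}\phi_\beta(\lambda_{ij}-z_{ij})=-\phi_\beta'\le 0$ and hence $\partial_{z_{ij}}A_i\ge 0$: each $A_i$ approximates a union event, which is monotone increasing in every coordinate. For $j\neq k$ one then gets $\partial^2 A_i/\partial z_{ij}\partial z_{ik}=-\,\phi_\beta'(\lambda_{ij}-z_{ij})\,\phi_\beta'(\lambda_{ik}-z_{ik})\prod_{j'\neq j,k}\phi_\beta\le 0$ --- you computed the second derivative of $\prod_j\phi_\beta$ and dropped the leading minus sign coming from the ``$1-\prod_j$'' --- so the same-row mixed partial of $F_\beta$ is $\le 0$, not $\ge 0$. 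Symmetrically, for $i\neq\ell$ the cross partial is a product of two nonnegative first derivatives times $\prod_{p\neq i,\ell}A_p\ge 0$, hence $\ge 0$, not $\le 0$. With the correct signs everything still works: hypothesis (2) (gap $\ge 0$) pairs with a nonpositive mixed partial and hypothesis (3) (gap $\le 0$) with a nonnegative one, so every term of the interpolation derivative is $\le 0$ and one obtains $\E[F_\beta(X)]=\E[F_\beta(Z(1))]\le\E[F_\beta(Z(0))]=\E[F_\beta(Y)]$, which is the theorem. As written, though, your argument concludes $\frac{d}{dt}\E[F_\beta(Z(t))]\ge 0$, i.e.\ $\E[F_\beta(X)]\ge\E[F_\beta(Y)]$, and then silently flips this into the opposite (correct) inequality in the final sentence; the two sign errors cancel only in the sense that your last line happens to restate the theorem. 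The fix is pure bookkeeping --- surrogate, interpolation identity, and the $\beta\to\infty$ limit are all fine --- but as it stands the central step does not establish what it claims. (A minor additional point: there are no ``boundary contributions at diagonal indices'' to worry about; the diagonal terms are simply annihilated by hypothesis (1).)
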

Application of Gordon's Theorem \ref{thm:GordonMain} to specific Gaussian processes results in Gordon's Lemma \ref{lemma:Gordon} \cite{Gor}. 
In this work, we require a slightly modified version of this lemma, namely Lemma \ref{lemma:Gor}. The key idea is of course the same as in the original lemma, but the statement is modified to fit the setup of the current paper. 

\begin{restatable}[Modified Gordon's Lemma]{lem}{Gor}\label{lemma:Gor}
Let $\mathbf{G}$, $\g$, $\h$ be defined as in Lemma \ref{lemma:Gordon} and let $\psi(\cdot,\cdot):\R^n\times \R^m\rightarrow \R$ . Also, let $\Phi_1\subset\mathbb{R}^n$ and $\Phi_2\subset\mathbb{R}^m$ such that either both $\Phi_1$ and $\Phi_2$ are compact or $\Phi_1$ is arbitrary and $\Phi_2$ is a scaled unit sphere.
  Then, for any $c\in\mathbb{R}$:
\begin{align*}
&\Pro\left(  \min_{\x\in\Phi_1}~\max_{\ab\in\Phi_2}~\left\{ \ab^T\Gb\x  - \psi({\x,\ab} ) \right\} \geq c\right) \geq \notag
2\Pro\left(  \min_{\x\in\Phi_1}\max_{\ab\in\Phi_2}\left\{ \|\x \| \g^T\ab  -  \|\ab\| \h^T\x - \psi({\x,\ab})  \right\}\geq c  \right) - 1.
\end{align*}
\end{restatable}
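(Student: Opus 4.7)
My plan is to apply Gordon's Theorem (Theorem~\ref{thm:GordonMain}) to two carefully chosen centered Gaussian processes indexed by $(\x,\ab)\in\Phi_1\times\Phi_2$, and then to strip off an auxiliary Gaussian scalar via a short symmetrization argument that accounts for the factor of $2$ in the conclusion. Specifically, with $g\sim\Nn(0,1)$ independent of $\Gb,\g,\h$, I would set
\[
Y_{\x,\ab} := \ab^T\Gb\x + \|\x\|\|\ab\|\,g, \qquad X_{\x,\ab} := \|\x\|\,\g^T\ab - \|\ab\|\,\h^T\x.
\]
A direct computation of second moments gives $\E[Y_{\x,\ab}^2]=\E[X_{\x,\ab}^2]=2\|\x\|^2\|\ab\|^2$ and
\[
\E[Y_{\x,\ab}Y_{\x',\ab'}]-\E[X_{\x,\ab}X_{\x',\ab'}] = \big(\|\x\|\|\x'\|-\x^T\x'\big)\big(\|\ab\|\|\ab'\|-\ab^T\ab'\big)\geq 0,
\]
where the inequality is Cauchy--Schwarz and the difference vanishes when $\x=\x'$. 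Identifying $i\leftrightarrow\x$ and $j\leftrightarrow\ab$, these are exactly hypotheses (1)--(3) of Theorem~\ref{thm:GordonMain}. Choosing deterministic thresholds $\lambda_{\x,\ab}=c+\psi(\x,\ab)$ and rewriting the resulting $\cap_i\cup_j$ statement in min--max form then produces the Gordon-comparison inequality
\[
\Pro\!\left(\min_{\x\in\Phi_1}\max_{\ab\in\Phi_2}\big\{\ab^T\Gb\x+\|\x\|\|\ab\|g-\psi(\x,\ab)\big\}\geq c\right) \geq \Pro\!\left(\min_{\x\in\Phi_1}\max_{\ab\in\Phi_2}\big\{\|\x\|\g^T\ab-\|\ab\|\h^T\x-\psi(\x,\ab)\big\}\geq c\right).
\]

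Next I would remove the distracting term $\|\x\|\|\ab\|g$ from the left-hand side via a symmetrization trick. Let $L$, $L_g$, $L_{-g}$, and $R$ denote the min--max values appearing on the left of the lemma, on the left of the inequality above, the same with $g$ replaced by $-g$, and on the right of the inequality above, respectively. Because $\|\x\|\|\ab\|\geq 0$, adding $\pm\|\x\|\|\ab\|g$ shifts the objective uniformly in the sign of $\pm g$, so $L_g\geq L$ whenever $g\geq 0$ and $L_{-g}\geq L$ whenever $g\leq 0$. Either way $\min(L_g,L_{-g})\leq L$, hence $\{L_g\geq c\}\cap\{L_{-g}\geq c\}\subseteq\{L\geq c\}$. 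By the symmetry $g\stackrel{d}{=}-g$ and independence of $g$ from $\Gb,\g,\h$, the variables $L_g$ and $L_{-g}$ have the same marginal distribution, so the elementary inequality $\Pro(A\cap B)\geq\Pro(A)+\Pro(B)-1$ yields $\Pro(\{L_g\geq c\}\cap\{L_{-g}\geq c\})\geq 2\Pro(L_g\geq c)-1$. Chaining with the Gordon step delivers $\Pro(L\geq c)\geq 2\Pro(L_g\geq c)-1\geq 2\Pro(R\geq c)-1$, which is precisely the conclusion of the lemma.

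The main obstacle I anticipate is that Theorem~\ref{thm:GordonMain} is formulated for \emph{finite} index sets, while $\Phi_1\times\Phi_2$ is generally a continuum. When $\Phi_1$ and $\Phi_2$ are both compact, a countable dense discretization of $\Phi_1\times\Phi_2$ combined with sample-path continuity of $Y$ and $X$ in $(\x,\ab)$ allows one to pass to the limit. The case where $\Phi_2$ is a scaled unit sphere (hence already compact) and $\Phi_1$ is unbounded is reduced to the compact setting by intersecting $\Phi_1$ with the ball $\{\|\x\|\leq R\}$, applying the compact statement there, and letting $R\to\infty$; the events $\{\min_{\,\cdot\,}\geq c\}$ on both sides of the displayed Gordon inequality are monotone decreasing in $R$, so the inequality is preserved in the limit.
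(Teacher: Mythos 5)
Your proposal is correct and follows essentially the same route as the paper: the identical pair of Gaussian processes $Y_{\x,\ab}$ and $X_{\x,\ab}$, the same covariance comparison feeding into Theorem \ref{thm:GordonMain} with thresholds $c+\psi(\x,\ab)$, a discretization/limiting argument for the continuum index set, and a sign-symmetrization of the auxiliary scalar $g$ to produce the factor $2(\cdot)-1$. The only cosmetic difference is that you remove $g$ by coupling $g$ with $-g$ and using $\Pro(A\cap B)\geq \Pro(A)+\Pro(B)-1$, whereas the paper conditions on the sign of $g$ and compares the conditional probabilities $p_-\leq p_0\leq p_+$; the two arguments are interchangeable.
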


The proof of Lemma \ref{lemma:Gor} closely parallels the proof of Lemma \ref{lemma:Gor} in \cite{Gor}. We defer  the proof to Section \ref{sec:proofGor} in  the Appendix.

\subsection{Simplifying the LASSO objective through Gordon's Lemma }	
%


Section \ref{synops} introduced the technical framework. Key feature in this framework is the application of Gordon's Lemma. In particular, we apply Gordon's Lemma three times: once each for the purposes of the lower bound, the upper bound and the deviation analysis. Each application  results in a corresponding simplified problem, which we call \emph{``key optimization"}. The analysis is carried out for that latter one as opposed to the original and more complex LASSO problem. In this Section, we show the details of applying Gordon's Lemma and we identify the corresponding key optimizations. Later, in Section \ref{sec:KeyIdeas}, we focus on the approximate LASSO problem and we show that in that case, the key optimizations are amenable to detailed analysis.

To avoid unnecessary repetitions, we treat the original and approximate versions of both the C-LASSO and the $\ell_2$-LASSO, in a common framework, by defining the following problem:
\begin{align}\label{eq:generic}
\Fco(\A,\vb) = \min_\w{\left\{~ \|\A\w-\sigma\vb \| + p(\w) ~\right\}},
\end{align}
where $p:\mathbf{R}^n\rightarrow \mathbf{R}\cup\infty$ is a proper convex function \cite{Roc70}. Choose the penalty function $p(\cdot)$  in the generic formulation \eqref{eq:generic} accordingly to end up with \eqref{eq:well2}, \eqref{eq:wC}, \eqref{eq:foEll2_w} or \eqref{eq:foC_w}. To retrieve \eqref{eq:wC} and \eqref{eq:foC_w}, choose $p(\w)$ as the indicator function of the sets $\left\{ \w | f_p(\w)\leq 0 \right\}$ and $\left\{ \w | \hat{f}_p(\w)\leq 0 \right\}$ \cite{Boyd}.




\subsubsection{Lower Bound}
The following corollary is a direct application of Lemma \ref{lemma:Gor} to $\Fco(\A,\vb)$ in \eqref{eq:generic}. 
\begin{cor}\label{cor:low}
Let $\g\sim\Nn(0,\mathbf{I}_m)$, $\h\sim\Nn(0,\mathbf{I}_n)$ and $h\sim\Nn(0,1)$ and assume all $\g,\h,h$ are independently generated. Let
\begin{align}\label{eq:lc}
\Lco(\g,\h) = \min_{\w}\left\{ \sqrt{\|\w\|^2 + \sigma^2}\|\g\| - \h^T\w + p(\w) \right\}. 
\end{align}
Then, for any $c\in\mathbb{R}$:
\begin{align}\notag
\Pro\left( ~ \Fco(\A,\vb) \geq c~\right) \geq 2\cdot\Pro\left( ~ \Lco(\g,\h) - h\sigma \geq c ~ \right) - 1.
\end{align}
\end{cor}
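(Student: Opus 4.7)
The plan is to reduce Corollary \ref{cor:low} to a direct application of the Modified Gordon's Lemma \ref{lemma:Gor}, after rewriting $\Fco(\A,\vb)$ in a form amenable to the lemma. The main preparatory step is to absorb the noise vector $\sigma\vb$ into the measurement matrix $\A$ by augmentation, and to express the outer $\ell_2$-norm as a supremum over the unit sphere so that the objective becomes bilinear in the optimization variable and the dual vector.

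First, I would write $\|\A\w-\sigma\vb\| = \max_{\ab\in\Sc^{m-1}} \ab^T(\A\w-\sigma\vb)$. Then define the augmented Gaussian matrix $\Gb := [\A, -\vb] \in \R^{m\times(n+1)}$, which has i.i.d.\ $\Nn(0,1)$ entries since $-\vb\sim\Nn(0,\Iden_m)$ and $\vb$ is independent of $\A$. Letting $\x := \begin{bmatrix}\w\\ \sigma\end{bmatrix}\in\R^{n+1}$, we have $\ab^T(\A\w-\sigma\vb) = \ab^T\Gb\x$, so that
\begin{align*}
\Fco(\A,\vb) = \min_{\w\in\R^n}\,\max_{\ab\in\Sc^{m-1}}\left\{\ab^T\Gb\x - \psi(\x,\ab)\right\},
\end{align*}
where $\psi(\x,\ab) := -p(\w)$ (noting that $\w$ is determined by $\x$ via the first $n$ coordinates). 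This is exactly the form required by Lemma \ref{lemma:Gor} with $\Phi_1 = \{\x\in\R^{n+1} : x_{n+1}=\sigma\}$ (arbitrary) and $\Phi_2 = \Sc^{m-1}$ (scaled unit sphere), so the compactness hypothesis is satisfied.

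Applying Lemma \ref{lemma:Gor} yields, for any $c\in\R$,
\begin{align*}
\Pro\left(\Fco(\A,\vb)\geq c\right) \geq 2\Pro\left(\min_{\w}\max_{\ab\in\Sc^{m-1}}\left\{\|\x\|\g^T\ab - \|\ab\|\tih^T\x + p(\w)\right\}\geq c\right) - 1,
\end{align*}
where $\g\sim\Nn(0,\Iden_m)$ and $\tih\sim\Nn(0,\Iden_{n+1})$ are independent. Now $\|\x\| = \sqrt{\|\w\|^2+\sigma^2}$ and $\|\ab\|=1$, so the inner maximization in $\ab$ gives $\sqrt{\|\w\|^2+\sigma^2}\|\g\|$. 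Splitting $\tih = \begin{bmatrix}\h\\ h\end{bmatrix}$ with $\h\sim\Nn(0,\Iden_n)$ and $h\sim\Nn(0,1)$ independent, we have $\tih^T\x = \h^T\w + h\sigma$. Substituting back,
\begin{align*}
\min_{\w}\max_{\ab\in\Sc^{m-1}}\left\{\|\x\|\g^T\ab - \|\ab\|\tih^T\x + p(\w)\right\} = \min_{\w}\left\{\sqrt{\|\w\|^2+\sigma^2}\|\g\| - \h^T\w + p(\w)\right\} - h\sigma,
\end{align*}
which is precisely $\Lco(\g,\h) - h\sigma$, completing the reduction to the claimed inequality. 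I do not foresee any serious obstacle here: the argument is essentially bookkeeping, and the only subtle point is ensuring that the augmented matrix $\Gb$ really has i.i.d.\ standard normal entries (which follows from independence of $\A$ and $\vb$) and that Lemma \ref{lemma:Gor} applies with $\Phi_1$ unbounded and $\Phi_2$ a unit sphere, which is the second case allowed in its hypotheses.
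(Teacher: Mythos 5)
Your proposal is correct and follows essentially the same route as the paper's own proof: augmenting $\A$ with $-\vb$ to form an i.i.d.\ Gaussian matrix $[\A,\ -\vb]$, writing the residual norm as a supremum over $\Sc^{m-1}$, applying the modified Gordon's Lemma with $\Phi_1$ the affine slice $\{[\w^T~\sigma]^T\}$ and $\Phi_2$ the unit sphere, and splitting the $(n+1)$-dimensional Gaussian into $\h$ and $h$. The bookkeeping, including the sign convention $\psi = -p(\w)$ and the evaluation $\max_{\|\ab\|=1}\g^T\ab = \|\g\|$, matches the paper's argument.
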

\begin{proof} 
Notice that $\| \A \w-\sigma\vb  \| =  \| \A_{\vb} \w_\sigma \|$, where $\A_\vb:=[\A \ - \vb]$ is a matrix with i.i.d. standard normal entries of size $m\times (n+1)$ and $\w_\sigma=[\w^T~\sigma]^T\in\R^{n+1}$. Apply the modified Gordon's Lemma \ref{lemma:Gor}, with $\x=\w_\sigma$, $\Phi_1=\{\w_\sigma\big|\w\in\R^n\}$, $\Phi_2=\mathcal{S}^{m-1}$, $\Gb=\A_\vb $, $\psi(\w_\sigma) =  p(\w)$.
Further perform the trivial optimizations over $\ab$ on both sides of the inequality. Namely, 
$
\max_{\|\ab\|=1}\ab^T\A_\z[\w^T~\sigma]^T=\|\A_\z\w_\sigma\|
$
and, $\max_{\|\ab\|=1}\g^T\ab=\|\g\|$.
\end{proof}



%
\subsubsection{Upper Bound}\label{sec:thisUp}
Similar to the lower bound derived in the previous section, we derive an upper bound for $\Fco(\A,\vb)$. For this, we need to apply Gordon's Lemma to $-\Fco(\A,\vb)$ and use the dual formulation of it.  Lemma \ref{sec:dual} in the Appendix shows that the dual of the minimization in \eqref{eq:generic} can be written as
\begin{align}
-\Fco(\A,\vb) = \min_{\| \mub \|\leq 1}\max_{\w} \left\{ \mub^T\left( \A\w - \sigma\vb\right)- p(\w) \right\}.
\end{align}  
Lemma \ref{lemma:Gor} requires the set over which maximization is performed to be compact. We thus apply Lemma \ref{lemma:Gor} to the restricted problem,
 $$\min_{\| \mub \|\leq 1}~\max_{\|\w\|\leq C_{up}} \left\{ \mub^T\left( \A\w - \sigma\vb\right)- p(\w) \right\}.$$ 
 Notice, that this still gives a valid lower bound to $-\Fco(\A,\vb)$ since the optimal cost of this latter problem is no larger than $-\Fco(\A,\vb)$. 
 In Section \ref{sec:KeyIdeas}, we will choose $C_{up}$ so that the  resulting lower bound is as tight as possible.

\begin{cor}\label{cor:up}
Let $\g\sim\Nn(0,\mathbf{I}_m)$, $\h\sim\Nn(0,\mathbf{I}_n)$ and $h\sim\Nn(0,1)$ and assume all $\g,\h,h$ are independently generated. Let,
\begin{align}\label{eq:uc}
\Uco(\g,\h) = -\min_{\|\mub\|\leq 1}~\max_{\|\w\|\leq C_{up}}\left\{ \sqrt{\|\w\|^2 + \sigma^2}~\g^T\mub + \|\mub\|\h^T\w - p(\w) \right\}. 
\end{align}
Then, for any $c\in\mathbb{R}$:
\begin{align}\notag
\Pro\left( ~ \Fco(\A,\vb) \leq c~\right) \geq 2\cdot\Pro\left( ~ \Uco(\g,\h) - \min_{0\leq \alpha\leq1} \alpha\sigma h  \leq c ~ \right) - 1.
\end{align}
\end{cor}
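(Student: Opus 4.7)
The plan is to mirror the proof of Corollary \ref{cor:low}, starting now from the dual of $\Fco$ rather than the primal. The excerpt states the dual identity
\begin{align*}
-\Fco(\A,\vb) \,=\, \min_{\|\mub\|\leq 1}\max_\w\{\mub^T(\A\w-\sigma\vb) - p(\w)\}
\end{align*}
as a consequence of an appendix lemma. Restricting the inner maximization to the compact ball $\{\w:\|\w\|\leq C_{up}\}$ lowers the right-hand side, equivalently producing an upper bound on $\Fco$. The truncation exists only to supply the compact $\Phi_2$ demanded by Lemma \ref{lemma:Gor}; $C_{up}$ is a free parameter to be optimized later.

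Next, I would repackage the bilinear coupling as $\mub^T(\A\w-\sigma\vb)=\mub^T\A_\vb\w_\sigma$ with the augmented Gaussian matrix $\A_\vb:=[\A,-\vb]\in\R^{m\times(n+1)}$ and augmented vector $\w_\sigma:=[\w^T,\sigma]^T$, then invoke Lemma \ref{lemma:Gor} with the roles of min- and max-variables swapped relative to Corollary \ref{cor:low}: take $\x=\mub$ on the compact $\Phi_1=\Bc^{m-1}$, $\ab=\w_\sigma$ on the compact $\Phi_2=\{[\w^T,\sigma]^T:\|\w\|\leq C_{up}\}$, Gaussian matrix $\Gb=\A_\vb^T$ so that $\ab^T\Gb\x=\mub^T\A_\vb\w_\sigma$, and $\psi(\x,\ab)=p(\w)$. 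The lemma then provides a one-sided probabilistic comparison of the restricted dual with
\begin{align*}
Q \,:=\, \min_{\|\mub\|\leq 1}\max_{\|\w\|\leq C_{up}}\left\{\|\mub\|\,\tilde\g^T\w_\sigma - \|\w_\sigma\|\,\tilde\h^T\mub - p(\w)\right\},
\end{align*}
with independent $\tilde\g\sim\Nn(0,\Iden_{n+1})$ and $\tilde\h\sim\Nn(0,\Iden_m)$.

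To match the form of the statement, I would split $\tilde\g=[\h^T,h]^T$ with $\h\in\R^n$ and scalar $h$, use $\|\w_\sigma\|=\sqrt{\|\w\|^2+\sigma^2}$, and replace $\tilde\h$ by the equidistributed $-\g$ with $\g\sim\Nn(0,\Iden_m)$; this isolates the $\w$-independent scalar $\|\mub\|\sigma h$ in the objective. Then the elementary bound $\min(f+g)\geq \min f+\min g$ separates this term cleanly: $\min_{\|\mub\|\leq 1}\|\mub\|\sigma h = \min_{0\leq\alpha\leq 1}\alpha\sigma h$ (since only $\|\mub\|$ enters), and the residual min-max is exactly $-\Uco(\g,\h)$ by definition. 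Hence $Q \geq \min_{0\leq\alpha\leq 1}\alpha\sigma h - \Uco(\g,\h)$ deterministically.

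Chaining this deterministic inequality with Gordon's probabilistic one-sided bound for $Q$, and with the trivial bound $\Fco\leq -$(restricted dual), yields the claim. The main thing to be careful about is the bookkeeping of inequality directions: the truncation, the $\min(f+g)$ separation, and Gordon's one-sidedness all must orient in the same way to produce an upper-tail bound on $\Fco$, and each does so.
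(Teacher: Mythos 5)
Your proposal is correct and follows essentially the same route as the paper: it passes to the dual, truncates the inner maximization to a ball of radius $C_{up}$ to obtain a compact $\Phi_2$ and a one-sided bound on $\Fco$, applies the modified Gordon's Lemma with $\x=\mub$, $\ab=\w_\sigma$, and then separates the $\|\mub\|\sigma h$ term via $\min(f+g)\geq\min f+\min g$ to recover $\Uco(\g,\h)-\min_{0\leq\alpha\leq 1}\alpha\sigma h$. Your sign bookkeeping (splitting $\tilde\g=[\h^T,h]^T$ and replacing $\tilde\h$ by $-\g$) is in fact slightly more explicit than the paper's own write-up.
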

\begin{proof}
Similar to the proof of Corollary \ref{cor:low} write $\| {\sigma}\vb - \A \w \| =  \| \A_{\vb} \w_\sigma \|$. Then, apply the modified Gordon's Lemma \ref{lemma:Gor}, with $\x=\mub$, $\alpha = \w_{\sigma}$, $\Phi_1=\mathcal{B}^{m-1}$, $\Phi_2= \left\{\w_\sigma~ |~ \frac{1}{C_{up}}\w\in\Bc^{n-1}\right\}$, $\Gb=\A_\vb $, $\psi(\w_\sigma) =  p(\w)$, to find that for any $c\in \mathbb{R}$:
\begin{align*}\notag
\Pro\left( ~ -\Fco(\A,\vb) \geq -c~\right) &\geq 2\cdot\Pro\left( ~ \min_{\|\mub\|\leq 1}~\max_{\|\w\|\leq C_{up}}\left\{ \sqrt{C_{up}^2 + \sigma^2}~\g^T\mub + \|\mub\|\h^T\w - p(\w) +\| \mub\|\sigma h \right\} \geq -c ~ \right) - 1\\
&\geq 2\Pro\left( ~ -\Uco(\g,\h) + \min_{\|\mu\|\leq1}\| \mub\|\sigma h \geq -c ~ \right) -1.
\end{align*}
\end{proof}
%

%
\subsubsection{Deviation Analysis}\label{sec:thisDev}
Of interest in the deviation analysis of the LASSO problem (cf. Step 4 in Section \ref{synops})  is the analysis of a restricted version of the $LASSO$ problem, namely
\begin{align}\label{eq:res_problem}
\min_{\|\w\|\in S_{dev}}~\left\{ \|\A\w - \sigma\vb \| + p(\w) \right\} 
\end{align}
where
 $$S_{dev} := \left\{ \ell~ | ~ \left| \frac{ \ell }{C_{dev}} -1 \right|\geq \delta_{dev}\right\}.$$
$\delta_{dev}>0$ is any arbitrary small constant and $C_{dev}>0$ a constant that will be chosen carefully for the purpose of the deviation analysis . We establish a high probability lower bound for \eqref{eq:res_problem}.
As usual,  we apply Lemma \ref{lemma:Gor} to our setup, to conclude the following.
\begin{cor}\label{cor:dev}
Let $\g\sim\Nn(0,\mathbf{I}_m)$, $\h\sim\Nn(0,\mathbf{I}_n)$ and $h\sim\Nn(0,1)$ and assume all $\g,\h,h$ are independently generated. Let
\begin{align}
\Lco_{dev}(\g,\h) = \min_{\|\w\|\in S_{dev}}\left\{ \sqrt{\|\w\|^2 + \sigma^2}\|\g\| - \h^T\w + p(\w) \right\}. \label{eq:ldev}
\end{align}
Then, for any $c\in\mathbb{R}$:
\begin{align}\notag
\Pro\left( ~ \min_{\|\w\|\in S_{dev}}~\left\{ \|\A\w - \sigma\vb \| + p(\w) \right\}  \geq c~\right) \geq 2\cdot\Pro\left( ~ \Lco_{dev}(\g,\h) - h\sigma \geq c ~ \right) - 1.
\end{align}
\end{cor}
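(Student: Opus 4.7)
The plan is to mimic the proof of Corollary \ref{cor:low} essentially verbatim, since the only structural difference between Corollary \ref{cor:dev} and Corollary \ref{cor:low} is that the minimization variable $\w$ is constrained to satisfy $\|\w\|\in S_{dev}$ rather than ranging over all of $\R^n$. The modified Gordon's Lemma \ref{lemma:Gor} permits $\Phi_1$ to be an arbitrary subset (so long as $\Phi_2$ is a scaled sphere), and this flexibility is exactly what lets the restriction on $\w$ be carried through without additional work.

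First, I would rewrite the Euclidean term as a bilinear form by stacking $\vb$ into the matrix and appending $\sigma$ to $\w$: define $\A_\vb=[\A\ \ -\vb]\in\R^{m\times(n+1)}$, which has i.i.d.\ standard normal entries, and $\w_\sigma=[\w^T\ \ \sigma]^T$, so that $\|\A\w-\sigma\vb\|=\|\A_\vb\w_\sigma\|=\max_{\ab\in\Sc^{m-1}}\ab^T\A_\vb\w_\sigma$. Then the restricted LASSO problem becomes
\begin{align*}
\min_{\|\w\|\in S_{dev}}\max_{\ab\in\Sc^{m-1}}\ \bigl\{\ab^T\A_\vb\w_\sigma+p(\w)\bigr\}.
\end{align*}
Next, I would apply the modified Gordon's Lemma \ref{lemma:Gor} with the choices $\x=\w_\sigma$, $\Phi_1=\{\w_\sigma\mid \|\w\|\in S_{dev}\}\subset\R^{n+1}$ (arbitrary), $\Phi_2=\Sc^{m-1}$ (a scaled unit sphere as required), $\Gb=\A_\vb$, and $\psi(\w_\sigma,\ab)=-p(\w)$. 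The lemma then yields that, for any $c\in\R$,
\begin{align*}
\Pro\!\left(\min_{\|\w\|\in S_{dev}}\ \{\|\A\w-\sigma\vb\|+p(\w)\}\geq c\right)
\geq\ & 2\,\Pro\!\left(\min_{\|\w\|\in S_{dev}}\max_{\ab\in\Sc^{m-1}}\bigl\{\|\w_\sigma\|\,\g^T\ab-\|\ab\|\,\tilde\h^T\w_\sigma+p(\w)\bigr\}\geq c\right)-1,
\end{align*}
where $\tilde\h\in\R^{n+1}$ has i.i.d.\ standard normal entries and $\g\in\R^m$ is as in the statement.

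The last step is to simplify the right-hand side by performing the inner maximization over $\ab\in\Sc^{m-1}$ (which yields $\|\w_\sigma\|\,\|\g\|$) and by splitting $\tilde\h$ into its first $n$ coordinates $\h\sim\Nn(0,\Iden_n)$ and its last coordinate $h\sim\Nn(0,1)$, so that $\tilde\h^T\w_\sigma=\h^T\w+\sigma h$. Using $\|\w_\sigma\|=\sqrt{\|\w\|^2+\sigma^2}$, the inner expression collapses to
\begin{align*}
\sqrt{\|\w\|^2+\sigma^2}\,\|\g\|-\h^T\w+p(\w)-\sigma h,
\end{align*}
and since $h$ is independent of $(\g,\h)$ and $\w$, the $-\sigma h$ term comes outside the minimization to give exactly $\Lco_{dev}(\g,\h)-\sigma h$. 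This completes the argument. I expect no real obstacle here: every step is a direct specialization of what was already done in Corollaries \ref{cor:low} and \ref{cor:up}, and the only subtle point is verifying that the hypotheses of Lemma \ref{lemma:Gor} are met, which they are because $\Phi_2=\Sc^{m-1}$ is a scaled unit sphere so $\Phi_1$ needs no compactness assumption.
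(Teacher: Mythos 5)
Your proposal is correct and matches the paper's approach exactly: the paper's own proof of Corollary \ref{cor:dev} simply says it follows from Lemma \ref{lemma:Gor} by the same steps as Corollary \ref{cor:low}, and you have spelled out precisely those steps, including the key observation that the unbounded set $\Phi_1=\{\w_\sigma\mid\|\w\|\in S_{dev}\}$ is admissible because $\Phi_2=\Sc^{m-1}$ is a scaled unit sphere.
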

%
\begin{proof} 
Follows from Lemma \ref{lemma:Gor} following exactly the same steps as in the proof of Corollary \ref{cor:low}.
\end{proof}
The reader will observe that $\Lco$ is a special case of $\Lco_{dev}$ where $S_{dev}=\R^+$.

\subsubsection{Summary}

We summarize the results of Corollaries  \ref{cor:low}, \ref{cor:up} and \ref{cor:dev} in Lemma \ref{lemma:applied_Gordon}. Adding to a simple summary, we perform a further simplification of the corresponding statements. In particular, we discard the ``distracting" term $\sigma h$ in Corollaries \ref{cor:low} and \ref{cor:dev}, as well as the term $\min_{0\leq \alpha\leq1} \alpha\sigma h$ in Corollary \ref{cor:up}. Recall the definitions of the key optimizations $\Lco$, $\Uco$ and $\Lco_{dev}$ in \eqref{eq:lc}, \eqref{eq:uc} and \eqref{eq:ldev}.

\begin{lem}
\label{lemma:applied_Gordon}
 Let  $\g\sim\mathcal{N}(0,{\bf{I}}_m)$ and $\h\sim\mathcal{N}(0,{\bf{I}}_n)$ be independently generated. Then,  for any positive constant $\eps>0$,  the following are true:
 \begin{align}\notag
&\text{1.~~}
\Pro\left( ~ \Fco(\A,\vb) \geq c~\right) \geq 2~\Pro\left( ~ \Lco(\g,\h) - \sigma\eps\sqrt{m} \geq c ~ \right) - 4\exp\left(-\frac{\eps^2m}{2}\right)-1.\\
&\text{2.~~}
\Pro\left( ~ \Fco(\A,\vb) \leq c~\right) \geq 2~\Pro\left( ~ \Uco(\g,\h) +\sigma\eps\sqrt{m} \leq c ~ \right) - 4\exp\left(-\frac{\eps^2m}{2}\right)-1.\nn\\
&\text{3.~~}
\Pro\left( ~ \min_{\|\w\|\in S_{dev}}~\left\{ \|\A\w - \sigma\vb \| + p(\w) \right\}  \geq c~\right) \geq 2~\Pro\left( ~ \Lco_{dev}(\g,\h) - \sigma\eps\sqrt{m} \geq c ~ \right) - 4\exp\left(-\frac{\eps^2m}{2}\right)-1.\nn
\end{align}

\end{lem}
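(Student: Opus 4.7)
The plan is to derive this lemma as an immediate consequence of Corollaries \ref{cor:low}, \ref{cor:up} and \ref{cor:dev}, with the only additional work being the removal of the scalar ``distracting'' terms involving $h\sim\mathcal{N}(0,1)$. Since $h$ is independent of $\g$ and $\h$ and has a standard normal distribution, it concentrates around zero, so these terms can be discarded at the cost of a small additional probability of failure. The key quantitative input is the standard Gaussian tail bound
\[
\Pro\left( |h| > \eps\sqrt{m} \right) \;\leq\; 2\exp(-\eps^2 m/2).
\]

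For statement 1, I would start from Corollary \ref{cor:low}. Observe that on the event $\{h \leq \eps\sqrt{m}\}$, the inequality $\Lco(\g,\h) - \sigma\eps\sqrt{m} \geq c$ already forces $\Lco(\g,\h) - h\sigma \geq c$. A one-line union bound then yields
\[
\Pro\left( \Lco(\g,\h) - h\sigma \geq c \right) \;\geq\; \Pro\left( \Lco(\g,\h) - \sigma\eps\sqrt{m} \geq c \right) \;-\; \exp(-\eps^2 m/2),
\]
and substituting this into the conclusion of Corollary \ref{cor:low} produces exactly the claim (in fact with the slightly sharper constant $2\exp(-\eps^2 m/2)$ instead of $4\exp(-\eps^2 m/2)$; the stated version is merely a loose but uniform formulation). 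Statement 3 is proved in the exact same way, with $\Lco$ replaced by $\Lco_{dev}$ and Corollary \ref{cor:dev} in place of Corollary \ref{cor:low}.

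For statement 2, starting from Corollary \ref{cor:up}, the distracting term is $-\min_{0\leq\alpha\leq 1}\alpha\sigma h$. I would first rewrite this in closed form: since $\alpha\mapsto \alpha\sigma h$ is linear on $[0,1]$, we have $\min_{0\leq\alpha\leq1}\alpha\sigma h = \sigma\min(h,0)$, hence $-\min_{0\leq\alpha\leq1}\alpha\sigma h = \sigma(-h)_+ \leq \sigma|h|$. On the event $\{|h|\leq \eps\sqrt{m}\}$, this quantity is at most $\sigma\eps\sqrt{m}$, so $\Uco(\g,\h) + \sigma\eps\sqrt{m}\leq c$ implies $\Uco(\g,\h) - \min_{0\leq\alpha\leq1}\alpha\sigma h \leq c$. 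The same union-bound step as above converts this into the stated bound.

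There is no real obstacle here; the heavy lifting was already carried out in Corollaries \ref{cor:low}--\ref{cor:dev} via Gordon's Lemma, and what remains is a purely elementary scalar Gaussian concentration argument that decouples $h$ from $(\g,\h)$. The only mild care needed is in statement 2, where one should notice that the one-sided quantity $(-h)_+$ is nonnegative, which is why the correction enters with a plus sign as $+\sigma\eps\sqrt{m}$ (and not as $-\sigma\eps\sqrt{m}$ as in statements 1 and 3).
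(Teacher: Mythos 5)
Your proposal is correct and follows essentially the same route as the paper's own proof: bound the scalar Gaussian term $h$ via $\Pro(|h|\leq \eps\sqrt{m})\geq 1-2\exp(-\eps^2 m/2)$, intersect events, and substitute into Corollaries \ref{cor:low}, \ref{cor:up} and \ref{cor:dev}. Your observation that the one-sided tail bound gives a slightly sharper constant in statements 1 and 3, and your explicit identification of $-\min_{0\leq\alpha\leq1}\alpha\sigma h = \sigma(-h)_+\leq\sigma|h|$ in statement 2, are both consistent with (and marginally more explicit than) the paper's argument.
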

\begin{proof}
For $h\sim\mathcal{N}(0,1)$ and all $\eps>0$,  
\begin{align}\label{eq:use154}
\Pro\left( |h|\leq \eps\sqrt{m} \right) \geq 1-2\exp(-\frac{\eps^2m}{2}).
\end{align}
Thus,
\begin{align}
\Pro\left( ~ \Lco(\g,\h) - h\sigma \geq c ~ \right)&\geq \Pro\left( \Lco(\g,\h) - \eps\sigma\sqrt{m} \geq c ~,~ h\leq \eps\sqrt{m}~ \right)\nn\\
&\geq \Pro(\Lco(\g,\h) - \eps\sigma\sqrt{m} \geq c)-2\exp(-\frac{\eps^2m}{2})\nn.
\end{align}
Combine this with Corollary \ref{cor:low} to conclude with the first statement of Lemma \ref{lemma:applied_Gordon}. The proof of the third statement of the Lemma follows the exact same steps applied this time to Corollary \ref{cor:dev}.
%
For the second statement write,
\begin{align}
\Pro\left( ~ \Uco(\g,\h) - \min_{\|\mu\|\leq1}\| \mub\|\sigma h \leq c ~ \right)&\geq \Pro\left( ~ \Uco(\g,\h) +\sigma |h| \leq c ~ \right)\nn\\
&\geq \Pro\left(~\Uco(\g,\h) + \eps\sigma\sqrt{m} \leq c ~,~ |h|\leq \eps\sqrt{m}~\right)\nn,
\end{align}
and use \eqref{eq:use154} as above. To conclude, combine with the statement of Corollary \ref{cor:up}.
\end{proof}

\section{After Gordon's Lemma: Analyzing the Key Optimizations}
\label{sec:KeyIdeas}

\subsection{Preliminaries}

This Section is devoted to the analysis of the three key optimizations introduced in the previous section. In particular, we focus on the approximated C-LASSO and $\ell_2$-LASSO problems, for which a detailed such analysis is tractable. Recall that the approximated C-LASSO and $\ell_2$-LASSO are obtained from the generic optimization in \eqref{eq:generic} when substituting $p(\w) = \max_{\la\geq 0} \max_{\s\in\la\paf}\s^T\w =\max_{\s\in\text{cone}(\paf)}\s^T\w$ and $p(\w) = \max_{\s\in\la\paf}\s^T\w$, respectively. Considering this and recalling the definitions in \eqref{eq:lc}, \eqref{eq:uc} and \eqref{eq:ldev}, we will be analyzing the following \emph{key optimizations},
\begin{subequations}\label{eq:keyApp}
\begin{align}
\Lc(\g,\h) &= \min_{\w}\left\{ \sqrt{\|\w\|^2 + \sigma^2}\|\g\| - \h^T\w + \max_{\s\in\Cc} \s^T\w \right\}, \\
\Uc(\g,\h) &= -\min_{\|\mub\|\leq 1}~\max_{\|\w\| = C_{up}}\left\{ \sqrt{\|\w\|^2 + \sigma^2}~\g^T\mub + \|\mub\|\h^T\w -  \max_{\s\in\Cc} \s^T\w  \right\}, \label{eq:Uc2}\\
\Lc_{dev}(\g,\h) &= \min_{\|\w\|\in S_{dev}}\left\{ \sqrt{\|\w\|^2 + \sigma^2}\|\g\| - \h^T\w + \max_{\s\in\Cc} \s^T\w \right\},
\end{align}
\end{subequations}
where $\Cc$ is taken to be either $\text{cone}(\paf)$ or $\la\paf$, corresponding to the C-LASSO and $\ell_2$-LASSO, respectively. Notice that in \eqref{eq:Uc2} we have constrained the feasible set of the inner maximization to the scaled sphere rather than ball. Following our discussion, in Section \ref{sec:thisUp} this does not affect the validity of Lemma \ref{lemma:applied_Gordon}, while it facilitates our derivations here.

To be consistent with the definitions in \eqref{eq:keyApp}, which treat the key optimizations of the C-LASSO and $\ell_2$-LASSO under a common framework with introducing a generic set $\Cc$, we also define
\begin{align}\label{eq:com}
\Fc(\A,\vb) = \min_{\w} \left\{ \| \A\w - \sigma\vb \| + \max_{\s\in\Cc} \s^T\w \right\},
\end{align}
to correspond to \eqref{eq:foC_w} and \eqref{eq:foEll2_w}, when setting $\mathcal{C} = \text{cone}(\paf)$ and $\mathcal{C} = \la\paf$, respectively.

\subsection{Some Notation}\label{sec:snot}
%
%
Let $\Cc\subset\mathbb{R}^n$ be a closed and nonempty convex set. For any vector $\x\in\mathbb{R}^n$, we denote its (unique) projection onto $\Cc$ as $\bu(\x,\Cc)$, i.e.
$$
\bu(\x,\Cc) := \operatorname{argmin}_{\s\in\Cc}{\| \x-\s \| }.
$$ 
It will also be convenient to denote,
 $$\bi(\x,\Cc) := \x-\bu(\x,\Cc).$$
 The distance of $\x$ to the set $\Cc$ can then be written as,
$$\dt(\x,\Cc) := \| \bi(\x,\Cc) \|.$$
Finally, we denote,
$$\corr(\x,\Cc):= \li \bu(\x,\Cc), \bi(\x,\Cc) \ri.$$
Now, let $\h\sim\mathcal{N}(0,\mathbf{I}_n)$. The following quantities are of central interest throughout the paper:
\begin{subequations}\label{eq:defn01}
\begin{align}
\DC &:= \E\left[~ \dt^2(\h,\Cc)~ \right], \\
\PC &:= \E\left[~ \| \bu(\h,\Cc) \|^2~ \right], \\
\CC &:= \E\left[~ \corr(\h,\Cc) ~\right], 
\end{align} 
\end{subequations}
where the $\E[\cdot]$ is over the distribution of the Gaussian vector $\h$. It is easy to verify that
$
n = \DC + \PC + 2\CC.
$
Under this notation,
\begin{align*}
\Dlf &= \mathbf{D}(\la\paf),\\
\Clf &= \mathbf{C}(\la\paf), \\
\Delxf &= \mathbf{D}(\text{cone}(\paf)).
\end{align*}
On the same lines, define $\Plf:= \mathbf{P}(\la\paf)$.

\subsection{Analysis}\label{sec:after}

We perform a detailed analysis of the three key optimization problems $\Lc$, $\Uc$ and $\Lc_{dev}$. For each one of them we summarize the results of the analysis in Lemmas \ref{lemma:lowKey}, \ref{lemma:upKey} and \ref{lemma:dev} below.
%
%
%
Each Lemma includes three statements. In the first, we reduce the corresponding key optimization problem to a scalar optimization. Next, we compute the optimal value of this optimization in a deterministic setup. We convert this into a probabilistic statement in the last step, which is directly applicable in Lemma \ref{lemma:applied_Gordon}. Eventhough, we are eventually interested only in this last probabilistic statement, we have decided to include all three steps in the statement of the Lemmas in order to provide some further intuition into how they nicely build up to the desired result. All proofs of the lemmas are deferred to Section \ref{sec:proofOfKeyOpt} in the Appendix.

\subsubsection{Lower Key Optimization}
%
\begin{lem}[Properties of $\Lc$]\label{lemma:lowKey}
Let $\g\sim\Nn(0,\mathbf{I}_m)$ and $\h\sim\Nn(0,\mathbf{I}_n)$ and
\begin{align}\label{eq:lemFlow}
\Lc(\g,\h) = \min_{\w}\left\{ \sqrt{\|\w\|^2 + \sigma^2}\|\g\| - \h^T\w + \max_{\s\in\Cc} \s^T\w \right\},
\end{align}
Denote $\hat{\w}_{low}(\g,\h)$ its optimal value. 
The following are true:
\begin{enumerate}
\item{\emph{Scalarization:}} $\Lc(\g,\h) = \min_{\alpha\geq 0}\left\{ \sqrt{\alpha^2 + \sigma^2}\|\g\| - \alpha\cdot\dt(\h,\Cc) \right\}$
\item {\emph{Deterministic result:}} If
$\|\g\|^2 > \dt(\h,\Cc)^2, $ then,
$$\Lc(\g,\h) = \sigma\sqrt{ \|\g\|^2 - \dt^2(\h,\Cc) },$$
and,
 $$\|\hat{\w}_{low}(\g,\h)\|^2 = \sigma^2\frac{\dt^2(\h,\Cc)}{\|\g\|^2 - \dt^2(\h,\Cc)}.$$
\item {\emph{Probabilistic result:}} Assume that $m\geq\DC + \eps_L m$ for some $\eps_L\geq 0$. Then, for any $\eps>0$, there exist $c_1,c_2>0$ such that, for sufficiently large $m$,
\begin{align*}
\Pro\left( \Lc(\g,\h) \geq (1-\eps) \sigma\sqrt{ m - \DC } \right) \geq 1 - c_1\exp(-c_2 m).
\end{align*}
\end{enumerate}
\end{lem}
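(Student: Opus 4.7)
The plan is to prove the three parts in the order stated, each building on the previous.

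For the scalarization step, I would parametrize $\w = \alpha \u$ with $\alpha = \|\w\| \geq 0$ and $\u \in \Sc^{n-1}$. The objective decomposes as
\[
\Lc(\g,\h) \;=\; \min_{\alpha\geq 0}\left\{ \sqrt{\alpha^2+\sigma^2}\|\g\| \;+\; \alpha\cdot\min_{\|\u\|=1}\max_{\s\in\Cc}(\s-\h)^T\u \right\}.
\]
The inner minimax is the key computation. When $\Cc = \la\paf$ is compact and convex, one relaxes to $\|\u\|\leq 1$ (the optimum is on the boundary since the minimax value is nonpositive) and applies Sion's minimax theorem to get $\max_{\s\in\Cc}\min_{\|\u\|\leq 1}(\s-\h)^T\u = \max_{\s\in\Cc}(-\|\s-\h\|) = -\dt(\h,\Cc)$. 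When $\Cc = \text{cone}(\paf)$ is an unbounded cone, $\max_{\s\in\Cc}\s^T\w$ equals $0$ if $\w \in \Cc^\circ$ and $+\infty$ otherwise, so the problem reduces to $\min_{\|\u\|=1,\,\u\in\Cc^\circ}(-\h^T\u)$, and the standard identity $\max_{\u\in\Cc^\circ\cap\Sc^{n-1}} \h^T\u = \dt(\h,\Cc)$ (valid for closed convex cones) closes the gap. Either way, the inner minimax equals $-\dt(\h,\Cc)$, yielding the claimed scalarization.

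For the deterministic step, I would treat $\|\g\|$ and $\dt(\h,\Cc)$ as fixed constants and do one-variable calculus on $F(\alpha) = \sqrt{\alpha^2+\sigma^2}\|\g\| - \alpha\,\dt(\h,\Cc)$. Setting $F'(\alpha) = \alpha\|\g\|/\sqrt{\alpha^2+\sigma^2} - \dt(\h,\Cc) = 0$ and squaring gives a unique nonnegative critical point exactly when $\|\g\|^2 > \dt^2(\h,\Cc)$, namely $\hat\alpha = \sigma\,\dt(\h,\Cc)/\sqrt{\|\g\|^2-\dt^2(\h,\Cc)}$. Convexity of $F$ (easy to check) ensures this is the global minimizer. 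Plugging back and simplifying algebraically gives $\Lc(\g,\h) = \sigma\sqrt{\|\g\|^2-\dt^2(\h,\Cc)}$ and $\|\hat\w_{low}\|^2 = \hat\alpha^2$, both of which reduce to one-line calculations.

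For the probabilistic step, I would combine two standard concentration facts. First, $\|\g\|^2$ is a chi-squared variable with $m$ degrees of freedom, so $\Pro(\|\g\|^2 \leq (1-\delta) m) \leq \exp(-c\delta^2 m)$. Second, since $\Cc$ is convex and closed, the map $\h \mapsto \dt(\h,\Cc)$ is $1$-Lipschitz, so Gaussian concentration (Borell--TIS) gives $\Pro(|\dt(\h,\Cc) - \E\dt(\h,\Cc)|\geq t) \leq 2\exp(-t^2/2)$; together with $\E[\dt^2(\h,\Cc)] = \DC$ and the bound $\mathrm{Var}(\dt(\h,\Cc))\leq 1$, one gets $\Pro(\dt^2(\h,\Cc) \geq (1+\delta)\DC) \leq c_1'\exp(-c_2'\delta^2 m)$ as long as $\DC$ grows linearly in $m$ (which is where the assumption $\DC \geq \eps_L m$ would come in if needed; otherwise the result still follows, just with the concentration being driven by the $\|\g\|^2$ term). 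Under the assumption $m \geq \DC + \eps_L m$, on the intersection of the two high-probability events we have $\|\g\|^2 - \dt^2(\h,\Cc) \geq (1-\delta)m - (1+\delta)\DC \geq (1-\eps')(m-\DC)$ for appropriate $\delta$, and the deterministic formula from Part 2 applies, yielding $\Lc(\g,\h) \geq (1-\eps)\sigma\sqrt{m-\DC}$ with probability $1 - c_1\exp(-c_2 m)$ via a union bound.

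The main obstacle is handling the two different geometric settings ($\Cc$ a cone versus $\Cc$ compact convex) uniformly in Part 1, and in particular making sure the minimax exchange is rigorously justified in the unbounded case; the polar-cone reformulation is the cleanest way around this. Parts 2 and 3 are essentially routine once the scalarization is in hand.
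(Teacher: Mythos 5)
Your proposal is correct, and its overall skeleton (scalarize, one-variable calculus, concentration plus union bound) is exactly the paper's; Parts 2 and 3 match the paper's argument essentially line for line. The one place you diverge is the step you yourself flag as the main obstacle: the minimax exchange in Part 1. You handle it by splitting into cases --- Sion's theorem on the relaxed ball for compact $\Cc=\la\paf$, and a polar-cone/Moreau reformulation for $\Cc=\text{cone}(\paf)$ --- whereas the paper proves a single identity (its Lemma \ref{lemma:a property}) valid for \emph{any} nonempty convex set: the inequality $\max_{\|\w\|=\alpha}\min_{\s\in\Cc}(\h-\s)^T\w \le \min_{\s\in\Cc}\max_{\|\w\|=\alpha}(\h-\s)^T\w = \alpha\,\dt(\h,\Cc)$ is the trivial direction, and the reverse is obtained by simply evaluating the inner minimum at the explicit candidate $\w^*=\alpha\,\bi(\h,\Cc)/\dt(\h,\Cc)$ and invoking the variational characterization of the projection (Fact \ref{prom}) to show it already achieves $\alpha\,\dt(\h,\Cc)$. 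This exhibits the saddle point directly, needs no minimax theorem and no compactness, and removes the cone/compact dichotomy you were worried about; your route is valid but does strictly more work (and your cone branch implicitly assumes $\h\notin\Cc$ so that the maximizer of $\h^T\u$ over $\Cc^\circ\cap\Bc^{n-1}$ lies on the sphere --- a measure-zero caveat the projection argument sidesteps). Either way the conclusion and the remaining two parts go through as you describe.
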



%
\subsubsection{Upper Key Optimization}
%
\begin{lem}[Properties of $\Uc$]\label{lemma:upKey}
Let $\g\sim\Nn(0,\mathbf{I}_m)$, $\h\sim\Nn(0,\mathbf{I}_n)$ and
\begin{align}
\Uc(\g,\h) = -\min_{\|\mu\| \leq 1}~\max_{\|\w\|=C_{up}}\left\{ \sqrt{C_{up}^2 + \sigma^2}~\g^T\mub + \|\mub\|\h^T\w - \max_{\s\in\Cc} \s^T\w \right\}.
\end{align}
The following hold true:
\begin{enumerate}
\item {\emph{Scalarization:}}
$\Uc(\g,\h) = -\min_{0\leq\alpha\leq 1}\left\{ -\alpha\cdot\sqrt{C_{up}^2 + \sigma^2}~\|\g\| + C_{up}\dt(\alpha\h,\Cc)\right\}.$
\item {\emph{Deterministic result:}}
If $\h \notin \Cc$ and
\begin{align}\label{eq:ass_up1}
C_{up}\dt(\h,\Cc) + C_{up}\frac{ \corr(\h,\Cc)  }{ \dt(\h,\Cc) } < \sqrt{C_{up}^2+\sigma^2}\|\g\|,
\end{align}
then,
\beq \Uc(\g,\h) =  \sqrt{C_{up}^2+\sigma^2}\|\g\|-C_{up}\dt(\h,\Cc) \label{alpha=1}.\eeq

\item {\emph{Probabilistic result:}}
Assume $m\geq  \max\left\{ \DC , \DC+\CC \right\} +\eps_L m$ for some $\eps_L>0$.  Set 
$$C_{up} = \sigma\sqrt{\frac{ \DC  }{ m - \DC  }}.
$$
Then, for any $\eps>0$, there exist $c_1,c_2>0$ such that for sufficiently large $\DC$,
\begin{align*}
\Pro\left( \Uc(\g,\h) \leq (1+\eps) \sigma\sqrt{ m - \DC } \right) \geq 1 - 
 c_1\exp\left(-c_2 \gamma(m,n)\right).
\end{align*}
where $\gamma(m,n)=m$ if $\Cc$ is a cone and $\gamma(m,n)=\min\left\{m,\frac{m^2}{n}\right\}$ otherwise.
\end{enumerate}
\end{lem}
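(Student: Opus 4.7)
\textbf{Proof plan for Lemma \ref{lemma:upKey}.} I would prove the three parts in sequence, each building on the previous.

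\emph{Scalarization.} Write $\w = C_{up}\u$ with $\|\u\|=1$ and exploit positive homogeneity of the support function, $\max_{\s\in\Cc}\s^T(C_{up}\u)=C_{up}\max_{\s\in\Cc}\s^T\u$, to rewrite the inner maximum as
$$C_{up}\max_{\|\u\|=1}\min_{\s\in\Cc}(\|\mub\|\h-\s)^T\u.$$
Relaxing the constraint to $\|\u\|\leq 1$ can change the value only if the inflated maximum is $0$, and on this convex set Sion's minimax theorem swaps the order to give $\min_{\s\in\Cc}\|\,\|\mub\|\h-\s\|=\dt(\|\mub\|\h,\Cc)$. Finally, a Cauchy--Schwarz argument selects the optimal direction of $\mub$ to be $-\alpha\g/\|\g\|$ with $\alpha=\|\mub\|$, delivering the claimed scalar form.

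\emph{Deterministic result.} The reduced objective $\phi(\alpha):=-\alpha\sqrt{C_{up}^2+\sigma^2}\|\g\|+C_{up}\dt(\alpha\h,\Cc)$ is convex on $[0,1]$, since $\dt(\cdot,\Cc)$ is convex and $\alpha\mapsto\alpha\h$ is linear. Differentiating $\dt^2(\alpha\h,\Cc)=\min_{\s\in\Cc}\|\alpha\h-\s\|^2$ via the envelope theorem (the argmin is $\bu(\alpha\h,\Cc)$), one obtains
$$\phi'(1)=-\sqrt{C_{up}^2+\sigma^2}\|\g\|+C_{up}\Bigl(\dt(\h,\Cc)+\frac{\corr(\h,\Cc)}{\dt(\h,\Cc)}\Bigr).$$
Condition \eqref{eq:ass_up1} is exactly $\phi'(1)<0$, and by convexity this forces $\phi'<0$ on all of $[0,1]$, so the minimum is attained at $\alpha=1$; substituting $\alpha=1$ into $-\phi(\alpha)$ gives \eqref{alpha=1}.

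\emph{Probabilistic result.} With $C_{up}^2=\sigma^2\DC/(m-\DC)$, an elementary computation yields $C_{up}^2+\sigma^2=\sigma^2 m/(m-\DC)$. Replacing $\|\g\|^2$, $\dt^2(\h,\Cc)$, $\corr(\h,\Cc)$ by their means $m$, $\DC$, $\CC$ respectively, the deterministic formula $\sqrt{C_{up}^2+\sigma^2}\|\g\|-C_{up}\dt(\h,\Cc)$ reduces exactly to $\sigma\sqrt{m-\DC}$, while condition \eqref{eq:ass_up1} collapses to $\DC+\CC<m$, which is guaranteed with slack $\eps_L m$ by the hypothesis $m\geq\max\{\DC,\DC+\CC\}+\eps_L m$. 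I would then upgrade to a probabilistic statement by applying Gaussian concentration of measure to the $1$-Lipschitz functionals $\h\mapsto\dt(\h,\Cc)$ and $\g\mapsto\|\g\|$, both of which enjoy sub-Gaussian tails of the form $\exp(-c m)$ for deviations of order $\sqrt m$.

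The main obstacle is concentration of $\corr(\h,\Cc)=\li\bu(\h,\Cc),\bi(\h,\Cc)\ri$, which is a product of two $1$-Lipschitz-in-$\h$ quantities and hence need not be Lipschitz itself. When $\Cc$ is a cone, the projection is orthogonal to the residual, so $\corr\equiv 0$ pointwise and nothing further is required, yielding the clean $\exp(-c m)$ rate. In the general case, on a high-probability Gaussian ball $\{\|\h\|\lesssim\sqrt n\}$ both $\|\bu\|$ and $\|\bi\|$ are $O(\sqrt n)$, so $\h\mapsto\corr(\h,\Cc)$ is $O(\sqrt n)$-Lipschitz on that set; Gaussian concentration then gives tail probability $\exp(-c m^2/n)$ for deviations of order $\eps m$, which combined by a union bound with the other concentrations produces the stated $\min\{m,m^2/n\}$ rate.
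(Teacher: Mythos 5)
Your proposal is correct and follows essentially the same route as the paper: scalarize via the support-function/minimax identity, show the scalar objective $\phi(\alpha)$ is convex with $\phi'(1)<0$ exactly under \eqref{eq:ass_up1} so the minimum sits at $\alpha=1$, then verify the hypotheses with high probability and concentrate the resulting expression around $\sigma\sqrt{m-\DC}$. The one genuine divergence is the concentration of $\corr(\h,\Cc)$: the paper uses the identity $2\corr(\h,\Cc)=n-\|\bu(\h,\Cc)\|^2-\dt^2(\h,\Cc)$ together with the bound $\PC\leq 2(n+\DC)$, whereas you localize to a Gaussian ball on which $\corr(\cdot,\Cc)$ is $O(\sqrt{n})$-Lipschitz; both yield the $\min\{m,m^2/n\}$ exponent, though in your version the ball radius should be taken of order $\sqrt{n}+\sqrt{m}$ so that the localization event itself fails with probability at most $\exp(-cm)$ in the regime $m>n$ (where $\gamma(m,n)=m$), and you should also record the one-line observation that $\dt(\h,\Cc)$ concentrating near $\sqrt{\DC}>0$ gives $\h\notin\Cc$ with the required probability.
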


\subsubsection{Deviation Key Optimization}
\begin{lem}[Properties of $\Lc_{dev}$]\label{lemma:dev}
Let $\g\sim\Nn(0,\mathbf{I}_m)$ and $\h\sim\Nn(0,\mathbf{I}_n)$ and
\begin{align}
\Lc_{dev}(\g,\h) = \min_{\|\w\|\in S_{dev}}\left\{ \sqrt{\|\w\|^2 + \sigma^2}\|\g\| - \h^T\w + \max_{\s\in\Cc} \s^T\w \right\},
\end{align}
where
 $$S_{dev} := \left\{ \ell~ | ~ \left| \frac{ \ell }{C_{dev}} -1 \right|\geq \delta_{dev}\right\},$$
$\delta_{dev}>0$ is any arbitrary small constant and $C_{dev}>0$.
The following are true:
\begin{enumerate}
\item {\emph{Scalarization:}} $\Lc_{dev}(\g,\h) = \min_{\alpha\in S_{dev}}\left\{ \sqrt{\alpha^2 + \sigma^2}\|\g\| - \alpha\cdot\dt(\h,\Cc) \right\}.$

\item {\emph{Deterministic result:}} If 
\begin{align}\label{eq:ass_dev}
\frac{\sigma\cdot\dt(\h,\Cc)}{\sqrt{\| \g \|^2-\dt^2(\h,\Cc)}}\notin S_{dev},
\end{align}
then,
\begin{align*}
\Lc_{dev}(\g,\h) = \sqrt{ (1\pm\delta_{dev})^2C_{dev}^2 + \sigma^2 }\|\g\| - (1\pm\eps)C_{dev}\dt(\h,\Cc).
\end{align*}
\item {\emph{Probabilistic result:}} Assume $(1-\eps_L)m>\DC>\eps_L m$, for some $\epsilon_0>0$ and set
\begin{align*}
C_{dev} = \sigma\sqrt{\frac{\DC}{m-\DC}}.
\end{align*}
Then, for all $\delta_{dev}>0$ there exists $t>0$ and $c_1,c_2>0$ such that,
\begin{align}\label{eq:conc_res1}
\Pro\left( \Lc_{dev}(\g,\h) \geq (1+ t )\sigma\sqrt{m-\DC}   \right) \geq 1 - c_1\exp{(-c_2 m)}.
\end{align}

\end{enumerate}
\end{lem}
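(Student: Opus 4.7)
My plan is to follow the same three-step template already used for $\Lc$ in Lemma \ref{lemma:lowKey}, leveraging that $\Lc_{dev}$ differs from $\Lc$ only through the extra constraint $\|\w\|\in S_{dev}$.

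For the scalarization, I would fix $\|\w\|=\alpha$ and optimize over the direction separately. The objective becomes
\begin{align*}
\sqrt{\alpha^2+\sigma^2}\|\g\| - \max_{\|\w\|=\alpha}\min_{\s\in\Cc}(\h-\s)^T\w.
\end{align*}
Exactly as in the proof of Lemma \ref{lemma:lowKey}, I would swap the max-min using standard minimax arguments (the inner problem is linear in $\w$ and convex in $\s$ on the convex set $\Cc$, with a compact sphere for $\w$), and evaluate $\max_{\|\w\|=\alpha}(\h-\s)^T\w=\alpha\|\h-\s\|$, so the inner optimization collapses to $\alpha\cdot\dt(\h,\Cc)$. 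This yields the claimed scalar form on $\alpha\in S_{dev}$.

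For the deterministic part, observe that $F(\alpha):=\sqrt{\alpha^2+\sigma^2}\|\g\|-\alpha\cdot\dt(\h,\Cc)$ is strictly convex in $\alpha\geq 0$, with unique unconstrained minimizer $\alpha^*=\sigma\dt(\h,\Cc)/\sqrt{\|\g\|^2-\dt^2(\h,\Cc)}$ (assuming $\|\g\|>\dt(\h,\Cc)$, which follows from the probabilistic hypothesis). Since $S_{dev}$ is the complement of the open interval $((1-\delta_{dev})C_{dev},(1+\delta_{dev})C_{dev})$, and the hypothesis \eqref{eq:ass_dev} says $\alpha^*$ lies strictly inside that excluded interval, strict convexity forces the restricted minimum to be attained at one of the two boundary points $(1\pm\delta_{dev})C_{dev}$, giving the stated closed form.

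The probabilistic step is where the real work lies, and is also where I expect the main obstacle. I would use the standard Gaussian concentration facts (already invoked throughout Section \ref{sec:after}) that $\|\g\|^2$ concentrates around $m$ and $\dt^2(\h,\Cc)$ concentrates around $\DC$, both with exponential deviation bounds in $m$. Combined with $(1-\eps_L)m>\DC>\eps_L m$, these guarantee on a good event of probability $1-c_1\exp(-c_2 m)$ that (i) $\|\g\|>\dt(\h,\Cc)$, and (ii) $\alpha^*$ lies in a tiny $o(1)$ neighborhood of $C_{dev}=\sigma\sqrt{\DC/(m-\DC)}$, so in particular $\alpha^*\notin S_{dev}$ once we take $m$ large enough relative to $\delta_{dev}$. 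On this event Part 2 applies. To obtain the $(1+t)$ factor, I would quantify the gap $F(\alpha)-F(\alpha^*)$ at the boundary using strict convexity: since $F''(\alpha)=\|\g\|\sigma^2/(\alpha^2+\sigma^2)^{3/2}$ is bounded below by a positive constant (depending only on $\eps_L$) on a neighborhood of $C_{dev}$, we get a quadratic lower bound of the form $F(\alpha)-F(\alpha^*)\geq c\cdot(\delta_{dev}C_{dev})^2$ for both boundary choices. Since $F(\alpha^*)=\sigma\sqrt{\|\g\|^2-\dt^2(\h,\Cc)}$ concentrates around $\sigma\sqrt{m-\DC}$, the additive gap translates into a multiplicative factor $(1+t)$ for some $t=t(\delta_{dev},\eps_L)>0$, uniformly on the good event. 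The delicate point I anticipate is carefully tracking all the concentration tolerances so that the small high-probability slack in $\|\g\|^2$ and $\dt^2(\h,\Cc)$ does not eat up the gap produced by the $\delta_{dev}$ separation; this should be handled by choosing the concentration tolerance $\eps$ much smaller than $\delta_{dev}^2\eps_L$ before applying the union bound.
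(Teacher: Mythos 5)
Your scalarization and deterministic steps coincide with the paper's: the max--min swap is exactly Lemma \ref{lemma:a property}, and the reduction to the two boundary points $(1\pm\delta_{dev})C_{dev}$ via strict convexity of $F(\alpha)=\sqrt{\alpha^2+\sigma^2}\|\g\|-\alpha\,\dt(\h,\Cc)$ is the paper's argument verbatim. Where you genuinely diverge is the probabilistic step: the paper routes everything through the perturbation Lemma \ref{lemma:pert1}, whose fifth statement lower-bounds $L((1\pm\gamma)x^*(a,b);a',b')-L(x^*(a,b);a',b')$ by an explicit algebraic expansion of $\sqrt{a^2+(2\gamma+\gamma^2)b^2}$, whereas you propose a direct strong-convexity bound $F(\alpha)-F(\alpha^*)\geq\tfrac{1}{2}\bigl(\inf F''\bigr)(\alpha-\alpha^*)^2$. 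Your route is more conceptual and shorter; the paper's is more modular (Lemma \ref{lemma:pert1} is reused in Section \ref{any sigma}). Both can be made to work.

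However, there is a quantitative slip in your step 3 that, as written, breaks the conclusion. You assert that $F''(\alpha)=\|\g\|\sigma^2/(\alpha^2+\sigma^2)^{3/2}$ is ``bounded below by a positive constant'' near $C_{dev}$. In fact, on the interval between $\alpha^*$ and the boundary points one has $\alpha\asymp C_{dev}\asymp\sigma$, so $(\alpha^2+\sigma^2)^{3/2}\asymp\sigma^3$ and $F''\gtrsim\|\g\|/\sigma\asymp\sqrt{m}/\sigma$ --- a quantity growing with $m$, not a constant. This is not a cosmetic point: with only a constant lower bound on $F''$, the quadratic gap $c(\delta_{dev}C_{dev})^2=\order{\sigma^2}$ is too small both to absorb the $\order{\eps\sigma\sqrt{m}}$ concentration slack in $F(\alpha^*)=\sigma\sqrt{\|\g\|^2-\dt^2(\h,\Cc)}$ and to produce a multiplicative factor $(1+t)$ against the base value $\sigma\sqrt{m-\DC}=\Theta(\sigma\sqrt{m})$; the resulting ratio would vanish as $m\to\infty$. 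The fix is immediate: using the correct bound $F''\gtrsim\sqrt{m}/\sigma$ together with $(\alpha-\alpha^*)^2\gtrsim\delta_{dev}^2\sigma^2\,\DC/(m-\DC)\geq\delta_{dev}^2\eps_L\sigma^2$ yields a gap $\gtrsim\delta_{dev}^2\eps_L\,\sigma\sqrt{m}$, which is of the same order as the paper's $\text{diff}(\gamma)\geq\gamma^2\eps b^2/32$ bound and suffices once the concentration tolerance is chosen small relative to $\delta_{dev}^2\eps_L$, exactly as you anticipate at the end.
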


%
\subsection{Going Back: From the Key Optimizations to the Squared Error of the LASSO}
Application of Gordon's Lemma to $\Fc(\A,\vb)$ introduced the three key optimizations in Lemma \ref{lemma:applied_Gordon}. Next, in Lemmas \ref{lemma:lowKey}, \ref{lemma:upKey} and \ref{lemma:dev} we carried out the analysis of those problems. Here, we combine the results of the four Lemmas mentioned above in order to evaluate $\Fc(\A,\vb)$ and to compute an exact value for the norm of its optimizer $\hat{\w}(\A,\vb)$. Lemma \ref{thm:unified} below formally states the results of the analysis and the proof of it follows.

\begin{lem} \label{thm:unified}
Assume 
$m\geq \max\left\{ \DC, \DC+\CC \right\} + \eps_L m$ and $\DC\geq\eps_L m $ for some $\eps_L>0$. Also, assume $m$ is sufficiently large and let $\gamma(m,n)=m$ if $\Cc$ is a cone and $\min\{m,\frac{m^2}{n}\}$ else. Then, the following statements are true.
\begin{enumerate}
\item  For any $\eps>0$, there exist constants $c_1,c_2>0$ such that
\begin{align}\label{eq:conobjVal}
\left| {{\Fc}(\A , \vb )} - \sigma\sqrt{m - \DC} \right| \leq \eps\sigma\sqrt{m - \DC}.
\end{align}
with probability $1-c_1\exp(-c_2\gamma(m,n))$.
\item For any $\delta_{dev}>0$ and all $\w\in{\Cc}$ satisfying
\begin{align}\label{eq:assSr}
\left| {\|\w\|}-   {\sigma} \sqrt{\frac{\DC}{m-\DC}} \right|\geq \delta_{dev}{\sigma} \sqrt{\frac{\DC}{m-\DC}},
\end{align}
there exists constant $t(\delta_{dev})>0$ and $c_1,c_2>0$ such that
\begin{align}\label{eq:con_deviation}
\| \A\w - \sigma\vb \| + \max_{\s\in\Cc} \s^T\w  \geq {\Fc}(\A , \vb ) + t \sigma\sqrt{m},
\end{align}
with probability $1-c_1\exp(-c_2\gamma(m,n))$.
\item For any $\delta>0$, there exist constants $c_1,c_2>0$ such that
\begin{align}\label{eq:connormW}
\left| \|\hat{\w}(\A,\vb)\| - \sigma\sqrt{\frac{\DC}{m-\DC}} \right|  \leq \delta\sigma\sqrt{\frac{\DC}{m-\DC}}.
\end{align}
with probability $1-c_1\exp(-c_2\gamma(m,n))$.
\end{enumerate}

\end{lem}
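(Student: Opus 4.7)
The plan is to pipeline the three probabilistic conclusions of Lemmas \ref{lemma:lowKey}, \ref{lemma:upKey} and \ref{lemma:dev} through the three inequalities of Lemma \ref{lemma:applied_Gordon}, which compare $\Fc(\A,\vb)$ to each of $\Lc$, $\Uc$ and $\Lc_{dev}$. Throughout, I will absorb the additive slack $\sigma\eps\sqrt{m}$ coming from Lemma \ref{lemma:applied_Gordon} into the multiplicative tolerance on $\sigma\sqrt{m-\DC}$; this is legitimate because the hypothesis $m-\DC\geq\eps_L m$ gives $\sqrt{m}\leq\sqrt{m-\DC}/\sqrt{\eps_L}$, so any final $\eps$ corresponds to a rescaled $\eps/\sqrt{\eps_L}$ at the intermediate step.

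\textbf{Statement 1.} Combine part 3 of Lemma \ref{lemma:lowKey} with part 1 of Lemma \ref{lemma:applied_Gordon} to obtain $\Fc(\A,\vb)\geq (1-\eps)\sigma\sqrt{m-\DC}$ with probability at least $1-c_1\exp(-c_2 m)$. For the matching upper bound, choose $C_{up}=\sigma\sqrt{\DC/(m-\DC)}$ as in part 3 of Lemma \ref{lemma:upKey}, which together with part 2 of Lemma \ref{lemma:applied_Gordon} yields $\Fc(\A,\vb)\leq (1+\eps)\sigma\sqrt{m-\DC}$ with probability $1-c_1\exp(-c_2\gamma(m,n))$. Intersecting the two events gives \eqref{eq:conobjVal}.

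\textbf{Statement 2.} Set $C_{dev}=\sigma\sqrt{\DC/(m-\DC)}$. Any $\w$ satisfying \eqref{eq:assSr} has $\|\w\|\in S_{dev}$, so its LASSO objective is bounded below by $\Lc_{dev}(\g,\h)$ after transferring to the Gordon side. Applying part 3 of Lemma \ref{lemma:dev} through part 3 of Lemma \ref{lemma:applied_Gordon} produces a constant $t'=t'(\delta_{dev})>0$ with
\[
\|\A\w-\sigma\vb\|+\max_{\s\in\Cc}\s^T\w \;\geq\; (1+t')\,\sigma\sqrt{m-\DC}
\]
on an event of probability at least $1-c_1\exp(-c_2 m)$. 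On the intersection of this event with the $\Fc(\A,\vb)\leq (1+t'/2)\sigma\sqrt{m-\DC}$ event from Statement 1, the gap is at least $(t'/2)\sigma\sqrt{m-\DC}\geq (t'/2)\sqrt{\eps_L}\,\sigma\sqrt{m}$, which is \eqref{eq:con_deviation} after renaming the constant to $t$.

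\textbf{Statement 3.} This is a straightforward contrapositive application of Statement 2 with $\delta_{dev}:=\delta$. If $\hat{\w}(\A,\vb)$ had $\|\hat{\w}(\A,\vb)\|\in S_{dev}$, then by Statement 2 the cost it attains would strictly exceed $\Fc(\A,\vb)$, contradicting its optimality. Hence on the high-probability event underlying Statement 2, the bound \eqref{eq:connormW} must hold. The main obstacle in the whole argument is the accounting of the three tolerances ($\eps$ from Gordon's lemma, $\eps$ from the key optimizations, and $\delta_{dev}$ from the deviation problem), and in particular verifying that the lower bound coming from $\Lc_{dev}$ is genuinely strictly larger, by an additive term of order $\sigma\sqrt{m}$, than the upper bound for $\Fc$ delivered by $\Uc$; only then does the contradiction in Statement 3 close.
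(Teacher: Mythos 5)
Your proposal is correct and follows essentially the same route as the paper's own proof: the lower/upper bounds on $\Fc$ come from Lemmas \ref{lemma:lowKey} and \ref{lemma:upKey} fed through Lemma \ref{lemma:applied_Gordon}, the deviation bound is obtained by comparing $\Lc_{dev}$ against the upper bound on $\Fc$ with the same $\sqrt{\eps_L}$ conversion between $\sigma\sqrt{m}$ and $\sigma\sqrt{m-\DC}$, and Statement 3 follows by the identical contradiction with optimality of $\hat{\w}$. No gaps.
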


%
\begin{proof}

We prove  each one of the three statements of Theorem \ref{thm:unified} sequentially. Assume the regime where $m\geq \max\left\{ \DC, \DC+\CC \right\} + \eps_L m$ and $\DC\geq\eps_L m $ for some $\eps_L>0$ and also $m$ is sufficiently large.

\noindent$\text{1.~~}$ \emph{Proof of \eqref{eq:conobjVal}:}
Consider any $\eps'>0$.
First, we establish a high probability lower bound for $\Fc(\A,\vb)$. 
From  Lemma \ref{lemma:lowKey}, 
\begin{align}\notag
\Lc(\g,\h) \geq (1-\eps') \sigma \sqrt{ m - \DC },
\end{align}
with probability  $1-\exp(-\order{m}).$ 
Combine this with the first statement of Lemma \ref{lemma:applied_Gordon}
to conclude that
\begin{align}\label{eq:pplow}
\Fc(\A,\vb)\geq  (1-\eps')\sigma\sqrt{m-\DC} - \eps'\sigma\sqrt{m},
\end{align}
with the same probability.

Similarly, for a high probability upper bound for $\Fc(\A,\vb)$ we have from Lemma \ref{lemma:upKey}, that 
\begin{align}\notag
\Uc(\g,\h) \leq (1+\eps') \sigma \sqrt{ m - \DC },
\end{align}
with probability $1-\exp\left(-\order{\gamma(m,n)}\right)$.
Combine this with the second statement of Lemma \ref{lemma:applied_Gordon}
to conclude that
\begin{align}\label{eq:ppup}
\Fc(\A,\vb)\leq  (1+\eps')\sigma\sqrt{m-\DC} + \eps'\sigma\sqrt{m},
\end{align}
with the same probability. 
To conclude the proof of \eqref{eq:conobjVal} fix any positive constant $\eps>0$, and observe that by choosing $\eps'=\eps\frac{\sqrt{\eps_L}}{1+\sqrt{\eps_L}}$ in \eqref{eq:pplow} and \eqref{eq:ppup} we ensure that
$\eps'\left( 1 + \frac{\sqrt{m}}{\sqrt{m-\DC}} \right) \leq \eps$. It then follows from  \eqref{eq:pplow} and \eqref{eq:ppup} that there exist $c_1,c_2>0$ such that
\begin{align}\label{eq:state1}
\left| \frac{{\Fc}(\A , \vb )}{\sigma\sqrt{m - \DC}} - 1  \right| \leq \eps,
\end{align}
with probability $1 - c_1\exp\left(-c_2{\gamma(m,n)}\right)$.
%

\noindent$\text{2.~~}$ \emph{Proof of \eqref{eq:con_deviation}:}
Fix any $\delta_{dev}>0$. In accordance to its definition in previous sections define the set $$
S_{dev}=\left\{ \ell~ | ~ \left| \ell - \sigma\sqrt{\frac{\DC}{m-\DC}} \right| \leq \delta_{dev}\sigma\sqrt{\frac{\DC}{m-\DC}}\right\}.
$$
Clearly, for all $\w$ such that $\|\w\|\in S_{dev}$ we have,
\begin{align*}
\| \A\w - \sigma\vb \| + \max_{\s\in\Cc} \s^T\w \geq \min_{\|\w\|\in S_{dev}}\left\{ \|\A\w-\sigma\vb\| + \max_{s\in\Cc} s^T\w \right\}.
\end{align*}
Combining this with the third statement of Lemma \ref{lemma:applied_Gordon}, it suffices for the proof of \eqref{eq:con_deviation} to show that there exists constant $t(\delta_{dev})>0$ such that
\begin{align}\label{eq:2showKey}
\Lc_{dev}(\g,\h) \geq \Fc(\A,\vb) + 2t\sigma\sqrt{m},
\end{align}
with probability $1-\exp\left(-\order{m}\right)$. 

To show \eqref{eq:2showKey}, start from Lemma \ref{lemma:dev} which gives that here exists $t'(\delta_{dev})>0$, such that
\begin{align}\label{eq:61}
\Lc_{dev}(\g,\h) \geq (1+t')\sigma\sqrt{m-\DC},
\end{align}
with probability $1-\exp(-\order{m}).$ 
Furthermore, from the first statement of Lemma \ref{thm:unified}, 
\begin{align}\label{eq:62}
\Fc(\A,\vb)\leq (1+\frac{t'}{2})\sigma\sqrt{m-\DC},
\end{align}
with probability $1-\exp\left(-\order{\gamma(m,n)}\right)$. Finally, choose $t=\frac{t'}{4}\sqrt{\eps_L}$ to ensure that 
\begin{align}\label{eq:63}
2t\sigma\sqrt{m} \leq \frac{t'}{2}\sigma\sqrt{m-\DC}.
\end{align}
Combine \eqref{eq:61}, \eqref{eq:62} and \eqref{eq:63} to conclude that \eqref{eq:2showKey} indeed holds with the desired probability.

\noindent$\text{3.~~}$ \emph{Proof of \eqref{eq:connormW}:}
The third statement of Lemma \ref{thm:unified} is a simple consequence of its second statement. Fix any $\eps>0$. The proof is by contradiction. Assume that $\hat{\w}(\A,\vb)$ does not satisfy \eqref{eq:connormW}. It then satisfies \eqref{eq:assSr} for $\delta_{dev}=\eps$. Thus, it follows from the second statement of Lemma \ref{thm:unified}, that there exists $t(\eps)>0$ such that
\begin{align}
\Fc(\A,\vb) \geq \Fc(\A,\vb) + t\sigma\sqrt{m},
\end{align}
with probability $1-\exp\left(-\order{\gamma(m,n)}\right)$. This is a contradiction and completes the proof.

\end{proof}

\section{The NSE of the C-LASSO}\label{sec:ConstrainedLASSO}
In this section, we  prove the second statement of Theorem \ref{thm:CLASSO}, namely \eqref{eq:cp1}. We restate the theorem here for ease of reference.

\CLASSO*

First, in Section \ref{sec:approxCLASSO} we focus on the approximated C-LASSO  and prove that its NSE concentrates around $\frac{\Delxf}{m-\Delxf}$ for arbitrary values of $\sigma$. Later in Section \ref{sec:originalCLASSO}, we use that result and fundamental properties of the approximated problem to prove \eqref{eq:cp1}, i.e. that the NSE of the original problem concentrates around the same quantity for small enough $\sigma$.

%

\subsection{Approximated C-LASSO Problem}\label{sec:approxCLASSO}

Recall the definition of the approximated C-LASSO problem in \eqref{eq:foC_w}.
As it has been argued previously, this is equivalent to the generic problem \eqref{eq:com} with $\Cc=\text{cone}\{\paf\}$. Hence, to calculate its NSE we will simply apply the results we obtained throughout Section \ref{sec:KeyIdeas}. We first start by mapping the generic formulation in Section \ref{sec:KeyIdeas} to the C-LASSO.

\begin{lem}\label{lemma:gen2CLASSO}
Let $\Cc = \text{cone}\{\paf\}$. Then,
\begin{itemize}
\item $\DC = \Delxf$,
\item $\corr(\h,\Cc) = 0$, for all $\h\in\mathbb{R}^n$,
\item $\CC = 0$.
\end{itemize}
\end{lem}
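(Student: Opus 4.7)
The three claims in Lemma \ref{lemma:gen2CLASSO} are all essentially consequences of $\mathcal{C}$ being a convex cone, so the proof plan is short and essentially mechanical.

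For the first bullet, I would just invoke definitions: the quantity $\mathbf{D}(\mathcal{C})$ in \eqref{eq:defn01} is $\mathbb{E}[\dt^2(\h,\mathcal{C})]$, while $\Delxf$ in \eqref{eq:delxf_def} is $\mathbb{E}[\dt^2(\h,\cn(\paf))]$. With the hypothesis $\mathcal{C}=\cn(\paf)$ the two expressions are literally the same, so there is nothing else to check.

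For the second bullet, the plan is to use the fact that $\mathcal{C}$ is a closed convex cone to derive a first-order optimality condition that forces orthogonality of $\bu(\h,\mathcal{C})$ and $\bi(\h,\mathcal{C})=\h-\bu(\h,\mathcal{C})$. Specifically, fix $\h\in\R^n$ and set $\p=\bu(\h,\mathcal{C})$. Since $\mathcal{C}$ is a cone, $t\p\in\mathcal{C}$ for every $t\geq 0$, so the function $g(t)=\|\h-t\p\|^2=\|\h\|^2-2t\,\h^T\p+t^2\|\p\|^2$ attains its minimum over $t\geq 0$ at $t=1$ (by definition of $\p$ as the closest point to $\h$ in $\mathcal{C}$). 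Setting $g'(1)=0$ gives $\h^T\p=\|\p\|^2$, so
\[
\corr(\h,\mathcal{C})=\langle\bu(\h,\mathcal{C}),\bi(\h,\mathcal{C})\rangle=\h^T\p-\|\p\|^2=0.
\]
This holds for every $\h\in\R^n$, which is exactly the second claim. (Equivalently, one could cite the Moreau decomposition $\h=\Pi_{\mathcal{C}}(\h)+\Pi_{\mathcal{C}^\circ}(\h)$ into orthogonal polar components, but the one-variable calculus argument above is self-contained and avoids introducing $\mathcal{C}^\circ$.)

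The third bullet is then immediate: since $\corr(\h,\mathcal{C})=0$ pointwise, its expectation $\mathbf{C}(\mathcal{C})=\mathbb{E}[\corr(\h,\mathcal{C})]$ vanishes as well. There is no real obstacle here; the only place one must be careful is to invoke the cone property of $\mathcal{C}=\cn(\paf)$ when applying the scaling argument for $t$, which is why the same conclusion does not hold for the $\ell_2$-LASSO analysis where $\mathcal{C}=\la\paf$ is not a cone and $\mathbf{C}(\mathcal{C})$ is genuinely nonzero.
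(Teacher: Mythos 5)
Your proposal is correct and follows essentially the same route as the paper: the first and third bullets are handled identically (by definition, and by taking expectations of the pointwise identity), and for the second bullet the paper simply cites Moreau's decomposition theorem (Fact \ref{more}) applied to the closed convex cone $\text{cone}(\paf)$, whose orthogonality clause $\ab^T\bb=0$ is exactly the statement you derive. Your one-variable optimality argument $g'(1)=0$ is a valid, self-contained re-derivation of that orthogonality (note only that it implicitly uses closedness of $\text{cone}(\paf)$, which holds here since $\paf$ is compact and excludes the origin), so the difference from the paper is purely one of citation versus direct proof.
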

\begin{proof}
The first statement follows by definition of the quantities involved. The second statement is a direct consequence of Moreau's decomposition theorem (Fact \ref{more}) applied on the closed and convex cone $\text{cone}\{\paf\}$. The last statement follows easily after taking expectation in both sides of the equality in the second statement.
\end{proof}

With this mapping, we can directly apply Lemma \ref{thm:unified}, where $\Cc$ is a cone, to conclude with the desired result. The following corollary summarizes the result.
\begin{cor} \label{thm:fo_CLASSO}Assume 
$(1-\eps_L)m \geq \Delxf \geq \eps_L m $, 
 for some $\eps_L>0$. Also, assume $m$ is sufficiently large.
Then, for any constants $\eps_1,\eps_2>0$, there exist constants $c_1,c_2>0$ such that with probability $1-c_1\exp(-c_2m)$,
\begin{align*}
\left| \frac{\Fc_c(\A , \vb )}{\sigma\sqrt{m - \Delxf} } - 1 \right| \leq \eps_1,
\end{align*}
and
\begin{align*}
\left| \frac{\|\hat\w_c(\A,\vb)\|^2}{\sigma^2} - \frac{\Delxf}{m-\Delxf} \right|  \leq \eps_2.
\end{align*}
\end{cor}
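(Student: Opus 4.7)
The plan is to deduce Corollary \ref{thm:fo_CLASSO} as a direct specialization of the unified analysis carried out in Lemma \ref{thm:unified}. The bridge is Lemma \ref{lemma:gen2CLASSO}, which identifies the set $\Cc = \text{cone}(\paf)$ as the correct instantiation of the generic convex set $\Cc$ appearing in the key optimizations \eqref{eq:keyApp} and in \eqref{eq:com}. Under this identification, the approximated C-LASSO in \eqref{eq:foC_w} coincides with the generic minimization in \eqref{eq:com}, so that $\Fc_c(\A,\vb) = \Fc(\A,\vb)$ and $\hat\w_c(\A,\vb) = \hat\w(\A,\vb)$. Moreover, Lemma \ref{lemma:gen2CLASSO} gives $\DC = \Delxf$ and $\CC = 0$.

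Next I would verify the hypotheses of Lemma \ref{thm:unified}. The assumption $(1-\eps_L)m \geq \Delxf \geq \eps_L m$ together with $\CC = 0$ yields $m \geq \max\{\DC, \DC + \CC\} + \eps_L m$ and $\DC \geq \eps_L m$, so both hypotheses are in force. Since $\Cc$ is a cone, we have $\gamma(m,n) = m$, and the probability bounds of Lemma \ref{thm:unified} become $1 - c_1\exp(-c_2 m)$, matching what is claimed in the corollary.

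With the hypotheses satisfied, the first statement \eqref{eq:conobjVal} of Lemma \ref{thm:unified} immediately gives the concentration of $\Fc_c(\A,\vb)$ around $\sigma\sqrt{m - \Delxf}$, which after dividing through by $\sigma\sqrt{m-\Delxf}$ produces the first inequality of Corollary \ref{thm:fo_CLASSO} with $\eps_1 = \eps$. The third statement \eqref{eq:connormW} of Lemma \ref{thm:unified} gives
\begin{align*}
\left| \|\hat\w_c(\A,\vb)\| - \sigma\sqrt{\tfrac{\Delxf}{m-\Delxf}} \right| \leq \delta\,\sigma\sqrt{\tfrac{\Delxf}{m-\Delxf}}
\end{align*}
with probability at least $1 - c_1\exp(-c_2 m)$. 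Dividing by $\sigma$, squaring, and using the identity $|a^2 - b^2| = |a-b|(a+b)$ with $b = \sqrt{\Delxf/(m-\Delxf)}$ and $|a-b| \leq \delta b$ gives
\begin{align*}
\left| \frac{\|\hat\w_c(\A,\vb)\|^2}{\sigma^2} - \frac{\Delxf}{m-\Delxf} \right| \leq \delta(2+\delta)\,\frac{\Delxf}{m-\Delxf}.
\end{align*}
Choosing $\delta$ small enough (depending on $\eps_2$ and on the upper bound $\Delxf/(m-\Delxf) \leq (1-\eps_L)/\eps_L$ from the hypothesis) makes the right-hand side at most $\eps_2$, establishing the second inequality.

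There is essentially no obstacle here: the entire content of the corollary has been packaged into Lemma \ref{thm:unified}, and the only nontrivial step is the elementary squaring argument above, where the boundedness of $\Delxf/(m-\Delxf)$ away from infinity (guaranteed by $\Delxf \leq (1-\eps_L)m$) is what allows the linear deviation bound on $\|\hat\w_c\|$ to be upgraded to a bound on $\|\hat\w_c\|^2/\sigma^2$ with a uniform choice of constants.
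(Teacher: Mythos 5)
Your proposal is correct and follows exactly the route the paper takes: identify $\Cc=\text{cone}(\paf)$ via Lemma \ref{lemma:gen2CLASSO} (so $\DC=\Delxf$, $\CC=0$, $\gamma(m,n)=m$), verify the hypotheses of Lemma \ref{thm:unified}, and read off both conclusions. The only detail you add beyond the paper's one-line derivation is the elementary squaring step converting the deviation bound on $\|\hat\w_c\|$ into one on $\|\hat\w_c\|^2/\sigma^2$, which is carried out correctly using $\Delxf/(m-\Delxf)\leq(1-\eps_L)/\eps_L$.
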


\subsection{Original C-LASSO  Problem}\label{sec:originalCLASSO}

In this section we prove \eqref{eq:cp1}. For the proof we rely on Corollary \ref{thm:fo_CLASSO}. First, we require the introduction of some useful concepts from convex analysis.

\subsubsection{Tangent Cone and Cone of the Subdifferential} 
Consider any \emph{convex} set  $\Cc\subset\mathbb{R}^n$ and $\x^*\in\Cc$. We define the set of feasible directions in $\Cc$ at $\x^*$ as 
$${F}_\Cc(\x^*) := \left\{ \ub~ | ~ (\x^* + \ub) \in \Cc \right\}. $$
The tangent cone of $\Cc$ at $\x^*$ is defined as 
\begin{align}\nn
\Tc_\Cc(\x^*) := \text{Cl}\left( \text{cone}( {F}_\Cc(\x^*) ) \right),
\end{align}
where $Cl(\cdot)$ denotes the closure of a set.
By definition, tangent cone $\Tc_\Cc(\x^*)$ and feasible set $F_\Cc(\x^*)$ should be \emph{close} to each other around a small neighborhood of $0$. The following proposition is a corollary of Proposition F.1 of \cite{Oymak} and shows that the elements of tangent cone, that are close to the origin, can be \emph{uniformly} approximated by the elements of the feasible set.

\begin{propo}[Approximating the tangent cone, \cite{Oymak}]\label{prop:F2T}
 Let $\Cc$ be a closed convex set and $\x^*\in\Cc$. For any $\delta>0$, there exists $\eps>0$ such that
\beq\notag
\dt(\ub,F_\Cc(\x^*))\leq \delta\|\ub\|,
\eeq
for all $\ub\in \Tc_\Cc(\x^*)$ with $\|\ub\|\leq\eps$.
\end{propo}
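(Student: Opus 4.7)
The plan is to argue by compactness on the unit sphere of the tangent cone. Specifically, I would set $S := \Tc_\Cc(\x^*) \cap \Sc^{n-1}$, which is compact as the intersection of two closed bounded sets (the case $\Tc_\Cc(\x^*)=\{0\}$ being vacuous and the case $\ub=0$ being trivial). For each direction $\hat\ub \in S$, the tangent cone identity $\Tc_\Cc(\x^*) = \mathrm{Cl}(\mathrm{cone}(F_\Cc(\x^*)))$ lets me pick an element $\wb$ of $\mathrm{cone}(F_\Cc(\x^*))$ within any prescribed tolerance $\eta$ of $\hat\ub$; since $\wb = t\vb$ for some $t>0$ and $\vb \in F_\Cc(\x^*)$, the normalized vector $\hat\vb_{\hat\ub} := \vb/\|\vb\| = \wb/\|\wb\|$ obeys $\|\hat\vb_{\hat\ub} - \hat\ub\| \leq 2\eta$ by a short triangle-inequality calculation (using $|\|\wb\|-1| \leq \eta$). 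Choosing $\eta = \delta/4$ gives a unit feasible direction within $\delta/2$ of $\hat\ub$.

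Next, by continuity of the Euclidean norm, the fixed unit vector $\hat\vb_{\hat\ub}$ remains within $\delta$ of every $\hat\ub' \in S$ in a small open neighborhood $U_{\hat\ub}$ of $\hat\ub$ (specifically $\|\hat\ub'-\hat\ub\| < \delta/2$). These neighborhoods cover $S$, and compactness extracts a finite subcover $U_1,\ldots,U_N$ with associated feasible vectors $\vb_1,\ldots,\vb_N \in F_\Cc(\x^*)$ and unit directions $\hat\vb_1,\ldots,\hat\vb_N$. I would then set
\[
\eps := \min_{1\leq i\leq N} \|\vb_i\| > 0.
\]

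For any nonzero $\ub \in \Tc_\Cc(\x^*)$ with $\|\ub\| \leq \eps$, write $\hat\ub = \ub/\|\ub\| \in S$ and select an index $i$ with $\hat\ub \in U_i$. Define $\vb := \|\ub\|\, \hat\vb_i = (\|\ub\|/\|\vb_i\|)\,\vb_i$. Because $\|\ub\| \leq \eps \leq \|\vb_i\|$, the scalar $t := \|\ub\|/\|\vb_i\|$ lies in $[0,1]$, so by convexity of $\Cc$ and $\x^* \in \Cc$,
\[
\x^* + \vb = (1-t)\x^* + t(\x^* + \vb_i) \in \Cc,
\]
which gives $\vb \in F_\Cc(\x^*)$. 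The approximation bound follows immediately:
\[
\dt(\ub, F_\Cc(\x^*)) \leq \|\vb - \ub\| = \|\ub\| \cdot \|\hat\vb_i - \hat\ub\| \leq \delta \|\ub\|.
\]

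The main obstacle is handling tangent directions $\hat\ub \in S$ that are only approximable by feasible vectors of arbitrarily small norm, since such ``thin'' directions could in principle force $\eps \to 0$. Compactness is exactly what sidesteps this: the pointwise norms $\|\vb_{\hat\ub}\|$ need not be bounded below over all of $S$, but after passing to a finite subcover only finitely many positive numbers enter, and their minimum is positive. Convexity of $\Cc$ (star-shapedness of $F_\Cc(\x^*)$ with respect to $0$) then allows us to scale any chosen $\vb_i$ down to the desired magnitude without leaving $F_\Cc(\x^*)$.
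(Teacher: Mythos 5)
Your argument is correct. The paper does not actually prove this proposition in-line; it invokes it as a corollary of Proposition F.1 of \cite{Oymak}, so there is no internal proof to compare against. Your compactness argument is a complete, self-contained substitute: the key points are all in order --- $\Tc_\Cc(\x^*)\cap\Sc^{n-1}$ is compact because the tangent cone is defined as a closure; the normalization identity $\wb/\|\wb\|=\vb/\|\vb\|$ for $\wb=t\vb$, $t>0$, together with $|1-\|\wb\||\le\|\wb-\hat\ub\|$, gives the $2\eta$ bound; the finite subcover yields a strictly positive $\eps=\min_i\|\vb_i\|$ even though the pointwise norms $\|\vb_{\hat\ub}\|$ can degenerate along boundary directions of the cone (as they must, e.g., for $\Cc=\{y\ge x^2\}$ at the origin); and convexity of $\Cc$ with $\x^*\in\Cc$ makes $F_\Cc(\x^*)$ star-shaped about $0$, so $t\vb_i\in F_\Cc(\x^*)$ for $t\in[0,1]$, which is exactly what the final rescaling step needs. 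One cosmetic remark: the phrase ``unit feasible direction'' for $\hat\vb_i$ is a slight abuse, since $\hat\vb_i$ itself need not lie in $F_\Cc(\x^*)$ when $\|\vb_i\|<1$; but your argument never uses feasibility of $\hat\vb_i$, only of $\|\ub\|\hat\vb_i=(\|\ub\|/\|\vb_i\|)\vb_i$, so nothing breaks.
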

Assume $\Cc$ is the descent set of $f$ at $\x_0$, namely, $\Cc=\left\{ \x ~|~ f(\x)\leq f(\x_0)  \right\}$ for some convex function $f(\cdot)$. In this case, we commonly refer to $\Tc_{\Cc}(\x_0)$ as the ``tangent cone of $f(\cdot)$ at $\x_0$" and denote it by $\Tc_{f}(\x_0)$.
Under the condition that $\x_0$ is not a minimizer of $f(\cdot)$, the following lemma relates $\Tc_{f}(\x_0)$ to the cone of the subdifferential.

\begin{lem} [\cite{Roc70}]\label{tsg}
Assume $f(\cdot):\R^n\rightarrow\R$ is convex and $\x_0\in\R^n$ is not a minimizer of it . 
Then,
\beq
(\Tc_f(\x_0))\pol=\text{cone}(\pa f(\x_0))\nn.
\eeq
\end{lem}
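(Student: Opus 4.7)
The plan is to derive the identity by a bipolar argument that links the tangent cone to the subdifferential via the directional derivative. First, I would recall two standard facts from convex analysis: (i) for a convex $f:\mathbb{R}^n\to\mathbb{R}$, the directional derivative $f'(\x_0;\dbt):=\lim_{t\downarrow 0}\frac{f(\x_0+t\dbt)-f(\x_0)}{t}$ exists for every $\dbt$, and (ii) the subdifferential $\paf$ is the compact convex set whose support function is exactly this directional derivative, that is, $f'(\x_0;\dbt)=\sup_{\s\in\paf}\s^T\dbt$. Using these, I would characterize the tangent cone of the sublevel set $\Cc=\{\x\,:\,f(\x)\le f(\x_0)\}$ at $\x_0$ as
\[
\Tc_f(\x_0)\;=\;\{\dbt\in\mathbb{R}^n\,:\,f'(\x_0;\dbt)\le 0\}.
\]
The inclusion $\subseteq$ is immediate from convexity (any feasible direction has nonpositive directional derivative, and the set on the right is closed). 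For $\supseteq$, one uses that if $f'(\x_0;\dbt)<0$ then $\dbt$ is a genuine feasible direction, and then approximates the boundary case $f'(\x_0;\dbt)=0$ by a sequence $\dbt+\epsilon\dbt_0$ where $\dbt_0$ is any strict descent direction (which exists because $\x_0$ is not a minimizer of $f$).

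Combining this with the support function identity gives
\[
\Tc_f(\x_0)\;=\;\Big\{\dbt\,:\,\sup_{\s\in\paf}\s^T\dbt\le 0\Big\}\;=\;\big(\paf\big)^\circ\;=\;\big(\cn(\paf)\big)^\circ,
\]
since polarity is insensitive to taking conic hulls. Taking polars on both sides and invoking the bipolar theorem yields $(\Tc_f(\x_0))^\circ=\big((\cn(\paf))^\circ\big)^\circ=\overline{\cn(\paf)}$. The remaining step is to remove the closure, i.e.\ to prove that $\cn(\paf)$ is already closed.

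This closedness is the main technical obstacle, but it follows from a standard fact: the conic hull of a compact convex set that does not contain the origin is closed. Here $\paf$ is compact and convex (since $f$ is continuous and real-valued on $\mathbb{R}^n$), and $0\notin\paf$ precisely because $\x_0$ is not a minimizer of $f$. Indeed, given any sequence $t_k\s_k\to\ub$ with $\s_k\in\paf$ and $t_k\ge 0$, either $\{t_k\}$ is bounded, in which case a subsequence converges with $\s_k\to\s\in\paf$ (by compactness) so $\ub=t\s\in\cn(\paf)$; or $t_k\to\infty$, forcing $\s_k\to 0$, which contradicts $0\notin\paf$ and the closedness of $\paf$. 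Hence $\cn(\paf)=\overline{\cn(\paf)}=(\Tc_f(\x_0))^\circ$, which is the desired identity.

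In short, the architecture is: directional-derivative description of $\Tc_f(\x_0)$ $\Rightarrow$ support-function identity for $\paf$ $\Rightarrow$ polar characterization $\Tc_f(\x_0)=(\cn(\paf))^\circ$ $\Rightarrow$ bipolar theorem, with the only nontrivial step being the verification that $\cn(\paf)$ is closed, which is where the hypothesis ``$\x_0$ is not a minimizer'' enters crucially.
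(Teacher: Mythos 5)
The paper does not actually prove this lemma; it is stated with a bare citation to Rockafellar, so there is no in-paper argument to compare against. Your proof is a correct, self-contained derivation along the standard lines. The chain
$\Tc_f(\x_0)=\{\dbt : f'(\x_0;\dbt)\le 0\}=(\paf)\pol=(\cn(\paf))\pol$, followed by the bipolar theorem, is sound: the inclusion $\subseteq$ in the first identity follows from convexity plus closedness of $\{f'(\x_0;\cdot)\le 0\}$, and your perturbation $\dbt+\eps\dbt_0$ with a strict descent direction $\dbt_0$ (which exists exactly because $\x_0$ is not a minimizer, taking $\dbt_0=\x_1-\x_0$ with $f(\x_1)<f(\x_0)$ and using sublinearity of $f'(\x_0;\cdot)$) correctly handles the boundary case $f'(\x_0;\dbt)=0$. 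The support-function identity $f'(\x_0;\dbt)=\sup_{\s\in\paf}\s^T\dbt$ is exactly the max formula the paper itself quotes (Proposition \ref{max form}), and your closedness argument for $\cn(\paf)$ — compactness of $\paf$ together with $0\notin\paf$ rules out the unbounded-multiplier case — is the standard and correct way to remove the closure produced by the bipolar theorem. A nice feature of your write-up is that it makes explicit that the hypothesis ``$\x_0$ is not a minimizer'' is used in two distinct places: to produce a Slater-type descent direction for the tangent-cone characterization, and to guarantee $0\notin\paf$ so that $\cn(\paf)$ is closed. I see no gaps.
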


%
%

\subsubsection{Proof of Theorem \ref{thm:CLASSO}: Small $\sigma$ regime}
We  prove here the second part of Theorem \ref{thm:CLASSO}, namely \eqref{eq:cp1}. For a proof of \eqref{eq:cp2} see Section \ref{any sigma}. For the purposes of the proof, we will use $\Cc=\{\x\big|f(\x)\leq f(\x_0)\}$. Recall that we denote the minimizers of the C-LASSO and approximated C-LASSO by $\w_c^*$ and $\hat{\w}_c$, respectively. Also, for convenience denote
 $$\eta_{c} = \frac{\Delxf}{m-\Delxf}.$$ 
Recalling the definition of the approximated C-LASSO problem in \eqref{eq:foC_w}, we may write
\begin{align*}
\hat{\w}_{c} &= \arg\min_{\w}\left\{~\|\A\w-\sigma\vb\| + \max_{\la\geq 0} \la \hat{f}_p(\w) ~\right\} \\
&= \arg\min_{\w}\left\{~\|\A\w-\sigma\vb\| + \max_{\s\in\text{cone}(\paf)} \s^T\w ~\right\}\\
&= \arg\min_{\w\in\Tc_\Cc(\x_0)}~\|\A\w-\sigma\vb\|,
\end{align*}
where for the last equality we have used Lemma \ref{tsg}. Hence,
\beq\label{eq:inTan}
\hat{\w}_c\in\Tc_\Cc(\x_0).
\eeq
 At the same time, clearly,
\begin{align}\label{eq:inFeas}
\w^*_c\in F_\Cc(\x_0).
\end{align}
After Corollary \ref{thm:fo_CLASSO}, $\|\hat{\w}_c\|^2$ concentrates around $\sigma^2\eta_c$. We will argue that, in the small noise regime, we can translate our results to the original problem in a smooth way.
Assume that the statements of Corollary \ref{thm:fo_CLASSO}, hold with high probability for some arbitrary $\eps_1,\eps_2>0$.
It suffices to prove that for any $\eps_3>0$ there exists $\sigma_0>0$ such that 
\begin{align}\label{eq:2prove_43}
\left|\frac{\|\w_c^*\|^2}{\sigma^2} - \eta_c \right| \leq \eps_3,
\end{align}
for all $\sigma<\sigma_0$.
To begin with, fix a $\delta>0$, the value of which is to be determined later in the proof. As an immediate implication of Proposition \ref{prop:F2T},  there exists $\sigma_0$ such that
\begin{align}
\dt(\w , F_\Cc(\x_0)) \leq \delta\| \w \|\label{smalldist}
\end{align}
for all $\w\in \Tc_\Cc(\x_0)$ satisfying $\|\w\|\leq C=C(\sigma_0,\eps_2) := \sigma_0\sqrt{(1+\eps_2)\eta_c}$.

Now, fix any $\sigma < \sigma_0 $.
We will make use of the fact that the following three events hold with high probability.
\begin{itemize}
\item Using Corollary \ref{thm:fo_CLASSO}, with high probability $\hat{\w_c}$ satisfies,
\beq
\|\hat{\w}_c\|\leq \sigma\sqrt{(1+\eps_2)\eta_c}\leq C.\label{smallhatw}
\eeq
\item $\A$ has independent standard normal entries. Hence, its spectral norm satisfies $\|\A\|_2\leq 2(\sqrt{n}+\sqrt{m})$ with probability $1-\exp(-\order{\max\{m,n\}})$, \cite{Vers}. 
\item Using \eqref{eq:con_deviation} of Lemma \ref{thm:unified} with $\Cc=\text{cone}(\paf)$, there exists a constant $t=t(\eps_3)$ so that for all $\w$ satisfying $|\frac{\|\w\|^2}{\sigma^2}-\eta_c|\geq \eps_3$, we have,
\beq
\|\A\w-\sigma\vb\| + \max_{\s\in\text{cone}(\paf)} \s^T\w \geq \Fc_c(\A,\vb) +  t(\eps_3)\sigma\sqrt{m}\label{statement3}.
\eeq
\end{itemize}
Consider the projection of $\hat{\w}_c$ on the set of feasible directions $F_\Cc(\x_0)$,
\beq
\p(\hat{\w}_c): = \bu(\hat{\w}_c,F_\Cc(\x_0)) = \hat{\w}_c - \mathbf\Pi(\hat{\w}_c, F_\Cc(\x_0))\label{projhatw}.
\eeq

First, we show that $\| \A\p(\hat{\w}_c) - \sigma\vb \|$ is not much larger than the objective of the approximated problem, namely $\Fc_c(\A,\vb)$. Indeed,
\begin{align}\label{eq:o1}
\| \A\p(\hat{\w}_c) -\sigma\vb \| &\leq \|\A\hat{\w}_c -\sigma\vb\| +\|\A\hat{\w_c}-\A\p(\hat{\w}_c)\|  \notag \\
&\leq\Fc_c(\A,\vb) + \|\A\|_2\dt(\hat{\w}_c,F_\Cc(\x_0)) \notag \\
&\leq \Fc_c(\A,\vb)+ \|\A\|_2 \sigma\delta\sqrt{(1+\eps_2)\eta_c}\nn\\
&\leq \Fc_c(\A,\vb)+ 2(\sqrt{m}+\sqrt{n}) \sigma\delta\sqrt{(1+\eps_2)\eta_c}.
\end{align}
The first inequality is an application of the triangle inequality and the second one follows from \eqref{projhatw}. For the third inequality, we have used \eqref{eq:inTan} and combined \eqref{smalldist} with \eqref{smallhatw}.
 
 Next, we show that if \eqref{eq:2prove_43}  was not true then a suitable choice of $\delta$ would make $\| \A\p(\hat{\w}_c) - \sigma\vb \|$ much larger than the optimal $\Fc_c(\A,\vb)$ than \eqref{eq:o1} allows. Therefore, concluding a desired contradiction. More precisely, assuming \eqref{eq:2prove_43} does not hold, we have
\begin{align}\label{eq:o2}
\| \A\p(\hat{\w}_c) - \sigma\vb \| &\geq 
\|\A\w_c^*-\sigma\vb\| \notag\\
&\geq \Fc_c(\A,\vb)+t(\eps_3)\sigma\sqrt{m}.
\end{align}
The first inequality above follows since $\p(\hat{\w}_c)\in F_\Cc(\x_0)$ and from the optimality of $\w_c^*\in F_\Cc(\x_0)$. To get the second inequality, recall that \eqref{eq:2prove_43} is not true. Also, from \eqref{eq:inFeas}, $\max_{\s\in\text{cone}(\paf)}\s^T\w^*_c = \max_{\s\in(\Tc(\x_0))\pol}\s^T\w^*_c = 0$. Combine these and invoke \eqref{statement3}.

%
To conclude, choose $\sigma_0$ sufficiently small to ensure $\delta<\frac{t(\eps_3)\sqrt{m}}{2(\sqrt{m}+\sqrt{n}) \sqrt{(1+\eps_2)\eta_c}}$ and combine \eqref{eq:o1} and \eqref{eq:o2} to obtain the following contradiction.
\begin{align*}
\Fc_c(\A,\vb)+ 2(\sqrt{m}+\sqrt{n}) \delta\sigma\sqrt{(1+\eps_2)\eta_c}&\geq \| \A\p(\hat{\w}_c) - \sigma\vb \|\\
&\geq  \Fc_c(\A,\vb)+t(\eps_3)\sigma\sqrt{m}.
\end{align*}
$\sigma_0$ is a deterministic number that is a function of $m,n,f,\x_0,\eps_3$.

\section{$\ell_2$-LASSO: Regions of Operation}\label{sec:ell2LASSO}

The performance of the $\ell_2$-regularized LASSO clearly depends on the particular choice of the parameter $\la$. A key contribution of this work is that we are able to fully characterize this dependence. In other words, our analysis predicts the performance of the $\ell_2$-LASSO estimator for all values $\la\geq 0$. To facilitate our analysis we divide the range $[0,\infty)$ of possible values of $\la$  into three  distinct regions. We call the regions $\mathcal{R}_{OFF}$, $\mathcal{R}_{ON}$ and $\mathcal{R}_{\infty}$.  Each region has specific performance characteristics and the analysis is the same for all $\la$ that belong to the same region. In this Section, we formally define those distinct regions of operation.The analysis of the value of the NSE for each one of them is then deferred to Section \ref{sec:ell2LASSO NSE}.

\subsection{Properties of Distance, Projection and Correlation}
For the purpose of defining the distinct regions of operation of the $\ell_2$-LASSO, it is first important to explore some useful properties of the Gaussian squared distance $\Dlf$, projection $\Plf$ and correlation $\Clf$. Those quantities are closely related to each other and are of key importance to our analysis. We choose to enlist all their important properties in a single Lemma, which serves as a reference for the rest of the Section.
\begin{lem}\label{lemma:keyProp}
Consider fixed $\x_0$ and $f(\cdot)$. Let $\partial f(\x_0)$ be a nonempty, compact set of $\mathbb{R}^n$ that does not contain the origin. Then, the following properties hold
\begin{enumerate}

\item $\Dlf + 2\Clf + \Plf = n$.

\item $\Df0) = n$ ,  $\Pf 0) = 0$, and $\Cf 0) = 0$.

\item 
$\lim_{\la\rightarrow\infty} \Dlf=\infty,~\lim_{\la\rightarrow\infty} \Plf=\infty,\text { and } \lim_{\la\rightarrow\infty} \Clf=-\infty.$

\item $\Plf$, $\Clf$ and $\Dlf$ are all continuous functions of $\la\geq 0$.

\item $\Dlf$ is strictly convex and attains its minimum at a unique point. Denote $\labb$ the unique minimizer of $\Dlf$. 

\item $\Plf$ is an increasing function for $\la\geq 0$.

\item $\Dlf$ is differentiable for $\la>0$. For $\la> 0$,
\begin{align}\notag
\frac{d\Dlf}{d\la} = -\frac{2}{\la}\Clf.
\end{align}
For $\la=0$, interpret $\frac{d\Dlf}{d\la}$ as a right derivative.

\item 
\beq\notag
\Clf\begin{cases}
\geq 0 &, \la\in[0,\labb]  \\
 = 0  &,\la = \labb\\
\leq 0 &,\la\in[\labb,\infty)
\end{cases}
\eeq

\item $\Dlf + \Clf$ is strictly decreasing for $\la\in[0,\labb]$. 

\end{enumerate}
\end{lem}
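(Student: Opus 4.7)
The plan is to establish the nine items sequentially, each building on its predecessors. Items~1--3 are direct from the definitions in \eqref{eq:defn01}: item~1 is the expected Pythagorean identity $\|\h\|^2=\|\bu(\h,\la\paf)\|^2+2\corr(\h,\la\paf)+\dt^2(\h,\la\paf)$ taken in expectation; item~2 uses $0\cdot\paf=\{0\}$, so $\bu=0$ and $\bi=\h$; item~3 combines compactness of $\paf$ with $0\notin\paf$ to get $\alpha:=\min_{\s\in\paf}\|\s\|>0$, whence every $\s\in\la\paf$ satisfies $\|\s\|\geq\la\alpha$, forcing $\Dlf,\Plf\to\infty$ by monotone convergence, and then $\Clf\to-\infty$ via item~1. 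Item~4 (continuity) is a dominated-convergence argument: projection onto a closed convex set depends continuously on the set under Hausdorff convergence, and the envelope $\|\bu(\h,\la\paf)\|\leq\|\h\|+\la\max_{\s\in\paf}\|\s\|$, valid on any compact $\la$-interval, provides a square-integrable dominator that permits interchange of limit and expectation.

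Items~5 and~7 are precisely Lemma~\ref{lem:Dlf} (from \cite{McCoy}), which I simply invoke. For item~6 I aim at the stronger pointwise statement that for every $\h$, $\la\mapsto\|\bu(\h,\la\paf)\|$ is non-decreasing. Writing $\bu_i=\bu(\h,\la_i\paf)$ for $0<\la_1<\la_2$ and testing the optimality condition $\langle\h-\bu_i,\s-\bu_i\rangle\leq 0$ against the admissible competitors $\tfrac{\la_1}{\la_2}\bu_2\in\la_1\paf$ and $\tfrac{\la_2}{\la_1}\bu_1\in\la_2\paf$, adding the two resulting inequalities and clearing denominators yields
\begin{align*}
\la_2\|\bu_1\|^2+\la_1\|\bu_2\|^2\leq(\la_1+\la_2)\langle\bu_1,\bu_2\rangle.
\end{align*}
Bounding the right-hand side by $(\la_1+\la_2)\|\bu_1\|\|\bu_2\|$ via Cauchy--Schwarz and rearranging produces
\begin{align*}
\la_2\|\bu_1\|\bigl(\|\bu_1\|-\|\bu_2\|\bigr)\leq\la_1\|\bu_2\|\bigl(\|\bu_1\|-\|\bu_2\|\bigr),
\end{align*}
which, since $\la_2>\la_1$ and both norms are positive (because $0\notin\paf$), forces $\|\bu_1\|\leq\|\bu_2\|$. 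Taking expectations then proves item~6.

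Item~8 is immediate: strict convexity of $\Dlf$ with unique minimizer $\labb$ (item~5) means $\Dlf$ is strictly decreasing on $[0,\labb)$ and strictly increasing on $(\labb,\infty)$, and $\tfrac{d\Dlf}{d\la}=-\tfrac{2}{\la}\Clf$ (item~7) translates these into the claimed sign pattern of $\Clf$. For item~9, using items~1 and~7,
\begin{align*}
\frac{d}{d\la}\bigl(\Dlf+\Clf\bigr)=\frac{1}{2}\Big(\frac{d\Dlf}{d\la}-\frac{d\Plf}{d\la}\Big)=-\frac{\Clf}{\la}-\frac{1}{2}\frac{d\Plf}{d\la}.
\end{align*}
On $(0,\labb)$ the first summand is strictly negative by item~8 and the second is nonpositive by item~6, so the derivative is strictly negative there; continuity from item~4 extends this to strict decrease on the closed interval $[0,\labb]$.

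The main obstacle I anticipate is item~6. The monotonicity of $\|\bu(\h,\la\paf)\|$ is geometrically intuitive but not apparent from the definitions: a naive combination of the two optimality inequalities only gives $\|\tilde\s_1\|\geq\|\tilde\s_2\|$ for the rescaled subgradients $\tilde\s_i:=\bu_i/\la_i$, which moves in the \emph{wrong} direction. Identifying the correct admissible competitors and the right Cauchy--Schwarz step is the delicate technical heart of the lemma.
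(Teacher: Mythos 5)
Your proposal is essentially correct, and for the one genuinely delicate item it takes a different, cleaner route than the paper. The paper proves item~6 by first establishing a planar geometric fact (its Lemma~\ref{angle1}: $\|\bu(\z,\Cc)\|\leq\|\bu(\alpha\z,\Cc)\|/\alpha$ for $0<\alpha<1$) via an extended two-dimensional case analysis of angles, and then converts it to monotonicity of $\la\mapsto\|\bu(\h,\la\paf)\|$ through the scaling identity $\bu(\h,\la\Cc)=\la\,\bu(\h/\la,\Cc)$. Your variational-inequality argument --- testing the projection optimality condition against the rescaled competitors $\tfrac{\la_1}{\la_2}\bu_2$ and $\tfrac{\la_2}{\la_1}\bu_1$, summing, and factoring $(\|\bu_1\|-\|\bu_2\|)(\la_2\|\bu_1\|-\la_1\|\bu_2\|)\leq 0$ --- reaches the same pointwise monotonicity in a few lines, with the hypothesis $0\notin\paf$ entering exactly where it should (to rule out $\|\bu_2\|=0$). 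I checked the algebra and it is correct; this is arguably a better proof than the paper's. The remaining items match the paper's strategy: items~1--3 from the definitions, items~5 and~7 by citation to \cite{McCoy} (note that neither you nor the paper spells out that \emph{attainment} of the minimum in item~5 needs the coercivity from item~3, since strict convexity alone does not give a minimizer on $[0,\infty)$), item~4 by a continuity-plus-domination argument where the paper instead does an explicit nonexpansiveness computation, and item~8 identically.

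The one genuine (if small) gap is in item~9: you differentiate $\Plf$, but differentiability of $\Plf$ is never established --- item~7 only gives differentiability of $\Dlf$, and $\Clf=\tfrac{1}{2}(n-\Dlf-\Plf)$ inherits nothing without it. The fix is already in your hands and is what the paper does: write $\Dlf+\Clf=\tfrac{1}{2}\bigl(n+\Dlf-\Plf\bigr)$, observe that $\Dlf$ is strictly decreasing on $[0,\labb]$ (your item~8 reasoning) and $-\Plf$ is non-increasing (your item~6), so the sum is strictly decreasing with no derivatives of $\Plf$ required.
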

Some of the statements in Lemma \ref{lemma:keyProp} are easy to prove, while others require more work. Statements $5$ and $7$ have been recently proved in \cite{McCoy}. We defer the proofs of all statements to Appendix \ref{appendixe}.

\subsection{Key Values of the Penalty Parameter}
We define three key values of the regularizer $\la$. The main work is devoted to showing that those definitions are well established. 


\subsubsection{$\la_{\text{best}}$}
The first key parameter is $\labb$ which was defined in Lemma \ref{lemma:keyProp} to be the unique minimum of $\Dlf$ over $\la\in[0,\infty).$ The rationale behind the subscript ``best" associated with this parameter is that
the estimation error is minimized for that particular choice of  $\la$. In that sense, $\labb$ is the optimal penalty parameter. We formally prove this fact in Section \ref{sec:ell2LASSO NSE}, where we explicitly calculate the NSE. 
 In what follows, we assume that 
$\Df\labb)< m$ to ensure that there exists $\la\geq 0 $ for which estimation of $\x_0$ is robust.
Also, observe that, $\Df\labb)\leq \Df 0)= n$.

%

\subsubsection{$\la_{\text{max}}$}
The second key parameter $\lam$ is defined as the unique $\la\geq\labb$ that satisfies $\Dlf = m$. We formally repeat this definition in the following Lemma.
\begin{lem}\label{lemma:lam}
Suppose $\Df\labb)< m$ and consider the following equation over $\la\geq\labb$:
\begin{align}\label{eq:Dlf=m}
\Dlf = m, \quad\la\geq\labb.
\end{align}
Equation \eqref{eq:Dlf=m} has a unique solution, which we denote $\lam$.
\end{lem}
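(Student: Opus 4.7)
The plan is to establish existence and uniqueness of the solution to $\Dlf = m$ on $[\labb,\infty)$ by exploiting the monotonicity and continuity properties of $\Dlf$ that are already catalogued in Lemma \ref{lemma:keyProp}. The key observation is that $\Dlf$ is strictly convex with unique minimizer $\labb$, so on the half-line $[\labb,\infty)$ it must in fact be strictly increasing; then continuity plus the limit at infinity together with the hypothesis $\Df\labb) < m$ will allow me to invoke the intermediate value theorem.

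First, I would prove the strict monotonicity claim: for any $\labb \le \la_1 < \la_2$, we have $\Df\la_1) < \Df\la_2)$. The case $\la_1 = \labb$ is immediate from the fact that $\labb$ is the \emph{unique} minimizer (Lemma \ref{lemma:keyProp}, item 5). For $\labb < \la_1 < \la_2$, write $\la_1 = t\labb + (1-t)\la_2$ with $t \in (0,1)$, and apply strict convexity to get $\Df\la_1) < t\Df\labb) + (1-t)\Df\la_2) \le t\Df\la_1) + (1-t)\Df\la_2)$, which simplifies to $\Df\la_1) < \Df\la_2)$. Hence $\Dlf$ is strictly increasing on $[\labb,\infty)$, which will immediately give uniqueness of any solution to \eqref{eq:Dlf=m}.

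Next, for existence I would combine three facts from Lemma \ref{lemma:keyProp}: continuity of $\Dlf$ (item 4), the hypothesis that $\Df\labb) < m$, and $\lim_{\la\to\infty}\Dlf = \infty$ (item 3). The intermediate value theorem applied on $[\labb,\infty)$ then produces at least one $\la \ge \labb$ with $\Dlf = m$; strict monotonicity from the previous paragraph forces this $\la$ to be unique, completing the proof.

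I do not expect any real obstacle here — the lemma is essentially a bookkeeping consequence of properties already established. The only mild subtlety is justifying strict monotonicity on $[\labb,\infty)$ from strict convexity together with $\labb$ being the \emph{unique} minimum, as opposed to assuming differentiability and reasoning about the sign of the derivative; I have done this above via the convex combination argument so as to avoid invoking item 7 of Lemma \ref{lemma:keyProp}.
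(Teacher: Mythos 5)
Your proposal is correct and follows essentially the same route as the paper: uniqueness from strict convexity (which forces strict increase on $[\labb,\infty)$) and existence from continuity, $\Df\labb)<m$, $\lim_{\la\to\infty}\Dlf=\infty$, and the intermediate value theorem. The only difference is that you spell out the strict-monotonicity step via the convex-combination argument, which the paper simply asserts.
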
 
\begin{proof}
We make use of Lemma \ref{lemma:keyProp}. First, we show that equation \eqref{eq:Dlf=m} has at most one solution: $\Dlf$ is a strictly convex function of $\la\geq 0$ and thus strictly increasing for $\la\geq\labb$.
Next, we show that \eqref{eq:Dlf=m} has at least one solution. From assumption, $\mathbf{D}(\x_0,\labb) < m$. Also, $\lim_{\la\rightarrow\infty}\mathbf{D}(\x_0,\labb) = \infty$. Furthermore, $\Dlf$ is continuous in $\la$. Combining those facts and using the intermediate value theorem we conclude with the desired result.
\end{proof}

\subsubsection{$\lac$}
The third key parameter $\lac$ is defined to be the unique $\la\leq\labb$ that satisfies $m-\Dlf = \Clf$ when $m\leq n$ or to be 0 when $m>n$. We formally repeat this definition in the following Lemma.
\begin{lem}\label{lemma:lac}
 Suppose $\mathbf{D}(\x_0,\labb)<m$
and consider the following equation over $0\leq\la\leq\labb$:
\begin{align}\label{eq:TheEq}
m-\Dlf=\Clf, \quad 0\leq\la\leq\labb.
\end{align}
\begin{itemize}
\item If $m\leq n$, then \eqref{eq:TheEq} has a unique solution, which we denote as $\lac$.
\item If $m>n$, then \eqref{eq:TheEq} has no solution. Then $\lac = 0$.
\end{itemize}
\end{lem}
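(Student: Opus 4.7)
The plan is to reduce the statement to a single-variable, monotonicity-plus-intermediate-value argument by introducing the auxiliary function
\[
g(\la) \;:=\; \big(m - \Dlf\big) - \Clf, \qquad \la \in [0,\labb],
\]
so that \eqref{eq:TheEq} becomes $g(\la) = 0$. Everything then reduces to verifying two boundary values and one monotonicity claim, each of which is already packaged in Lemma \ref{lemma:keyProp}.

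First I would evaluate the endpoints. At $\la = 0$, property (2) of Lemma \ref{lemma:keyProp} gives $\Df 0) = n$ and $\Cf 0) = 0$, hence $g(0) = m - n$. At $\la = \labb$, property (8) gives $\Cf\labb) = 0$, and by the standing assumption $\Df\labb) < m$, hence $g(\labb) = m - \Df\labb) > 0$. Next I would invoke property (9) of Lemma \ref{lemma:keyProp}, which asserts that $\Dlf + \Clf$ is strictly decreasing on $[0,\labb]$; rewriting $g(\la) = m - (\Dlf + \Clf)$, this is precisely the statement that $g$ is strictly increasing on $[0,\labb]$. Combined with continuity of $\Dlf$ and $\Clf$ (property (4)), the function $g$ is continuous and strictly increasing on $[0,\labb]$.

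From here, the two cases are immediate. If $m > n$, then $g(0) = m-n > 0$, and strict monotonicity forces $g(\la) > 0$ for every $\la\in[0,\labb]$, so \eqref{eq:TheEq} admits no solution and we set $\lac:=0$ by convention. If $m \leq n$, then $g(0) = m-n \leq 0 < g(\labb)$; by the intermediate value theorem there exists at least one $\la\in[0,\labb]$ with $g(\la)=0$, and strict monotonicity of $g$ ensures this root is unique, so we may call it $\lac$ (and note that in the boundary case $m=n$ this recovers $\lac=0$ consistently).

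There is no real obstacle once Lemma \ref{lemma:keyProp} is in hand; the entire proof is essentially a careful bookkeeping of the endpoint values of $g$ together with the strict monotonicity coming from property (9). The only mildly delicate point is that the conclusion relies on property (9), whose verification itself is the nontrivial part of the preliminary lemma, but that has been handled separately. So my proof would be short: define $g$, tabulate $g(0)$ and $g(\labb)$, cite continuity and strict monotonicity, and apply the intermediate value theorem.
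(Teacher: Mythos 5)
Your proof is correct and is essentially the paper's own argument: the paper works with $g(\la)=\Dlf+\Clf$ (strictly decreasing from $g(0)=n$ to $g(\labb)=\Df\labb)<m$, by the same properties of Lemma \ref{lemma:keyProp}) and compares it to $m$, which is just the negation-plus-shift of your $g$. The endpoint evaluations, the appeal to strict monotonicity of $\Dlf+\Clf$ on $[0,\labb]$, and the intermediate value theorem are all identical in substance.
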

\begin{proof}
We repeatedly make use of Lemma \ref{lemma:keyProp}.
For convenience define the function 
$$g(\la) = \Dlf + \Clf,$$
for $\la\in[0,\labb)$.
The function $g(\la)$ has the following properties over $\la\in[0,\labb]$:
\begin{itemize}
\item[-] it is strictly decreasing,
\item[-] $g(0) = n$,
\item[-] $g(\labb) = \Df\labb) <  m$.
\end{itemize}

If $m\leq n$, from the intermediate value Theorem it follows that \eqref{eq:TheEq} has at least one solution. This solution is unique since $g(\la)$ is strictly decreasing.

If $m>n$, since $g(\la)\leq n$ for all $\la\in[0,\labb]$, it is clear that \eqref{eq:TheEq} has no solution.
\end{proof}



\subsection{Regions of Operation: $\mathcal{R}_{OFF}$, $\mathcal{R}_{ON}$, $\mathcal{R}_{\infty}$ }
Having defined the key parameters $\labb$,$\lac$ and $\lam$, we are now ready to define the three distinct regions of operation of the $\ell_2$-LASSO problem. 

\begin{defn}\label{defn:Ron2}
Define the following regions of operation for the $\ell_2$-LASSO problem:
\begin{itemize}
\item $\mathcal{R}_{OFF} = \left\{ \la~ | ~ 0\leq\la\leq\lac \right\},$
\item $\mathcal{R}_{ON} = \left\{ \la~ | ~ \lac<\la < \lam \right\},$
\item $\mathcal{R}_{\infty} = \left\{ \la ~ | ~ \la\geq\lam \right\}.$
\end{itemize}
\end{defn}

\noindent{\textbf{Remark}}: The definition of $\Rco$ in Definition \ref{defn:Ron2} is consistent to the Definition in \ref{defn:Ron}. In other words, $\lac\leq\la\leq\lam$ if and only if $m\geq \max\{\Dlf, \Dlf+\Clf\}$. This follows after combining Lemmas \ref{lemma:lam} and \ref{lemma:lac} with the Lemma \ref{lemma:props} below.
%
\begin{lem}
\label{lemma:props}
The following hold:
\begin{enumerate} 
\item $m-\Dlf \leq \Clf$ for all  $\la\in\mathcal{R}_{OFF}$ if $\lac\neq 0$.
\item $m-\Dlf> \max\{0 , \Clf\}$ for all $\la\in\mathcal{R}_{ON}$,
\item $m\leq\Dlf$ for all $\la\in\mathcal{R}_{\infty}$.
\end{enumerate}
\end{lem}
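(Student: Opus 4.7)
The plan is to prove each of the three statements using the monotonicity and convexity properties of $\Dlf, \Clf$, and $\Dlf+\Clf$ collected in Lemma \ref{lemma:keyProp}, together with the defining equations of $\lac$ and $\lam$ from Lemmas \ref{lemma:lam} and \ref{lemma:lac}. None of the steps should require any computation beyond invoking the properties already established; the argument is essentially a careful bookkeeping of where each quantity sits relative to the critical values $\lac$, $\labb$, $\lam$.

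I would start with statement 3, which is the easiest: for $\la \geq \lam \geq \labb$, Lemma \ref{lemma:keyProp}(5) gives that $\Dlf$ is strictly increasing on $[\labb,\infty)$, so $\Dlf \geq \Df\lam) = m$ by the definition of $\lam$ in Lemma \ref{lemma:lam}. For statement 1, I would assume $\lac\neq 0$ (so $m\leq n$ and $\Df\lac)+\Cf\lac)=m$) and note that $\lac\leq\labb$. For any $\la \in [0,\lac]$, Lemma \ref{lemma:keyProp}(9) asserts that $\Dlf+\Clf$ is strictly decreasing on $[0,\labb]$, hence $\Dlf+\Clf \geq \Df\lac)+\Cf\lac)=m$, which is exactly $m-\Dlf \leq \Clf$.

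Statement 2 is the main obstacle, and I would split it into two inequalities. For $m-\Dlf>0$ on $\Rco=(\lac,\lam)$, I would consider two sub-ranges separately. On $[\labb,\lam)$, strict monotonicity of $\Dlf$ on $[\labb,\infty)$ gives $\Dlf<\Df\lam)=m$. On $(\lac,\labb]$, strict monotonicity of $\Dlf$ on $[0,\labb]$ (decreasing) combined with $\Df\lac)\leq m$ yields $\Dlf<\Df\lac)\leq m$ when $\lac>0$ (using $\Df\lac)=m-\Cf\lac)\leq m$ since $\Cf\lac)\geq 0$ by Lemma \ref{lemma:keyProp}(8)), and when $\lac=0$ we are in the case $m>n\geq\Df 0) \geq \Dlf$ on $[0,\labb]$, so the strict inequality still holds for $\la\in(0,\labb]$.

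For the tighter inequality $m-\Dlf>\Clf$, I would again split. On $[\labb,\lam)$, Lemma \ref{lemma:keyProp}(8) gives $\Clf\leq 0$, so $\Clf\leq 0<m-\Dlf$ from the previous step. On $(\lac,\labb]$, I would invoke strict monotonicity of $\Dlf+\Clf$ on $[0,\labb]$ (Lemma \ref{lemma:keyProp}(9)): if $\lac>0$, then $\Dlf+\Clf<\Df\lac)+\Cf\lac)=m$; if $\lac=0$, then $\Dlf+\Clf<\Df 0)+\Cf 0)=n<m$ (since in this sub-case $m>n$). Either way, $\Clf<m-\Dlf$, completing the argument. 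The only delicate point is handling $\lac=0$ uniformly with $\lac>0$, but writing out both cases explicitly takes care of it.
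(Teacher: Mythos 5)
Your proof is correct and follows essentially the same route as the paper: strict convexity/monotonicity of $\Dlf$ about $\labb$, strict decrease of $\Dlf+\Clf$ on $[0,\labb]$, the sign of $\Clf$ from Lemma \ref{lemma:keyProp}, and the defining equations of $\lac$ and $\lam$. The only cosmetic difference is that you argue $m-\Dlf>0$ via monotonicity of $\Dlf$ on each side of $\labb$ (and treat $\lac=0$ explicitly) where the paper invokes concavity of $m-\Dlf$ and absorbs the $\lac=0$ case into the inequality $\Df\lac)+\Cf\lac)\leq m$; these are equivalent.
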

\begin{proof}
We prove the statements in the order they appear. We use Lemma \ref{lemma:keyProp} throughout.

\noindent{{\emph1.~}}
The function $\Dlf+\Clf$ is strictly decreasing in $[0,\labb]$. Thus, assuming $\lac\neq 0$, $\Dlf+\Clf \geq \Df\lac)+\Cf\lac)=m$ for all $\la\in[0,\lac]$.

\noindent{{\emph2.~}}
Since $\Dlf$ is strictly convex, $m-\Dlf$ is strictly concave and has a unique maximum at $\labb$. Therefore, for all $\la\in[\lac,\lam]$, $$m-\Dlf \geq \max\{~ \underbrace{m-\Df\lac)}_{=\Cf\lac)\geq 0} ~,~ \underbrace{m-\Df\lam)}_{=0}~ \}\geq 0.$$ 
Furthermore, $\Dlf+\Clf$ is strictly decreasing in $[0,\labb]$. Thus, $\Dlf+\Clf < \Df\lac)+\Cf\lac)\leq m$ for all $\la\in(\lac,\labb]$. For $\la\in[\labb,\lam)$, we have $m-\Dlf> 0\geq \Clf$.

\noindent{{\emph3.~}}
$\Dlf$ is strictly convex. Hence, $m-\Dlf$ is strictly decreasing in $[\labb,\infty)$. This proves that $m-\Dlf\leq m-\Df\lam) = 0$ for all $\la\geq\lam$.

\end{proof}

%
%

\section{The NSE of the $\ell_2$-LASSO}\label{sec:ell2LASSO NSE}
We split our analysis in three sections, one for each of the three regions $\Rcf$, $\Rco$ and $\Rci$. We start from $\Rco$, for which the analysis is similar in nature to C-LASSO.

\subsection{$\mathcal{R}_{ON}$}\label{sec:Ron}


In this section we prove Theorem \ref{thm:ell2LASSO} which characterizes the NSE of the $\ell_2$-LASSO in the region $\Rco$. We repeat the statement of the theorem here, for ease of reference.

\RONLASSO*

As usual, we first focus on the approximated $\ell_2$-LASSO problem in Section \ref{sec:ell2Ap}. Next, in Section \ref{sec:ell2Or}, we translate this result to the original $\ell_2$-LASSO problem.

\subsubsection{Approximated $\ell_2$-LASSO}\label{sec:ell2Ap}
The approximated $\ell_2$-LASSO problem is equivalent to the generic problem \eqref{eq:com} after taking $\Cc=\la\paf$. Hence, we simply need to apply the result of Lemma \ref{thm:unified}. 
with $\DC$ and $\CC$ corresponding to $\Dlf$ and $\Clf$. We conclude with the following result.

\begin{cor} \label{cor:ell2}
Let $m\geq \min_{\la\geq 0}\Dlf$ and 
assume there exists constant $\eps_L>0$ such that $(1-\eps_L)m\geq \max\{\Dlf,$ $\Dlf+\Clf\}$ and $\Dlf\geq \eps_Lm$. Further assume that $m$ is sufficiently large. Then, for any constants $\eps_1,\eps_2>0$, there exist constants $c_1,c_2>0$ such that with probability $1-c_1\exp(-c_2\min\{m,\frac{m^2}{n}\})$,
\begin{align}\label{eq:con_objVal}
\left| \frac{\Fc_{\ell_2}(\A , \vb )}{\sigma\sqrt{m - \Dlf} } - 1 \right| \leq \eps_1,
\end{align}
and
\begin{align}\label{eq:con_normW}
\left| \frac{\|\hat\w_{\ell_2}(\A,\vb)\|^2}{\sigma^2} - \frac{\Dlf}{m-\Dlf} \right|  \leq \eps_2.
\end{align}
\end{cor}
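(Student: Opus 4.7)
The plan is to recognize the approximated $\ell_2$-LASSO as the instance of the generic minimization \eqref{eq:com} obtained by taking $\Cc = \la\paf$, and then to invoke Lemma \ref{thm:unified} directly. Since $\hat{f}_p(\w) = \sup_{\s\in\paf}\s^T\w$ by definition \eqref{eq:perdef2}, we have $\la \hat{f}_p(\w) = \max_{\s\in\la\paf}\s^T\w$, so the objective in the approximated $\ell_2$-LASSO \eqref{eq:foEll2_w} coincides with the generic objective \eqref{eq:com} for $\Cc = \la\paf$. Under this identification and using the notation of Section \ref{sec:snot}, one has $\DC = \Dlf$ and $\CC = \Clf$, while $\Fc(\A,\vb) = \Fc_{\ell_2}(\A,\vb)$ and $\hat{\w}(\A,\vb) = \hat{\w}_{\ell_2}(\A,\vb)$.

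Next, I would verify the hypotheses of Lemma \ref{thm:unified}. The assumptions $(1-\eps_L)m\geq \max\{\Dlf,\Dlf+\Clf\}$ and $\Dlf\geq \eps_L m$ are exactly $m \geq \max\{\DC,\DC+\CC\}+\eps_L m$ and $\DC \geq \eps_L m$. The assumption that $\paf$ is nonempty, compact and does not contain the origin implies that $\la\paf$ (for the relevant $\la>0$ in $\Rco$) is a compact convex set that is \emph{not} a cone. Therefore the probability exponent in Lemma \ref{thm:unified} is $\gamma(m,n)=\min\{m,\tfrac{m^2}{n}\}$, matching the probability bound in the statement to be proved.

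With the hypotheses met, the two conclusions of the corollary follow mechanically. The cost estimate \eqref{eq:con_objVal} is a direct restatement of \eqref{eq:conobjVal} with $\eps = \eps_1$. For the NSE estimate \eqref{eq:con_normW}, invoke \eqref{eq:connormW} to get, with probability $1-c_1\exp(-c_2\min\{m,\tfrac{m^2}{n}\})$,
\[
\left| \|\hat{\w}_{\ell_2}(\A,\vb)\| - \sigma\sqrt{\tfrac{\Dlf}{m-\Dlf}} \right| \leq \delta\,\sigma\sqrt{\tfrac{\Dlf}{m-\Dlf}}
\]
for any $\delta>0$. Squaring and using the elementary identity $|x^2-a^2| = |x-a|\,|x+a| \leq \delta a(\delta a + 2a)$, one obtains
\[
\left| \tfrac{\|\hat{\w}_{\ell_2}\|^2}{\sigma^2} - \tfrac{\Dlf}{m-\Dlf}\right| \leq \delta(2+\delta)\,\tfrac{\Dlf}{m-\Dlf}.
\]
Because $(1-\eps_L)m \geq \Dlf$ implies $\tfrac{\Dlf}{m-\Dlf} \leq \tfrac{1-\eps_L}{\eps_L}$, choosing $\delta = \delta(\eps_2,\eps_L)$ small enough forces the right-hand side below $\eps_2$, yielding \eqref{eq:con_normW}.

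The main technical obstacle, had it not already been packaged into Lemma \ref{thm:unified}, would lie in the upper-bound step of Lemma \ref{lemma:upKey}: the generic set $\Cc = \la\paf$ is not conic, which is why the probability exponent degrades from $m$ (as for the C-LASSO) to $\min\{m,\tfrac{m^2}{n}\}$ and why the additional hypothesis $(1-\eps_L)m \geq \Dlf+\Clf$ (not present for C-LASSO) is needed. At the level of this corollary, however, all the hard work has been done upstream, and the proof is bookkeeping: match $(\DC,\CC,\Cc)$ to $(\Dlf,\Clf,\la\paf)$, check hypotheses, and convert the norm concentration to squared-norm concentration.
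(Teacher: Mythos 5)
Your proposal is correct and follows the same route as the paper: the corollary is obtained by specializing Lemma \ref{thm:unified} to $\Cc=\la\paf$, with $\DC=\Dlf$, $\CC=\Clf$, and $\gamma(m,n)=\min\{m,\frac{m^2}{n}\}$ since $\la\paf$ is not a cone. The only step the paper leaves implicit is your (correct) conversion from concentration of $\|\hat\w_{\ell_2}\|$ to concentration of $\|\hat\w_{\ell_2}\|^2/\sigma^2$ using $\frac{\Dlf}{m-\Dlf}\leq\frac{1-\eps_L}{\eps_L}$.
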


\subsubsection{Original $\ell_2$-LASSO: Proof of Theorem \ref{thm:ell2LASSO}}\label{sec:ell2Or}
Next, we use Corollary \ref{cor:ell2} to prove Theorem \ref{thm:ell2LASSO}. To do this, we will first relate $f(\cdot)$ and $\hat{f}(\cdot)$. The following result shows that, $f(\cdot)$ and $\hat{f}(\cdot)$ are close around a sufficiently small neighborhood of $\x_0$.

\begin{propo}[Max formula, \cite{Borwein,Lewis}] \label{max form}Let $f(\cdot):\R^n\rightarrow \R$ be a convex and continuous function on $\R^n$. Then, any point $\x$ and any direction $\vb$ satisfy,
\beq
\lim_{\eps\rightarrow 0^+}\frac{f(\x+\eps\vb)-f(\x)}{\eps}=\sup_{\s\in \pa f(\x)} \li\s,\vb\ri.\nn
\eeq 
In particular, the subdifferential $\pa f(\x)$ is nonempty.
\end{propo}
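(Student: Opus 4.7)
The plan is to first establish that the one-sided directional derivative
$$f'(\x;\vb) := \lim_{\eps\rightarrow 0^+}\frac{f(\x+\eps\vb)-f(\x)}{\eps}$$
exists as a finite quantity. Convexity of $f$ implies that, for fixed $\x$ and $\vb$, the difference quotient $q(\eps) := \eps^{-1}(f(\x+\eps\vb)-f(\x))$ is nondecreasing in $\eps>0$: indeed, for $0 < \eps_1 < \eps_2$, writing $\x+\eps_1\vb$ as a convex combination of $\x$ and $\x+\eps_2\vb$ and applying Jensen yields $q(\eps_1)\leq q(\eps_2)$. Hence the limit as $\eps\rightarrow 0^+$ exists in $[-\infty,+\infty)$. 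Since $f$ is finite-valued and continuous on all of $\R^n$, a standard convexity argument (local Lipschitz continuity of finite convex functions on $\R^n$) bounds $|q(\eps)|$ and gives $f'(\x;\vb)\in\R$. It is also immediate that $\vb\mapsto f'(\x;\vb)$ is positively homogeneous and convex, hence sublinear.

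The easy inequality is $f'(\x;\vb)\geq \sup_{\s\in\paf}\li\s,\vb\ri$. For any $\s\in\pa f(\x)$, the subgradient inequality $f(\x+\eps\vb)\geq f(\x)+\eps\li\s,\vb\ri$ yields $q(\eps)\geq\li\s,\vb\ri$ for all $\eps>0$; taking $\eps\rightarrow 0^+$ and then the supremum over $\s$ gives the claim. If $\pa f(\x)$ is nonempty, this handles one direction.

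For the reverse inequality, and simultaneously for nonemptiness of $\pa f(\x)$, I would invoke the Hahn--Banach theorem. Fix $\vb_0\in\R^n$ and consider the sublinear function $g(\vb):=f'(\x;\vb)$. On the one-dimensional subspace $\R\vb_0$, define the linear functional $\ell(t\vb_0):=t\,g(\vb_0)$ for $t\in\R$; positive homogeneity gives $\ell(t\vb_0)=g(t\vb_0)$ for $t\geq 0$, and for $t<0$ sublinearity of $g$ yields $0=g(0)\leq g(t\vb_0)+g(-t\vb_0)$, which gives $\ell(t\vb_0)\leq g(t\vb_0)$. By Hahn--Banach there is a linear extension $\s_0:\R^n\rightarrow\R$, i.e.\ a vector $\s_0\in\R^n$, with $\li\s_0,\vb\ri\leq g(\vb)$ for all $\vb\in\R^n$ and $\li\s_0,\vb_0\ri=g(\vb_0)$. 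It remains to verify $\s_0\in\pa f(\x)$. For arbitrary $\y\in\R^n$, convexity of $f$ along the segment from $\x$ to $\x+\y$ shows $g(\y)=f'(\x;\y)\leq f(\x+\y)-f(\x)$ (the difference quotient evaluated at $\eps=1$ dominates its limit at $\eps\rightarrow 0^+$). Combining with $\li\s_0,\y\ri\leq g(\y)$ gives the subgradient inequality $f(\x+\y)\geq f(\x)+\li\s_0,\y\ri$ for all $\y$, so $\s_0\in\pa f(\x)$. In particular $\pa f(\x)$ is nonempty, and $\sup_{\s\in\pa f(\x)}\li\s,\vb_0\ri\geq \li\s_0,\vb_0\ri=g(\vb_0)=f'(\x;\vb_0)$, completing the equality.

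The main obstacle is the Hahn--Banach step, or, equivalently, producing a subgradient that attains the supremum in the desired direction; everything else is either monotonicity of difference quotients or a direct application of the subgradient inequality. The only delicate point beyond Hahn--Banach is verifying that the extension $\s_0$ furnished by the theorem actually lies in $\pa f(\x)$, which reduces to the observation that the (monotone) difference quotient at $\eps=1$ upper bounds $f'(\x;\cdot)$.
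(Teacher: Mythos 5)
Your proof is correct. The paper does not prove this proposition at all---it is imported verbatim from \cite{Borwein,Lewis}---and your argument (monotonicity of the difference quotient to get existence of $f'(\x;\cdot)$, the subgradient inequality for the easy direction, and a Hahn--Banach extension of the sublinear functional $f'(\x;\cdot)$ to produce a subgradient attaining the value in a prescribed direction) is essentially the standard proof given in those references, with the one delicate verification, that the extension lies in $\pa f(\x)$ because the difference quotient at $\eps=1$ dominates its limit, handled correctly.
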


Proposition \ref{max form} considers a fixed direction $\vb$, and compares $f(\x_0+\eps\vb)$ and $\hat{f}(\x_0+\eps\vb)$. We will need a slightly stronger version which says $\hat{f}(\cdot)$ is a good approximation of $f(\cdot)$ at all directions simultaneously. The following proposition is a restatement of Lemma 2.1.1 of Chapter VI of \cite{Urru}.
\begin{propo}[Uniform max formula]\label{prop:F3T}
 Assume $f(\cdot):\R^n\rightarrow \R$ is convex and continuous on $\R^n$ and $\x_0\in\R^n$. Let $\hat{f}(\cdot)$ be the first order approximation of $f(\cdot)$ around $\x_0$ as defined in \eqref{eq:fodef}. Then, for any $\delta>0$, there exists $\eps>0$ such that,
\beq
f(\x_0+\w)-\hat{f}(\x_0+\w)\leq \delta \|\w\|,
\eeq
for all $\w\in\R^n$ with $\|\w\|\leq\eps$.
\end{propo}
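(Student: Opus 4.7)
The plan is to reduce the statement to a Dini-type uniform convergence argument on the unit sphere $\Sc^{n-1}$. First, I would introduce the nonnegative residual
$g(\w):=f(\x_0+\w)-\hat f(\x_0+\w)=f(\x_0+\w)-f(\x_0)-\sup_{\s\in\paf}\s^T\w.$
Nonnegativity is just the subgradient inequality $f(\x_0+\w)\geq f(\x_0)+\s^T\w$ for every $\s\in\paf$, and $g$ is continuous on $\R^n$ because the support function $\vb\mapsto\sup_{\s\in\paf}\s^T\vb$ of the compact convex set $\paf$ is (Lipschitz) continuous. The target inequality becomes: for every $\delta>0$ there exists $\eps>0$ with $g(\w)\leq\delta\|\w\|$ whenever $\|\w\|\leq\eps$.

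Next I would reparametrize $\w=t\vb$ with $t=\|\w\|>0$ and $\vb\in\Sc^{n-1}$, and study
$\varphi(t,\vb):=\frac{g(t\vb)}{t}=\frac{f(\x_0+t\vb)-f(\x_0)}{t}-\sup_{\s\in\paf}\s^T\vb.$
A standard property of convex functions is that for each fixed $\vb$, the difference quotient $t\mapsto(f(\x_0+t\vb)-f(\x_0))/t$ is nondecreasing in $t>0$; subtracting the $t$-independent quantity $\sup_{\s\in\paf}\s^T\vb$ preserves this, so $\varphi(\cdot,\vb)$ is nondecreasing. Proposition \ref{max form} identifies the limit of the difference quotient as $\sup_{\s\in\paf}\s^T\vb$, giving $\varphi(t,\vb)\downarrow 0$ as $t\downarrow 0$ for every fixed $\vb\in\Sc^{n-1}$.

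The key step is then to promote this pointwise monotone convergence to uniform convergence in $\vb$, and this is where I expect the main (mild) difficulty to lie. For each fixed $t>0$ the map $\vb\mapsto\varphi(t,\vb)$ is continuous on the compact set $\Sc^{n-1}$ by continuity of $g$. For any sequence $t_k\downarrow 0$, the sequence $(\varphi(t_k,\cdot))_k$ is therefore a pointwise monotonically decreasing family of continuous functions on $\Sc^{n-1}$ converging pointwise to the continuous function $0$; Dini's theorem then yields uniform convergence, i.e.\ $\sup_{\vb\in\Sc^{n-1}}\varphi(t_k,\vb)\to 0$, and since this holds along every such sequence, $\sup_{\vb\in\Sc^{n-1}}\varphi(t,\vb)\to 0$ as $t\to 0^+$. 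Given $\delta>0$, choose $\eps>0$ so that $\varphi(\eps,\vb)\leq\delta$ for all $\vb\in\Sc^{n-1}$; by monotonicity in $t$, the same bound holds for all $t\in(0,\eps]$. Translating back, for any $\w$ with $0<\|\w\|\leq\eps$ we get $g(\w)=\|\w\|\cdot\varphi(\|\w\|,\w/\|\w\|)\leq\delta\|\w\|$, and the case $\w=0$ is trivial. The crucial ingredients underlying Dini's application are the compactness of $\paf$ (giving continuity of the support function) and the monotone limit from Proposition \ref{max form}; everything else reduces to routine compactness bookkeeping.
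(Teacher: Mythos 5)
Your proof is correct. Note, however, that the paper does not actually prove Proposition \ref{prop:F3T}: it is stated as a restatement of Lemma 2.1.1 of Chapter VI of \cite{Urru} and the proof is delegated entirely to that reference, so there is no in-paper argument to compare against. Your Dini-based argument is a clean, self-contained proof of the cited fact: the nonnegativity of the residual $g$ via the subgradient inequality, the monotonicity of the difference quotient $t\mapsto (f(\x_0+t\vb)-f(\x_0))/t$, the pointwise limit supplied by Proposition \ref{max form}, and Dini's theorem on the compact sphere $\Sc^{n-1}$ are exactly the ingredients needed, and the reduction from ``for all $t\in(0,\eps]$'' to a single $t=\eps$ via monotonicity is handled correctly. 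The only hypotheses you implicitly use --- that $\paf$ is nonempty and compact so that the support function is finite and continuous --- hold because $f$ is finite-valued and convex on all of $\R^n$, which the paper also records. This is essentially the standard textbook proof of the uniform max formula, so it is a faithful substitute for the external citation rather than a genuinely different route.
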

 
Recall that we denote the minimizers of the $\ell_2$-LASSO and approximated $\ell_2$-LASSO by $\w_{\ell_2}^*$ and $\hat{\w}_{\ell_2}$, respectively. Also, for convenience denote,
 $$\eta_{\ell_2} = \frac{\Dlf}{m-\Dlf}.$$
%
After Corollary \ref{cor:ell2}, $\|\hat{\w}_{\ell_2}\|^2$ concentrates around $\sigma^2\eta_{\ell_2}$. We will argue that, in the small noise regime, we can translate our results to the original problem in a smooth way.
Assume that the statements of Corollary \ref{cor:ell2} hold with high probability for some arbitrary $\eps_1,\eps_2>0$.
It suffices to prove that for any $\eps_3>0$ there exists $\sigma_0>0$ such that 
\begin{align}\label{we want w^*}
\left|\frac{\|\w_{\ell_2}^*\|^2}{\sigma^2} - \eta_{\ell_2} \right| \leq \eps_3,
\end{align}
for all $\sigma<\sigma_0$.
 To begin with, fix a $\delta>0$, the value of which is to be determined later in the proof. As an immediate implication of Proposition \ref{prop:F3T},  there exists $\sigma_0$ such that
\begin{align}
f(\x_0+\w)-\hat{f}(\x_0+\w) \leq \delta\| \w \|\label{smalldist2}
\end{align}
for all $\w$ satisfying  $\|\w\|\leq C = C(\sigma_0 , \eps_2) :=\sigma_0\sqrt{(1+\eps_2)\eta_{\ell_2}}$.
Now, fix any $\sigma<\sigma_0$.
We will make use of the fact that the following three events hold with high probability.
\begin{itemize}
\item Using Corollary \ref{cor:ell2}, with high probability $\hat{\w}_{\ell_2}$ satisfies,
\beq
\|\hat{\w}_{\ell_2}\|\leq \sigma\sqrt{(1+\eps_2)\eta_{\ell_2}}\leq C.\label{smallhatw2}
\eeq
\item Using \eqref{eq:con_deviation} of Lemma \ref{thm:unified} with $\Cc=\la\paf$, there exists a constant $t=t(\eps_3)$ so that for any $\w$ satisfying $|\frac{\|\w\|^2}{\sigma^2}-\eta_{\ell_2}|\geq \eps_3$, we have,
\beq
\|\A\w-\sigma\vb\| + \max_{\s\in\la\paf} \s^T\w \geq \Fc_{\ell_2}(\A,\vb) +  t(\eps_3)\sigma\sqrt{m}\label{statement32}.
\eeq
\end{itemize}
Combine \eqref{smallhatw2} with \eqref{smalldist2} to find  that
\begin{align}
\|\A\hat{\w}_{\ell_2}-\sigma\vb\| + \la ( f(\x_0+\hat{\w}_{\ell_2}) - f(\x_0) ) &\leq 
\underbrace{\|\A\hat{\w}_{\ell_2}-\sigma\vb\| + \la ( \hat{f}(\x_0+\hat{\w}_{\ell_2}) - f(\x_0) ) }_{= \Fc_{\ell_2}(\A,\vb)} + \delta\|\hat{\w}_{\ell_2}\|  \nn \\
&\leq {\Fc_{\ell_2}}(\A,\vb) + \delta\sigma\sqrt{(1+\eps_2)\eta_{\ell_2}}. \label{eq:triton}
\end{align}
Now, assume that $\|\w^*_{\ell_2}\|$ does not satisfy \eqref{we want w^*}. Then,
\begin{align}
\|\A\hat{\w}_{\ell_2}-\sigma\vb\| + \la ( f(\x_0+\hat{\w}_{\ell_2}) - f(\x_0) ) &\geq \Fco^*_{\ell_2}(\A,\vb) \label{eq:et1}\\
&\geq \|\A{\w}^*_{\ell_2}-\sigma\vb\| + \la\max_{\s\in\la\paf}\s^T\w^*_{\ell_2}\label{eq:et2}\\
 &\geq \Fc_{\ell_2}(\A,\vb) +  t(\eps_3)\sigma\sqrt{m}.\label{eq:et3}
\end{align}
\eqref{eq:et1} follows from optimality of $\w^*_{\ell_2}$. For \eqref{eq:et2} we used convexity of $f(\cdot)$ and the basic property of the subdifferential that $f(\x_0+\w)\geq f(\x_0) + \s^T\w$, for all $\w$ and $\s\in\paf$. Finally, \eqref{eq:et3} follows from \eqref{statement32}.

To complete the proof, choose $\delta< \frac{t\sqrt{m}}{\sqrt{(1+\eps_2)\eta_{\ell_2}}}$. This will result in contradiction between \eqref{eq:triton} and \eqref{eq:et3}. Observe that, our choice of $\delta$ and $\sigma_0$ is deterministic and depends on $m,\x_0,f(\cdot),\eps_3$.

\subsubsection{A Property of the NSE Formula}

Theorem \ref{thm:ell2LASSO} shows that the asymptotic NSE formula in $\Rco$ is $\frac{\Dlf}{m-\Dlf}$. The next lemma provides a useful property of this formula as a function of $\la$ on $\Rco$.
\begin{lem} $\frac{\Dlf}{m-\Dlf}$ is a convex function of $\la$ over $\Rco$.
\end{lem}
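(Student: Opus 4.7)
The plan is to realize the function of interest as a composition of convex functions. Observe that
\[
\frac{\Dlf}{m-\Dlf} \;=\; \phi(\Dlf), \qquad \text{where } \phi(t) := \frac{t}{m-t}.
\]
So it suffices to show that $\phi \circ D$ is convex on $\Rco$, where $D(\la) := \Dlf$. I would first check directly that $\phi$ is convex and (strictly) increasing on the interval $(-\infty, m)$: computing $\phi'(t) = \frac{m}{(m-t)^2} > 0$ and $\phi''(t) = \frac{2m}{(m-t)^3} > 0$ for $t < m$ settles this. Note also that on $\Rco$, by Definition \ref{defn:Ron} (and Lemma \ref{lemma:props}), we have $\Dlf < m$, so $\phi$ is being evaluated on its domain of convexity.

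Next, I would invoke Lemma \ref{lemma:keyProp}(5), which states that $\Dlf$ is strictly convex in $\la$. With these two ingredients in hand, the proof reduces to the standard fact that the composition $\phi \circ D$ of a convex nondecreasing function $\phi$ with a convex function $D$ is convex: for any $\la_1, \la_2 \in \Rco$ and $t \in [0,1]$, convexity of $D$ gives $D(t\la_1 + (1-t)\la_2) \leq t D(\la_1) + (1-t) D(\la_2)$, after which monotonicity of $\phi$ followed by convexity of $\phi$ yields
\[
\phi(D(t\la_1 + (1-t)\la_2)) \;\leq\; t\,\phi(D(\la_1)) + (1-t)\,\phi(D(\la_2)).
\]
One small caveat is that $\Rco$ must itself be a convex set for the claim to make sense; this is already known (cf.\ the remark following Definition \ref{defn:Ron}, which identifies $\Rco$ as an open interval $(\lac, \lam)$), so no issue arises. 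There is no real obstacle here—this is essentially a one-line argument once one recognizes the right decomposition; the only thing to be careful about is ensuring $D(\la)$ stays strictly below $m$ throughout $\Rco$ so that $\phi$ is applied inside its convex regime.
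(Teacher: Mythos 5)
Your proposal is correct and matches the paper's own argument: both write the formula as the composition of the increasing convex map $t\mapsto \frac{t}{m-t}$ on $t<m$ with the strictly convex function $\Dlf$ (Lemma \ref{lemma:keyProp}), and invoke the standard composition rule. Your added checks—that $\Dlf<m$ on $\Rco$ and that $\Rco$ is an interval—are slightly more careful than the paper's one-line version, and your second derivative $\frac{2m}{(m-t)^3}$ is in fact the correct one.
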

\begin{proof} From \ref{lemma:keyProp}, $\Dlf$ is a strictly convex function of $\la$. Also, $\frac{x}{m-x}$ is an increasing function of $x$ over $0\leq x<m$ and its second derivative is $\frac{m}{(m-x)^3}$ which is strictly positive over $\Rco$. Consequently, the asymptotic NSE formula is a composition of an increasing convex function with a convex function, and is thus itself convex\cite{Boyd}.
\end{proof}

\subsection{$\mathcal{R}_{OFF}$}\label{sec:thisROFF}
Our analysis, unfortunately, does not extend to $\mathcal{R}_{OFF}$, and we have no proof that characterizes the NSE in this regime.
%
 On the other hand, our extensive numerical experiments (see Section \ref{sec:num}) show that, in this regime, the optimal estimate $\x^*_{\ell_2}$ of \eqref{ell2model} satisfies $\y=\A\x^*_{\ell_2}$. Observe that, in this case, the $\ell_2$-LASSO reduces to the standard approach taken for the noiseless compressed sensing problem,
 \beq\min f(\x)~~~\text{subject to}~~~\y=\A\x\label{NoiselessCS2}.\eeq
Here, we provide some intuition to why it is reasonable to expect this to be the case. Recall that $\la\in\Rcf$ iff $0\leq\la\leq\lac$, and so the ``small" values of the penalty parameter $\la$ are in $\Rcf$. As $\la$ gets smaller, $\|\y-\A\x\|$ becomes the dominant term, and $\ell_2$-LASSO penalizes this term more. So, at least for sufficiently small $\la$, the reduction to problem \eqref{NoiselessCS2} would not be surprising. Lemma \ref{lemma:Roff} formalizes this idea for the small $\la$ regime.
\begin{lem}\label{lemma:Roff}
 Assume $m\leq \alpha n$ for some constant $\alpha<1$ and $f(\cdot)$ is a Lipschitz continuous function with Lipschitz constant $L>0$. Then, for $\la <\frac{\sqrt{n}-\sqrt{m}}{L}(1-o(1))$, the solution $\x^*_{\ell_2}$ of $\ell_2$-LASSO satisfies $\y=\A\x^*_{\ell_2}$,  with probability $1-\exp(-\order{n})$. Here, $o(1)$ term is arbitrarily small positive constant.
\end{lem}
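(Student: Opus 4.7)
\medskip

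\noindent\textbf{Proof Sketch.} The plan is to show that when $\la$ is sufficiently small, any candidate $\x$ with $\A\x\neq \y$ can be strictly improved by projecting onto the affine feasibility set $\{\x\,:\,\A\x=\y\}$, which is nonempty since $m<n$. More precisely, given any $\x$ with $\A\x\neq\y$, I would consider the corrected point $\x'=\x+\ub$, where $\ub$ is the minimum-$\ell_2$-norm solution of the linear system $\A\ub=\y-\A\x$, namely $\ub=\A^\dagger(\y-\A\x)$. By construction $\A\x'=\y$, so $\|\y-\A\x'\|=0$, and the objective of the $\ell_2$-LASSO at $\x'$ equals $\la f(\x')$.

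\medskip

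The key estimates are then: (i) By Lipschitz continuity of $f$, $f(\x')\leq f(\x)+L\|\ub\|$. (ii) Since $\ub$ is the minimum-norm solution, $\|\ub\|\leq \|\y-\A\x\|/\sigma_{\min}(\A)$, where $\sigma_{\min}(\A)$ denotes the smallest (nonzero) singular value of $\A\in\R^{m\times n}$. Combining these gives
\[
\la f(\x')+\|\y-\A\x'\| \;\leq\; \la f(\x) + \frac{\la L}{\sigma_{\min}(\A)}\|\y-\A\x\|.
\]
If $\la L/\sigma_{\min}(\A) < 1$, then the right-hand side is strictly less than $\la f(\x)+\|\y-\A\x\|$ whenever $\A\x\neq\y$, i.e., $\x'$ strictly improves the objective. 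Consequently, the optimizer $\x^*_{\ell_2}$ must satisfy $\A\x^*_{\ell_2}=\y$.

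\medskip

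It remains to verify that $\la L/\sigma_{\min}(\A)<1$ with the stated probability. This is where the assumption $m\leq\alpha n$ with $\alpha<1$ plays its role. For a matrix $\A$ with i.i.d.\ standard normal entries, classical non-asymptotic bounds (e.g., those in Vershynin \cite{Vers}) yield $\sigma_{\min}(\A)\geq \sqrt{n}-\sqrt{m}-t$ with probability at least $1-2\exp(-t^2/2)$. Choosing $t=\delta(\sqrt{n}-\sqrt{m})$ for an arbitrarily small constant $\delta>0$ gives $\sigma_{\min}(\A)\geq (1-\delta)(\sqrt{n}-\sqrt{m})$ with probability $1-\exp(-\Omega(n))$ (here $\sqrt{n}-\sqrt{m}=\Theta(\sqrt{n})$ since $\alpha<1$). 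Under the hypothesis $\la<\frac{\sqrt{n}-\sqrt{m}}{L}(1-o(1))$, this yields $\la L/\sigma_{\min}(\A)<1$, closing the argument.

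\medskip

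\noindent\textbf{Main obstacle.} None of the individual steps is delicate; the only nontrivial ingredient is the quantitative lower bound on $\sigma_{\min}(\A)$, which is standard. The rest is essentially the triangle inequality combined with Lipschitz continuity. The sharpness of the threshold $\la<(\sqrt{n}-\sqrt{m})/L$ is inherited directly from the sharpness of the Gaussian smallest-singular-value concentration.
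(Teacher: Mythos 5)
Your proposal is correct and follows essentially the same route as the paper: both correct a point with nonzero residual $\p=\y-\A\x$ by adding the minimum-norm preimage $\A^T(\A\A^T)^{-1}\p$, bound its norm by $\|\p\|/\sigma_{\min}(\A^T)$, invoke Lipschitzness of $f$, and conclude from optimality that the residual must vanish once $\la L<\sigma_{\min}$, with the smallest-singular-value concentration supplying the threshold. The only difference is presentational — you phrase it as a strict-improvement argument for arbitrary $\x$, while the paper applies it directly to the optimizer and derives a contradiction.
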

\begin{proof} When $m\leq \alpha n$ for some constant $0<\alpha<1$, $\sqrt{n}-\sqrt{m}=\order{\sqrt{n}}$. Then, from standard concentration results (see \cite{Vers}), with probability $1-\exp(-\order{n})$, minimum singular value $\sigma_{min}(\A)$ of $\A$ satisfies
\beq
\frac{ \sigma_{min}(\A^T)}{\sqrt{n}-\sqrt{m}}\nn\geq 1-o(1). \label{eq:froom}\eeq
Take any $\la<\frac{\sqrt{n}-\sqrt{m}}{L}(1-o(1))$ and let $\p:=\y-\A\x^*_{\ell_2}$. We will prove that $\|\p\|=0$. Denote $\w_2:=\A^T(\A\A^T)^{-1}\p$. 
Using \eqref{eq:froom},  with the same probability,
\beq
\|\w_2\|^2=\p^T(\A\A^T)^{-1}\p\leq \frac{\|\p\|^2}{(\sigma_{min}(\A^T) )^2} \leq \frac{\|\p\|^2}{((\sqrt{n}-\sqrt{m})(1-o(1)))^2},\label{eq:dr1}
\eeq
Define $\x_2=\x^*_{\ell_2}+\w_2$, for which $\y-\A\x_2 = 0$ and consider the difference between the $\ell_2$-LASSO costs achieved by the minimizer $\x^*_{\ell_2}$ and $\x_2$. From optimality of $\x^*_{\ell_2}$, we have,
\begin{align}
0 &\geq  \|\p\| + \la f(\x^*_{\ell_2})-\la f(\x_2) \nn\\
&\geq \|\p\|-\la L\|\x^*_{\ell_2}-\x^*_2\| \label{eq:laos1} = \|\p\|-\la L\|\w_2\| \\
&\geq  \|\p\|(1-\la\frac{ L}{(\sqrt{n}-\sqrt{m})(1-o(1))}) \label{eq:laotz2}.
\end{align}
The inequality in \eqref{eq:laos1} follows from Lipschitzness of $f(\cdot)$, while we use \eqref{eq:dr1} to find \eqref{eq:laotz2}. For the sake of contradiction, assume that $\|\p\|\neq 0$, then \eqref{eq:laotz2} reduces to $0>0$, clearly, a contradiction.
%
\end{proof}
For an illustration of Lemma \ref{lemma:Roff}, consider the case where $f(\cdot)=\|\cdot\|_1$. $\ell_1$-norm is Lipschitz with $L=\sqrt{n}$ (see \cite{simultaneous} for related discussion). Lemma \ref{lemma:Roff} would, then, require $\la< 1-\sqrt{\frac{m}{n}}$ to be applicable. As an example, considering the setup in Figure \ref{figCont2}, Lemma \ref{lemma:Roff} would yield $\la<1-\sqrt{\frac{1}{2}}\approx 0.292$ whereas $\lac\approx 0.76$. While Lemma \ref{lemma:Roff} supports our claims on $\Rcf$, it does not say much about the exact location of the transition point, at which the $\ell_2$-LASSO reduces to \eqref{NoiselessCS2}. We claim this point is $\la=\lac$.

\subsection{$\Rci$}
In this region $m\leq \Dlf$. In this region, we expect \emph{no} noise robustness, namely, $\frac{\|\x^*_{\ell_2}-\x_0\|^2}{\sigma^2}\rightarrow \infty$ as $\sigma\rightarrow 0$. In this work, we show this under a stricter assumption, namely, $m< \Delxf$. See Theorem \ref{not robust} and Section \ref{notenoughm} for more details.
%
Our proof method relies on results of \cite{McCoy} rather than Gordon's Lemma. On the other hand, we believe, application of Gordon's Lemma can give the desired result for the wider regime $m<\Dlf$. We leave this as a future work.

\section{Constrained-LASSO Analysis for Arbitrary $\sigma$}\label{any sigma}



In Section \ref{sec:ConstrainedLASSO} we proved the first part of Theorem \ref{thm:CLASSO}, which refers to the case where $\sigma\rightarrow 0$. Here, we complete the proof of the Theorem by showing \eqref{eq:cp2}, which is to say that the worst case NSE of the C-LASSO problem is achieved as $\sigma\rightarrow 0$. In other words,  we prove that our exact bounds for the small $\sigma$ regime upper bound the squared error, for arbitrary values of the noise variance. The analysis relies, again, on the proper application of Gordon's Lemma.



\subsection{Notation}

We begin with describing some notation used throughout this section. 
First, we denote $$\dtR:=\dt(\h,\text{cone}(\paf)).$$ 
 Also, recall the definitions of the ``perturbation" functions $f_p(\cdot)$ and $\hat{f}_p(\cdot)$ in \eqref{eq:perdef1} and \eqref{eq:perdef2}.
Finally, we will be making use of the following functions:
\begin{align}
&\Fco(\w;\A,\vb):=\|\A\w-\sigma\vb\|,\nn\\
&\Lco(\w;\g,\h):=\sqrt{\|\w\|^2+\sigma^2}\|\g\|-\h^T\w\label{OriginalLow},\\
&\fone(\alpha;a,b):=\sqrt{\alpha^2+\sigma^2}a-\alpha b.\label{scalarfunc}
\end{align}
Using this notation, and denoting the optimal cost of the (original) C-LASSO (see \eqref{cmodel}) as $\Fco_c^*(\A,\vb)$, we write
\beq
\Fco_c^*(\A,\vb)=\min_{f_p(\w)\leq 0}\Fco(\w;\A,\vb) = \Fco(\w^*_c;\A,\vb).\label{eq:erE}
\eeq

\subsection{Lower Key Optimization}
As a first step in our proof, we apply Gordon's Lemma to the original C-LASSO problem in \eqref{eq:erE}. Recall, that  application of Corollary \ref{cor:low} to the approximated problem resulted in the following key optimization:
\beq\label{eq:appas}
\Lc(\g,\h)=\min_{\hat{f}_p(\w)\leq 0}\left\{\sqrt{\|\w\|^2+\sigma^2}\|\g\|-\h^T\w\right\}=\min_{\hat{f}_p(\w)\leq 0} ~ \Lco(\w;\g,\h).
\eeq
Denote the minimizer of \eqref{eq:appas}, as $\wh_{low}$. Using Corollary \ref{cor:low}, the lower key optimization corresponding to the original C-LASSO has the following form:
\beq
\Lco^*(\g,\h)=\min_{f_p(\w)\leq 0}\left\{\sqrt{\|\w\|^2+\sigma^2}\|\g\|-\h^T\w\right\} = \min_{f_p(\w)\leq 0} ~ \Lco(\w;\g,\h)\label{OriginalLow}.
\eeq
Recall that in both \eqref{eq:appas} and \eqref{OriginalLow}, $\g\in\R^m$ and $\h\in\R^n$. In Lemma \ref{lemma:lowKey} in Section \ref{sec:KeyIdeas} we solved explicitly for the optimizer $\wh_{low}$ of problem \eqref{eq:appas}. In a similar nature, Lemma \ref{classolow} below identifies a critical property of the optimizer $\w^*_{low}$ of the key optimization \eqref{OriginalLow}: $\|\w^*\|$ is no larger than $\|\wh_{low}\|$.
%
\begin{lem} \label{classolow}Let $\g\in\R^m,\h\in\R^n$ be given and $\|\g\|>\dtR$. Denote the minimizer of the problem \eqref{OriginalLow} as $\wo_{low}=\wo_{low}(\g,\h)$. Then,
\beq
\frac{\|\wo_{low}\|^2}{\sigma^2}\leq \frac{\dtR^2}{\|\g\|^2-\dtR^2} = \frac{\|\wh_{low}\|^2}{\sigma^2}.
\label{largesiglow}
\eeq
\end{lem}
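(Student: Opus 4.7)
The overall strategy is to reduce both problems to scalar minimizations over the norm $\alpha = \|\w\|$ and then to compare the resulting one-dimensional objectives. I would first record two structural facts about the feasible set $F := \{\w : f_p(\w)\leq 0\}$ of \eqref{OriginalLow}: (i) by convexity of $f$, $F$ is star-shaped with respect to the origin (if $\w\in F$ then $t\w\in F$ for all $t\in[0,1]$, since $f(\x_0+t\w)\leq(1-t)f(\x_0)+tf(\x_0+\w)\leq f(\x_0)$); and (ii) $F\subseteq \Tc_f(\x_0) = (\text{cone}(\paf))^\circ$, where the last equality is Lemma \ref{tsg}. Note that $\Tc_f(\x_0)$ is precisely the feasible set of the approximated lower key optimization \eqref{eq:appas}, for which Lemma \ref{lemma:lowKey} already gives $\|\wh_{low}\|=\hat\alpha := \sigma\dtR/\sqrt{\|\g\|^2-\dtR^2}$.

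Next, for each $\alpha\geq 0$ I would define $\beta(\alpha) := \max_{\w\in F,\,\|\w\|=\alpha}\h^T\w$ and scalarize \eqref{OriginalLow} as
\begin{align*}
\Lco^*(\g,\h) \;=\; \min_{\alpha\geq 0}\;\bar g(\alpha), \qquad \bar g(\alpha) := \sqrt{\alpha^2+\sigma^2}\|\g\| - \alpha\bar\beta(\alpha),
\end{align*}
where $\bar\beta(\alpha):=\beta(\alpha)/\alpha$, and $\|\wo_{low}\|$ is a minimizer $\alpha^*$ of $\bar g$. Two properties of $\bar\beta$ then drive the proof. First, $\bar\beta$ is non-increasing in $\alpha$: by star-shapedness of $F$, whenever $\alpha_2\u\in F$ with $\|\u\|=1$ we also have $\alpha_1\u\in F$ for $\alpha_1\leq\alpha_2$, so the set of admissible unit directions shrinks with $\alpha$, and therefore so does $\bar\beta$. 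Second, $\bar\beta(\alpha)\leq \dtR$: by (ii), the max defining $\bar\beta$ is over a subset of $\Tc_f(\x_0)\cap\Sc^{n-1}$, and for any closed convex cone $K$ one has $\max_{\u\in K\cap\Sc^{n-1}}\h^T\u = \|\bu(\h,K)\|$, which for $K=\Tc_f(\x_0)=(\text{cone}(\paf))^\circ$ equals $\dtR$ by Moreau's decomposition (Fact \ref{more}).

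Finally I would show $\bar g$ is strictly increasing on $[\hat\alpha,\infty)$, so that $\alpha^*\leq\hat\alpha$. For $\alpha_2>\alpha_1\geq\hat\alpha$, write
\begin{align*}
\bar g(\alpha_2)-\bar g(\alpha_1) = \bigl[\sqrt{\alpha_2^2+\sigma^2}-\sqrt{\alpha_1^2+\sigma^2}\bigr]\|\g\| - (\alpha_2-\alpha_1)\bar\beta(\alpha_2) - \alpha_1\bigl[\bar\beta(\alpha_2)-\bar\beta(\alpha_1)\bigr].
\end{align*}
The last bracket is $\leq 0$ by monotonicity of $\bar\beta$, and $\bar\beta(\alpha_2)\leq \dtR$, so the right-hand side is bounded below by $\fone(\alpha_2;\|\g\|,\dtR)-\fone(\alpha_1;\|\g\|,\dtR)$, which is strictly positive because $\fone'(\alpha)=\alpha\|\g\|/\sqrt{\alpha^2+\sigma^2}-\dtR$ vanishes precisely at $\hat\alpha$ and is strictly positive for $\alpha>\hat\alpha$. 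Thus $\alpha^*\leq \hat\alpha$, which is exactly \eqref{largesiglow}. The only nontrivial step is the monotonicity of $\bar\beta$ (and, implicitly, the fact that we never need $\bar\beta$ to be continuous or differentiable, only monotone); everything else is elementary scalar calculus building on Lemma \ref{lemma:lowKey} and the Moreau decomposition.
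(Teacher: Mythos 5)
Your proof is correct, and it takes a genuinely different route from the paper's. The paper first establishes that a finite minimizer of \eqref{OriginalLow} exists, then invokes the KKT conditions at $\wo_{low}$, characterizes the normal cone of the descent set at that point via Lemma \ref{tan cone} (splitting into the cases $f_p(\wo_{low})<0$ and $f_p(\wo_{low})=0$), and finally proves the key inequality $\|\h-\s^*\|\leq\dtR$ by combining $\li\s^*,\wo_{low}\ri\geq 0$ (convexity) with $\li\wo_{low},\s_0\ri\leq 0$ (membership in $\Tc_f(\x_0)$). You instead scalarize over $\alpha=\|\w\|$, use star-shapedness of the sublevel set $F$ and the containment $F\subseteq\Tc_f(\x_0)$ to show that the normalized gain $\bar\beta(\alpha)$ is non-increasing and bounded above by $\dtR$, and then reduce everything to monotonicity of $\fone(\cdot;\|\g\|,\dtR)$ on $[\hat\alpha,\infty)$ with $\hat\alpha=\sigma\dtR/\sqrt{\|\g\|^2-\dtR^2}$. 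This avoids subdifferential calculus at the minimizer and the case split entirely, and the bound $\bar\beta(\alpha)\leq\dtR$ also yields coercivity, $\bar g(\alpha)\geq \fone(\alpha;\|\g\|,\dtR)\geq(\|\g\|-\dtR)\alpha$, so existence of the minimizer (which the paper proves separately and your write-up tacitly presupposes) falls out of the same computation; you should state this explicitly since the downstream lemmas use it. What the paper's route buys in exchange is an explicit closed-form expression for $\wo_{low}$ in terms of $\s^*$, not merely a bound on its norm; that extra information is not needed for this lemma. Two cosmetic points: your identity $\max_{\ub\in K\cap\Sc^{n-1}}\h^T\ub=\|\bu(\h,K)\|$ can fail when $\bu(\h,K)=0$, but only the inequality $\leq$ is used, which always holds; and $\beta(\alpha)$ is a maximum over a possibly empty set for large $\alpha$ when $F$ is bounded, but star-shapedness makes the set of feasible radii an interval containing $0$, so the comparison $\bar g(\alpha_2)>\bar g(\hat\alpha)$ for feasible $\alpha_2>\hat\alpha$ is always meaningful.
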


For the proof of Lemma \ref{classolow}, we require the following result on the tangent cone of the feasible set of \eqref{OriginalLow}.
\begin{lem}\label{tan cone} Let $f(\cdot):\R^n\rightarrow \R$ be a convex function and $\x_0\in\R^n$ that is not a minimizer of $f(\cdot)$. Consider the set $\Cc=\{\w\big|f(\x_0+\w)\leq f(\x_0)\}$. Then, for all $\w^*\in\Cc$,
\beq
\Tc_\Cc(\w^*)\pol=\begin{cases}
\text{cone}(\pa f(\x_0 + \w^*)) &\text{if}~f(\x_0 + \w^*)=f(\x_0), \\
\{0\} &\text{if}~f(\x_0 + \w^*)<f(\x_0).\end{cases}
\eeq
\end{lem}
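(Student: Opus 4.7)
The plan is to reduce Lemma \ref{tan cone} to the already-stated Lemma \ref{tsg} via an elementary translation argument, handling the two cases (active vs. strict inequality at $\w^*$) separately.

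First, observe that if we let $\y^{*} := \x_0 + \w^{*}$ and $\Cc' := \{\y \in \R^n : f(\y) \leq f(\x_0)\}$, then $\Cc = \Cc' - \x_0$ is merely a translate of $\Cc'$. Tangent cones are invariant under translation of the underlying set together with the base point, so $\Tc_\Cc(\w^{*}) = \Tc_{\Cc'}(\y^{*})$, and consequently $\Tc_\Cc(\w^{*})^\circ = \Tc_{\Cc'}(\y^{*})^\circ$. This reduction lets us work directly with sublevel sets of $f$ centered at $\y^{*}$.

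Next, handle the easy case $f(\y^{*}) < f(\x_0)$. By continuity of $f$, there is an open neighborhood $U$ of $\y^{*}$ on which $f(\y) < f(\x_0)$, hence $U \subset \Cc'$ and $\y^{*}$ is interior to $\Cc'$. Therefore the cone of feasible directions $F_{\Cc'}(\y^{*})$ contains a ball around the origin, its conic hull is all of $\R^n$, and so $\Tc_{\Cc'}(\y^{*}) = \R^n$ with polar $\{0\}$, as claimed.

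For the active case $f(\y^{*}) = f(\x_0)$, I want to apply Lemma \ref{tsg} at the point $\y^{*}$ instead of $\x_0$. Observe that since $f(\y^{*}) = f(\x_0)$ and $\x_0$ is, by hypothesis, not a minimizer of $f$, the value $f(\y^{*})$ is not the global minimum either, so $\y^{*}$ is also not a minimizer. Also, the set $\Cc'$ is exactly the sublevel set at level $f(\y^{*})$, which matches the definition of the tangent cone of $f$ at $\y^{*}$ used in Lemma \ref{tsg}. Applying that lemma directly gives $\Tc_{\Cc'}(\y^{*})^\circ = \text{cone}(\partial f(\y^{*})) = \text{cone}(\partial f(\x_0 + \w^{*}))$, completing the proof after combining with the translation identity above.

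The argument involves no real obstacles; the only subtlety worth flagging is the translation step, which must be made explicit because Lemma \ref{tsg} is stated for the tangent cone of $f$ at a distinguished base point (the anchor $\x_0$ of the subdifferential), whereas here we need it at the shifted point $\x_0 + \w^{*}$. The assumption that $\x_0$ is not a minimizer propagates cleanly to $\x_0 + \w^{*}$ precisely because the two function values coincide in the active case.
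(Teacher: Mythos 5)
Your proposal is correct and follows essentially the same route as the paper: both arguments split into the interior case (where continuity gives $\Tc_\Cc(\w^*)=\R^n$ and hence polar $\{0\}$) and the active case, where the key observation is that the sublevel set at level $f(\x_0)$ coincides with the sublevel set at level $f(\x_0+\w^*)$, so Lemma \ref{tsg} applies at $\x_0+\w^*$, which is not a minimizer since its value equals $f(\x_0)$. Your explicit translation step is just a rephrasing of the paper's identity $F_\Cc(\w^*)=F_{\Cc'}(\x_0+\w^*)$ between feasible direction sets.
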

\begin{proof} We need to characterize the feasible set $F_\Cc(\w^*)$. 

Suppose $f(\x_0+\w^*)<f(\x_0)$. Since $f(\cdot)$ is continuous, for all directions $\ub\in\mathbb{R}^n$, there exists sufficiently small $\eps>0$ such that $f(\x_0+\w^*+\eps\ub)\in\Cc$. Hence, $\Tc_\Cc(\w^*)=\text{cone}(\text{Cl}(F_\Cc(\w^*)))=\R^n \implies (\Tc_\Cc(\w^*))\pol = \{0\}$ in this case.

Now, assume $f(\x_0+\w^*)=f(\x_0)$. Then, $F_\Cc(\w^*)=\{\ub\big|f(\x_0+\w^*+\ub)\leq f(\x_0)=f(\x_0+\w^*)\} = F_{\Cc'}(\x_0+\w^*)$, where $F_{\Cc'}(\x_0+\w^*)$ denotes the set of feasible directions in  $\Cc' := \{ \x | f(\x)\leq f(\x_0+\w^*) \}$ at $\x_0+\w^*$. Thus, $\Tc_\Cc(\w^*)=\Tc_{\Cc'}(\x_0+\w^*)= \text{cone}(\pa f(\x_0+\w^*))\pol\nn
$, where the last equality follows from Lemma \ref{tsg}, and the fact that $\x_0+\w^*$ is not a minimizer of $f(\cdot)$ as $f(\x_0)=f(\x_0+\w^*)$.
\end{proof}

\begin{proof} [Proof of Lemma \ref{classolow}]

We first show that, $\wo_{low}$ exists and is finite.
From the convexity of $f(\cdot)$, $\hat{f}_p(\w)\leq f_p(\w)$, thus, every feasible solution of \eqref{OriginalLow} is also feasible for \eqref{eq:appas}. This implies that $\Lco^*(\g,\h)\geq \Lc(\g,\h)$. Also, from Lemma \ref{lemma:lowKey}, $\Lc(\g,\h)=\sigma\sqrt{\|\g\|^2-\dtR^2}$. Combining,
\beq
\Lco^*(\g,\h) \geq \sigma\sqrt{\|\g\|^2-\dtR^2}>0.\label{eq:ara}
\eeq
Using the scalarization result of Lemma \ref{lemma:lowKey} with $\Cc=\text{cone}(\paf)$, for any $\alpha\geq 0$,
\beq
\min_{\substack{\hat{f}_p(\w)\leq 0 \\ \|\w\|=\alpha}}\Lco(\w;\g,\h)=L(\alpha,\|\g\|,\dtR).\nn
\eeq
Hence, using Lemma \ref{lemma:pert1} in the appendix shows that, when $\|\g\|>\dtR$,
\beq
\lim_{C\rightarrow\infty}\min_{\substack{
   \|\w\|\geq C \\
   f_p(\w)\leq 0
  }}
  \Lco(\w;\g,\h) =\lim_{C\rightarrow\infty}\min_{\alpha\geq C} L(\alpha,\|\g\|,\dtR)=\infty.\nn
%
\eeq
Combining this with \eqref{eq:ara} shows that $\Lco^*(\g,\h)$ is strictly positive, and that $\|\wo_{low}\|$ and $\wo_{low}$ is finite.

The minimizer $\wo_{low}$ satisfies the KKT optimality conditions of \eqref{OriginalLow}\cite{Bertse}:
\beq\nn
\frac{\wo_{low}}{\sqrt{\|\wo_{low}\|^2+\sigma^2}}\|\g\|=\h-\s^*,
\eeq
or, equivalently,
\beq
\wo_{low}=\sigma\frac{\h-\s^*}{\sqrt{\|\g\|^2-\|\h-\s^*\|^2}},\label{eq:from}
\eeq
where, from Lemma \ref{tan cone},
\begin{align}
\s^*\in\begin{cases}
 \text{cone}\left(\partial f(\x_0 + \wo_{low})\right) & \text{ if } f_p(\w_{low})=0, \\
\{0\} &\text{ if } f_p(\w_{low})<0.
\end{cases}\label{eq:cases}
\end{align}
%

First, consider the scenario in \eqref{eq:cases} where $f_p(\wo_{low})<0$ and $\s^*=0$.  Then, from \eqref{eq:from}  $\h=c_h\wo_{low}$ for some constant $c_h>0$. But, from feasibility constraints, $\wo_{low}\in \Tc_{f}(\x_0)$, hence, $\h\in \Tc_f(\x_0)\implies \h=\dtR$ which implies equality in \eqref{largesiglow}.

Otherwise, $f(\x_0+\wo_{low})=f(\x_0)$ and $\s^*\in\text{cone}\left(\partial f(\x_0 + \wo_{low})\right)$. For this case, we argue that $\|\h-\s^*\|\leq \|\dtR\|$. 
To begin with, there exists scalar $\theta>0$ such that $\theta\s^*\in\partial f(\x_0 + \wo_{low})$. Convexity of $f(\cdot)$, then, implies that,
\begin{align}\label{eq:11}
f(\x_0+\wo_{low}) = f(\x_0) \geq f(\x_0+\wo_{low}) - \li\theta\s^*,\wo_{low}\ri \implies \li\s^*,\wo_{low}\ri \geq 0.
\end{align}
Furthermore, $\wo_{low}\in \Tc_f(\x_0)$ and $\s_0 := \bu(\h,\text{cone}(\paf))$, thus
\beq\label{eq:12}
\li\wo_{low},\s_0 \ri\leq 0.
\eeq
Combine \eqref{eq:11} and \eqref{eq:12}, and further use \eqref{eq:from} to conclude that
$$
\li\wo_{low},\s^*-\s_0\ri\geq 0\implies \li\h-\s^*,\s^*-\s_0\ri\geq 0.
$$
We may then write,
\beq
(\dtR)^2=\| (\h-\s^*)+(\s^*-\s_0) \|^2 \geq \|\h-\s^*\|^2,
\eeq
%
and combine with the fact that the function $f(x,y) = \frac{x}{\sqrt{y^2 - x^2}}, x\geq 0,y> 0$ is nondecreasing in the regime $x<y$, to complete the proof.
%
\end{proof}

%
%



\subsection{Upper Key Optimization}
In this section we 
find a high probability upper bound for $\Fco_c^*(\A,\vb)$. Using Corollary \ref{cor:up} of Section \ref{sec:thisUp}, application of Gordon's Lemma to the dual of the C-LASSO results in the following key optimization:
\beq
\Uco^*(\g,\h)=\max_{\|\mu\|\leq 1}\left\{\min_{\substack{ f_p(\w)\leq 0 \\ \| \w \|\leq C_{up} }} \sqrt{\|\w\|^2+\sigma^2}\mu^T\g-\|\mu\|\h^T\w\right\}, \label{eq:upp1}
\eeq
where $$C_{up}=2\sqrt{\frac{\Delxf}{m-\Delxf}}.$$
%
%
%
Normalizing the inner terms in \eqref{eq:upp1} by $\|\mu\|$ for $\mu\neq 0$, this can be equivalently be written as,
\begin{align}
\Uco^*(\g,\h)&=\max_{ \|\mu\|\leq 1}\left\{ \|\mu\|\min_{\substack{ f_p(\w)\leq 0 \\ \| \w \|\leq C_{up} }}\left\{ \sqrt{\|\w\|^2+\sigma^2}\|\g\|-\h^T\w\right\}\right\}\nn\\
&=\max\left\{0,~\min_{\substack{ f_p(\w)\leq 0 \\ \| \w \|\leq C_{up} }} \Lco(\w;\g,\h)\right\}\nn\\
&=\max\left\{0,~\Lco^*_{up}(\g,\h)\right\}\label{lcup},
\end{align}
where we additionally defined 
\beq\label{eq:addD}
\Lco^*_{up}(\g,\h):=\min_{\substack{f_p(\w)\leq 0 \\ \| \w \|\leq C_{up}}} \Lco(\w;\g,\h).
\eeq
 Observe the similarity of the upper key optimization \eqref{lcup} to the lower key optimization \eqref{OriginalLow}.
The next lemma proves that $\Lco^*_{up}(\g,\h)$ and $\Uco^*(\g,\h)$ are Lipschitz functions. 
\begin{lem} [Lipschitzness of $\Uco^*(\g,\h)$] \label{lipschitzlem}$\Lco^*_{up}(\g,\h)$ and,  consequently, $\Uco^*(\g,\h)$ are Lipschitz with Lipschitz constants at most $2\sigma\sqrt{C_{up}^2+1}$.
\end{lem}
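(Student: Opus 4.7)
The plan is to exploit the variational representation of $\Lco^*_{up}$ as a pointwise infimum over a fixed feasible set, and to invoke the elementary fact that the pointwise infimum of a family of $L$-Lipschitz functions is itself $L$-Lipschitz (provided the infimum is finite). So the task reduces to bounding the Lipschitz constant of each ``inner'' function uniformly over the feasible $\w$.

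Concretely, for each $\w$ with $f_p(\w)\leq 0$ and $\|\w\|\leq C_{up}$ set
\[
\phi_\w(\g,\h) := \sqrt{\|\w\|^2+\sigma^2}\,\|\g\| - \h^T\w,
\]
so that $\Lco^*_{up}(\g,\h)=\inf_\w \phi_\w(\g,\h)$ by definition \eqref{eq:addD}. Since $\g\mapsto\|\g\|$ is $1$-Lipschitz and $\h\mapsto \h^T\w$ is $\|\w\|$-Lipschitz, the triangle inequality yields, for any $(\g_1,\h_1),(\g_2,\h_2)$,
\[
|\phi_\w(\g_1,\h_1)-\phi_\w(\g_2,\h_2)|\leq \sqrt{\|\w\|^2+\sigma^2}\,\|\g_1-\g_2\|+\|\w\|\,\|\h_1-\h_2\|.
\]
Using $\|\w\|\leq C_{up}$ and Cauchy-Schwarz on $\mathbb R^2$, the right-hand side is bounded by a constant multiple of $\|(\g_1,\h_1)-(\g_2,\h_2)\|$, with the multiplicative constant of the same order as $2\sigma\sqrt{C_{up}^2+1}$ claimed in the statement (any loose bound of the form $\sqrt{2C_{up}^2+\sigma^2}$ or $\sqrt{\|\w\|^2+\sigma^2}+\|\w\|$ will do, and these are all dominated by $2\sigma\sqrt{C_{up}^2+1}$ up to the implicit rescaling that will only matter through concentration inequalities later).

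Since the bound on the Lipschitz constant of $\phi_\w$ is uniform in $\w$ over the feasible set, and since the feasible set is nonempty (it contains $\w=0$) so that $\Lco^*_{up}(\g,\h)$ is finite, the standard fact $|\inf_\w\phi_\w(\g_1,\h_1)-\inf_\w\phi_\w(\g_2,\h_2)|\leq \sup_\w|\phi_\w(\g_1,\h_1)-\phi_\w(\g_2,\h_2)|$ transfers the Lipschitz constant to $\Lco^*_{up}$. Finally, from \eqref{lcup}, $\Uco^*(\g,\h)=\max\{0,\Lco^*_{up}(\g,\h)\}$, and composition with the $1$-Lipschitz map $t\mapsto \max\{0,t\}$ preserves the Lipschitz constant, yielding the stated bound for $\Uco^*$.

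This argument is essentially routine; there is no real obstacle. The only point to double-check is that the infimum defining $\Lco^*_{up}$ is indeed finite for all $(\g,\h)$ (ensuring the ``infimum of Lipschitz'' step is not vacuous), which is immediate from $\w=0$ being feasible, giving $\Lco^*_{up}(\g,\h)\leq \sigma\|\g\|<\infty$, together with the lower bound $\Lco^*_{up}(\g,\h)\geq -C_{up}\|\h\|$ coming from the compactness of $\{\|\w\|\leq C_{up}\}$.
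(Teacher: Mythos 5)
Your proof is correct and follows essentially the same route as the paper's: the paper's inline comparison $\Lco^*_{up}(\g_1,\h_1)-\Lco^*_{up}(\g_2,\h_2)\leq \Lco(\w_2;\g_1,\h_1)-\Lco(\w_2;\g_2,\h_2)$ at the optimizer $\w_2$ is precisely your ``infimum of uniformly Lipschitz functions'' fact, and the passage to $\Uco^*=\max\{0,\Lco^*_{up}\}$ is handled identically. The one point to be careful about is the radius of the feasible ball: to obtain the stated constant $2\sigma\sqrt{C_{up}^2+1}$ one needs $\|\w\|\leq\sigma C_{up}$ (which is the bound the paper's own proof invokes, despite the definition reading $\|\w\|\leq C_{up}$), since $\sqrt{2C_{up}^2+\sigma^2}$ is \emph{not} dominated by $2\sigma\sqrt{C_{up}^2+1}$ when $\sigma$ is small --- so your parenthetical hedge about the ``implicit rescaling'' is doing real work and should be made explicit.
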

\begin{proof}
First, we prove that $\Lco^*_{up}(\g,\h)$ is Lipschitz. Given pairs $(\g_1,\h_1),(\g_2,\h_2)$, denote $\w_1$ and $\w_2$ the corresponding optimizers in problem \eqref{eq:addD}. W.l.o.g., assume that $\Lco^*_{up}(\g_1,\h_1)\geq\Lco^*_{up}(\g_2,\h_2)$. Then, 
\begin{align}
\Lco^*_{up}(\g_1,\h_1)-\Lco^*_{up}(\g_2,\h_2)&=\Lco(\w_1;\g_1,\h_2)-\Lco(\w_2;\g_2,\h_2)\nn\\&\leq \Lco(\w_2;\g_1,\h_1)-\Lco(\w_2;\g_2,\h_2)\nn\\
&=\sqrt{\|\w_2\|^2+\sigma^2}(\|\g_1\|-\|\g_2\|)-(\h_1-\h_2)^T\w_2\nn\\
&\leq \sqrt{\sigma^2C_{up}^2+\sigma^2}\|\g_1-\g_2\|+\|\h_1-\h_2\|\sigma C_{up}\label{use cup},
\end{align}
where, we have used the fact that $\|\w_2\|\leq \sigma C_{up}$. From \eqref{use cup}, it follows that $\Lco^*_{up}(\g,\h)$ is indeed Lipschitz and
\begin{align*}
| \Lco^*_{up}(\g_1,\h_1)-\Lco^*_{up}(\g_2,\h_2) | \leq 2\sigma\sqrt{C_{up}^2+1} \sqrt{\|\g_1-\g_2\|^2+\|\h_1-\h_2\|^2}.
\end{align*}
%
%
To prove that $\Uco^*(\g,\h)$ is Lipschitz with the same constant, assume w.l.o.g that $\Uco^*(\g_1,\h_1)\geq \Uco^*(\g_2,\h_2)$. Then, from \eqref{lcup}, 
\beq\nn
|\Uco^*(\g_1,\h_1)-\Uco^*(\g_2,\h_2)|\leq |\Lco^*_{up}(\g_1,\h_1)-\Lco^*_{up}(\g_2,\h_2)|.
\eeq
\end{proof}

\subsection{Matching Lower and Upper key Optimizations}
Comparing \eqref{OriginalLow} to \eqref{lcup}, we have already noted that the lower and upper key optimizations have similar forms. The next lemma proves that their optimal costs match, in the sense that they concentrate with high probability over the same quantity, namely $\E[\Lco^*_{up}(\g,\h)]$.

\begin{lem}\label{relateuplow} Let $\g\sim\Nn(0,\Iden_m),\h\sim\Nn(0,\Iden_n)$ and independently generated.  Assume $(1-\eps_0)m\geq \Delxf\geq \eps_0m$ for some constant $\eps_0>0$ and $m$ sufficiently large. For any $\eps>0$, there exists $c>0$ such that, with probability $1-\exp(-cm)$, we have,
\begin{enumerate}
\item $|\Uco^*(\g,\h)-\E[\Lco^*_{up}(\g,\h)]|\leq\eps\sigma\sqrt{m}$.
\item $|\Lco^*(\g,\h)-\E[\Lco^*_{up}(\g,\h)]|\leq\eps\sigma\sqrt{m}$.
\end{enumerate}
\end{lem}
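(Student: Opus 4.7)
The plan is to prove Lemma \ref{relateuplow} by first establishing concentration of $\Lco^*_{up}$ around its mean via Gaussian Lipschitz concentration, and then showing that, with high probability, both $\Lco^*$ and $\Uco^*$ coincide with $\Lco^*_{up}$.

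First, I would establish concentration of $\Lco^*_{up}(\g,\h)$ about its expectation. From Lemma \ref{lipschitzlem}, $\Lco^*_{up}$ is a Lipschitz function of $(\g,\h)$ with constant $L = 2\sigma\sqrt{C_{up}^2+1}$. Under the hypothesis $(1-\eps_0)m\geq \Delxf\geq \eps_0 m$, the ratio $\Delxf/(m-\Delxf)$ is bounded by $(1-\eps_0)/\eps_0$, so $C_{up}$ is bounded by a constant depending only on $\eps_0$, and hence $L = O(\sigma)$. Applying the standard Gaussian concentration inequality for Lipschitz functions to $\Lco^*_{up}(\g,\h)$, viewed as a function of the standard Gaussian vector $(\g,\h)\in\R^{m+n}$, yields
\begin{equation*}
\Pro\bigl(|\Lco^*_{up}(\g,\h) - \E[\Lco^*_{up}(\g,\h)]| \geq \eps\sigma\sqrt{m}\bigr) \leq 2\exp(-c\eps^2 m),
\end{equation*}
for some constant $c>0$ depending on $\eps_0$.

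Next, I would show $\Lco^* = \Lco^*_{up}$ with high probability, which yields the second claim. The problem $\Lco^*_{up}$ differs from $\Lco^*$ only by the additional constraint $\|\w\|\leq C_{up}$, so it suffices to show that the unconstrained minimizer $\wo_{low}$ of $\Lco^*$ already satisfies this bound. Lemma \ref{classolow} gives $\|\wo_{low}\|^2 \leq \sigma^2 \dtR^2/(\|\g\|^2-\dtR^2)$ whenever $\|\g\|>\dtR$. Standard Gaussian Lipschitz concentration for $\|\g\|^2$ and for $\dt(\h,\cdot)$ (the latter being 1-Lipschitz in $\h$) implies that, for any small $\eps'>0$, with probability $1-\exp(-\order{m})$ we have $\|\g\|^2 \geq (1-\eps')m$ and $\dtR^2 \leq \Delxf + \eps' m$. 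Substituting, $\|\wo_{low}\|$ is bounded by $\sigma\sqrt{(\Delxf+\eps'm)/((1-\eps')m - \Delxf - \eps'm)}$, which for $\eps'$ sufficiently small (depending on $\eps_0$) is strictly less than the threshold $C_{up}$, whose definition carries a factor of $2$ precisely to provide this headroom. Hence $\wo_{low}$ is feasible for $\Lco^*_{up}$, forcing $\Lco^* = \Lco^*_{up}$ with the claimed probability.

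Finally, I would relate $\Uco^*$ to $\Lco^*_{up}$. Since $\Uco^* = \max\{0,\Lco^*_{up}\}$, it suffices to show $\Lco^*_{up}\geq 0$ with high probability. By convexity of $f$, $\hat f_p \leq f_p$, so the feasible set of $\Lco^*_{up}$ is contained in that of $\Lc$, and hence $\Lco^*_{up} \geq \Lco^* \geq \Lc(\g,\h)$. By Lemma \ref{lemma:lowKey} and the same concentration of $\|\g\|^2$ and $\dtR^2$, $\Lc(\g,\h) \geq \sigma\sqrt{(1-\eps')m - \Delxf - \eps'm} \geq \sigma\sqrt{\eps_0 m/2}$ with probability $1-\exp(-\order{m})$, which is strictly positive. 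Therefore $\Uco^* = \Lco^*_{up}$ with high probability, and combining with Step~1 gives the first claim. The main obstacle is Step~2: controlling the norm of the unconstrained minimizer. It is handled cleanly thanks to the explicit bound provided by Lemma \ref{classolow} and the factor $2$ built into $C_{up}$; without either of these, matching $\Lco^*$ to $\Lco^*_{up}$ up to additive error $\eps\sigma\sqrt m$ (as opposed to a multiplicative error) would not be immediate.
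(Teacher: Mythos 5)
Your proposal is correct and follows essentially the same route as the paper: Lipschitz concentration of $\Lco^*_{up}$ via Lemma \ref{lipschitzlem}, equality $\Uco^*=\Lco^*_{up}$ from nonnegativity of $\Lco^*_{up}$ (via the chain $\Lco^*_{up}\geq\Lco^*\geq\Lc$ and Lemma \ref{lemma:lowKey}), and equality $\Lco^*=\Lco^*_{up}$ from the norm bound on $\wo_{low}$ in Lemma \ref{classolow} together with concentration of $\|\g\|$ and $\dtR$. The only cosmetic difference is that the paper packages the "concentrate the Lipschitz proxy, then transfer to the a.s.-equal function" step into a small helper lemma (Lemma \ref{usefullem}), which you carry out inline.
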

In Lemma \ref{lipschitzlem} we proved that $\Lco^*_{up}(\g,\h)$ is Lipschitz. Gaussian concentration of Lipschitz functions (see Lemma \ref{fact:lipIneq}) implies, then, that $\Lco^*_{up}(\g,\h)$ concentrates with high probability around its mean $\E[\Lco^*_{up}(\g,\h)]$. According to Lemma \ref{relateuplow}, under certain conditions implied by its assumptions, $\Uco^*(\g,\h)$ and $\Lco^*(\g,\h)$ also concentrate around the same quantity $\E[\Lco^*_{up}(\g,\h)]$. The way to prove this fact is by showing that when these conditions hold, $\Uco^*(\g,\h)$ and $\Lco^*(\g,\h)$ are equal to $\Lco^*_{up}(\g,\h)$  with high probability. Once we have shown that, we require the following result to complete the proof.
%

\begin{lem}\label{usefullem} Let $f_1,f_2:\R^n\rightarrow \R$ and $\h\sim \Nn(0,\Iden_n)$. Assume $f_1$ is $L$-Lipschitz and,
$
\Pro(f_1(\g)=f_2(\g))>1-\eps$.
Then, for all $t>0$,
$$\Pro\left(|f_2(\g)-\E[f_1(\g)]|\leq t\right)>1-\eps-2\exp\left(-\frac{t^2}{2L^2}\right).$$
\end{lem}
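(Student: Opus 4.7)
The plan is to combine two high-probability events via a simple union bound. The first event controls the deviation of $f_1(\g)$ from its mean using Gaussian concentration for Lipschitz functions; the second event is the hypothesis that $f_1$ and $f_2$ agree on $\g$. On the intersection of these events, the two facts force $f_2(\g)$ to lie close to $\E[f_1(\g)]$.

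More concretely, I would first invoke the standard Gaussian concentration inequality for Lipschitz functions (cited earlier in the paper as Lemma \ref{fact:lipIneq}): since $f_1$ is $L$-Lipschitz and $\g \sim \Nn(0,\Iden_n)$, for every $t>0$,
\[
\Pro\bigl(|f_1(\g) - \E[f_1(\g)]| \leq t\bigr) \;\geq\; 1 - 2\exp\!\left(-\tfrac{t^2}{2L^2}\right).
\]
Call this event $\mathcal{E}_1$. Next, let $\mathcal{E}_2 = \{f_1(\g) = f_2(\g)\}$, which by assumption has probability at least $1-\eps$. On $\mathcal{E}_1 \cap \mathcal{E}_2$, we may substitute $f_2(\g)$ for $f_1(\g)$ in the deviation bound to obtain $|f_2(\g) - \E[f_1(\g)]| \leq t$.

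The conclusion then follows immediately from the union bound: $\Pro(\mathcal{E}_1 \cap \mathcal{E}_2) \geq 1 - \Pro(\mathcal{E}_1^c) - \Pro(\mathcal{E}_2^c) \geq 1 - \eps - 2\exp(-t^2/(2L^2))$. There is no real obstacle here; the only subtlety worth noting is that $f_2$ itself is not assumed Lipschitz (and need not be), which is precisely why we must phrase the conclusion in terms of $\E[f_1(\g)]$ rather than $\E[f_2(\g)]$. This is exactly the form in which the lemma will be applied in Lemma \ref{relateuplow}, where $f_1$ plays the role of the Lipschitz quantity $\Lco^*_{up}(\g,\h)$ and $f_2$ plays the role of either $\Lco^*(\g,\h)$ or $\Uco^*(\g,\h)$, which coincide with $\Lco^*_{up}$ on a high-probability event but are not shown directly to be Lipschitz on the whole space.
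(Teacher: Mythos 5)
Your proof is correct and follows essentially the same approach as the paper: combine Gaussian concentration for the Lipschitz function $f_1$ (Fact \ref{fact:lipIneq}) with the high-probability event $\{f_1(\g)=f_2(\g)\}$. The only difference is bookkeeping: the paper conditions on the agreement event and multiplies probabilities, whereas you intersect the two events and apply a union bound; your route is in fact slightly cleaner, since the paper's step replacing $\Pro(|f_1(\g)-\E[f_1(\g)]|\leq t \mid f_1(\g)=f_2(\g))$ by the unconditional probability is not literally justified, while the inclusion $\mathcal{E}_1\cap\mathcal{E}_2\subset\{|f_2(\g)-\E[f_1(\g)]|\leq t\}$ gives the stated bound directly.
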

\begin{proof} From standard concentration result on Lipschitz functions (see Lemma \ref{fact:lipIneq}), for all $t>0$, $|f_1(\g)-\E[f_1(\g)]|<t$ with probability $1-2\exp(-\frac{t^2}{2L^2})$. Also, by assumption $f_2(\g)=f_1(\g)$ with probability $1-\eps$.
Combine those facts to complete the proof as follows,
\begin{align*}
\Pro\left(|f_2(\g)-\E[f_1(\g)]|\leq t\right) &\geq \Pro\left(|f_2(\g)-\E[f_1(\g)]|\leq t ~| ~ f_1(\g) = f_2(\g) \right)\Pro\left(f_1(\g) = f_2(\g) \right)\\
&=\Pro\left(|f_1(\g)-\E[f_1(\g)]|\leq t \right)\Pro\left(f_1(\g) = f_2(\g) \right)\\
&\geq \left(1-2\exp(-\frac{t^2}{2L^2})\right)(1-\eps).
\end{align*}
\end{proof}
Now, we complete the proof of Lemma \ref{relateuplow} using the result of Lemma \ref{usefullem}.
\begin{proof}[Proof of Lemma \ref{relateuplow}]
We prove the two statements of the lemma in the order they appear.

\noindent\emph{1.} First, we prove that under the assumptions of the lemma, $\Uco^*=\Lco^*_{up}$ w.h.p.. By \eqref{lcup}, it suffices to show that $\Lco^*_{up}\geq 0$ w.h.p..  Constraining the feasible set of a minimization problem cannot result in a decrease in its optimal cost, hence,
\beq\label{eq:ine}
\Lco^*_{up}(\g,\h)\geq \Lco^*(\g,\h)\geq \Lc(\g,\h).
\eeq
where recall $\Lc(\g,\h)$ is the lower key optimization of the approximated C-LASSO (see \eqref{eq:appas}). From Lemma \ref{lemma:lowKey}, since $m\geq\Delxf+\eps_0 m$, we have that 
$$
\Lc(\g,\h) \geq (1-\eps)\sigma\sqrt{m-\Delxf} \geq 0,
$$
with $1-\exp(-\order{m})$. Combine this with \eqref{eq:ine} to find that
$\Lco^*_{up}(\g,\h)\geq 0$ or $\Uco^*=\Lco^*_{up}$ with probability $1-\exp(-\order{m})$.
%
%
Furthermore, from Lemma \ref{lipschitzlem}, $\Lco^*_{up}(\g,\h)$ is  Lipschitz  with constant $L=2\sigma \sqrt{C_{up}^2+1}$.
We now apply Lemma \ref{usefullem} setting $f_1=\Lco^*_{up}(\g,\h)$, $f_2 = \Uco^*$ and $t=\eps\sigma\sqrt{m}$, to find that
%
\beq
|\Uco^*(\g,\h)-\E[\Lco_{up}^*(\g,\h)]|\leq \eps\sqrt{m},\nn
\eeq
with probability $1-\exp(-\order{m})$.
In writing the exponent in the probability as $\order{m}$, we made use of the fact that $C_{up}=2\sqrt{\frac{\Delxf}{m-\Delxf}}$ is bounded below by a constant, since $(1-\eps_0)m\geq \Delxf\geq \eps_0m$.
%

\noindent\emph{2.}
As in the first statement, we apply Lemma \ref{usefullem}, this time setting $f_1 = \Lco^*_{up}$, $f_2 = \Lco^*$ and $t=\eps\sigma\sqrt{m}$. The result is immediate after application of the lemma, but first we need to show that $\Lco^*(\g,\h)=\Lco^*_{up}(\g,\h)$ w.h.p.. We will show equivalently that the minimizer $\wo_{low}$  of \eqref{OriginalLow} satisfies $\wo_{low}\in S_{up}$.  From Lemma \ref{classolow}, $\|\wo_{low}\|\leq \frac{\dtR}{\|\g\|-\dtR}$. On the other hand, using standard concentration arguments (Lemma \ref{lemma:conc4}), with probability $1-\exp(-\order{m})$, $\frac{\dtR}{\|\g\|-\dtR}\leq \frac{2\Delxf}{m-\Delxf}=C_{up}$. Combining these completes the proof.
\end{proof}

\subsection{Deviation Bound}
Resembling the approach developed in Section \ref{sec:KeyIdeas}, we show that if we restrict the norm of the error vector $\|\w\|$ in \eqref{eq:erE} as follows
\beq\label{eq:Sdev}
\|\w\|\in S_{dev}:=\left\{\ell \big| \ell \geq(1+\eps_{dev})\sigma\sqrt{\frac{\Delxf}{m-\Delxf}}\right\},
\eeq
then, this results in a significant increase in the cost of C-LASSO. To lower bound the deviated cost, we apply Corollary \ref{cor:dev} of Section \ref{sec:thisDev} to the restricted original C-LASSO, which yields the following key optimization 
\beq
\Lco_{dev}^*(\g,\h)=\min_{\substack{ f_p(\w)\leq 0 \\ \|\w\|\in S_{dev}}}\Lco(\w;\g,\h)\label{optimmr1}.
\eeq

\begin{lem} \label{cordev}Let $\g\sim\Nn(0,\Iden_m),\h\sim\Nn(0,\Iden_n)$. Assume $(1-\eps_L)m>\Delxf>\eps_Lm$ and $m$ is sufficiently large. Then, there exists a constant $\delta_{dev}=\delta_{dev}(\eps_{dev})>0$ such that, with probability $1-\exp(-\order{m })$, we have,
\beq
\Lco^*_{dev}(\g,\h)-\E[\Lco^*_{up}(\g,\h)]\geq \sigma\delta_{dev}\sqrt{m}\label{probdev}.
\eeq
\end{lem}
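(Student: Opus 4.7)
The plan is to combine a high-probability lower bound on $\Lco^*_{dev}(\g,\h)$, obtained by reducing to the deviation key optimization of the approximated problem, with the Lipschitz concentration of $\Lco^*_{up}$ from Lemma \ref{lipschitzlem}. The key geometric fact is the inclusion $\{\w\in\R^n:f_p(\w)\leq 0\}\subseteq \Tc_f(\x_0)=(\text{cone}(\paf))\pol$, which follows from convexity of $f$: for any $\s\in\paf$ and any $\w$ with $f_p(\w)\leq 0$, one has $\s^T\w\leq f(\x_0+\w)-f(\x_0)\leq 0$.

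First I would handle the lower bound on $\Lco^*_{dev}$. Using the inclusion above,
\begin{align*}
\Lco^*_{dev}(\g,\h)\;\geq\;\min_{\w\in\Tc_f(\x_0),\,\|\w\|\in S_{dev}}\Lco(\w;\g,\h).
\end{align*}
The right-hand side is precisely the deviation key optimization of Lemma \ref{lemma:dev} applied with $\Cc=\text{cone}(\paf)$, $\DC=\Delxf$, $C_{dev}=\sigma\sqrt{\Delxf/(m-\Delxf)}$. Our one-sided $S_{dev}$ in \eqref{eq:Sdev} is a subset of the two-sided set used there, so part 3 of Lemma \ref{lemma:dev} transfers directly: there exist $t_1=t_1(\eps_{dev})>0$ and $c>0$ such that $\Lco^*_{dev}(\g,\h)\geq (1+t_1)\sigma\sqrt{m-\Delxf}$ with probability at least $1-\exp(-cm)$.

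Next I would control $\E[\Lco^*_{up}]$. By Lemma \ref{lipschitzlem}, $\Lco^*_{up}$ is $\order{\sigma}$-Lipschitz in $(\g,\h)$, and Gaussian concentration gives $|\Lco^*_{up}(\g,\h)-\E[\Lco^*_{up}(\g,\h)]|\leq \eps\sigma\sqrt{m}$ on an event of probability $1-\exp(-\order{m})$ for any fixed $\eps>0$. Combining this with the lower bound above, on the intersection of the relevant events
\begin{align*}
\Lco^*_{dev}(\g,\h)-\E[\Lco^*_{up}(\g,\h)]\;\geq\;(1+t_1)\sigma\sqrt{m-\Delxf}\;-\;\Lco^*_{up}(\g,\h)\;-\;\eps\sigma\sqrt{m}.
\end{align*}
The main obstacle—and the step I expect to require the most care—is to establish the matching high-probability upper bound $\Lco^*_{up}(\g,\h)\leq (1+t_1/2)\sigma\sqrt{m-\Delxf}$, uniformly in $\sigma$. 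Since $\Lco^*_{up}(\g,\h)=\Lco^*(\g,\h)$ with probability $1-\exp(-\order{m})$ (as shown in the proof of Lemma \ref{relateuplow}), it suffices to bound $\Lco^*$. For small $\sigma$ this follows from Proposition \ref{prop:F2T}, by perturbing the approximated minimizer $\wh_{low}$ to a nearby point feasible for $f_p\leq 0$. For arbitrary $\sigma$ one has to exploit strict descent directions $\vb$ of $f$ at $\x_0$, i.e.\ $\vb$ with $\hat{f}_p(\vb)<0$ (whose existence is guaranteed since $\x_0$ is not a minimizer of $f$, via Proposition \ref{max form}): for any such $\vb$, $f_p(t\vb)\leq 0$ for all sufficiently small $t>0$, and the scalar-optimal step along $\vb$ gives $\Lco(t^*\vb;\g,\h)=\sigma\sqrt{\|\g\|^2-(\h^T\vb/\|\vb\|)^2}$. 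Taking a sequence of such $\vb$ whose normalized directions approach $\bi(\h,\text{cone}(\paf))/\dtR$—and choosing $\|\vb\|$ large enough that $t^*$ remains in the feasible range $[0,T(\vb)]$—yields $\Lco(t^*\vb;\g,\h)\to\sigma\sqrt{\|\g\|^2-\dtR^2}\approx\sigma\sqrt{m-\Delxf}$ up to $o(\sigma\sqrt{m})$.

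Granted this upper bound, the conclusion follows by combining everything and using $m-\Delxf\geq \eps_L m$:
\begin{align*}
\Lco^*_{dev}(\g,\h)-\E[\Lco^*_{up}(\g,\h)]\;\geq\;\tfrac{t_1}{2}\sigma\sqrt{m-\Delxf}-\eps\sigma\sqrt{m}\;\geq\;\bigl(\tfrac{t_1}{2}\sqrt{\eps_L}-\eps\bigr)\sigma\sqrt{m}.
\end{align*}
Choosing $\eps:=(t_1/4)\sqrt{\eps_L}$ and $\delta_{dev}:=(t_1/4)\sqrt{\eps_L}$ yields the lemma, with probability $1-\exp(-\order{m})$ by a union bound over the finitely many high-probability events used.
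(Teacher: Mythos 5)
Your first step (the inclusion $\{\w: f_p(\w)\leq 0\}\subseteq \Tc_f(\x_0)$ and the resulting lower bound $\Lco^*_{dev}(\g,\h)\geq (1+t_1)\sigma\sqrt{m-\Delxf}$ via Lemma \ref{lemma:dev}) is correct, as is the concentration of $\Lco^*_{up}$ around its mean. The gap is exactly at the step you flag as the main obstacle: the matching upper bound $\E[\Lco^*_{up}(\g,\h)]\leq (1+t_1/2)\sigma\sqrt{m-\Delxf}$ is not merely delicate for arbitrary $\sigma$ --- it is false. This lemma lives in Section \ref{any sigma} and must hold for every $\sigma>0$. Since $\Lco^*_{up}=\Lco^*$ w.h.p., you would need $\min_{f_p(\w)\leq 0}\Lco(\w;\g,\h)\lesssim \sigma\sqrt{m-\Delxf}$; but the near-optimal points of the cone-relaxed problem have norm $\sigma\,\dtR/\sqrt{\|\g\|^2-\dtR^2}$, which grows linearly in $\sigma$, while the set $\{f_p(\w)\leq 0\}$ is only approximated by $\Tc_f(\x_0)$ near the origin. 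If, say, the sublevel set $\{f(\x)\leq f(\x_0)\}$ is bounded with diameter $D$, then for $\sigma\gg D$ every feasible $\w$ gives $\Lco(\w;\g,\h)\geq \sigma\|\g\|-\|\h\|D\approx\sigma\sqrt{m}$, which exceeds $(1+t_1)\sigma\sqrt{m-\Delxf}$ for small $t_1$; your final difference then comes out negative. The descent-direction repair does not rescue this: along any fixed direction $\vb$ the feasible segment $\{t: f_p(t\vb)\leq 0\}$ has a fixed length, whereas the scalar-optimal step $t^*$ scales with $\sigma$, so $t^*$ leaves the feasible range once $\sigma$ is large (rescaling $\|\vb\|$ does not change the set $\{t\vb\}$).

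The paper's proof avoids ever needing the value of $\E[\Lco^*_{up}]$. Lemma \ref{classodev} bounds the \emph{gap} $\Lco^*_{dev}(\g,\h)-\Lco^*(\g,\h)$ directly: the minimizer $\wo_{dev}$ of the deviated problem, scaled down to norm $\eta_s=\sigma\dtR/\sqrt{\|\g\|^2-\dtR^2}$, remains feasible for the unrestricted problem because $f_p(c\wo_{dev})\leq c f_p(\wo_{dev})\leq 0$ by convexity; the cost decrease under this scaling is controlled via Moreau's decomposition ($\h^T\wo_{dev}\leq\dtR\|\wo_{dev}\|$) and the perturbation analysis of $L(\alpha;a,b)$ in Lemma \ref{lemma:pert1}, giving $\Lco^*_{dev}-\Lco^*\geq 2\sigma\delta_{dev}\sqrt{m}$ w.h.p. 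Combining with $|\Lco^*-\E[\Lco^*_{up}]|\leq\sigma\delta_{dev}\sqrt{m}$ from Lemma \ref{relateuplow} finishes the proof regardless of how large $\E[\Lco^*_{up}]$ actually is. You would need to replace your two absolute bounds with a relative bound of this type to close the argument.
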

As common, our analysis begins with a deterministic result, which builds towards the proof of the probabilistic statement in Lemma \ref{cordev}.   

\subsubsection{Deterministic Result}
For the statement of the deterministic result, we introduce first some notation. In particular, denote
$$
\eta_d :=\sigma\sqrt{\frac{\Delxf}{m-\Delxf}},
$$
and, for fixed $\g\in\R^m,\h\in\R^n$,
$$
\eta_s=\eta_s(\g,\h) := \sigma\frac{\dtR}{\sqrt{\|\g\|^2-\dtR^2}}.
$$
Also, recall the definition of the scalar function $L(\alpha;a,b)$ in \eqref{scalarfunc}.
\begin{lem} \label{classodev}
Let $\g\in\R^m$ and $\h\in\R^n$ be such that $\|\g\|>\dtR$ and  $\eta_s(\g,\h)\leq (1+\eps_{dev}) \eta_d$. Then,
\beq
\Lco_{dev}^*(\g,\h)-\Lco^*(\g,\h)\geq L\left((1+\eps_{dev})\eta_d;\|\g\|,\dtR\right)-L\left(\eta_s(\g,\h);\|\g\|,\dtR\right)\label{simpleeqwithL}
\eeq
\end{lem}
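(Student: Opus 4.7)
The proof rests on three ingredients: (i) a star-shapedness property of the feasible set $\{\w\,:\,f_p(\w)\leq 0\}$; (ii) a Moreau-type bound $\h^T\w\leq \dtR\,\|\w\|$ valid for every feasible $\w$; and (iii) strict convexity of the scalar function $L(\cdot;\|\g\|,\dtR)$ together with the explicit characterization $\arg\min_{\alpha\geq 0}L(\alpha;\|\g\|,\dtR)=\eta_s(\g,\h)$, obtained by elementary differentiation.

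First, I would argue that $\Lco_{dev}^*(\g,\h)$ is attained at some $\wo_{dev}$. Feasibility forces $\wo_{dev}\in \Tc_f(\x_0)=\text{cone}(\paf)\pol$ (since $\{f_p(\w)\leq 0\}\subset F_\Cc(\x_0)\subset \Tc_f(\x_0)$), and Moreau's decomposition applied to the closed convex cone $\text{cone}(\paf)$ yields
$$
\h^T\w \;\leq\; \|\bi(\h,\text{cone}(\paf))\|\,\|\w\|\;=\;\dtR\,\|\w\|,
\qquad \forall\,\w\in\Tc_f(\x_0),
$$
because the component of $\h$ in $\text{cone}(\paf)$ has non-positive inner product with any vector in the polar cone. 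Combining this with $\Lco(\w;\g,\h)\geq L(\|\w\|;\|\g\|,\dtR)$ and the fact that $\|\g\|>\dtR$ makes $L(\alpha;\|\g\|,\dtR)$ coercive, a minimizing sequence for $\Lco_{dev}^*$ has bounded norm and a limit point delivers $\wo_{dev}$.

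The core of the argument is to pull $\wo_{dev}$ back to a point feasible for $\Lco^*$. Define
$$
\wo'\;:=\;\frac{\eta_s(\g,\h)}{\|\wo_{dev}\|}\,\wo_{dev}.
$$
Since the hypothesis $\eta_s(\g,\h)\leq (1+\eps_{dev})\eta_d\leq \|\wo_{dev}\|$ places the scaling factor in $[0,1]$, convexity of $f_p$ together with $f_p(0)=0$ gives $f_p(\wo')\leq 0$; thus $\wo'$ is feasible for \eqref{OriginalLow} and $\Lco^*(\g,\h)\leq \Lco(\wo';\g,\h)$. A direct computation combined with the Moreau bound above gives
\begin{align*}
\Lco(\wo_{dev};\g,\h)-\Lco(\wo';\g,\h)
&=\bigl(\sqrt{\|\wo_{dev}\|^2+\sigma^2}-\sqrt{\eta_s^2+\sigma^2}\bigr)\|\g\|-\bigl(\|\wo_{dev}\|-\eta_s\bigr)\frac{\h^T\wo_{dev}}{\|\wo_{dev}\|}\\
&\geq L(\|\wo_{dev}\|;\|\g\|,\dtR)-L(\eta_s;\|\g\|,\dtR).
\end{align*}

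To finish, I would invoke monotonicity: $L(\cdot;\|\g\|,\dtR)$ is strictly convex (its second derivative equals $\|\g\|\sigma^2(\alpha^2+\sigma^2)^{-3/2}>0$) with unique minimizer $\eta_s$, so it is strictly increasing on $[\eta_s,\infty)$. Because $\|\wo_{dev}\|\geq (1+\eps_{dev})\eta_d\geq \eta_s$, we get $L(\|\wo_{dev}\|;\|\g\|,\dtR)\geq L((1+\eps_{dev})\eta_d;\|\g\|,\dtR)$, and chaining the inequalities yields \eqref{simpleeqwithL}. The main subtlety I anticipate is the combined use of star-shapedness and Moreau decomposition, which cleanly compensates for the fact that the original feasible set $\{f_p(\w)\leq 0\}$ is not a cone (unlike its approximated counterpart $\text{cone}(\paf)\pol$), so the exact scalarization from Lemma \ref{lemma:dev} is unavailable and has to be replaced by this inequality-based substitute.
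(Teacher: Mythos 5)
Your proposal is correct and follows essentially the same route as the paper: rescale the restricted minimizer by $\eta_s/\|\wo_{dev}\|\in[0,1]$, use convexity of $f_p$ with $f_p(\mathbf{0})=0$ to certify feasibility of the rescaled point for the unrestricted problem, bound $\h^T\wo_{dev}\leq \dtR\|\wo_{dev}\|$ via Moreau's decomposition on $\Tc_f(\x_0)$, and conclude with convexity and monotonicity of $L(\cdot;\|\g\|,\dtR)$ on $[\eta_s,\infty)$. The only cosmetic difference is that the paper disposes of the degenerate case $\Lco^*_{dev}(\g,\h)=\infty$ explicitly rather than via your coercivity/attainment argument.
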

\begin{proof} 
First assume that $\Lco^*_{dev}(\g,\h)=\infty$. Since $\Lco^*(\g,\h)\leq \Lc(\mathbf{0};\g,\h)=\sigma\|\g\|$ and the right hand side of \eqref{simpleeqwithL} is finite, we can easily conclude with the desired result. 

Hence, in the following assume that $\Lco^*_{dev}(\g,\h)< \infty$ and denote $\wo_{dev}$ the minimizer of the restricted problem \eqref{optimmr1}. From feasibility constraints,  we have $f_p(\w_{dev})\leq 0$ and $\|\wo_{dev}\|\in S_{dev}$. Define $\bar{\w}_{dev}=c\wo_{dev}$ where $c:=\frac{\eta_s}{\|\wo_{dev}\|}$. Notice, $\|\wo_{dev}\|\geq(1+\eps_{dev})\eta_d\geq\eta_s(\g,\h)$, thus, $c\leq 1$. Then, from convexity of $f(\cdot)$,
$$ f_p(\bar{\w}_{dev} ) = f_p( c\wo_{dev} ) \leq c f_p(\wo_{dev} ) + (1-c) \underbrace{f_p(\mathbf{0})}_{=0} \leq 0. $$
This shows that $\bar{\w}_{dev}$ is feasible for the minimization \eqref{OriginalLow}. Hence, 
%
\beq
\Lco(\bar{\w}_{dev},\g,\h)\geq 
\Lco^*(\g,\h).\nn
\eeq 
Starting with this, we write,
\begin{align}
\Lco_{dev}^*(\g,\h)-\Lco^*(\g,\h)&\geq \Lco(\wo_{dev};\g,\h)-\Lco(\bar{\w}_{dev};\g,\h)\nn\\
&= (\sqrt{\|\wo_{dev}\|^2+\sigma^2}-\sqrt{\|\bar{\w}_{dev}\|^2+\sigma^2})\|\g\|-\h^T(\wo_{dev}-\bar{\w}_{dev})\nn\\
&= (\sqrt{\|\wo_{dev}\|^2+\sigma^2}-\sqrt{\|\bar{\w}_{dev}\|^2+\sigma^2})\|\g\|-(1-c)\h^T\wo_{dev}\label{eq:ena}.
\end{align}
Since, $f_p(\wo_{dev})\leq 0$, $\wo_{dev}\in\Tc_f(\x_0)$. Hence, and using Moreau's decomposition Theorem (see Fact \ref{more}), we have
\begin{align}
\h^T\wo_{dev} &= \li \bu(\h,\Tc_f(\x_0)) , \wo_{dev} \ri + \underbrace{\li \bu(\h,\left(\Tc_f(\x_0)\right)\pol) , \wo_{dev}\ri}_{\leq 0}\nn \\&\leq\dtR\|\wo_{dev}\|.\label{eq:dio}
\end{align}
Use \eqref{eq:dio} in \eqref{eq:ena}, to write
\begin{align*}
\Lco_{dev}^*(\g,\h)-\Lco^*(\g,\h)
&\geq (\sqrt{\|\wo_{dev}\|^2+\sigma^2}-\sqrt{\|\bar{\w}_{dev}\|^2+\sigma^2})-\frac{\|\wo_{dev}\|-\eta_s}{\|\wo_{dev}\|}\dtR\|\wo_{dev}\|\nn\\
&=(\sqrt{\|\wo_{dev}\|^2+\sigma^2}-\sqrt{\eta_s^2+\sigma^2})\|\g\|-(\|\wo_{dev}\|-\eta_s)\dtR\nn\\
&=L(\|\wo_{dev}\|,\|\g\|,\dtR)-L(\eta_s,\|\g\|,\dtR)\nn\\
&\geq L((1+\eps)\eta_d,\|\g\|,\dtR)-L(\eta_s,\|\g\|,\dtR).
\end{align*}
The last inequality above follows from the that $L(\alpha;\|\g\|,\dtR)$ is convex in $\alpha$ and minimized at $\eta_s$ (see Lemma \ref{lemma:pert1}) and, also, $\|\w^*_{dev}\|\geq (1+\eps_{dev})\eta_d\geq \eta_s$.\end{proof}

\subsubsection{Probabilistic result}
We now prove the main result of the section, Lemma \ref{cordev}. 

\begin{proof}[Proof of Lemma \ref{cordev}] The proof is based on the results of Lemma \ref{classodev}. First, we show that under the assumptions of Lemma \ref{cordev}, the assumptions of Lemma \ref{classodev} hold w.h.p.. 
In this direction, using standard concentration arguments provided in Lemmas \ref{repeatlem} and \ref{lemma:concentrationALL}, we find that,
\begin{enumerate}
\item $\|\g\|\geq \dtR$,
\item $\frac{\dtR}{\sqrt{\|\g\|^2-\dtR^2}}\leq (1+\eps_{dev})\frac{m}{m-\Delxf}$.
\item For any constant $\eps>0$, 
\beq
|\|\g\|^2-m|\leq \eps m~~\text{ and }~~|\left(\dtR\right)^2-\Delxf|<\eps m\label{anyepsresult},
\eeq
\end{enumerate}
all with probability $1-\exp\left( -\order m \right)$. It follows from the first two statements that Lemma \ref{classodev} is applicable and we can use \eqref{simpleeqwithL}. Thus, it suffinces to find a lower bound for the right hand side of \eqref{simpleeqwithL}.

Lemma \ref{lemma:pert1} in the Appendix analyzes in detail many properties of the scalar function $L(\alpha;a,b)$, which appears in \eqref{simpleeqwithL}. Here, we use the sixth statement of that Lemma (in a similar manner  to the proof of Lemma \ref{lemma:dev}). In particular, apply Lemma \ref{lemma:pert1} with the following mapping:
\beq
\sqrt{m}\iff a,~\sqrt{\Delxf}\iff b,~\|\g\|\iff a',~\dtR\iff b'\nn
\eeq
Application of the lemma is valid since \eqref{anyepsresult} is true, and gives that with probability $1-\exp\left(-\order{m }\right)$,
\beq\nn
L((1+\eps)\eta_d,\|\g\|,\dtR)-L(\eta_s,\|\g\|,\dtR)\geq2\sigma \delta_{dev}\sqrt{m}
\eeq
for some constant $\delta_{dev}$. Combining this with Lemma \ref{classodev}, we may conclude 
\begin{align}
\Lco^*_{dev}(\g,\h)-\Lco^*(\g,\h)\geq2\sigma \delta_{dev}\sqrt{m}.\label{eq:2un1}
\end{align}
 On the other hand, from Lemma \ref{relateuplow},
\beq
|\Lco^*(\g,\h)-\E[\Lco^*_{up}(\g,\h)]|\leq \sigma\delta_{dev}\sqrt{m}\label{eq:2un2}
\eeq
with the desired probability. Union bounding over \eqref{eq:2un1} and \eqref{eq:2un2}, we conclude with the desired result.
\end{proof}

\subsection{Merging Upper Bound and Deviation Results}
This section combines the previous sections and finalizes the proof of Theorem \ref{thm:CLASSO} by showing the second statement.
Recall the definition \eqref{cmodel} of the original C-LASSO problem and also the definition of the set $S_{dev}$ in \eqref{eq:Sdev}.
\begin{lem}
Assume there exists a constant $\eps_L$ such that, $(1-\eps_L)m\geq \Delxf\geq \eps_Lm$. Further assume, $m$ is sufficiently large. 
The following hold:
\begin{enumerate}
\item For any $\eps_{up}>0$, there exists $c_{up}>0$ such that, with probability $1-\exp(-c_{up} m)$, we have,
\beq
\Fco^*_c(\A,\vb)\leq \E[\Lco_{up}^*(f,\g,\h)]+\eps_{up}\sigma\sqrt{m}
\eeq
\item There exists constants $\delta_{dev}>0, c_{dev}>0$, such that, for sufficiently large $m$, with probability $1-\exp(-c_{dev}m)$, we have,
\beq
\min_{\|\w\|\in S_{dev},~f_p(\w)\leq 0} \Fco(\w;\A,\vb)\geq \E[\Lco_{up}^*(f,\g,\h)]+\delta_{dev}\sigma\sqrt{m}
\eeq
\item For any $\eps_{dev}>0$, there exists $c>0$ such that, with probability $1-\exp(-cm)$,
$$\|\x^*_c-\x_0\|^2\leq \sigma^2(1+\eps_{dev})\frac{\Delxf}{m-\Delxf}.$$
\end{enumerate}
\end{lem}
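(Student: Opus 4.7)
The plan is to prove the three statements in sequence, as statements 1 and 2 are direct consequences of previously established tools while statement 3 follows from them by a simple contradiction argument.

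For statement 1, I would apply the second part of Lemma \ref{lemma:applied_Gordon} (the upper-bound form of Gordon's inequality) to the original C-LASSO problem \eqref{eq:erE}, which gives
\begin{align*}
\Pro\bigl(\Fco_c^*(\A,\vb)\leq c\bigr)\geq 2\Pro\bigl(\Uco^*(\g,\h)+\sigma\eps\sqrt{m}\leq c\bigr)-4\exp(-\eps^2 m/2)-1.
\end{align*}
By the first part of Lemma \ref{relateuplow}, $\Uco^*(\g,\h)$ concentrates around $\E[\Lco_{up}^*(\g,\h)]$ within $\eps\sigma\sqrt{m}$ with probability $1-\exp(-\order{m})$. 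Choosing $\eps<\eps_{up}/2$ and combining the two probabilistic statements via Gordon's inequality gives the claim with $c_{up}$ determined by $\eps_{up}$ and $\eps_L$.

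For statement 2, I would apply the third part of Lemma \ref{lemma:applied_Gordon} to the restricted problem $\min_{\|\w\|\in S_{dev},\,f_p(\w)\leq 0}\Fco(\w;\A,\vb)$. The penalty function here is the indicator of $\{\w : f_p(\w)\leq 0\}$, so the corresponding Gordon surrogate is precisely $\Lco_{dev}^*(\g,\h)$ defined in \eqref{optimmr1}. The heavy lifting is already contained in Lemma \ref{cordev}, which asserts $\Lco_{dev}^*(\g,\h)\geq \E[\Lco_{up}^*(\g,\h)]+\delta\sigma\sqrt{m}$ with probability $1-\exp(-\order{m})$ for some constant $\delta>0$. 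Pick the Gordon slack $\eps\sigma\sqrt{m}$ with $\eps<\delta/2$ and set $\delta_{dev}=\delta/2$ to absorb it, yielding the claim.

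For statement 3, I argue by contradiction. Fix $\eps_{dev}>0$ and let $\delta_{dev}>0$ be the constant produced by statement 2 for this $\eps_{dev}$ (which defines $S_{dev}$ via \eqref{eq:Sdev}). Apply statement 1 with $\eps_{up}:=\delta_{dev}/2$ to obtain, on an event of probability $1-\exp(-\order{m})$,
\begin{align*}
\Fco_c^*(\A,\vb)\leq \E[\Lco_{up}^*(\g,\h)]+\tfrac{\delta_{dev}}{2}\sigma\sqrt{m}.
\end{align*}
On the intersection with the event from statement 2 (still of probability $1-\exp(-\order{m})$ by a union bound), suppose $\|\w_c^*\|=\|\x_c^*-\x_0\|\in S_{dev}$. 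Since $\w_c^*$ is feasible for the restricted program of statement 2, statement 2 would give $\Fco_c^*(\A,\vb)=\Fco(\w_c^*;\A,\vb)\geq \E[\Lco_{up}^*(\g,\h)]+\delta_{dev}\sigma\sqrt{m}$, contradicting the upper bound above. Hence $\|\w_c^*\|\notin S_{dev}$, which by the definition of $S_{dev}$ yields $\|\x_c^*-\x_0\|^2\leq \sigma^2(1+\eps_{dev})^2\,\frac{\Delxf}{m-\Delxf}$, and after rescaling $\eps_{dev}$ this is the stated bound.

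The main obstacle is essentially bookkeeping rather than conceptual: one must carefully order the choice of constants so that the Gordon slack in each application of Lemma \ref{lemma:applied_Gordon} can be made strictly smaller than $\delta_{dev}$, and one must verify that the hypotheses $(1-\eps_L)m\geq \Delxf\geq \eps_L m$ (needed both by Lemma \ref{relateuplow} and by Lemma \ref{cordev}) are already in force throughout. Once these constants are pinned down, all three statements follow by standard union bounding.
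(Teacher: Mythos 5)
Your proposal is correct and follows essentially the same route as the paper: statement 1 combines the second part of Lemma \ref{lemma:applied_Gordon} with the first part of Lemma \ref{relateuplow}, statement 2 combines the third part of Lemma \ref{lemma:applied_Gordon} with Lemma \ref{cordev} after choosing the Gordon slack below $\delta_{dev}/2$, and statement 3 is the same union-bound-plus-contradiction argument (the paper uses $\eps_{up}=\delta_{dev}/8$ rather than $\delta_{dev}/2$, an immaterial difference). Your remark that the exponent on $(1+\eps_{dev})$ requires a harmless rescaling is accurate and is implicitly present in the paper's proof as well.
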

\begin{proof} 
We prove the statements of the lemma in the order that they appear. 

\noindent{\emph{1.}}
For notational simplicity denote ${\xi}=\E[\Lco_{up}^*(\g,\h)]$. We combine second statement of Lemma \ref{lemma:applied_Gordon} with Lemma \ref{relateuplow}. For any constant $\eps_{up}$, we have,
\begin{align*}
\Pro(\Fco_c^*(\A,\vb) \leq {\xi}+2\sigma\eps_{up}\sqrt{m})&\geq 2\Pro(\Uco^*(\g,\h)+\sigma\eps\sqrt{m}\leq {\xi}+2\sigma\eps_{up}\sqrt{m})-1-\exp(-\order{m})\label{uppart}\\
&= 2\Pro(\Uco^*(\g,\h)\leq {\xi}+\sigma\eps_{up}\sqrt{m})-1-\exp(-\order{m})\\
&\geq 1-\exp(-\order{m}) ,
\end{align*} 
where we used the first statement of Lemma \eqref{relateuplow} to lower bound the $\Pro(\Uco^*(\g,\h)\leq  {\xi}+\sigma\eps_{up}\sqrt{m})$.
 
\noindent{\emph{2.}}
Pick a small constant $\eps>0$ satisfying $\eps<\frac{\delta_{dev}}{2}$ in the third statement of Lemma \ref{lemma:applied_Gordon}. Now, using Lemma \ref{cordev} and this choice of $\eps$, with probability $1-\exp(-\order{m})$, we have,
\begin{align*}
\Pro(\min_{\w\in S_{dev},~f_p(\w)\leq 0}\Fco(\w;\A,\vb) \geq {\xi}+\sigma\frac{\delta_{dev}}{2}\sqrt{m})&\geq 2\Pro(\Lco^*_{dev}(\g,\h)\geq {\xi}+\sigma\delta_{dev}\sqrt{m}-\eps\sigma\sqrt{m})-1-\exp(-\order{m})\\
&\geq 1-\exp(-\order{m}),
\end{align*}
where we used \eqref{probdev} of Lemma \ref{cordev}.

\noindent{\emph{3.}}
Apply Statements 1. and 2. of the lemma, choosing $\eps_{up}=\frac{\delta_{dev}}{8}$. Union bounding we find that
\beq
\Pro(\min_{\w\in S_{dev},~f_p(\w)\leq 0}\Fco(\w;\A,\vb)\geq \Fco_c^*(\A,\vb)+\sigma\frac{\delta_{dev}}{4})\geq 1-\exp(-\order{m}),\nn
\eeq
which implies with the same probability $\|\w^*_c\|\not\in S_{dev}$, i.e., $\|\w^*_c\|\leq (1+\eps_{dev})\sigma\sqrt{\frac{\Delxf}{m-\Delxf}}$.

\end{proof}

\section{$\ell_2^2$-LASSO}\label{justifyl22}
As we have discussed throughout our main results, one of the critical contributions of this paper is that, we are able to obtain a formula that predicts the performance of $\ell_2^2$-penalized LASSO. We do this by relating $\ell_2$-LASSO and $\ell_2^2$-LASSO problems. This relation is established by creating a mapping between the penalty parameters $\la$ and $\tau$. While we don't give a theoretical guarantee on $\ell_2^2$-LASSO, we give justification based on the predictive power of Gordon's Lemma.

\subsection{Mapping the $\ell_2$-penalized to the $\ell_2^2$-penalized LASSO problem }
 Our aim in this section is to provide justification for the mapping function given in \eqref{eq:map}.
The following lemma gives a simple condition for $\ell_2$-LASSO and $\ell_2^2$-LASSO to have the same solution.

\begin{lem}\label{ell22lem}
 Let $\x^*_{\ell_2}$ be a minimizer of $\ell_2$-LASSO program with the penalty parameter $\la$ and assume $\y-\A\x^*_{\ell_2}\neq 0$. Then, $\x^*_{\ell_2}$ is a minimizer of $\ell_2^2$-LASSO with penalty parameter $\tau=\la\cdot\frac{\|\A\x^*_{\ell_2}-\y\|}{\sigma}$.
\end{lem}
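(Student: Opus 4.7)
The plan is to compare the first-order (subdifferential) optimality conditions of the two convex programs and show that, with the prescribed choice of $\tau$, any optimizer of \eqref{ell2model} automatically satisfies the optimality condition of \eqref{ell22model}. Since both objectives are convex in $\x$, a point meeting the KKT conditions is globally optimal, and the conclusion will follow.

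First, I would write the optimality condition for the $\ell_2$-LASSO \eqref{ell2model} at $\x^*_{\ell_2}$. Using the assumption $\y-\A\x^*_{\ell_2}\neq 0$, the function $\x\mapsto \|\y-\A\x\|$ is differentiable at $\x^*_{\ell_2}$ with gradient $-\A^T\frac{\y-\A\x^*_{\ell_2}}{\|\y-\A\x^*_{\ell_2}\|}$. The subdifferential optimality condition at $\x^*_{\ell_2}$ therefore reads
\begin{equation}\label{eq:opt_ell2_plan}
\A^T\frac{\y-\A\x^*_{\ell_2}}{\|\y-\A\x^*_{\ell_2}\|}\in \la\,\pa f(\x^*_{\ell_2}).
\end{equation}
Multiplying both sides of the inclusion by the positive scalar $\|\y-\A\x^*_{\ell_2}\|$ (which is legitimate because $\pa f(\x^*_{\ell_2})$ is convex and we are scaling a single element of it) yields
\begin{equation}\label{eq:opt_matched_plan}
\A^T(\y-\A\x^*_{\ell_2})\in \la\,\|\y-\A\x^*_{\ell_2}\|\,\pa f(\x^*_{\ell_2}).
\end{equation}

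Second, I would write down the optimality condition for \eqref{ell22model}: a point $\x$ is optimal for the $\ell_2^2$-LASSO with parameter $\tau$ if and only if
$$
\A^T(\y-\A\x)\in \sigma\tau\,\pa f(\x).
$$
Comparing this with \eqref{eq:opt_matched_plan}, the two inclusions coincide at $\x=\x^*_{\ell_2}$ provided
$$
\sigma\tau=\la\,\|\y-\A\x^*_{\ell_2}\|,
$$
which is exactly the prescription $\tau=\la\cdot\frac{\|\A\x^*_{\ell_2}-\y\|}{\sigma}$ in the statement of the lemma. Since the $\ell_2^2$-LASSO objective is convex, this first-order condition is sufficient for global optimality, completing the argument.

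There is no real obstacle here; the only technical care needed is in justifying \eqref{eq:opt_ell2_plan}, which relies on the hypothesis $\y-\A\x^*_{\ell_2}\neq 0$ so that the $\ell_2$ norm term is smooth at $\x^*_{\ell_2}$ and the subdifferential of the sum decomposes as a gradient plus $\la\,\pa f(\x^*_{\ell_2})$. Had $\y=\A\x^*_{\ell_2}$, the subdifferential of $\|\y-\A\cdot\|$ would be the set $\{\A^T\ub : \|\ub\|\leq 1\}$ and the mapping $\tau\mapsto \la\|\y-\A\x^*_{\ell_2}\|$ would collapse to $\tau=0$, an uninformative case explicitly excluded by hypothesis.
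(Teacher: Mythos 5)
Your proposal is correct and follows essentially the same route as the paper: both write the first-order optimality condition of the $\ell_2$-LASSO at $\x^*_{\ell_2}$ (using $\y-\A\x^*_{\ell_2}\neq 0$ to make the residual-norm term differentiable), rescale the resulting subgradient inclusion by $\|\y-\A\x^*_{\ell_2}\|$, and match it to the optimality condition of the $\ell_2^2$-LASSO, with convexity supplying sufficiency. The only difference is cosmetic — you phrase the conditions as subdifferential inclusions while the paper exhibits an explicit $\s\in\pa f(\x^*_{\ell_2})$ satisfying the stationarity equation.
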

\begin{proof} 
The optimality condition for the $\ell_2^2$-LASSO problem \eqref{ell2model}, implies the existence of $\s_{\ell_2}\in\pa f(\x_{\ell_2}^*)$ such that,
\beq
\la\s_{\ell_2}+\frac{\A^T(\A\x_{\ell_2}^*-\y)}{\|\A\x_{\ell_2}^*-\y\|}=0\label{l2optimality}
\eeq
On the other hand, from the optimality conditions of \eqref{ell22model}, $\x$ is a minimizer of the $\ell_2^2$-LASSO if there exists $\s\in\pa f(\x)$ \text{such that},
\beq
\sigma\tau\s+\A^T(\A\x-\y)=0\label{l22optimality}.
\eeq
Observe that, for $\tau=\la\cdot\frac{\|\A\x_{\ell_2}^*-\y\|}{\sigma}$, using \eqref{l2optimality}, $\x_{\ell_2}^*$ satisfies \eqref{l22optimality} and is thus a minimizer of the $\ell_2^2$-LASSO.
\end{proof}
%
 
 In order to evaluate the mapping function as proposed in Lemma \ref{ell22lem}, we need to estimate $\|\y-\A\x_{\ell_2}^*\|$. We do this relying again on the approximated $\ell_2$-LASSO problem in \eqref{eq:foEll2_w}. Under the first-order approximation, $\x_{\ell_2}^*\approx \x_0+\wh_{\ell_2}^*:=\hat{\x}_{\ell_2^*}$ and also define, $\hat{f}_p(\w):=\sup_{\s\in\paf}\s^T\w$. 
Then, from \eqref{eq:foEll2_w} and Lemma \ref{thm:unified},
\begin{align}
\|\y-\A\hat{\x}_{\ell_2}^*\|&=\Fc_{\ell_2}^*(\A,\vb)-\la \hat{f}_p(\wh_{\ell_2}^*)\notag\\
&\approx \sigma\sqrt{m-\Dlf}-\la \hat{f}_p(\wh_{\ell_2}^*)\label{eq:eq2}.
\end{align}
Arguing that, 
\begin{align}\label{eq:belief}
\la \hat{f}_p(\w_{\ell_2}^*)\approx \sigma\frac{\Clf}{\sqrt{m-\Dlf}},
\end{align}
 and substituting this in \eqref{eq:eq2} will result in the desired mapping formula given in \eqref{eq:map}.

In the remaining lines we provide justification supporting our belief that \eqref{eq:belief} is true. Not surprisingly at this point, the core of our argument relies on application of Gordon's Lemma. Following the lines of our discussion in Section \ref{sec:KeyIdeas}, we use the minimizer $\w_{low}^*(\g,\h)$ of the simple optimization \eqref{eq:keylow}
 as a proxy for $\w^*_{\ell_2}$ and expect $\hat{f}_p(\w_{\ell_2}^*)$ to concentrate around the same quantity as $\hat{f}_p(\w^*_{low}(\g,\h))$ does. Lemma \ref{lemma:justmap} below shows that
 \begin{align}
\la \hat{f}_p(\w_{low}^*(\g,\h))&=\sigma\frac{\li\Pi(\h,\la\paf),\bu(\h,\la\paf)\ri}{\sqrt{\|\g\|^2-\dt(\h,\la\paf)^2}}\nn\\
& \approx \sigma\frac{\Clf}{\sqrt{m-\Dlf}}\label{finconc}\nn,
 \end{align}
where the second (approximate) equality follows via standard concentration inequalities. 
 
%
%
 \begin{lem}\label{lemma:justmap}
Assume $(1-\eps_L)m\geq\Dlf$ and $m$ is sufficiently large. Then, for any constant $\eps>0$, with probability $1-\exp(-\order{\min\{m,\frac{m^2}{n}\}})$, 
\beq
\big|\la\hat{f}_p(\w^*_{low})-\sigma\frac{\Clf}{\sqrt{m-\Dlf}}\big|<\eps\sqrt{m}\label{yetanothereq1}.
\eeq
\end{lem}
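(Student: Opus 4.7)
The plan is to first obtain an explicit, deterministic closed-form for $\la\hat{f}_p(\w^*_{low})$ via the characterization of $\w^*_{low}$ given in Lemma \ref{lemma:lowdet}, and then use standard Gaussian concentration to show that this closed-form concentrates around $\sigma\,\Clf/\sqrt{m-\Dlf}$.

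For the first step, recall from Lemma \ref{lemma:lowdet}(a) that, provided $\|\g\|>\dt(\h,\la\paf)$ (an event which, by the assumption $(1-\eps_L)m\geq \Dlf$ and Lemma \ref{lemma:concentrationALL}, holds with probability $1-\exp(-\order{m})$),
\[
\w^*_{low}(\g,\h)=\sigma\,\frac{\bi(\h,\la\paf)}{\sqrt{\|\g\|^2-\dt^2(\h,\la\paf)}}.
\]
Since $\hat{f}_p(\w)=\sup_{\s\in\paf}\s^T\w$, we have $\la\hat{f}_p(\w)=\sup_{\s\in\la\paf}\s^T\w$. Substituting,
\[
\la\hat{f}_p(\w^*_{low})=\frac{\sigma}{\sqrt{\|\g\|^2-\dt^2(\h,\la\paf)}}\;\sup_{\s\in\la\paf}\s^T\bi(\h,\la\paf).
\]
The key observation now is that, by the variational inequality characterizing the projection onto the closed convex set $\la\paf$, we have $\langle \h-\bu(\h,\la\paf),\s-\bu(\h,\la\paf)\rangle\leq 0$ for every $\s\in\la\paf$. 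Equivalently, $\langle \bi(\h,\la\paf),\s\rangle\leq \langle \bi(\h,\la\paf),\bu(\h,\la\paf)\rangle$ for all $\s\in\la\paf$, and the supremum is attained at $\s=\bu(\h,\la\paf)$. Hence,
\[
\la\hat{f}_p(\w^*_{low})=\sigma\,\frac{\corr(\h,\la\paf)}{\sqrt{\|\g\|^2-\dt^2(\h,\la\paf)}}.
\]

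The second step is to invoke concentration for the three quantities appearing above: $\|\g\|^2$ concentrates around $m$, $\dt^2(\h,\la\paf)$ concentrates around $\Dlf$, and $\corr(\h,\la\paf)$ concentrates around $\Clf$. The first two follow from standard Lipschitz-concentration bounds for $1$-Lipschitz functions of a Gaussian vector (cf.~Lemma \ref{lemma:concentrationALL}), yielding deviations of order $\exp(-\order{m})$. The main obstacle---and the source of the $m^2/n$ term in the probability bound---is the concentration of $\corr(\h,\la\paf)=\langle \bi(\h,\la\paf),\bu(\h,\la\paf)\rangle$, which is a product of two $1$-Lipschitz Gaussian functionals and hence is itself Lipschitz only with constant at most $\max\{\|\bi(\h,\la\paf)\|,\|\bu(\h,\la\paf)\|\}=\order{\sqrt{n}}$. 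Gaussian Lipschitz concentration then gives a deviation of order $t$ with probability $1-\exp(-\order{t^2/n})$; choosing $t=\eps'\sqrt{m}\cdot\sqrt{m-\Dlf}$ yields probability $1-\exp(-\order{m^2/n})$ and combines with the other two estimates to produce the stated $\exp(-\order{\min\{m,m^2/n\}})$ bound.

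Finally, combining the three concentration statements on a single high-probability event and using the assumption $(1-\eps_L)m\geq\Dlf$ (which ensures $m-\Dlf$ is bounded away from zero by $\eps_L m$, so the square-root in the denominator is stable to additive perturbations), a direct calculation---write $\sqrt{\|\g\|^2-\dt^2(\h,\la\paf)}=\sqrt{m-\Dlf}\,(1+o(1))$ and $\corr(\h,\la\paf)=\Clf+o(\sqrt{m})$---shows that
\[
\Big|\,\sigma\,\frac{\corr(\h,\la\paf)}{\sqrt{\|\g\|^2-\dt^2(\h,\la\paf)}}-\sigma\,\frac{\Clf}{\sqrt{m-\Dlf}}\,\Big|<\eps\sqrt{m},
\]
on the same event, yielding \eqref{yetanothereq1}. (Here $|\Clf|\leq \sqrt{\Dlf\,\Plf}\leq \order{n}$ is used to absorb the approximation errors into $\eps\sqrt{m}$; the only place this could be tight is when $|\Clf|$ grows like $m$, which under the running assumption is an acceptable regime.)
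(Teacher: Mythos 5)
Your proposal is correct and follows essentially the same route as the paper: both derive the closed form $\la\hat{f}_p(\w^*_{low})=\sigma\,\corr(\h,\la\paf)/\sqrt{\|\g\|^2-\dt^2(\h,\la\paf)}$ (your variational-inequality argument for where the supremum is attained is exactly the paper's Fact \ref{prom}), and both then concentrate the denominator at the $\exp(-\order{m})$ scale and $\corr(\h,\la\paf)$ at the $\exp(-\order{\min\{m,m^2/n\}})$ scale via Lemma \ref{lemma:concentrationALL} before combining under $(1-\eps_L)m\geq\Dlf$. The only cosmetic difference is that you attribute the $m^2/n$ rate to a heuristic $\order{\sqrt{n}}$ Lipschitz constant for $\corr$, whereas the paper gets the same deviation bound by writing $\corr$ in terms of the $1$-Lipschitz quantities $\dt$ and $\|\bu\|$; the resulting estimates coincide.
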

\begin{proof}
Recall that $\w^*_{low}(\g,\h)=\sigma\frac{\Pi(\h,\Cc)}{\sqrt{\|\g\|^2-\dt^2(\h,\la\paf)}}$ for $\Cc=\la\paf$. Combining this with Fact \ref{prom}, we obtain,
\beq
\hat{f}_p(\w_{low})=\max_{\s\in\Cc}\li\w_{low},\s\ri=\frac{\li\Pi(\h,\Cc),\bu(\h,\Cc)\ri}{\sqrt{\|\g\|^2-\dt(\h,\Cc)^2}}\nn.
\eeq
What remains is to show the right hand side concentrates around $\frac{\Clf}{\sqrt{m-\Dlf}}$ with the desired probability. Fix a constant $\eps>0$. Consider the denominator. Using Lemma \ref{repeatlem}, with probability $1-\exp(-\order{m})$,
\beq
|\frac{\sqrt{\|\g\|^2-\dt(\h,\Cc)^2}}{\sqrt{m-\Dlf}}-1|<\eps\label{simplemdlf}.
\eeq
We now apply Lemma \ref{lemma:concentrationALL} for $\CC$ where we choose $t=\frac{m}{\sqrt{\max\{m,n\}}}$ and use the fact that  $m>\DC$. Then, with probability $1-\exp(-\order{\min\{m,\frac{m^2}{n}\}})$, we have,
\beq
|\text{corr}(\h,\Cc)-\CC|\leq \eps m\nn.
\eeq
Combining this with \eqref{simplemdlf} choosing $\eps>0$, sufficiently small (according to $\eps_L$), we find \eqref{yetanothereq1} with the desired probability.
\end{proof}

The lemma above shows that, $\la\hat{f}_p(\w^*_{low})$ is around $\frac{\Clf}{\sqrt{m-\Dlf}}$ with high probability and we obtain the $\ell_2^2$ formula by using $\hat{f}_p(\w^*_{low})$ as a proxy for $\la\hat{f}_p(\w_{\ell_2}^*)$. Can we do further? Possibly yes. To show $\hat{f}_p(\w_{\ell_2}^*)$ is indeed around $\hat{f}_p(\w^*_{low})$, we can consider the modified deviation problem $\Lc^*_{dev}(\g,\h)=\min_{\w\in S_{dev}}\Lc(\w;\g,\h)$ where we modify the set $S_{dev}$ to,
\beq
S_{dev}=\{\w\big||\frac{\la\hat{f}_p(\w)}{\sigma}-\frac{\Clf}{\sqrt{m-\Dlf}}|>\eps_{dev}\sqrt{m}\}.\nn
\eeq
We may then repeat the same arguments, i.e., try to argue that the objective restricted to $S_{dev}$ is strictly greater than what we get from the upper bound optimization $\Uc(\g,\h)$. While this approach may be promising, we believe it is more challenging than our $\ell_2$ norm analysis of $\|\w_{\ell_2}^*\|$ and it will not be topic of this paper.

%
%

The next section shows that there exists a one-to-one (monotone) mapping of the region $\mathcal{R}_{ON}$ to the entire possible regime of penalty parameters of the $\ell_2^2$-LASSO.

%
%
%

\subsection{Properties of $\map(\la)$}\label{map prop}

The following result shows that ${\bf{P}}(\la\Cc),{\bf{D}}(\la\Cc),{\bf{C}}(\la\Cc)$ (see \eqref{eq:defn01}) are Lipschitz continuous and will be useful for the consequent discussion. The proof can be found in Appendix \ref{sec:concentration}. 
\begin{lem} \label{useful lemon}Let $\Cc$ be a compact and convex set. Given scalar function $g(x)$, define the local Lipschitz constant to be $L_g(x)=\lim\sup_{x'\rightarrow x}\left|\frac{g(x')-g(x)}{x'-x}\right|$. Let $\max_{\s\in\Cc}\|\s\|=R$. Then, viewing ${\bf{P}}(\la\Cc),{\bf{D}}(\la\Cc),{\bf{C}}(\la\Cc)$ as functions of $\la$, for $\la\geq 0$, we have,
\beq
\max\{L_{\bf{P}}(\la),L_{\bf{D}}(\la),L_{\bf{C}}(\la)\}\leq 2R(\sqrt{n}+\la R)\nn.
\eeq
\end{lem}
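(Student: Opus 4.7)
The plan is to prove the three Lipschitz bounds in a unified way by establishing pointwise bounds on the difference quotients for fixed $\h$, and then taking expectations.  For each $\h$, let $g_\h(\la):=\dt^2(\h,\la\Cc)$, $p_\h(\la):=\|\bu(\h,\la\Cc)\|^2$, and $c_\h(\la):=\corr(\h,\la\Cc)$.  The Pythagorean identity $g_\h(\la)+p_\h(\la)+2c_\h(\la)=\|\h\|^2$ holds for every $\la\geq 0$, so bounding any two local Lipschitz constants automatically bounds the third.

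For $g_\h$, I would use a feasibility trick.  Write $\bu(\h,\la\Cc)=\la\c^*(\la)$ with $\c^*(\la)\in\Cc$.  Since $\la_1\c^*(\la_2)\in\la_1\Cc$, optimality gives
\[
g_\h(\la_1)\leq \|\h-\la_1\c^*(\la_2)\|^2=\|\h-\la_2\c^*(\la_2)+(\la_2-\la_1)\c^*(\la_2)\|^2.
\]
Expanding the right-hand side, combining with the symmetric inequality (roles of $\la_1,\la_2$ swapped), and using $\|\c^*(\la)\|\leq R$ together with Cauchy--Schwarz, I would obtain the pointwise bound
\[
|g_\h(\la_1)-g_\h(\la_2)|\leq |\la_1-\la_2|\bigl(2\|\h\|R+(\la_1+\la_2)R^2\bigr).
\]

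For $p_\h$, the key observation is that for $\la>0$,
\[
\c^*(\la)=\arg\min_{\c\in\Cc}\|\h/\la-\c\|^2=\bu(\h/\la,\Cc),
\]
so nonexpansiveness of the Euclidean projection onto the fixed convex set $\Cc$ gives $\|\c^*(\la_1)-\c^*(\la_2)\|\leq \|\h\|\,|\la_1-\la_2|/(\la_1\la_2)$.  Writing $p_\h(\la)=\la^2\|\c^*(\la)\|^2$ and decomposing the difference as
\[
\la_1^2\|\c^*(\la_1)\|^2-\la_1^2\|\c^*(\la_2)\|^2+\bigl(\la_1^2-\la_2^2\bigr)\|\c^*(\la_2)\|^2,
\]
then using $|\|\c^*(\la_1)\|^2-\|\c^*(\la_2)\|^2|\leq 2R\|\c^*(\la_1)-\c^*(\la_2)\|$ and $\|\c^*\|\leq R$, yields a pointwise bound of the same shape $|p_\h(\la_1)-p_\h(\la_2)|\leq |\la_1-\la_2|\bigl((\la_1/\la_2)\,2R\|\h\|+(\la_1+\la_2)R^2\bigr)$.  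The edge case $\la=0$ is immediate from $p_\h(0)=0$ and $p_\h(\la)\leq \la^2 R^2$.

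Finally, taking expectation of both pointwise bounds, using $\E\|\h\|\leq\sqrt{\E\|\h\|^2}=\sqrt{n}$, dividing by $|\la_1-\la_2|$, and letting $\la_1\to\la_2=\la$ gives $L_{\mathbf{D}}(\la),L_{\mathbf{P}}(\la)\leq 2R(\sqrt{n}+\la R)$.  The identity $\mathbf{P}(\la\Cc)+\mathbf{D}(\la\Cc)+2\mathbf{C}(\la\Cc)=n$ then yields $L_{\mathbf{C}}(\la)\leq \tfrac12\bigl(L_{\mathbf{P}}(\la)+L_{\mathbf{D}}(\la)\bigr)\leq 2R(\sqrt{n}+\la R)$.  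The only mildly subtle point is the handling of $\la\to 0$ in the projection-based argument for $p_\h$, but as noted this is a direct one-line check; everything else is routine once the feasibility trick is in place.
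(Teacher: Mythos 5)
Your proposal is correct, and it reaches the stated bound $2R(\sqrt{n}+\la R)$ for all three quantities. For $\mathbf{P}(\la\Cc)$ and $\mathbf{C}(\la\Cc)$ your route essentially coincides with the paper's: the paper also exploits the scaling identity $\bu(\h,(\la+\eps)\Cc)=\frac{\la+\eps}{\la}\bu(\frac{\la\h}{\la+\eps},\la\Cc)$ together with nonexpansiveness of the projection (working with the norm rather than its square, and squaring at the end), and it also derives $L_{\mathbf{C}}$ from the identity $\mathbf{D}+\mathbf{P}+2\mathbf{C}=n$. Where you genuinely diverge is $\mathbf{D}(\la\Cc)$: the paper invokes differentiability of $\Dlf$ and the derivative formula $\frac{d}{d\la}\mathbf{D}(\la\Cc)=-\frac{2}{\la}\mathbf{C}(\la\Cc)$ imported from Amelunxen et al., bounds $|\mathbf{C}(\la\Cc)|$ by $\la R\,\E[\dt(\h,\la\Cc)]$, and has to cite that reference again separately for the endpoint $\la=0$. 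Your two-sided feasibility argument --- evaluating the squared distance to $\la_1\Cc$ at the scaled optimizer for $\la_2$ and symmetrizing --- is entirely elementary and self-contained, handles $\la=0$ by the same mechanism, and does not presuppose differentiability at all. The only cost is a slightly sloppier intermediate constant (your $(\la_1+\la_2)R^2$ should really carry an extra $O(|\la_1-\la_2|)R^2$ term from the square of the perturbation), but this vanishes in the limit defining the local Lipschitz constant, so the conclusion is unaffected. As a minor caveat, at $\la_2=0$ the "optimizer in $\Cc$" is not canonically defined since $0\cdot\Cc=\{0\}$; one should fix an arbitrary point of $\Cc$ there, after which both directions of the feasibility inequality go through.
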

The following proposition is restatement of Theorem \ref{thm:map}. Recall the definition of $\Rco$ from Definition \ref{defn:Ron2}.
\begin{propo} Assume $m>\Df\labb)$. Recall that $\Rco=(\lac,\lam)$. $\call(\la)=\frac{m-\Dlf-\Clf}{\sqrt{m-\Dlf}}$ and $\map(\la)=\la\cdot\call(\la)$ have the following properties over $\{\lac\}\cup\Rco\rightarrow \{0\}\cup\R^+$.
\begin{itemize}
\item $\call(\la)$ is a nonnegative, increasing and continuous function over $\{\lac\}\cup\Rco$.
\item $\map(\la)$ is nonnegative, strictly increasing and continuous at all $\la \in\{\lac\}\cup\Rco$.
\item $\map(\lac)=0$. $\lim_{\la\rightarrow \lam}\map(\la)=\infty$. Hence, $\map(\la):\{\lac\}\cup\Rco\rightarrow \{0\}\cup\R^+$ is bijective.
\end{itemize}
\end{propo}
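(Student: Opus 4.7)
The plan is to establish the three bullets in order, relying on three ingredients already assembled in the paper: the Lipschitz continuity of $\Dlf,\Clf,\Plf$ as functions of $\la$ (Lemma \ref{useful lemon}), the derivative identity $\tfrac{d}{d\la}\Dlf=-\tfrac{2}{\la}\Clf$ (Lemma \ref{lem:Dlf}), and the structural facts assembled in Lemma \ref{lemma:keyProp} and Lemma \ref{lemma:props}.

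\textbf{Continuity and nonnegativity of $\call$.} Lemma \ref{useful lemon} immediately gives continuity of the numerator $m-\Dlf-\Clf$. To transfer continuity to the quotient I would verify that $m-\Dlf$ remains strictly positive throughout $\{\lac\}\cup\Rco$: on $\Rco$ this is by Definition \ref{defn:Ron}, and at $\la=\lac$ it follows from Lemma \ref{lemma:lac} combined with the sign information in Lemma \ref{lemma:keyProp}(8) (either $\lac=0$ with $\Df 0)=n<m$, or $\lac\in(0,\labb)$ with $\Cf\lac)>0$). Nonnegativity of $\call$ is then a direct consequence of Lemma \ref{lemma:props}: on $\Rco$ the numerator $m-\Dlf-\Clf$ is strictly positive, while it vanishes precisely at $\lac$.

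\textbf{Monotonicity --- the main step.} The crux is the derivative of $\call(\la)=(m-\Dlf)^{1/2}-\Clf(m-\Dlf)^{-1/2}$. Substituting $D'=-2C/\la$, a direct computation yields
\begin{align*}
\la\sqrt{m-\Dlf}\,\call'(\la) \;=\; \bigl(\Clf-\la\Clf'\bigr)\;+\;\frac{\Clf^{\,2}}{m-\Dlf}.
\end{align*}
To conclude positivity of the right-hand side I would link $\Clf-\la\Clf'>0$ to strict convexity of $\Dlf$ (Lemma \ref{lemma:keyProp}(5)). Differentiating $D'=-2C/\la$ once more gives $D''=\tfrac{2}{\la^2}(C-\la C')$, so $D''>0$ is equivalent to $C-\la C'>0$ for every $\la>0$. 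Both summands in the display are therefore nonnegative and their sum is strictly positive on $\Rco$, yielding $\call'>0$. Since $\call\geq 0$ and $\la>0$ on $\{\lac\}\cup\Rco$, the identity $\map'=\call+\la\call'$ then delivers strict monotonicity of $\map$.

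\textbf{Endpoint values and bijection.} The equality $\map(\lac)=\lac\cdot\call(\lac)=0$ is immediate (in the degenerate case $m=n$ forcing $\lac=0$, the value is $0$ trivially and continuity at $0$ can be handled by a short L'H\^opital-type argument using the Taylor expansion of $\Dlf$ and $\Clf$ near $\la=0$). For the limit as $\la\uparrow\lam$, the defining equation $\Df\lam)=m$ forces $m-\Dlf\to 0^+$, while $\lam>\labb$ combined with Lemma \ref{lemma:keyProp}(7)--(8) yields $\Cf\lam)<0$ strictly (the only zero of $\Clf$ on $(0,\infty)$ is $\labb$, by strict convexity of $\Dlf$). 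Writing
\begin{align*}
\map(\la)\;=\;\la\sqrt{m-\Dlf}\;-\;\frac{\la\,\Clf}{\sqrt{m-\Dlf}},
\end{align*}
the first term tends to $0$ while the second blows up to $+\infty$, so $\map(\la)\to+\infty$. Continuity, strict monotonicity, and these two endpoint values jointly produce the claimed bijection onto $\{0\}\cup\R^+$. The main technical hurdle I foresee is the derivative computation for $\call'$ together with the identification $C-\la C'>0$ as an equivalent form of strict convexity of $\Dlf$; the degenerate boundary case $\lac=0$ with $m=n$ also merits a brief separate argument, but the rest is bookkeeping on top of results already in Sections \ref{sec:app} and \ref{sec:ell2LASSO}.
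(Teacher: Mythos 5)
Your overall architecture matches the paper's: show $m-\Dlf>0$ persists on $\{\lac\}\cup\Rco$, reduce monotonicity of $\call$ to the sign of a two-term decomposition of its derivative, and get the bijection from the endpoint values $\map(\lac)=0$ and $\map(\la)\to\infty$ as $\la\uparrow\lam$ (the endpoint analysis is correct and identical to the paper's). The decomposition you obtain, $\la\sqrt{m-\Dlf}\,\call'=(\Clf-\la\Clf')+\Clf^2/(m-\Dlf)$, is also exactly the paper's: since $n=\Dlf+2\Clf+\Plf$, one has $\Clf-\la\Clf'=\tfrac{\la}{2}\Plf'$, which is the quantity $-\tD-2\tC$ the paper controls.

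The genuine gap is in how you justify positivity of that first summand. You differentiate $\Clf$ and then differentiate $\Dlf'=-2\Clf/\la$ a second time, claiming $D''>0$ is equivalent to $\Clf-\la\Clf'>0$ and follows from strict convexity. But nothing in the paper establishes that $\Clf$ is differentiable or that $\Dlf$ is twice differentiable --- Lemma \ref{lem:Dlf} gives only first-order differentiability of $\Dlf$, and Lemma \ref{useful lemon} gives only Lipschitz continuity of $\Clf$ and $\Plf$; moreover strict convexity does not imply $D''>0$ even where $D''$ exists (consider $x^4$ at the origin), so the claimed equivalence would not deliver strict positivity in any case. The repair is the route the paper takes: work with one-sided difference quotients $\tC=\tfrac{1}{h}(\Cf\la+h)-\Clf)$ and $\tD$, use the Lipschitz bounds of Lemma \ref{useful lemon} to control the error terms, and observe that $-\tD-2\tC=\tP\geq 0$ because $\Plf$ is nondecreasing (Lemma \ref{lemma:keyProp}, statement 6) --- no second derivative needed. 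A second, smaller issue: at $\la=\labb$ both of your summands can vanish ($\Cf\labb)=0$ and $\Plf'$ need not be positive there), so "the sum is strictly positive on $\Rco$" is not justified at that point; the paper handles $\labb$ by noting strict increase on a punctured neighborhood, and in any case strict increase of $\map=\la\call$ on $\Rco$ already follows from $\call>0$ there together with $\call$ nondecreasing.
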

\begin{proof} {\emph{Proof of the first statement:}} Assume $\la\in\Rco$, from Lemma \ref{lemma:props}, $m>\max\{\Dlf,\Dlf+\Clf\}$ and $\la>0$. Hence, $\call(\la)$ is strictly positive over $\la\in\Rco$. Recall that,
\beq
\call(\la)=\frac{m-\Dlf-\Clf}{\sqrt{m-\Dlf}}=\sqrt{m-\Dlf}-\frac{\Clf}{\sqrt{m-\Dlf}}\nn.
\eeq
Let $h>0$. We will investigate the change in $\call(\la)$ by considering $\call(\la+h)-\call(\la)$ as $h\rightarrow 0^+$. Since $\Dlf$ is differentiable, $\sqrt{m-\Dlf}$ is differentiable as well and gives,
\beq
\frac{\pa\sqrt{m-\Dlf}}{\pa\la}=\frac{-\Dlf'}{2\sqrt{m-\Dlf}}\label{e0eq}.
\eeq
For the second term, consider the following,
\begin{align}
\frac{\Cf\la+h)}{\sqrt{m-\Df\la+h)}}-\frac{\Clf}{\sqrt{m-\Dlf}}=h[E_1(\la,h)+E_2(\la,h)]\nn,
\end{align}
where,
\begin{align}
&E_1(\la,h)=\frac{1}{h}[\frac{\Cf\la+h)}{\sqrt{m-\Df\la+h)}}-\frac{\Clf}{\sqrt{m-\Df\la+h)}}]\nn,\\
&E_2(\la,h)=\frac{1}{h}[\frac{\Clf}{\sqrt{m-\Df\la+h)}}-\frac{\Clf}{\sqrt{m-\Dlf}}]\nn.
\end{align}
As $h\rightarrow0^+$, we have,
\beq
\lim_{h\rightarrow 0^+}E_2(\la,h)=\Clf\frac{\pa\frac{1}{\sqrt{m-\Dlf}}}{\pa \la}=\frac{\Clf\Dlf'}{2(m-\Dlf)^{3/2}}\leq 0\label{e2eq},
\eeq
since $\sg(\Clf)=-\sg(\Dlf')$.

Fix arbitrary $\eps_D>0$ and let $R=\sup_{\s\in\paf}\|\s\|$. Using continuity of $\Dlf$ and Lemma \ref{useful lemon}, choose $h$ sufficiently small to ensure,
\beq
|\frac{1}{\sqrt{m-\Dlf}}-\frac{1}{\sqrt{m-\Df\la+h)}}|<\eps_D, ~~~|\Cf\la+h)-\Clf|<3R(\sqrt{n}+\la R)h\nn.
\eeq
We then have,
\beq
E_1(\la,h)\leq \frac{\Cf\la+h)-\Clf}{h}\frac{1}{\sqrt{m-\Dlf}}+3\eps_DR(\sqrt{n}+\la R)\label{e1eq}.
\eeq
Denote $\frac{\Cf\la+h)-\Clf}{h},\frac{\Df\la+h)-\Dlf}{h}$ by $\tC$ and $\tD$. Combining \eqref{e2eq}, \eqref{e1eq} and \eqref{e0eq}, for sufficiently small $h$, we find,
\beq
\lim\sup_{h\rightarrow0^+}\frac{\call(\la+h)-\call(\la)}{h}=\lim\sup_{h\rightarrow 0}[\frac{-\tD}{2\sqrt{m-\Dlf}}-\frac{\tC}{\sqrt{m-\Dlf}}-\frac{\Clf\Dlf'}{2(m-\Dlf)^{3/2}}+3\eps_DR(\sqrt{n}+\la R)]\nn.
\eeq
We can let $\eps_D$ go to $0$ as $h\rightarrow 0^+$ and $-\tD-2\tC$ is always nonnegative as $\Plf$ is nondecreasing due to Lemma \ref{lemma:keyProp}. Hence, the right hand side is nonnegative. Observe that the increase is strict for $\la\neq\labb$, as we have $\Clf\Dlf'>0$ whenever $\la\neq \labb$ due to the fact that $\Dlf'$ (and $\Clf$) is not $0$. Since increase is strict around any neighborhood of $\labb$, this also implies strict increase at $\la=\labb$.

Consider the scenario $\la=\lac$. Since $\call(\la)$ is continuous for all $\la\in\{\lac\}\cup\Rco$ (see next statement) and is strictly increasing at all $\la>\lac$, it is strictly increasing at $\la=\lac$ as well.

To see continuity of $\call(\la)$, observe that, for any $\la \in \Rco\cup\{\lac\}$, $m-\Dlf>0$ and from Lemma \ref{useful lemon}, $\Dlf,\Clf$ are continuous functions which ensures continuity of $m-\Dlf-\Clf$ and $m-\Dlf$. Hence, $\call(\la)$ is continuous as well.

{\emph{Proof of the second statement:}} Since $\call(\la)$ is strictly increasing on $\Rco$, $\la\cdot\call(\la)$ is strictly increasing over $\Rco$ as well. Increase at $\la=\lac$ follows from the fact that $\map(\lac)=0$ (see next statement). Since $\call(\la)$ is continuous, $\la\cdot\call(\la)$ is continuous as well.

{\emph{Proof of the third statement:}} From Lemma \ref{lemma:lac}, if $\call(\lac)>0$, $\lac=0$ hence $\map(\lac)=0$. If $\call(\lac)=0$, then $\map(\lac)=\lac\cdot\call(\lac)=0$. In any case, $\map(\lac)=0$. Similarly, since $\lam>\labb$, $\Cf\lam)<0$ and as $\la\rightarrow\lam$ from left side, $\call(\la)\rightarrow\infty$. This ensures $\map(\la)\rightarrow\infty$ as well. Since $\map(\la)$ is continuous and strictly increasing and achieves the values $0$ and $\infty$, it maps $\{\lac\}\cup\Rco$ to $\{0\}\cup\R^+$ bijectively.
\end{proof}

\subsection{On the stability of $\ell_2^2$-LASSO}
As it has been discussed in Section \ref{map prop} in detail, $\map(\cdot)$ takes the interval $[\lac,\lam)$ to $[0,\infty)$ and Theorem \ref{thm:ell2LASSO} gives tight stability guarantees for $\la\in\Rco$. Consequently, one would expect $\ell_2^2$-LASSO to be stable everywhere as long as the $[\lac,\lam)$ interval exists. $\lac$ and $\lam$ is well defined for the regime $m>\Df\labb)$. Hence, we now expect $\ell_2^2$-LASSO to be stable everywhere for $\tau>0$. The next lemma shows that this is indeed the case under Lipschitzness assumption.


\begin{lem} Consider the $\ell_2^2$-LASSO problem \eqref{ell22model}. Assume $f(\cdot)$ is a convex and Lipschitz continuous function and $\x_0$ is not a minimizer of $f(\cdot)$. Let $\A$ have independent standard normal entries and $\sigma\vb\sim\Nn(0,\sigma^2\Iden_m)$. Assume $(1-\eps_L)m\geq\Delxf$ for a constant $\eps_L>0$ and $m$ is sufficiently large. Then, there exists a number $C>0$ independent of $\sigma$, such that, with probability $1-\exp(-\order{m})$, 
\beq
\frac{\|\x_{\ell_2^2}^*-\x_0\|^2}{\sigma^2}\leq C\label{Csatisfies}.
\eeq
\end{lem}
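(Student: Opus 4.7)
The plan is to combine the Gordon-based framework of Sections~\ref{sec:intro2tech}--\ref{sec:KeyIdeas} with the bijective mapping $\map$ of Section~\ref{justifyl22}. Under $(1-\eps_L)m\geq\Delxf$ and the well-known relation $\min_{\la\geq 0}\Dlf\approx\Delxf$ (see \eqref{eq:Delxf2Dlf}), one has $m>\Df\labb)$, so Theorem~\ref{thm:map} furnishes $\la:=\map^{-1}(\tau)\in\Rco$. The target constant is then $C:=(1+\delta)\Dlf/(m-\Dlf)$, which depends only on $m,n,\la,\paf$ and in particular is independent of $\sigma$.

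The first step is to apply Gordon directly to the (approximated) $\ell_2^2$-LASSO. Using the dual identity $\tfrac{1}{2}\|\y-\A\x\|^2=\max_{\ub}\{\ub^T(\y-\A\x)-\tfrac{1}{2}\|\ub\|^2\}$, the problem is recast as a minimax in the bilinear form $\ub^T\A\w$, to which the modified Gordon's Lemma~\ref{lemma:Gor} applies. After optimizing the resulting Gaussian surrogate over the direction of $\ub$ (a radial problem) and then over the magnitude $r=\|\ub\|$ (a simple scalar quadratic), the key optimization reduces to a scalar problem in $\alpha=\|\w\|$ of the same flavor as Lemma~\ref{lemma:lowKey} but augmented by a quadratic-in-$\alpha$ term coming from the $\tfrac{1}{2}\|\ub\|^2$ penalty. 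Analysing this scalar problem shows that its minimum is attained at $\alpha\approx\sigma\sqrt{\Dlf/(m-\Dlf)}$, with the first-order condition reproducing exactly the identity $\map(\la)=\tau$; this is the rigorous analog of the heuristic discussion of Section~\ref{justifyl22}.

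The third step mirrors Section~\ref{synops}: concentrate the optimal cost from above and below via the upper/lower key optimizations and exclude deviations via the deviation key optimization (Lemma~\ref{lemma:dev}), showing that any $\w$ with $|\|\w\|^2/\sigma^2-\Dlf/(m-\Dlf)|\geq\delta\cdot\Dlf/(m-\Dlf)$ increases the objective by $\Omega(\sigma\sqrt{m})$. Lipschitz continuity of $f$ enters in two essential places: (a) it controls the gap $|f(\x_0+\w)-\hat f(\x_0+\w)|\leq L\|\w\|$ uniformly on balls of radius $O(\sigma\sqrt{C_0})$, allowing the transfer from the approximated to the original problem exactly as in Section~\ref{sec:ell2Or} but now with $\sigma$ allowed to be arbitrary; and (b) it provides coercivity through $f_p(\w)\geq-L\|\w\|$, which together with the optimality inequality $\tfrac{1}{2}\|\A\w^*-\sigma\vb\|^2+\sigma\tau f_p(\w^*)\leq\tfrac{\sigma^2}{2}\|\vb\|^2$ prevents $\w^*$ from escaping to infinity. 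The main obstacle is the scalarization after dualization: unlike $\ell_2$-LASSO the minimax is no longer homogeneous in $\w$, so one must jointly optimize $\alpha$ and $r$ and verify that the critical point in $\alpha$ is exactly $\sigma\sqrt{\Dlf/(m-\Dlf)}$ with $\la=\map^{-1}(\tau)$; a secondary (and minor) difficulty is rigorously passing from $(1-\eps_L)m\geq\Delxf$ to $m>\Df\labb)$, which uses the bound $\Df\labb)\leq\Delxf+o(n)$ from \cite{McCoy,Foygel,Oymak}.
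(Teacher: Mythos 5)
There is a genuine gap here: the route you propose is essentially an attempt to prove Formula \ref{form:ell22LASSO} (the sharp NSE prediction for the $\ell_2^2$-LASSO) for \emph{arbitrary} noise level, which is precisely what the paper declares open. Two of your steps would fail as described. First, the transfer from the approximated to the original problem ``exactly as in Section~\ref{sec:ell2Or} but now with $\sigma$ allowed to be arbitrary'' does not go through: the uniform max formula (Proposition~\ref{prop:F3T}) controls $f(\x_0+\w)-\hat f(\x_0+\w)$ by $\delta\|\w\|$ only for $\|\w\|$ in a small neighborhood whose radius depends on $\delta$, so one needs $\|\w\|=O(\sigma)\to 0$ to make $\delta$ arbitrarily small. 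The crude Lipschitz bound $|f(\x_0+\w)-\hat f(\x_0+\w)|\le L\|\w\|$ that survives for large $\sigma$ is of the same order as the terms being compared and cannot separate the deviated cost from the optimal cost. Second, the scalarization of the dualized $\ell_2^2$ objective and the accompanying deviation analysis (your claim that any $\w$ with $\|\w\|$ off by a factor $(1\pm\delta)$ raises the objective by $\Omega(\sigma\sqrt{m})$) is exactly the step the authors identify in Section~\ref{justifyl22} as ``more challenging than our $\ell_2$ norm analysis'' and explicitly leave to future work; asserting it does not establish it.

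The lemma itself asks for far less than you are trying to prove --- only \emph{some} constant $C$ independent of $\sigma$, not the sharp one --- and the paper's actual argument is elementary. It introduces the widened tangent cone $\Tc_f(\x_0,\eps_0)$, shows via a Gaussian-width/escape-through-a-mesh bound (Lemma~\ref{wide lem}) that $\|\A\w\|\ge\eps_1\|\w\|$ uniformly over that cone with probability $1-\exp(-\order{m})$, and then splits on whether the error vector lies in the widened cone: inside, the restricted singular value plus $f_p(\w)\ge -L\|\w\|$ and the basic optimality inequality $\tfrac12\|\z-\A\w^*\|^2+\sigma\tau f_p(\w^*)\le\tfrac12\|\z\|^2$ give $\|\w^*\|/\sigma\le 4\sqrt{m}/\eps_1+2\tau L/\eps_1^2$; outside, $f_p(\w^*)\ge\eps_0\|\w^*\|$ and completing the square give $\|\w^*\|/\sigma\le 2m/(\tau\eps_0)$. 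The one ingredient of your plan that does appear in the paper's proof is the coercivity use of the optimality inequality; everything built on Gordon's Lemma, the mapping $\map^{-1}(\tau)$, and the region $\Rco$ is unnecessary for this statement and, as written, not rigorously available.
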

\noindent{\emph{Remark:}} We are not claiming anything about $C$ except the fact that it is independent of $\sigma$. Better results can be given, however, our intention is solely showing that the estimation error is proportional to the noise variance.
\begin{proof} Consider the widening of the tangent cone defined as,
\beq
\Tc_f(\x_0,\eps_0)=\text{Cl}(\{\alpha\cdot \w\big|f(\x_0+\w)\leq f(\x_0)+\eps_0\|\w\|,~\alpha\geq 0\})\nn.
\eeq
Appendix \ref{widewidth} investigates basic properties of this set. In particular, we will make use of Lemma \ref{wide lem}. We can choose sufficiently small numbers $\eps_0,\eps_1>0$ (independent of $\sigma$) such that,
\beq
\min_{\w\in \Tc_f(\x_0,\eps_0),\|\w\|=1}\|\A\w\|\geq \eps_1\label{tangent widen},
\eeq
with probability $1-\exp(-\order{m})$ as $\sqrt{m-1}-\sqrt{\Delxf}\gtrsim (1-\sqrt{1-\eps_L})\sqrt{m}$. Furthermore, we will make use of the following fact that $\|\z\|\leq 2\sigma\sqrt{m}$ with probability $1-\exp(-\order{m})$, where we let $\z=\sigma\vb$ (see Lemma \ref{lemma:conc4}).

Assuming these hold, we will show the existence of $C>0$ satisfying \eqref{Csatisfies}. Define the perturbation function $f_p(\w)=f(\x_0+\w)-f(\x_0)$. Denote the error vector by $\w_{\ell_2^2}^*=\x_{\ell_2^2}^*-\x_0$. Then, using the optimality of $\x_{\ell_2^2}^*$ we have,
\beq
\frac{1}{2}\|\y-\A\x^*_{\ell_2^2}\|^2+\sigma\tau f(\x^*_{\ell_2^2})=\frac{1}{2}\|\z-\A\w_{\ell_2^2}^*\|^2+\sigma\tau f_p(\w_{\ell_2^2}^*)\leq \frac{1}{2}\|\z\|^2\nn.
\eeq
On the other hand, expanding the terms,
\beq
\frac{1}{2}\|\z\|^2\geq \frac{1}{2}\|\z-\A\w_{\ell_2^2}^*\|^2+\sigma\tau f_p(\w_{\ell_2^2}^*)\geq \frac{1}{2}\|\z\|^2-\|\z\|\|\A\w_{\ell_2^2}^*\|+\frac{1}{2}\|\A\w_{\ell_2^2}^*\|^2+\sigma\tau f_p(\w_{\ell_2^2}^*)\nn.
\eeq
Using $\|\z\|\leq2\sigma\sqrt{m}$, this implies,
\beq
2\sigma\sqrt{m}\|\A\w_{\ell_2^2}^*\|\geq \|\z\|\|\A\w_{\ell_2^2}^*\|\geq\frac{1}{2}\|\A\w_{\ell_2^2}^*\|^2+ \sigma\tau f_p(\w_{\ell_2^2}^*)\label{abcdef}.
\eeq
Normalizing by $\sigma$,
\beq
2\sqrt{m}\|\A\w_{\ell_2^2}^*\|\geq \frac{1}{2\sigma}\|\A\w_{\ell_2^2}^*\|^2+\tau f_p(\w_{\ell_2^2}^*).\nn
\eeq
The rest of the proof will be split into two cases.

{\bf{Case 1:}} Let $L$ be the Lipschitz constant of $f(\cdot)$. If $\w_{\ell_2^2}^*\in \Tc_f(\x_0,\eps_0)$, using \eqref{tangent widen},
\begin{align}
 2\sqrt{m}\|\A\w_{\ell_2^2}^*\|\geq \frac{1}{2\sigma}\|\A\w_{\ell_2^2}^*\|^2-\tau L\|\w_{\ell_2^2}^*\|\geq  \frac{1}{2\sigma}\|\A\w_{\ell_2^2}^*\|^2-\frac{\tau L}{\eps_1}\|\A\w_{\ell_2^2}^*\|\nn.
\end{align}
Further simplifying, we find, $2\sigma(2\sqrt{m}+\frac{\tau L}{\eps_1})\geq \|\A\w_{\ell_2^2}^*\|\geq \eps_1\|\w_{\ell_2^2}^*\|$. Hence, indeed, $\frac{\|\w_{\ell_2^2}^*\|}{\sigma}$ is upper bound by $\frac{4\sqrt{m}}{\eps_1}+\frac{2\tau L}{\eps_1^2}$.

{\bf{Case 2:}} Assume $\w_{\ell_2^2}^*\not\in \Tc_f(\x_0,\eps_0)$. Then $f_p(\w_{\ell_2^2}^*)\geq \eps_0\|\w_{\ell_2^2}^*\|$. Using this and letting $\hat{\w}=\frac{\w_{\ell_2^2}^*}{\sigma}$, we can rewrite \eqref{abcdef} without $\sigma$ as,
 \beq
\frac{1}{2}\|\A\hat{\w}\|^2- 2\sqrt{m}\|\A\hat{\w}\|+2m+(\tau\eps_0\|\hat{\w}\|-2m)\leq 0.\nn
 \eeq
Finally, observing $\frac{1}{2}\|\A\hat{\w}\|^2- 2\sqrt{m}\|\A\hat{\w}\|+2m=\frac{1}{2} (\|\A\wh\|-2\sqrt{m})^2$, we find, 
 \beq
\tau\eps_0\|\hat{\w}\|-2m\leq 0\implies\frac{ \|\w_{\ell_2^2}^*\|}{\sigma}\leq \frac{2m}{\tau\eps_0}\nn.
 \eeq
%

\end{proof}


\section{Converse Results}\label{notenoughm}

Until now, we have stated the results assuming $m$ is sufficiently large. In particular, we have assumed that $m\geq \Delxf$ or $m\geq \Dlf$. 
It is important to understand the behavior of the problem when $m$ is small.
Showing a converse result for $m<\Delxf$ or $m<\Dlf$ will illustrate the tightness of our analysis. In this section, we focus our attention on the case where $m<\Delxf$ and show that the NSE approaches infinity as $\sigma\rightarrow 0$. As it has been discussed previously, $\Delxf$ is the compressed sensing threshold which is 
the number of measurements required for the success of the noiseless problem \eqref{LinInv}:
\beq
\min_\x f(\x)~~~\text{subject to}~~~\A\x=\A\x_0\label{usualCS}.
\eeq
For our analysis, we use Proposition \ref{usualCSprop} below, which is a slight modification of Theorem 1 in \cite{McCoy}.
\begin{propo}\label{usualCSprop}[\cite{McCoy}] Let $\A\in\R^{m\times n}$ have independent standard normal entries. Let $\y=\A\x_0$ and assume $\x_0$ is not a minimizer of $f(\cdot)$. Further, for some $t>0$, assume $m\leq \Delxf-t\sqrt{n}$. Then, $\x_0$ is not a minimizer of \eqref{usualCS} with probability at least $1-4\exp(-\frac{t^2}{4})$.
\end{propo}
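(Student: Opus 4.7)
The plan is to derive this converse from the well-established conic phase-transition of Amelunxen--Lotz--McCoy--Tropp by expressing everything in the language of the descent cone of $f$ at $\x_0$. First, I would rephrase the failure of \eqref{usualCS} as a purely geometric event. If $\x_0$ is a minimizer, then for every $\w$ with $\A\w=0$ we must have $f(\x_0+\w)\geq f(\x_0)$, which is equivalent to requiring that the descent set $\{\w : f(\x_0+\w)\leq f(\x_0)\}$ meet $\Nc(\A)$ only at $\mathbf{0}$. Because $\x_0$ is assumed not to minimize $f$, this descent set has nonempty interior near $\mathbf{0}$, so the statement is equivalent to $\Tc_f(\x_0)\cap \Nc(\A)=\{\mathbf{0}\}$. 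Therefore it suffices to lower bound the probability that $\Tc_f(\x_0)$ contains a nonzero vector in the null space of $\A$.

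Second, I would identify $\Delxf$ with the statistical dimension of the descent cone. By Lemma \ref{tsg}, the cones $\text{cone}(\paf)$ and $\Tc_f(\x_0)$ are polar to one another. Moreau's decomposition therefore gives, for every $\h\in\R^n$, the identity $\dt^2(\h,\text{cone}(\paf)) = \|\bu(\h,\Tc_f(\x_0))\|^2$. Taking expectations,
\begin{align*}
\Delxf \;=\; \E\bigl[\|\bu(\h,\Tc_f(\x_0))\|^2\bigr] \;=\; \delta(\Tc_f(\x_0)),
\end{align*}
where $\delta(\cdot)$ is the statistical dimension functional of \cite{McCoy}. Hence the hypothesis $m\leq \Delxf - t\sqrt{n}$ is exactly the sub-threshold condition $m \leq \delta(\Tc_f(\x_0)) - t\sqrt{n}$ of the Amelunxen et al.\ phase transition, applied to the descent cone.

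Third, I would invoke the kinematic conic inequality of \cite{McCoy} (their Theorem I, in the form applied to a random null space). That result states that for a uniformly distributed subspace $V\subset \R^n$ of dimension $n-m$, the polarity between $\Tc_f(\x_0)$ and $\text{cone}(\paf)$ together with the sub-threshold assumption $m\leq \delta(\Tc_f(\x_0))-t\sqrt{n}$ implies
\begin{align*}
\Pro\bigl(\,V\cap \Tc_f(\x_0) \neq \{\mathbf{0}\}\,\bigr) \;\geq\; 1 - 4\exp\!\bigl(-t^2/4\bigr).
\end{align*}
Since $\Nc(\A)$ for a Gaussian $\A$ with i.i.d.\ $\Nn(0,1)$ entries is almost surely a uniformly distributed $(n-m)$-dimensional subspace in the Grassmannian, the same bound applies with $V=\Nc(\A)$. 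Combined with the geometric reduction of the first step, this yields the proposition.

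The only real technical content is the conic phase transition itself, which we take as a black box; the ``slight modification'' relative to Theorem 1 of \cite{McCoy} is purely notational, translating their Haar-subspace formulation to the null space of a Gaussian matrix and using our quantity $\Delxf$ in place of their statistical dimension. The main obstacle, were one to prove the phase transition from scratch, would be establishing the sharp concentration of the intrinsic volumes / spherical intrinsic volumes around the statistical dimension (which \cite{McCoy} accomplish via the master Steiner formula and a Laplace transform argument); by citing their result, we bypass this entirely.
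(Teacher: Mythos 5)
Your overall route is the one the paper itself takes: Proposition \ref{usualCSprop} is not proved from first principles there either, but imported from Theorem~1 of \cite{McCoy} after recognizing $\Delxf$ as the statistical dimension of the descent cone. Your second step --- using Lemma \ref{tsg} and Moreau's decomposition to write $\dt^2(\h,\text{cone}(\paf))=\|\bu(\h,\Tc_f(\x_0))\|^2$, hence $\Delxf=\delta(\Tc_f(\x_0))$ --- is correct and matches the paper's own footnote on statistical dimension, and passing from a Haar-random $(n-m)$-dimensional subspace to $\Nc(\A)$ for Gaussian $\A$ is indeed only notational.

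The gap is in your first step. The equivalence you assert --- that ``$f(\x_0+\w)\geq f(\x_0)$ for all $\w\in\Nc(\A)$'' is the same as ``$\{\w: f(\x_0+\w)\leq f(\x_0)\}$ meets $\Nc(\A)$ only at $\mathbf{0}$'' --- fails in the direction you need: a nonzero $\w\in\Nc(\A)$ with $f(\x_0+\w)=f(\x_0)$ witnesses a nontrivial intersection while leaving $\x_0$ a (non-unique) minimizer of \eqref{usualCS}. Consequently, lower-bounding $\Pro\bigl(\Tc_f(\x_0)\cap\Nc(\A)\neq\{\mathbf{0}\}\bigr)$ via the kinematic formula only yields ``$\x_0$ is not the \emph{unique} minimizer,'' which is exactly what \cite[Thm.~1]{McCoy} gives, whereas the Proposition asserts --- and Corollary \ref{usualcor} and the converse theorems downstream genuinely consume, through the event $f(\x^*)<f(\x_0)$ --- the strictly stronger conclusion. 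That strictness upgrade is precisely the ``slight modification'' your proof must supply and does not. The repair is to run the intersection argument on the \emph{strict} descent cone generated by $\{\w:\ f(\x_0+\w)<f(\x_0)\}$: this set is nonempty because $\x_0$ is not an unconstrained minimizer of $f$ (which is where that hypothesis is actually needed, rather than to give the descent set nonempty interior), a one-line convexity argument shows its closed conic hull equals $\Tc_f(\x_0)$, so it has the same statistical dimension, and one must then invoke the genericity step behind \cite{McCoy}'s Crofton-type formula to conclude that a random subspace meets the open cone nontrivially whenever it meets the closure nontrivially, outside a probability-zero event. Only then does a nonzero intersection point produce a feasible point with objective strictly below $f(\x_0)$.
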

Proposition \ref{usualCSprop} leads to the following useful Corollary.
\begin{cor}\label{usualcor} Consider the same setting as in Proposition \ref{usualCSprop} and denote $\x^*$ the minimizer of \eqref{usualCS}. For a given $t>0$, there exists an $\eps>0$ such that, with probability $1-8\exp(-\frac{t^2}{4})$, we have,
\beq
f(\x^*)\leq f(\x_0)-\eps\nn
\eeq
\end{cor}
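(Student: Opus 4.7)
The plan is to upgrade Proposition \ref{usualCSprop}, which yields the strict but random inequality $f(\x^*) < f(\x_0)$ with probability $\geq 1 - 4\exp(-t^2/4)$, to a uniform, deterministic $\eps$-gap, at the cost of allowing an extra $4\exp(-t^2/4)$ of failure probability. The additional probability budget is exactly what one spends in passing from a ``strictly positive random gap'' to a ``uniformly positive gap'' via continuity of the probability measure from below. I expect no real geometric work to be needed beyond what Proposition \ref{usualCSprop} already encodes; the corollary is essentially a soft measure-theoretic consequence.

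First, I would introduce the nonnegative random variable $p(\A) := f(\x_0) - f(\x^*(\A))$. This is well-defined (the optimal value of \eqref{usualCS} is uniquely determined by $\A$, regardless of which particular minimizer $\x^*$ is picked when it is not unique) and satisfies $p(\A) \geq 0$ deterministically since $\x_0$ is always feasible for \eqref{usualCS}. In this notation, Proposition \ref{usualCSprop} reads $\Pro[\, p(\A) > 0 \,] \geq 1 - 4\exp(-t^2/4)$.

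Next, I would consider the family of events $E_k := \{\A ~|~ p(\A) \geq 1/k\}$, which is monotone increasing in $k$ and satisfies $\bigcup_{k\geq 1} E_k = \{\, p(\A) > 0 \,\}$. By continuity of probability from below applied to this increasing sequence,
\begin{align*}
\lim_{k \to \infty} \Pro[E_k] \;=\; \Pro[\, p(\A) > 0 \,] \;\geq\; 1 - 4\exp(-t^2/4).
\end{align*}
Hence I can fix some $k_0$ large enough that $\Pro[E_{k_0}] \geq 1 - 8\exp(-t^2/4)$, and then take $\eps := 1/k_0$. This $\eps$ depends only on the distribution of $p(\A)$, i.e. only on the parameters $m, n, f, \x_0, t$, and in particular not on any individual realization of $\A$ — which is exactly what the corollary requires.

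The hard part, to the extent there is one, is essentially technical: I would have to verify measurability of $p(\A)$ viewed as a function of the Gaussian matrix $\A$, which follows from standard continuity properties of the value function of a parametric convex program in its data. Existence of $\x^*(\A)$ is already implicit in the corollary's setup. I would also remark that the purely existential character of this $\eps$ is consistent with the statement of Corollary \ref{usualcor}, which only asserts existence and does not demand an explicit formula; if one wanted a quantitative expression for $\eps$ in terms of $\Delxf - m$, one would presumably have to open up the proof of Proposition \ref{usualCSprop} and extract a margin from the Gaussian concentration step, which is beyond what is needed here.
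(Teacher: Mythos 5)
Your proposal is correct and follows essentially the same route as the paper: the authors also define the gap random variable ($\chi = f(\x^*)-f(\x_0)$ in their notation), form the increasing family of events $\{\chi \leq -1/n\}$, invoke continuity of probability from below to pass from a strictly positive random gap to a uniform $\eps$-gap, and absorb the loss into the enlarged failure probability $8\exp(-t^2/4)$. Your additional remarks on well-definedness via the optimal value and on measurability are fine but not needed beyond what the paper states.
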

\begin{proof}
Define the random variable $\chi=f(\x^*)-f(\x_0)$. $\chi$ is random since $\A$ is random. Define the events $E=\{\chi<0\}$ and $E_n=\{\chi\leq -\frac{1}{n}\}$ for positive integers $n$. From Proposition \ref{usualCSprop},  $\Pro(E)\geq 1-4\exp(-\frac{t^2}{4})$. Also, observe that,
\beq
E=\bigcup_{i=1}^{\infty} E_i~~~\text{and}~~~E_n=\bigcup_{i=1}^{n} E_i\nn,
\eeq
Since $E_n$ is an increasing sequence of events, by continuity property of probability, we have $\Pro(E)=\lim_{n\rightarrow\infty}\Pro(E_n)$. Thus, we can pick $n_0$ such that, $\Pro(E_{n_0})>1-8\exp(-\frac{t^2}{4})$. Let $\eps=n_0^{-1}$, to conclude the proof.
\end{proof}
The results discussed in this section, hold under the following assumption.
\begin{assume}\label{ass1} Assume $m_{lack}:=\Delxf-m>0$. $\x_0$ is not a minimizer of the convex function $f(\cdot)$. $f(\cdot):\R^n\rightarrow \R$ is a Lipschitz function, i.e., there exists constant $L>0$ such that, for all $\x,\y\in\R^n$, $|f(\x)-f(\y)|\leq L\|\x-\y\|$.
\end{assume}

\subsection{Converse Result for C-LASSO}
Recall the C-LASSO problem \eqref{cmodel}:
\beq
\min_\x \|\A\x_0+\sigma\vb-\A\x\|~~~\text{subject to}~~~f(\x)\leq f(\x_0)\label{C-LASSOnewform}.
\eeq
\eqref{C-LASSOnewform} has multiple minimizers, in particular, if $\x^*$ is a minimizer, so is $\x^*+\vb$ for any $\vb\in\Nn(\A)$. We will argue that when $m$ is small, there exists a feasible minimizer which is far away from $\x_0$. The following theorem is a rigorous statement of this idea.

\begin{thm}\label{convclasso} Suppose Assumption \ref{ass1} holds and let $\A,\vb$ have independent standard normal entries. For any given constant $C_{max}>0$, there exists $\sigma_0>0$ such that, whenever $\sigma\leq\sigma_0$, with probability $1-8\exp(-\frac{m_{lack}^2}{4n})$, over the generation of $\A,\vb$, there exists a minimizer of \eqref{C-LASSOnewform}, $\x^*_c$, such that,
\beq
\frac{\|\x^*_c-\x_0\|^2}{\sigma^2}\geq C_{max}\label{cwewant}
\eeq
\end{thm}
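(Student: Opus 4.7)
The strategy is to leverage the failure of noiseless compressed sensing (Corollary \ref{usualcor}) to exhibit a feasible point of the C-LASSO at which the residual can be driven to zero along a direction transverse to $\x_0$. Setting $t = m_{lack}/\sqrt{n}$ in Corollary \ref{usualcor} yields a deterministic $\eps = \eps(m,n,f,\x_0) > 0$ such that, with probability at least $1 - 8\exp(-m_{lack}^2/(4n))$ over $\A$, the minimizer $\x^*$ of the noiseless program \eqref{usualCS} satisfies both $\A\x^* = \A\x_0$ and $f(\x^*) \leq f(\x_0) - \eps$. In particular, Lipschitzness of $f$ immediately forces $\|\x^* - \x_0\| \geq \eps/L$, a strictly positive lower bound independent of $\sigma$.

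Next I would construct an explicit minimizer far from $\x_0$. Since $\A$ is $m\times n$ Gaussian with $m < n$, it has full row rank almost surely; thus $\u_0 := \A^T(\A\A^T)^{-1}\vb$ is the (well-defined) minimum-norm solution of $\A\u = \vb$. Define the candidate $\x_c := \x^* + \sigma\u_0$. A direct calculation gives
$$\A\x_c - \y \;=\; \A\x^* + \sigma\A\u_0 - \A\x_0 - \sigma\vb \;=\; 0,$$
so $\x_c$ attains the minimal possible residual. Lipschitzness then yields $f(\x_c) \leq f(\x^*) + L\sigma\|\u_0\| \leq f(\x_0) - \eps + L\sigma\|\u_0\|$, so $\x_c$ is feasible for \eqref{C-LASSOnewform} as soon as $\sigma\|\u_0\| \leq \eps/L$; once feasible, $\x_c$ is automatically a minimizer because the C-LASSO objective is nonnegative. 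Moreover, by the reverse triangle inequality,
$$\|\x_c - \x_0\| \;\geq\; \|\x^* - \x_0\| - \sigma\|\u_0\| \;\geq\; \eps/L - \sigma\|\u_0\|.$$

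To finish I need a high-probability, $\sigma$-independent bound on $\|\u_0\|$. Since $\|\u_0\|^2 = \vb^T(\A\A^T)^{-1}\vb$ with $\A$ and $\vb$ independent standard Gaussians, the Wishart identity $\E[(\A\A^T)^{-1}] = \mathbf{I}_m/(n-m-1)$ combined with standard concentration for Gaussian quadratic forms produces a deterministic constant $K = K(m,n)$ with $\|\u_0\| \leq K$ on an event of probability approaching one exponentially fast in $m$. On this event the choice
$$\sigma_0 \;=\; \frac{\eps}{L\bigl(K + \sqrt{C_{max}}\bigr)}$$
ensures that every $\sigma \leq \sigma_0$ simultaneously satisfies (i) $\sigma K \leq \eps/L$ (feasibility of $\x_c$) and (ii) $\eps/(L\sigma) - K \geq \sqrt{C_{max}}$ (so $\|\x_c - \x_0\|^2/\sigma^2 \geq C_{max}$). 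Since $\eps, L, K$ depend only on $m,n,f,\x_0$, the threshold $\sigma_0$ is genuinely deterministic.

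The main obstacle I anticipate is the probability bookkeeping: Corollary \ref{usualcor} already accounts for the full $8\exp(-m_{lack}^2/(4n))$ probability budget stated in the theorem, so the concentration bound for $\|\u_0\|$ must be absorbed without degrading this guarantee. This can be handled by observing that the Wishart tail decays much faster than $\exp(-m_{lack}^2/(4n))$ when $m_{lack}^2/n$ is moderate, while the target probability bound is trivial when $m_{lack}^2/n$ is small; the remaining computations are routine applications of Lipschitzness and the triangle inequality.
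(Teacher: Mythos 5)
Your construction is correct in outline but takes a genuinely different route from the paper. The paper never exhibits a zero-residual point: it takes an \emph{arbitrary} minimizer $\x_0+\w^*$ of the C-LASSO and perturbs it by the null-space direction $\w'=\x'-\x_0$ supplied by Corollary \ref{usualcor}; since $\A\w'=0$ the residual is unchanged, Lipschitzness keeps $\x_0+\w^*+\w'$ feasible whenever $\|\w^*\|\leq (f(\x_0)-f(\x'))/L$, and a two-case argument (either $\|\w^*\|\geq \eps/(2L)$ already, or $\|\w^*+\w'\|\geq\|\w'\|/2\geq\eps/(2L)$) produces a far minimizer. That argument is entirely deterministic conditional on the single event of Corollary \ref{usualcor}, so the stated probability $1-8\exp(-m_{lack}^2/(4n))$ is inherited verbatim. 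Your route instead builds the explicit interpolator $\x_c=\x^*+\sigma\A^T(\A\A^T)^{-1}\vb$, which is conceptually cleaner (feasibility immediately implies optimality because the objective vanishes at $\x_c$), but it buys this at the cost of a second random quantity, $\|\A^T(\A\A^T)^{-1}\vb\|$, that must be controlled with high probability.

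That second event is the one genuine soft spot. The claim that $\|\ub_0\|\leq K$ holds ``on an event of probability approaching one exponentially fast in $m$'' is not justified uniformly: $\|\ub_0\|\leq\|\vb\|/\sigma_{min}(\A^T)$, and when $n-m$ is small the smallest singular value of the $n\times m$ matrix $\A^T$ concentrates only around $\sqrt{n}-\sqrt{m}$, which can be $O(1)$ or smaller, so no exponential-in-$m$ tail at a fixed $K$ is available (and for $n\leq m+1$ the Wishart inverse has no finite mean). Moreover the theorem's probability is exactly the budget already consumed by Corollary \ref{usualcor}, so the extra event cannot simply be union-bounded in. The repair is soft rather than quantitative: for fixed $m,n$ the variable $\|\ub_0\|$ is a.s.\ finite, so by continuity of measure there is a deterministic $K(m,n)$ with $\Pro(\|\ub_0\|>K)\leq 2\exp(-m_{lack}^2/(4n))$, and Corollary \ref{usualcor} has a factor-of-two slack (Proposition \ref{usualCSprop} gives $4\exp(\cdot)$, the corollary spends up to $8\exp(\cdot)$) that can absorb this. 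With that adjustment your proof closes; the paper's version simply sidesteps the issue by never needing to invert $\A\A^T$.
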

\begin{proof} From Corollary \ref{usualcor}, with probability $1-8\exp(-\frac{m_{lack}^2}{4n})$, there exists $\eps>0$ and $\x'$ satisfying $f(\x')\leq f(\x_0)-\eps$ and $\A\x'=\A\x_0$.
Denote $\w'=\x'-\x_0$ and pick a minimizer of \eqref{C-LASSOnewform} namely, $\x_0+\w^*$. Now, let $\w^*_2=\w^*+\w'$. Observe that $ \|\sigma\vb-\A\w^*\|=\|\sigma\vb-\A\w^*_2\|$. Hence, $\w^*_2+\x_0$ is a minimizer for C-LASSO if $f(\x_0+\w^*_2)\leq f(\x_0)$. But,
\beq
f(\x_0+\w^*_2)=f(\x'+\w^*)\leq f(\x')+L\|\w^*\|\nn,
\eeq
Hence, if $\|\w^*\|\leq \frac{f(\x_0)-f(\x')}{L}$, $\w^*_2+\x_0$ is a minimizer. Let $C_w=\min\{\frac{f(\x_0)-f(\x')}{L},\frac{1}{2}\|\w'\|\}$ and consider,
\beq
\w^*_3=\begin{cases}\w^*~~~\text{if}~~~\|\w^*\|\geq C_w,\\\w^*_2~~~\text{otherwise}. \end{cases}\nn
\eeq
From the discussion above, $\x_0+\w^*_3$ is guaranteed to be feasible and minimizer. Now, since $f(\x')\leq f(\x_0)-\eps$ and $f(\cdot)$ is Lipschitz, we have that $\|\w'\|\geq \frac{\eps}{L}$. Consequently, if $\|\w^*\|\geq C_w$, then, we have, $\frac{\|\w^*_3\|}{\sigma}\geq \frac{\eps}{2L\sigma}$. Otherwise, $\|\w^*\|\leq \frac{\|\w'\|}{2}$, and so,\beq
\frac{\|\w^*_3\|}{\sigma}=\frac{\|\w^*_2\|}{\sigma}\geq \frac{|\|\w'\|-\|\w^*\||}{\sigma}\geq \frac{\|\w'\|}{2\sigma}\geq \frac{\eps}{2L\sigma}\nn.
\eeq
In any case, we find that, $\frac{\|\w^*_3\|}{\sigma}$ is lower bounded by $\frac{\eps}{2L\sigma}$ with the desired probability. To conclude with \eqref{cwewant}, we can choose $\sigma_0$ sufficiently small to ensure $\frac{\eps^2}{4L^2\sigma_0^2}\geq C_{max}$.

\end{proof}

\subsection{Converse Results for $\ell_2$-LASSO and $\ell_2^2$-LASSO}
This section follows an argument of similar flavor. We should emphasize that the estimation guarantee provided in Theorem \ref{thm:ell2LASSO} was for $m\geq \Dlf$. However, hereby, the converse guarantee we give is slightly looser, namely, $m\leq \Delxf$ where $\Delxf\leq \Dlf$ by definition. This is mostly because of the nature of our proof which uses Proposition \ref{usualCSprop} and we believe it is possible to get a converse result for $m\leq \Dlf$ via Gordon's Lemma. We leave this to future work. Recall $\ell_2$-LASSO in \eqref{ell2model}:
\beq
\min_\x \|\A\x_0+\sigma\vb-\A\x\|+\la f(\x)\label{L2-LASSOnewform}
\eeq
The following theorem is a restatement of Theorem \ref{not robust} and summarizes our result on the $\ell_2$-LASSO when $m$ is small.

\begin{thm} \label{ell2conv}Suppose Assumption \ref{ass1} holds and let $\A,\vb$ have independent standard normal entries. For any given constant $C_{max}>0$, there exists $\sigma_0>0$ such that, whenever $\sigma\leq\sigma_0$, with probability $1-8\exp(-\frac{m_{lack}^2}{4n})$, over the generation of $\A,\vb$, the minimizer of \eqref{L2-LASSOnewform}, $\x^*_{\ell_2}$, satisfies,
\beq
\frac{\|\x^*_{\ell_2}-\x_0\|^2}{\sigma^2}\geq C_{max}\label{goesinfty}.
\eeq

\end{thm}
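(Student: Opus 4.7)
\textbf{Proof plan for Theorem \ref{ell2conv}.} The strategy is to mimic the argument used for the C-LASSO in Theorem \ref{convclasso}, but now exploit the fact that in the $\ell_2$- and $\ell_2^2$-LASSO objectives the penalty on $f$ is additive and strictly positive. The crucial observation is that when $m<\Delxf$, Corollary \ref{usualcor} supplies a \emph{fixed} vector $\x'\neq\x_0$ that is invisible to the measurement operator yet achieves strictly smaller $f$-value. Concretely, applying Corollary \ref{usualcor} with $t=m_{lack}/\sqrt n$ yields that, with probability at least $1-8\exp(-m_{lack}^2/(4n))$, there exists a \emph{deterministic} constant $\eps>0$ and a vector $\x'\in\mathbb R^n$ such that $\A\x'=\A\x_0$ and $f(\x')\leq f(\x_0)-\eps$. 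Note $\eps$ depends only on $f,\x_0,n,m$, and not on $\sigma,\la$ or $\tau$.

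Condition on this event and set $\w^*_{\ell_2}=\x^*_{\ell_2}-\x_0$. The plan is to compare the $\ell_2$-LASSO cost at $\x^*_{\ell_2}$ with the cost at the competitor $\x^*_{\ell_2}+(\x'-\x_0)$. Since $\A(\x'-\x_0)=0$, the residual term $\|\A\x_0+\sigma\vb-\A\x\|$ is unchanged, so the comparison reduces to a statement about $f$-values. Using Lipschitz continuity, $f(\x^*_{\ell_2}+(\x'-\x_0))\leq f(\x')+L\|\w^*_{\ell_2}\|\leq f(\x_0)-\eps+L\|\w^*_{\ell_2}\|$, while on the other side the Lipschitz bound yields $f(\x^*_{\ell_2})\geq f(\x_0)-L\|\w^*_{\ell_2}\|$. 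Optimality of $\x^*_{\ell_2}$ in \eqref{L2-LASSOnewform} then forces
\begin{align*}
\la\bigl(f(\x_0)-L\|\w^*_{\ell_2}\|\bigr)\leq \la f(\x^*_{\ell_2})\leq \la f(\x_0)-\la\eps+\la L\|\w^*_{\ell_2}\|,
\end{align*}
which after cancelling $\la>0$ gives the clean deterministic lower bound $\|\w^*_{\ell_2}\|\geq \eps/(2L)$, a quantity \emph{independent of} $\sigma$ and $\la$. Choosing $\sigma_0\leq \eps/(2L\sqrt{C_{max}})$ then delivers \eqref{goesinfty}.

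For the $\ell_2^2$-LASSO estimator $\x^*_{\ell_2^2}$, the argument is identical: the squared residual $\tfrac12\|\sigma\vb-\A\w\|^2$ is invariant under the shift by $\x'-\x_0$, so comparing $\x^*_{\ell_2^2}$ with $\x^*_{\ell_2^2}+(\x'-\x_0)$ reduces to the same Lipschitz/convexity inequality and yields $\|\x^*_{\ell_2^2}-\x_0\|\geq \eps/(2L)$ independently of $\tau>0$.

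The main conceptual point (rather than a technical obstacle) is realizing that one does not need to exploit convexity of $f$ along a curve, nor to tune a step size: a single shift by $\x'-\x_0$ suffices because $\A(\x'-\x_0)=0$ eliminates the data-fidelity term entirely from the comparison. The only place probability enters is in producing $\x'$ via Corollary \ref{usualcor}; everything downstream is deterministic. The minor care required is verifying that the constant $\eps$ obtained in Corollary \ref{usualcor} genuinely depends only on $m,n,f,\x_0$ (and the realization of $\A$) and not on $\sigma$, so that $\sigma_0$ may be chosen after $\eps$ is fixed — this is exactly the content of Corollary \ref{usualcor}, which was established precisely to enable this kind of $\sigma\to 0$ limiting argument.
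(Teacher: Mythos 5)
Your proposal is correct and follows essentially the same route as the paper's own proof: invoke Corollary \ref{usualcor} to produce a null-space competitor $\x'$ with $f(\x')\leq f(\x_0)-\eps$, shift the minimizer by $\x'-\x_0$ (leaving the residual unchanged), and use optimality plus Lipschitzness of $f$ to force $\|\x^*_{\ell_2}-\x_0\|\geq\eps/(2L)$ independently of $\sigma$. The only cosmetic difference is your explicit cancellation of $\la$ versus the paper's direct statement that optimality implies $f(\x_0+\w^*_2)\geq f(\x_0+\w^*)$; both require $\la>0$ and are otherwise identical.
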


\begin{proof} From Corollary \ref{usualcor}, with probability $1-8\exp(-\frac{m_{lack}^2}{4n})$, there exists $\eps>0$ and $\x'$ satisfying $f(\x')\leq f(\x_0)-\eps$ and $\A\x'=\A\x_0$. Denote $\w'=\x'-\x_0$. Let $\w^*+\x_0$ be a minimizer of \eqref{L2-LASSOnewform} and let $\w^*_2=\w^*+\w'$. Clearly,
$
\|\A\w^*_2-\sigma\vb\|=\|\A\w^*-\sigma\vb\|\nn.
$
Hence, optimality of $\w^*$ implies $f(\x_0+\w^*_2)\geq f(\x_0+\w^*)$.
Also, using the Lipschitzness of $f(\cdot)$,  
\begin{align}
&f(\x_0+\w^*_2)=f(\x'+\w^*)\leq f(\x')+L\|\w^*\|\nn,
\end{align}
and
\begin{align}
&f(\x_0+\w^*)\geq f(\x_0)-L\|\w^*\|\nn.
\end{align}
Combining those, we find,
\beq
f(\x')+L\|\w^*\|\geq f(\x_0+\w^*_2)\geq f(\x_0+\w^*)\geq f(\x_0)-L\|\w^*\|\nn,
\eeq
which implies, $\|\w^*\|\geq \frac{f(\x_0)-f(\x')}{2L}\geq \frac{\eps}{2L}$, and gives the desired result \eqref{goesinfty} when $\sigma_0\leq \frac{\eps}{4L\sqrt{C_{max}}}$.
\end{proof}

For the $\ell_2^2$-LASSO result, let us rewrite \eqref{ell22model} as,
\beq
\min_\x \frac{1}{2}\|\A\x_0+\sigma\vb-\A\x\|^2+\sigma\tau f(\x)\label{L22-LASSOnewform}
\eeq
The next theorem shows that $\ell_2^2$-LASSO does not recover $\x_0$ stably when $m<\Delxf$. Its proof is identical to the proof of Theorem \ref{ell2conv}.

\begin{thm} \label{ell22conv} Suppose Assumption \ref{ass1} holds and let $\A,\vb$ have independent standard normal entries. For any given constant $C_{max}>0$, there exists $\sigma_0>0$ such that, whenever $\sigma\leq\sigma_0$, with probability $1-8\exp(-\frac{m_{lack}^2}{4n})$, over the generation of $\A,\vb$, the minimizer of \eqref{L22-LASSOnewform}, $\x^*_{\ell_2^2}$, satisfies,
\beq
\frac{\|\x^*_{\ell_2^2}-\x_0\|^2}{\sigma^2}\geq C_{max}\label{goesinfty2}\nn.
\eeq

\end{thm}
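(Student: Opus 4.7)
My plan is to mimic the proof of Theorem \ref{ell2conv} almost verbatim, exploiting the fact that, just as in the $\ell_2$-LASSO, the data-fidelity term $\tfrac{1}{2}\|\A\x_0+\sigma\vb-\A\x\|^2$ depends on $\x$ only through $\A\x$, so it is invariant under adding any element of the null space of $\A$. The key structural fact that enables the argument is Corollary \ref{usualcor} applied under the assumption $m<\Delxf$: with probability at least $1-8\exp(-m_{lack}^2/(4n))$, there exist a deterministic $\eps>0$ and a point $\x'$ with $\A\x'=\A\x_0$ and $f(\x')\leq f(\x_0)-\eps$. This supplies a "free" feasible perturbation $\w'=\x'-\x_0\in\Nc(\A)$ that leaves the quadratic term untouched while strictly decreasing $f$.

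With this $\w'$ in hand, I would set $\w^*=\x^*_{\ell_2^2}-\x_0$ and form the comparison point $\w^*_2=\w^*+\w'$. Because $\A\w'=0$, we have $\|\A\w^*_2-\sigma\vb\|^2=\|\A\w^*-\sigma\vb\|^2$, so optimality of $\w^*$ in \eqref{L22-LASSOnewform} collapses (after dividing by $\sigma\tau>0$) to the single scalar inequality $f(\x_0+\w^*_2)\geq f(\x_0+\w^*)$. This is exactly the same inequality that drives the $\ell_2$-LASSO converse. Using Lipschitz continuity of $f$ one bounds $f(\x_0+\w^*_2)=f(\x'+\w^*)\leq f(\x')+L\|\w^*\|\leq f(\x_0)-\eps+L\|\w^*\|$ from above, and $f(\x_0+\w^*)\geq f(\x_0)-L\|\w^*\|$ from below, which combine to yield the deterministic lower bound $\|\w^*\|\geq \eps/(2L)$ on the same high-probability event.

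To finish, since this lower bound on $\|\w^*\|$ is independent of $\sigma$, given any target $C_{max}>0$ I would choose $\sigma_0=\eps/(4L\sqrt{C_{max}})$; then $\sigma\leq\sigma_0$ forces $\|\w^*\|^2/\sigma^2\geq \eps^2/(4L^2\sigma_0^2)\geq C_{max}$, which is exactly \eqref{goesinfty2}. Note that the value of the penalty parameter $\tau$ plays no role in the argument beyond the fact that $\tau>0$, so the conclusion holds for every positive $\tau$.

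I do not expect any real obstacle here, since the only place where the $\ell_2$ versus $\ell_2^2$ distinction could matter is in reducing the optimality comparison to an inequality purely on $f$, and that reduction works for any monotone increasing function of $\|\A\x_0+\sigma\vb-\A\x\|$ (in particular for the square). The proof is therefore a literal transcription of the proof of Theorem \ref{ell2conv} with $\|\cdot\|$ replaced by $\tfrac{1}{2}\|\cdot\|^2$ in the objective and the penalty weight $\la$ replaced by $\sigma\tau$.
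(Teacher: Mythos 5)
Your proposal is correct and matches the paper's approach exactly: the paper proves Theorem \ref{ell22conv} by declaring its proof "identical to the proof of Theorem \ref{ell2conv}," which is precisely the transcription you carry out (null-space perturbation from Corollary \ref{usualcor}, cancellation of the quadratic fidelity term, Lipschitz bounds yielding $\|\w^*\|\geq\eps/(2L)$, then choosing $\sigma_0$). Your observation that the argument only uses that the fidelity term is a function of $\A\x$ and that $\tau>0$ is also the right reason the transcription goes through.
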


\section{Numerical Results}\label{sec:num}
Simulation results presented in this section support our analytical predictions. We consider two standard estimation problems, namely sparse signal estimation and low rank matrix recovery from linear observations.
%

\begin{figure}
\centering
\mbox{
\subfigure[]{
\includegraphics[width=3.1in]{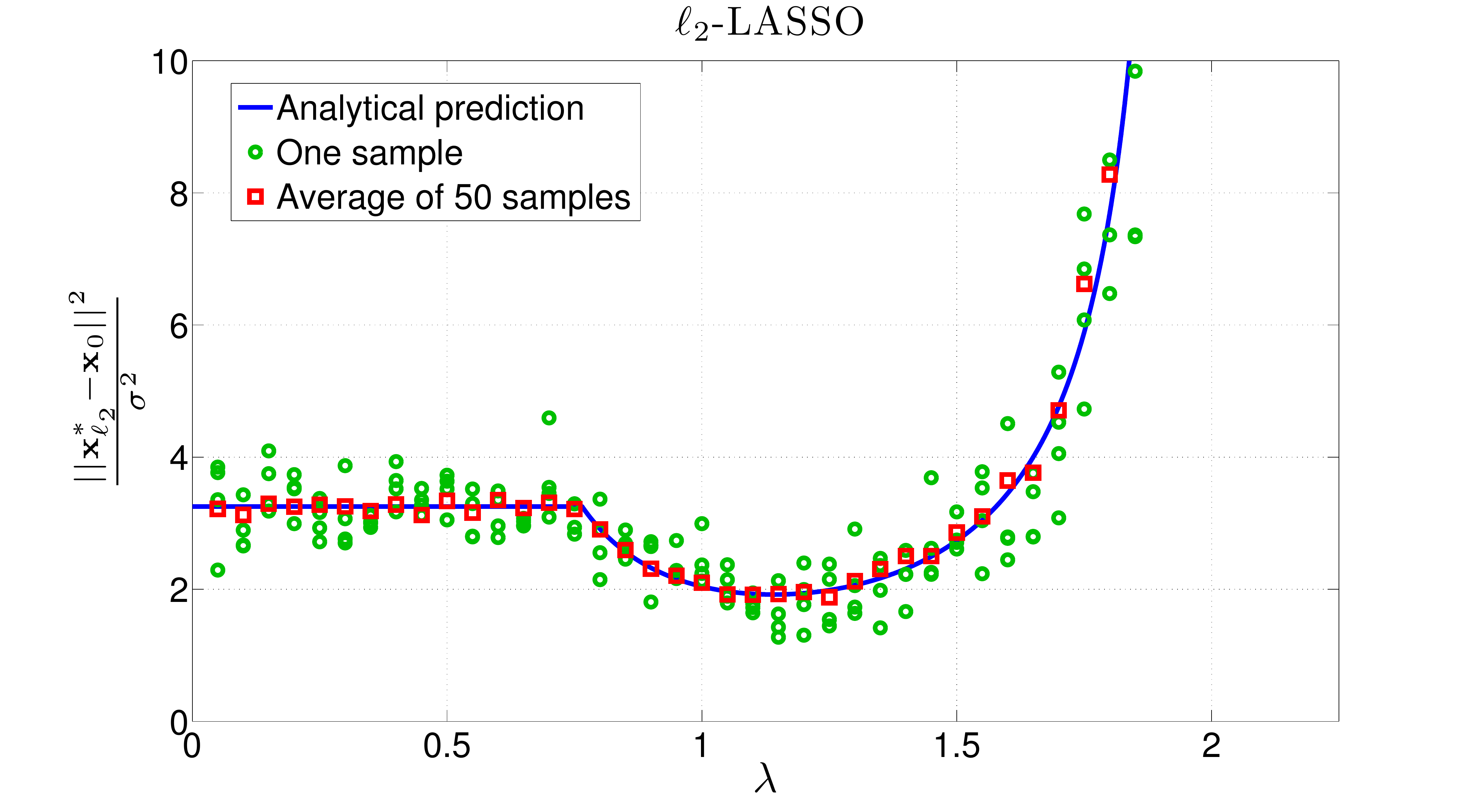}
\label{numer11}
}\quad
\subfigure[]{
\includegraphics[width=3.1in]{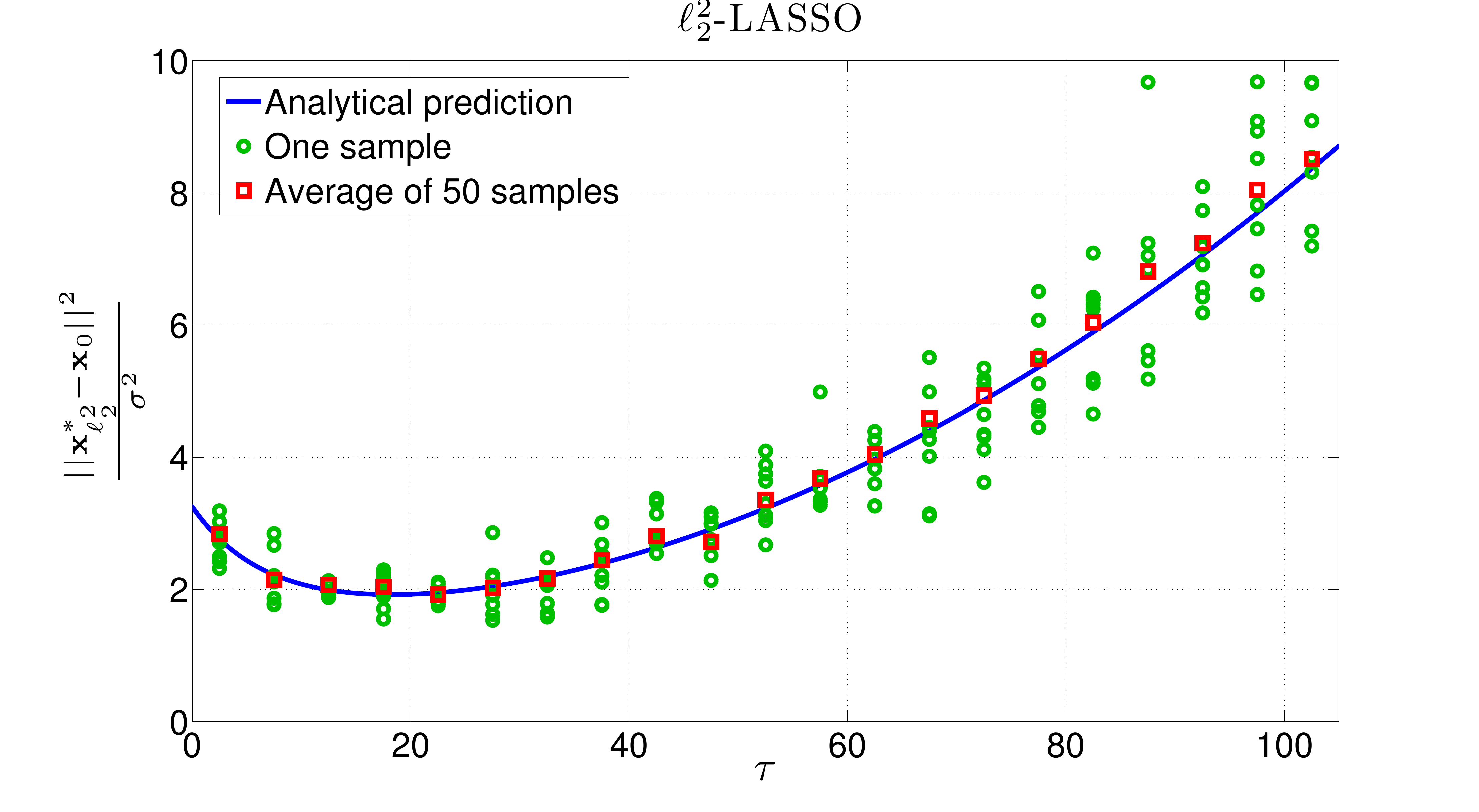}
\label{numer12}
}
}
\centering
\caption{\small{Sparse signal estimation with $n=1500,m=750,k=150$. a) $\ell_1$-penalized $\ell_2$-LASSO NSE. b) $\ell_1$-penalized $\ell_2^2$-LASSO NSE. Observe that the minimum achievable NSE is same for both (around $1.92$).}}
\label{numer1}
\end{figure}
\subsection{Sparse Signal Estimation}

First, consider the sparse signal recovery problem, where $\x_0$ is a $k$ sparse vector in $\R^n$ and $f(\cdot)$ is the $\ell_1$ norm. We wish to verify our predictions in the small noise regime.

We fix $n=1500$, $\frac{k}{n}=0.1$ and $\frac{m}{n}=0.5$. Observe that, these particular choice of ratios has also been used in the Figures \ref{figCont2} and \ref{LASSOintro2}. $\x_0\in\R^n$ is generated to be $k$ sparse with standard normal nonzero entries and then normalized to satisfy $\|\x_0\|=1$. To investigate the small $\sigma$ regime, the noise variance is set to be $\sigma^2=10^{-5}$. We observe $\y=\A\x_0+\z$ where $\z\sim\Nn(0,\sigma\Iden_m)$ and solve the $\ell_2$-LASSO and the $\ell_2^2$-LASSO problems with $\ell_1$ penalization. To obtain clearer results, each data point (red square markers) is obtained by averaging over $50$ iterations of independently generated $\A,\z,\x_0$. The effect of averaging on the NSE is illustrated in Figure \ref{numer1}. 

\noindent{\bf{$\ell_2$-LASSO:}} $\lambda$ is varied from $0$ to $2$. The analytical predictions are calculated via the formulas given in Appendix \ref{explicit_form} for the regime $\frac{k}{n}=0.1$ and $\frac{m}{n}=0.5$. We have investigated three properties.
\begin{itemize}
\item {\bf{NSE:}} In Figure \ref{numer11}, we plot the simulation results with the small $\sigma$ NSE formulas. Based on Theorem \ref{thm:ell2LASSO} and Section \ref{sec:ell2LASSO NSE}, over $\Rco$, we plotted $\frac{\Dlf}{m-\Dlf}$ and over $\Rcf$, we used $\frac{\Df\lac)}{m-\Df\lac)}$ for analytical prediction. We observe that NSE formula indeed matches with simulations. On the left hand side, observe that NSE is flat and on the right hand side, it starts increasing as $\la$ gets closer to $\lam$.
\item {\bf{Normalized cost:}} We plotted the cost of $\ell_2$-LASSO normalized by $\sigma$ in Figure \ref{numer22}. The exact function is $\frac{1}{\sigma}(\|\y-\A\x^*_{\ell_2}\|+\la (f(\x^*_{\ell_2})-f(\x_0)))$. In $\Rco$, this should be around $\sqrt{m-\Dlf}$ due to Theorem \ref{thm:unified}. In $\Rcf$, we expect cost to be linear in $\la$, in particular $\frac{\la}{\lac}\sqrt{m-\Df\lac)}$.

\item {\bf{Normalized fit:}} In Figure \ref{numer21}, we plotted $\frac{\|\y-\A\x_{\ell_2}^*\|}{\sigma}$, which is significant as it corresponds to the calibration function $\text{calib}(\la)$ as described in Section \ref{justifyl22}. In $\Rco$, we analytically expect this to be $\frac{m-\Dlf-\Clf}{\sqrt{m-\Dlf}}$. In $\Rcf$, as discussed in Section \ref{sec:thisROFF}, the problem behaves as \eqref{LinInv} and we have $\y=\A\x_{\ell_2}$. Numerical results for small variance verify our expectations.


\end{itemize}

\noindent{\bf{$\ell_2^2$-LASSO:}} We consider the exact same setup and solve $\ell_2^2$-LASSO. We vary $\tau$ from $0$ to $100$ and test the accuracy of Formula \ref{form:ell22LASSO} in Figure \ref{numer12}. We find that, $\ell_2^2$-LASSO is robust everywhere as expected and the minimum achievable NSE is same as $\ell_2$-LASSO and around $1.92$ as we estimate $\Df\labb)$ to be around $330$.


\begin{figure}
\centering
\mbox{
\subfigure[]{
\includegraphics[width=3.1in]{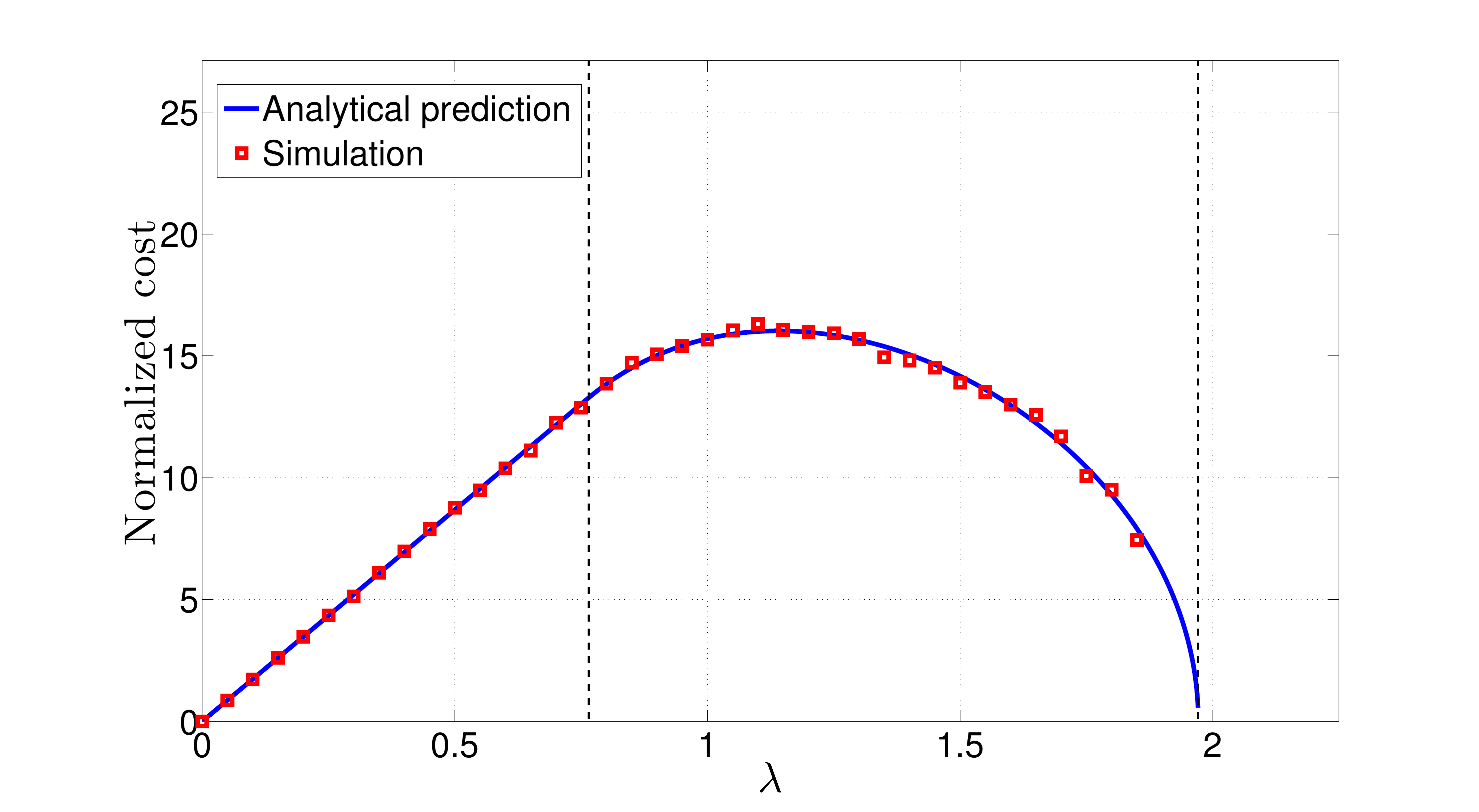}
\label{numer22}
}\quad
\subfigure[]{
\includegraphics[width=3.1in]{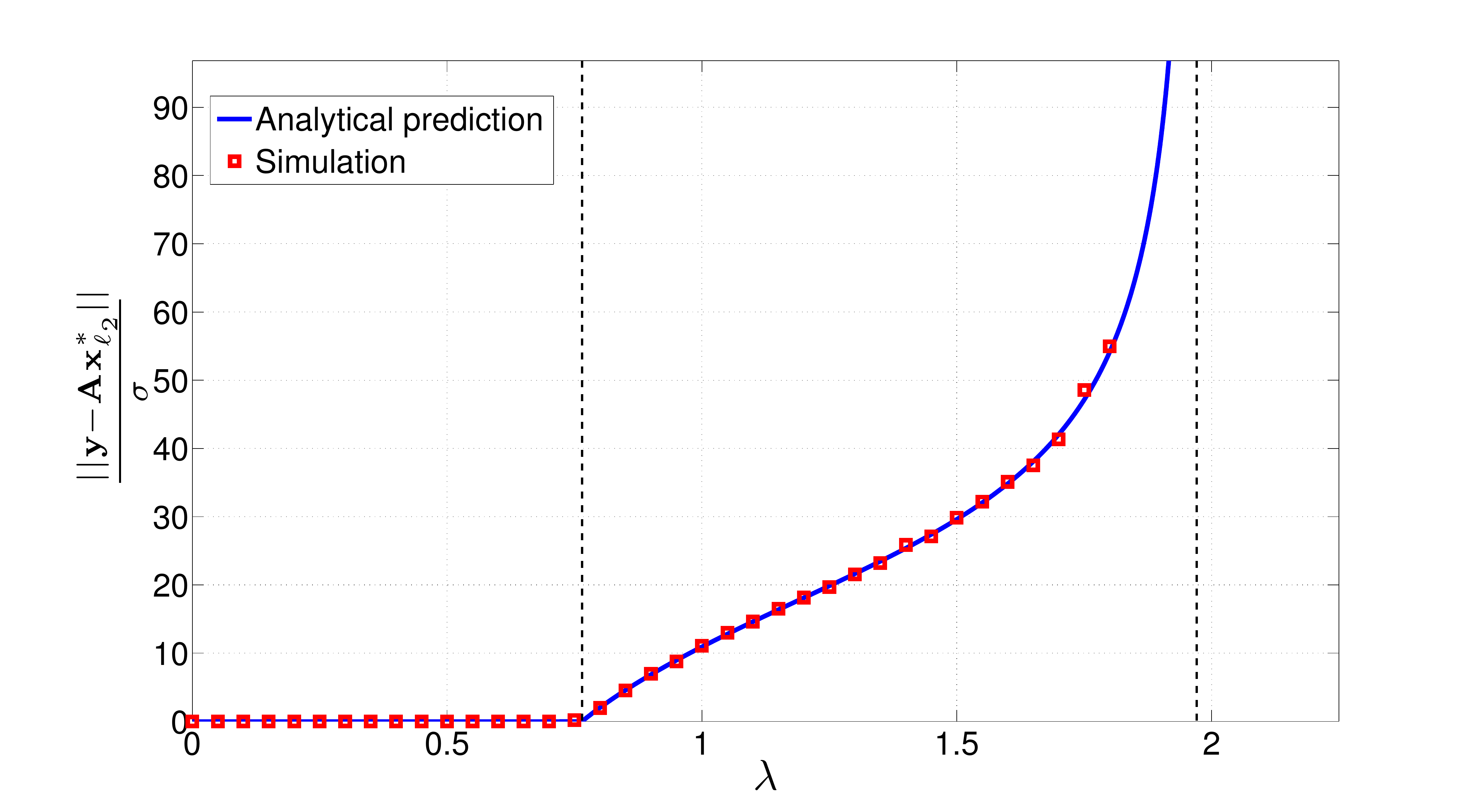}
\label{numer21}
}
}
\centering
\caption{\small{$\ell_2$-LASSO with $n=1500$, $m=750$, $k=150$. a) Normalized cost of the optimization. b) How well the LASSO estimate fits the observations $\y$. This also corresponds to the $\text{calib}(\la)$ function on $\Rco$. In $\Rcf$, ($\la\leq \lac \approx 0.76$) observe that $\y=\A\x^*_{\ell_2}$ indeed holds.}}
\label{numer2}
\end{figure}

\subsection{Low-Rank Matrix Estimation}\label{sec:sfasc}
For low rank estimation, we choose the nuclear norm $\|\cdot\|_\star$ as a surrogate for rank \cite{RechtFazel}. Nuclear norm is the sum of singular values of a matrix and basically takes the role of $\ell_1$ minimization.

Since we will deal with matrices, we will use a slightly different notation and consider a low rank matrix $\X_0\in\R^{d\times d}$. Then, $\x_0=\text{vec}(\X_0)$ will be the vector representation of $\X_0$, $n=d\times d$ and $\A$ will effectively be a Gaussian linear map $\R^{d\times d}\rightarrow \R^m$. Hence, for $\ell_2$-LASSO, we solve,
\beq
\min_{\X\in\R^{d\times d}}\|\y-\A\cdot \text{vec}(\X)\|+\la \|\X\|_\star\nn.
\eeq
where $\y=\A\cdot\text{vec}(\X_0)+\z$.

\noindent{\emph{Setup:}} We fixed $d=45$, $\text{rank}(\X_0)=6$ and $m=0.6d^2=1215$. To generate $\X_0$, we picked i.i.d. standard normal matrices ${\bf{U}},{\bf{V}}\in\R^{d\times r}$ and set $\X_0=\frac{\U\V^T}{\|\U\V^T\|_F}$ which ensures $\X_0$ is unit norm and rank $r$. We kept $\sigma^2=10^{-5}$. The results for $\ell_2$ and $\ell_2^2$-LASSO are provided in Figures \ref{lownumer1} and \ref{lownumer2} respectively. Each simulation point is obtained by averaging NSE's of $50$ simulations over $\A,\z,\X_0$.

To find the analytical predictions, based on Appendix \ref{explicit_form}, we estimated $\Dlf,\Clf$ in the asymptotic regime: $n\rightarrow\infty$, $\frac{r}{d}=0.133$ and $\frac{m}{n}=0.6$. In particular, we estimate $\Df\labb)\approx 880$ and best case NSE $\frac{\Df\labb)}{m-\Df\labb)}\approx 2.63$. Even for such arguably small values of $d$ and $r$, the simulation results are quite consistent with our analytical predictions.

%
%

\begin{figure}
\centering
\mbox{
\subfigure[]{
\includegraphics[width=3.1in]{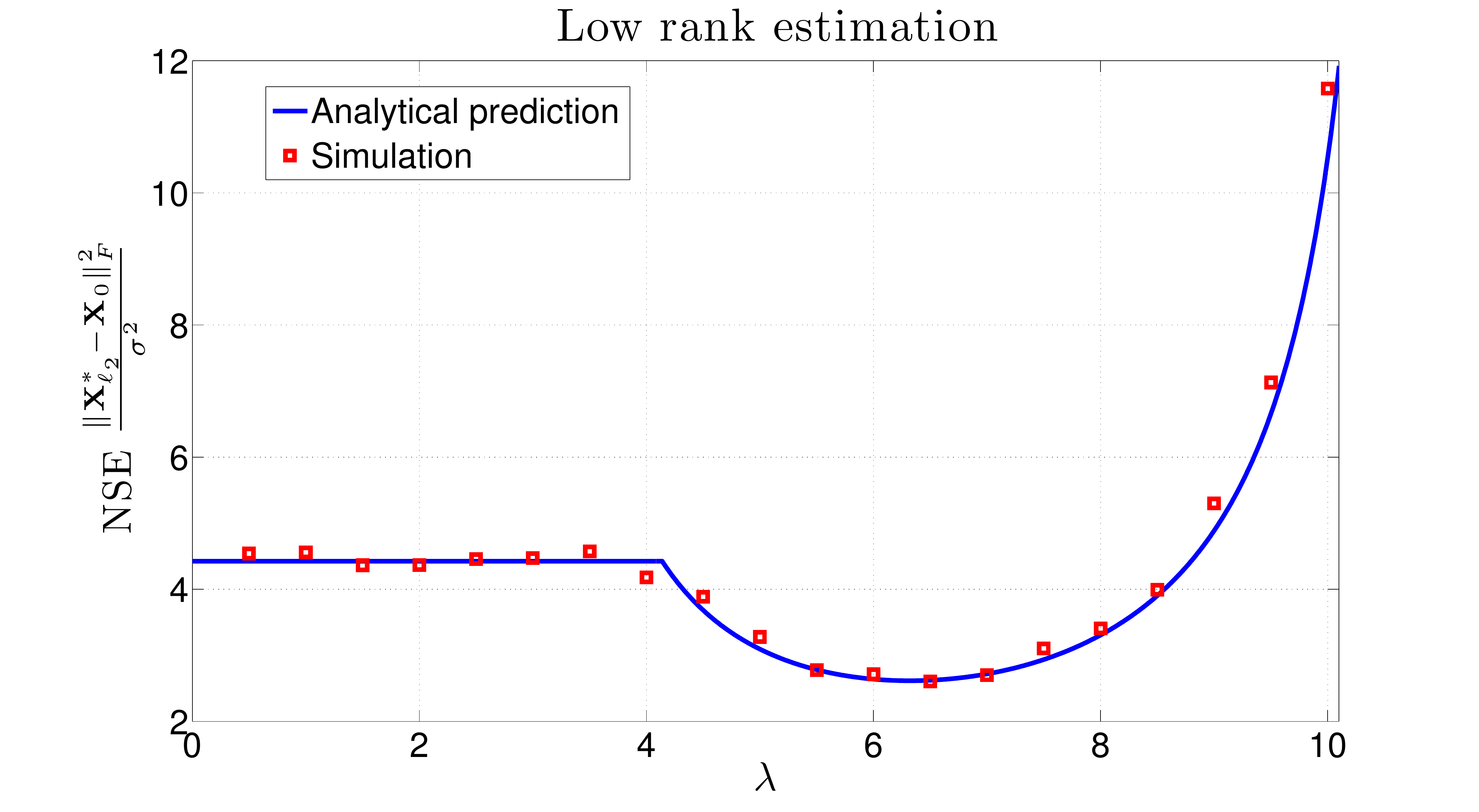}
\label{lownumer2}
}\quad
\subfigure[]{
\includegraphics[width=3.1in]{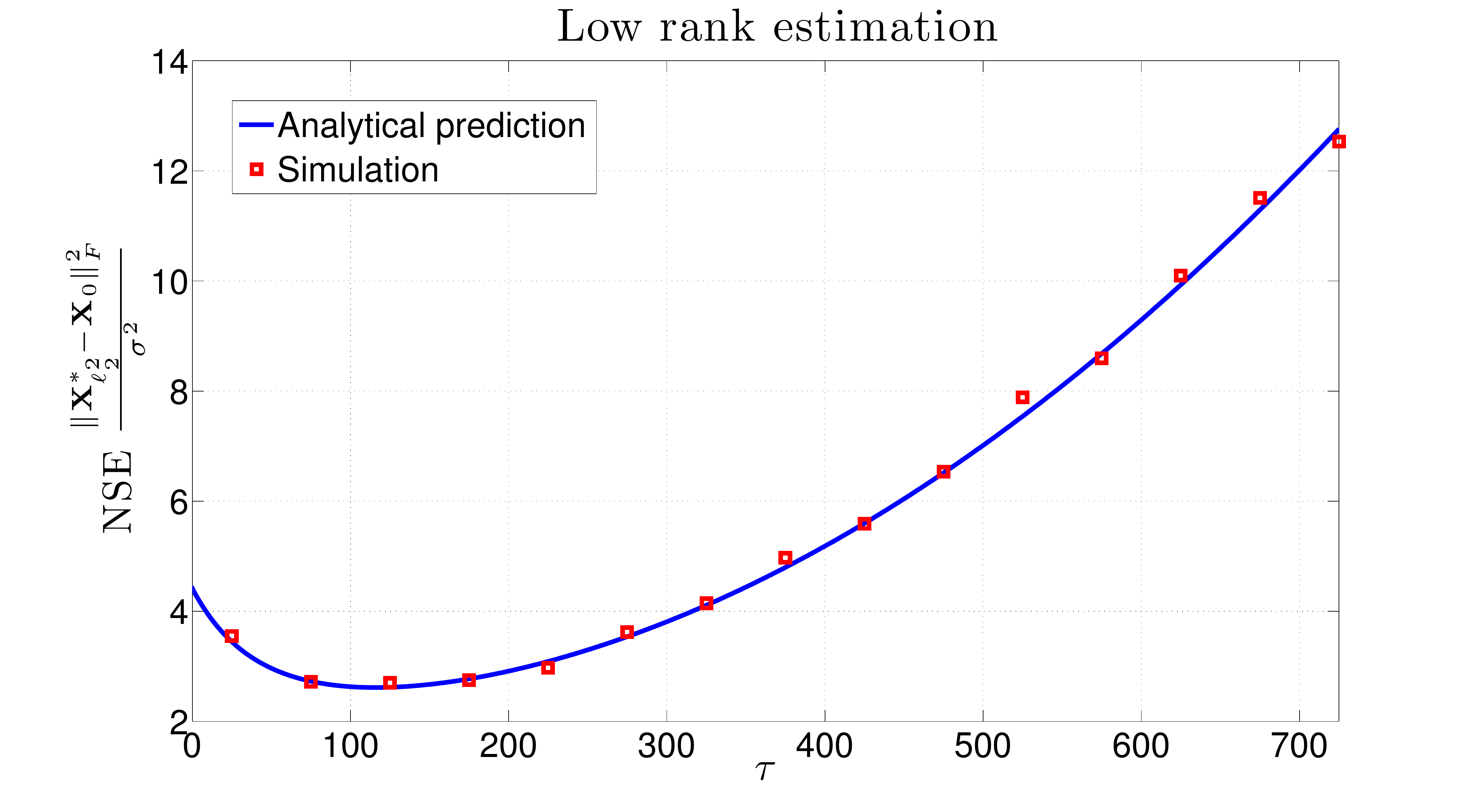}
\label{lownumer1}
}
}
\centering
\caption{\small{$d=45$, $m=0.6d^2$, $r=6$. We estimate $\Df \labb)\approx 880$. a) $\ell_2$-LASSO NSE as a function of the penalization parameter. b) $\ell_2^2$-LASSO NSE as a function of the penalization parameter.}}
\label{lownumer}
\end{figure}

\subsection{C-LASSO with varying $\sigma$}

Consider the low rank estimation problem  as in Section \ref{sec:sfasc}, but use the C-LASSO as an estimator:
\beq
\min_{\X\in\R^{d\times d}}\|\y-\A\cdot \text{vec}(\X)\|~~~\text{subject to}~~~\|\X\|_\star\leq \|\X_0\|_\star\nn.
\eeq
This time, we generate $\A$ with i.i.d. Bernoulli entries where each entry is either $1$ or $-1$, with equal probability. The noise vector$\z$, the signal of interest $\X_0$ and the simulation points are generated in the same way as in Section \ref{sec:sfasc}. Here, we used $d=40,r=4$ and varied $m$ from $0$ to $2000$ and $\sigma^2$ from $1$ to $10^{-4}$. The resulting curve is given in Figure \ref{CONMAT}. We observe that as the noise variance increases, the NSE decreases. The worst case NSE is achieved as $\sigma\rightarrow 0$, as Theorem \ref{thm:CLASSO} predicts. Our formula for the small $\sigma$ regime $\frac{\Delxf}{m-\Delxf}$ indeed provides a good estimate of NSE for $\sigma^2=10^{-4}$ and upper bounds the remaining ones. In particular, we estimate $\Delxf$ to be around $560$. Based on Theorems \ref{not robust} and \ref{thm:CLASSO}, as $m$ moves from $m<\Delxf$ to $m>\Delxf$, we expect a change in robustness. Observe that, for larger noise variances (such as $\sigma^2=1$) this change is not that apparent and the NSE is still relatively small. For $\sigma^2\leq 10^{-2}$, the NSE becomes noticeably high for the regime $m<\Delxf$.

\begin{figure}
\centering
\includegraphics[width=3.1in]{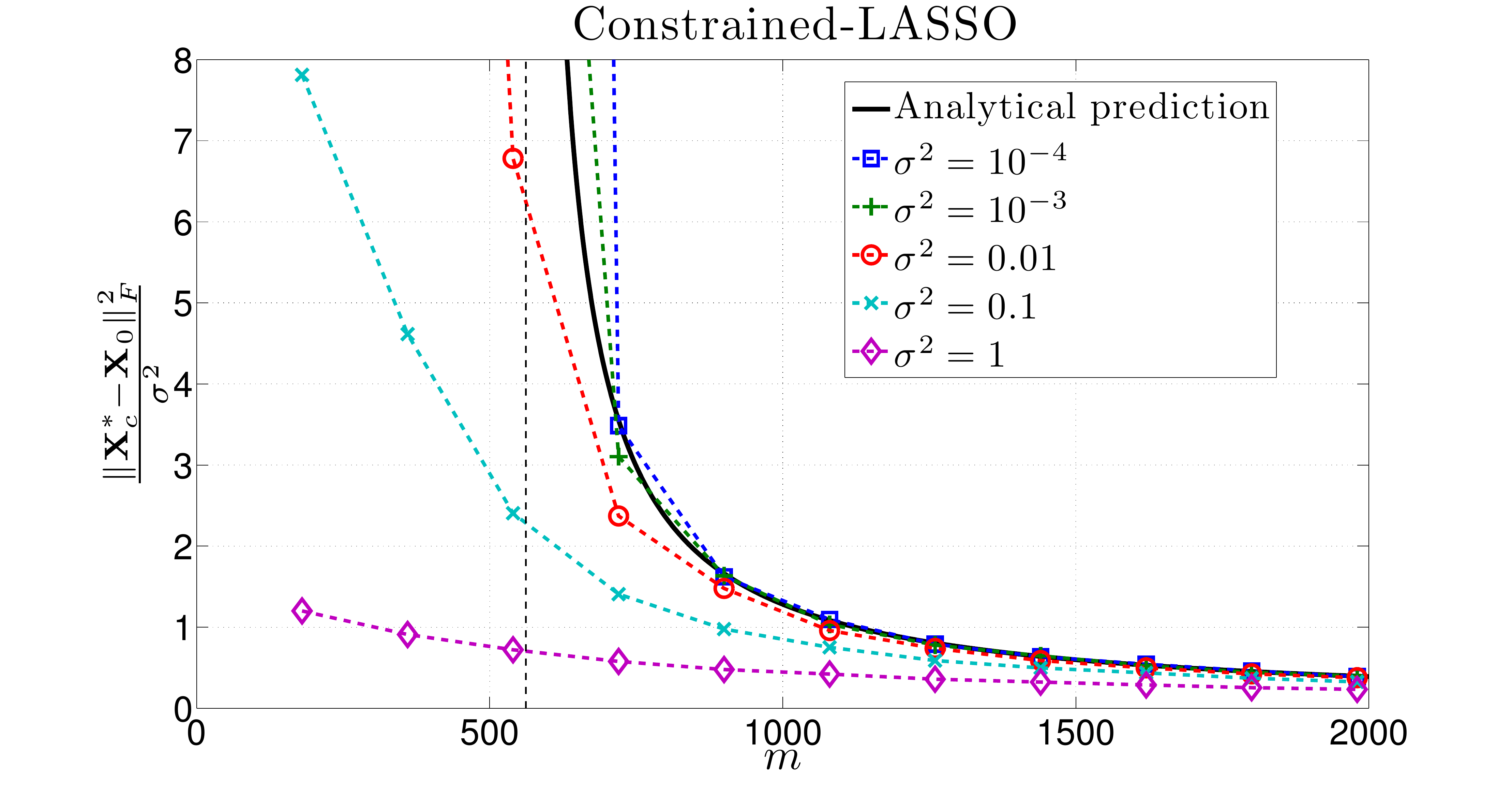}

\centering
\caption{\small{ $\X_0$ is a $40\times 40$ matrix with rank $4$. As $\sigma$ decreases, NSE increases. The vertical dashed lines marks the estimated $\Delxf$ where we expect a transition in stability.}}
\label{CONMAT}
\end{figure}

%
\section{Future Directions}\label{sec:conc}

We believe that our work sets up the fundamentals for a number of possible extensions. We enlist here some of those promising directions to be explored in future work. 

\begin{itemize}
\item {\bf{$\ell_2^2$-LASSO formula:}} While Section \ref{justifyl22} provides justification behind Formula \ref{form:ell22LASSO}, a rigorous proof is arguably the most important point missing in this paper. Such a proof
would close the gap in this paper and will extend results of \cite{BayMon,Mon} to arbitrary convex functions.

\item {\bf{Error formulas for arbitrary $\sigma$:}} Another issue that  hasn't been fully explored in this paper is the regime where $\sigma$ is not small. For C-LASSO, we have shown that the NSE for arbitrary values of $\sigma$ is upper bounded by the NSE at $\sigma\rightarrow 0$. Empirical observations suggest that the same is true for the $\ell_2$ and $\ell_2^2$-LASSO. Proving that this is the case is one open issue. What might be even more interesting, is computing  exact error formulae for the arbitrary $\sigma$ regime.
 As we have discussed previously, we expect such formulae to not only depend on the subdifferential of the function. 


\item {\bf{Extension to multiple structures:}} Throughout this work, we have focused on the recovery of a single signal $\x_0$. In general, one may consider a scenario, where we observe mixtures of multiple structures. A classic example used to motivate such problems includes estimation of matrices that can be represented as sum of a low rank and a sparse component \cite{LPS,PCA,VenPCA,McCoy2}. Another example, which is closer to our framework, is when the measurements $\A\x_0$ experience not only additive i.i.d. noise $\z$, but also sparse corruptions $\s_0$ \cite{XLi,Foygel}. In this setup, we observe $\y=\A\x_0+\s_0+\z$ and we wish to estimate $\x_0$ from $\y$. The authors in \cite{Foygel,McCoy3} provide sharp recovery guarantees for the noiseless problem, but do not address the precise noise analysis. We believe, our framework can be extended to the exact noise analysis of the following constrained problem:
\beq
\min_{\x,\s}\|\y-\A\x-\s\|~~~\text{subject to}~~~g(\s)\leq g(\s_0)~\text{ and }~f(\x)\leq f(\x_0).\nn
\eeq
where $g(\cdot)$ is typically the $\ell_1$ norm.
\item {\bf{Application specific results:}} In this paper, we focused on a generic signal-function pair $\x_0,f$ and stated our results in terms of the convex geometry of the problem. We also provided numerical experiments on NSE of sparse and low rank recovery and showed that, theory and simulations are consistent. On the other hand, it would be useful to derive case-specific guarantees other than NSE. For example, for sparse signals, we might be interested in the sparsity of the LASSO estimate, which has been considered by Bayati and Montanari \cite{Mon,BayMon}. Similarly, in low rank matrix estimation, we might care about the rank and nuclear norm of the LASSO estimate. On the other hand, our generic results may be useful to obtain NSE results for a growing set of specific problems with little effort, \cite{Yonina, McCoy2, simultaneous,French,LPS,TotalCompress1}. In particular, one can find an NSE upper bound to a LASSO problem as long as he has an upper bound to $\Dlf$ or $\Delxf$.

\item {\bf{Different $\A,\vb$:}} Throughout the paper, $\A$ and $\vb$ were assumed to be independent with i.i.d. standard normal entries. It might be interesting to consider different measurement ensembles such as matrices with subgaussian entries or even a different noise setup such as ``adversarial noise", in which case the error vector $\vb$ is generated to maximize the NSE. For example, in the literature of compressed sensing phase transitions, it is widely observed that measurement matrices with subgaussian entries behave same as gaussian ones, \cite{Universal,BLM12}. 

\item {\bf{Mean-Squared-Error (MSE) Analysis:}} In this paper, we focused on the $\ell_2$-norm square of the LASSO error and provided high probability guarantees. It is of interest to give guarantees in terms of mean-squared-error where we consider the expected NSE. Naturally, we expect our formulae to still hold true for the MSE, possibly requiring some more assumptions.

\end{itemize}

\section*{Acknowledgments}
Authors would like to thank Joel Tropp, Arian Maleki and Kishore Jaganathan for stimulating discussions and helpful comments. S.O. would also like to thank Adrian Lewis for pointing out Proposition \ref{prop:F3T}.

\newpage

    \addcontentsline{toc}{section}{Appendix}
    \addtocontents{toc}{\protect\setcounter{tocdepth}{-1}}

\newpage
\appendix
\vspace{7pt}
\begin{center}\Large{{APPENDIX}}\end{center}

\section{Useful Facts}

%

%
\begin{fact}[Moreau's decomposition theorem] \label{more}
Let $\Cc$ be a closed and convex cone in $\R^n$. For any $\vb\in\R^n$, the following two are equivalent:
\begin{enumerate}
\item $\vb=\ab+\bb$, $\ab\in\Cc,\bb\in \Cc\pol$ and $\ab^T\bb=0$.
\item $\ab=\bu(\vb,\Cc)$ and $\bb=\bu(\vb,\Cc\pol)$.
\end{enumerate}
\end{fact}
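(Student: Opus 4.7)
The fact to be proved is Moreau's classical decomposition theorem. Since it is a well-known result, my plan is to give a clean, self-contained proof built around the variational characterization of the metric projection onto a closed convex set.

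The plan is to first establish the following characterization lemma, which specializes the general KKT condition to convex cones: for a closed convex cone $\mathcal{C}\subseteq\mathbb{R}^n$ and a vector $\mathbf{v}\in\mathbb{R}^n$, a point $\mathbf{a}\in\mathcal{C}$ equals $\text{Proj}(\mathbf{v},\mathcal{C})$ if and only if $\mathbf{v}-\mathbf{a}\in\mathcal{C}^\circ$ and $\langle \mathbf{v}-\mathbf{a},\mathbf{a}\rangle=0$. The forward implication follows from the standard variational inequality $\langle \mathbf{v}-\mathbf{a},\mathbf{y}-\mathbf{a}\rangle\leq 0$ for all $\mathbf{y}\in\mathcal{C}$: by taking $\mathbf{y}=2\mathbf{a}\in\mathcal{C}$ and $\mathbf{y}=\mathbf{0}\in\mathcal{C}$ (both valid because $\mathcal{C}$ is a cone) one obtains $\langle \mathbf{v}-\mathbf{a},\mathbf{a}\rangle=0$, and then $\langle \mathbf{v}-\mathbf{a},\mathbf{y}\rangle\leq 0$ for every $\mathbf{y}\in\mathcal{C}$, i.e.\ $\mathbf{v}-\mathbf{a}\in\mathcal{C}^\circ$. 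The converse is direct: for any $\mathbf{y}\in\mathcal{C}$, $\|\mathbf{v}-\mathbf{y}\|^2=\|\mathbf{v}-\mathbf{a}\|^2+\|\mathbf{a}-\mathbf{y}\|^2+2\langle \mathbf{v}-\mathbf{a},\mathbf{a}-\mathbf{y}\rangle\geq \|\mathbf{v}-\mathbf{a}\|^2$, using $\langle \mathbf{v}-\mathbf{a},\mathbf{a}\rangle=0$ and $\langle \mathbf{v}-\mathbf{a},\mathbf{y}\rangle\leq 0$.

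With this lemma in hand, the theorem follows in two short steps. For $(2)\Rightarrow(1)$: setting $\mathbf{a}=\text{Proj}(\mathbf{v},\mathcal{C})$ and $\mathbf{b}=\mathbf{v}-\mathbf{a}$, the lemma gives $\mathbf{b}\in\mathcal{C}^\circ$ and $\mathbf{a}^T\mathbf{b}=0$. The symmetric statement applied to the closed convex cone $\mathcal{C}^\circ$ (whose polar is $\mathcal{C}^{\circ\circ}=\mathcal{C}$ by bipolarity) then identifies $\mathbf{b}=\text{Proj}(\mathbf{v},\mathcal{C}^\circ)$, since $\mathbf{v}-\mathbf{b}=\mathbf{a}\in\mathcal{C}=(\mathcal{C}^\circ)^\circ$ and $\langle \mathbf{v}-\mathbf{b},\mathbf{b}\rangle=\mathbf{a}^T\mathbf{b}=0$. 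For $(1)\Rightarrow(2)$: given any decomposition $\mathbf{v}=\mathbf{a}+\mathbf{b}$ satisfying the three conditions in item 1, the lemma immediately certifies that $\mathbf{a}=\text{Proj}(\mathbf{v},\mathcal{C})$ and, by the same symmetric argument, $\mathbf{b}=\text{Proj}(\mathbf{v},\mathcal{C}^\circ)$.

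The main (and essentially only) technical point is the characterization lemma for projection onto a cone; once it is stated, both directions of the equivalence are one-line consequences, and the only extra ingredient is bipolarity $\mathcal{C}^{\circ\circ}=\mathcal{C}$ for closed convex cones, which is standard. No compactness or full-dimensionality assumptions are needed, so the proof works verbatim in the generality stated.
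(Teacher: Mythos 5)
Your proof is correct and complete. Note, however, that the paper itself offers no proof of this statement: it is listed under ``Useful Facts'' as Moreau's classical decomposition theorem and is implicitly deferred to the literature (the bibliography cites Moreau's 1962 paper), so there is no in-paper argument to compare yours against. Your route --- first establishing the variational characterization of the projection onto a closed convex cone ($\mathbf{a}=\text{Proj}(\mathbf{v},\mathcal{C})$ iff $\mathbf{a}\in\mathcal{C}$, $\mathbf{v}-\mathbf{a}\in\mathcal{C}^\circ$, and $\langle\mathbf{v}-\mathbf{a},\mathbf{a}\rangle=0$), then deducing both implications by symmetry together with the bipolar identity $\mathcal{C}^{\circ\circ}=\mathcal{C}$ --- is the standard textbook proof, and every step checks out: the choices $\mathbf{y}=2\mathbf{a}$ and $\mathbf{y}=\mathbf{0}$ are both legitimate since a closed convex cone contains $\mathbf{0}$ and is closed under nonnegative scaling, and the expansion of $\|\mathbf{v}-\mathbf{y}\|^2$ in the converse direction also yields uniqueness of the projection via the $\|\mathbf{a}-\mathbf{y}\|^2$ term. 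The only external inputs are the existence and variational characterization of projections onto closed convex sets and the bipolar theorem, both of which are standard and correctly invoked.
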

\begin{fact}[Properties of the projection, \cite{Bertse,Boyd}] \label{prom}
Assume $\Cc\subseteq\R^n$ is a nonempty, closed and convex set and $\ab,\bb\in\R^n$ are arbitrary points. Then,
\begin{itemize}

\item The projection $\bu(\ab,\Cc)$ is the unique vector satisfying,
$
\bu(\ab,\Cc)=\arg\min_{\vb\in\Cc}\|\ab-\vb\|.\nn
$
\item
$
\li\bu(\ab,\Cc),\ab-\bu(\ab,\Cc)\ri=\sup_{\s\in\Cc} \li\s,\ab-\bu(\ab,\Cc)\ri.\nn
$
\item 
$
\|\bu(\ab)-\bu(\bb)\|\leq \|\ab-\bb\|.\nn
$
\end{itemize}
\end{fact}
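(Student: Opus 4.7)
The plan is to establish the three items sequentially, using the first as the foundation for the second, and the variational inequality derived from the second as the key tool for proving nonexpansiveness in the third. None of the steps requires heavy machinery; they rest on strict convexity of the squared Euclidean norm and standard first-order optimality conditions for convex programs, so I anticipate no major obstacle beyond care with the ordering of arguments.

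For item (i), I would first argue existence of a minimizer of $\|\ab-\vb\|$ over $\vb\in\Cc$: pick any $\vb_0\in\Cc$, note that the feasible set can be intersected with the closed ball $\{\vb:\|\ab-\vb\|\le\|\ab-\vb_0\|\}$ without changing the minimum, and this intersection is compact (closed and bounded in $\R^n$) while $\|\ab-\cdot\|$ is continuous. Uniqueness follows from strict convexity of $\vb\mapsto \|\ab-\vb\|^2$ on the convex set $\Cc$: two distinct minimizers $\vb_1,\vb_2$ would give $\|\ab-\tfrac{\vb_1+\vb_2}{2}\|^2<\tfrac12\|\ab-\vb_1\|^2+\tfrac12\|\ab-\vb_2\|^2$, contradicting optimality since $\tfrac{\vb_1+\vb_2}{2}\in\Cc$.

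For item (ii), I would derive the classical variational inequality for projection. Since $\bu:=\bu(\ab,\Cc)$ minimizes the convex function $\vb\mapsto\tfrac12\|\ab-\vb\|^2$ over $\Cc$, the first-order optimality condition reads $\langle \bu-\ab,\s-\bu\rangle\ge 0$ for every $\s\in\Cc$ (this can be obtained by considering the directional derivative at $\bu$ along $\s-\bu$, which is a feasible direction by convexity of $\Cc$, and requiring it to be nonnegative). Rewriting, $\langle\s,\ab-\bu\rangle\le\langle\bu,\ab-\bu\rangle$ for every $\s\in\Cc$. Taking the supremum over $\s\in\Cc$ gives one inequality; plugging $\s=\bu\in\Cc$ gives the reverse, establishing equality.

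For item (iii), the nonexpansiveness, I would apply the variational inequality from item (ii) twice: once at $\ab$ with $\s=\bu(\bb,\Cc)$ and once at $\bb$ with $\s=\bu(\ab,\Cc)$. Denoting $\bu_a:=\bu(\ab,\Cc)$ and $\bu_b:=\bu(\bb,\Cc)$ for brevity, these read $\langle\ab-\bu_a,\bu_b-\bu_a\rangle\le 0$ and $\langle\bb-\bu_b,\bu_a-\bu_b\rangle\le 0$. Adding the two and rearranging yields
\begin{equation*}
\|\bu_a-\bu_b\|^2 \le \langle\ab-\bb,\bu_a-\bu_b\rangle.
\end{equation*}
Cauchy--Schwarz on the right-hand side gives $\|\bu_a-\bu_b\|^2\le\|\ab-\bb\|\cdot\|\bu_a-\bu_b\|$, and dividing through (the case $\bu_a=\bu_b$ being trivial) delivers the claimed 1-Lipschitz bound.
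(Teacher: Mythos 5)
Your proof is correct in all three parts. The paper does not actually prove this statement --- it is recorded as a ``Fact'' with citations to Bertsekas and Boyd --- and the argument you give (Weierstrass plus strict convexity for existence and uniqueness, the first-order variational inequality $\li \ab-\bu(\ab,\Cc),\s-\bu(\ab,\Cc)\ri\leq 0$ for the support characterization, and the two-fold application of that inequality followed by Cauchy--Schwarz for nonexpansiveness) is exactly the standard textbook proof those references contain, so there is nothing to compare against or correct.
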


\begin{fact}[Variance of Lipschitz functions]
\label{fact:lipVar}
Assume $\g\sim\Nn(0,\mathbf{I}_p)$ and let $f(\cdot):\mathbb{R}^p\rightarrow\mathbb{R}$ be an $L$-Lipschitz function. Then,
$$
Var( f(\g) ) \leq L^2.
$$
\end{fact}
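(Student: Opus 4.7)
The statement is the classical \emph{Gaussian Poincar\'e inequality} specialized to Lipschitz functions. My plan is to first reduce to the smooth case by mollification, then establish the Poincar\'e inequality $\mathrm{Var}(f(\g)) \leq \E[\|\nabla f(\g)\|^2]$ via an interpolation trick with an independent Gaussian copy, and finally invoke Rademacher's theorem to bound $\|\nabla f\|$ by the Lipschitz constant $L$.

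\textbf{Step 1: Reduction to smooth $f$.} Since $f$ is $L$-Lipschitz, it is finite-valued everywhere; by Rademacher's theorem, $f$ is differentiable almost everywhere in $\mathbb{R}^p$ with $\|\nabla f\| \leq L$ wherever the gradient exists. By convolving $f$ with a Gaussian mollifier $\phi_\epsilon$ of variance $\epsilon^2$, the function $f_\epsilon := f * \phi_\epsilon$ is $C^\infty$, still $L$-Lipschitz (since convolution with a probability density preserves the Lipschitz constant), and converges to $f$ pointwise as $\epsilon \to 0$. If the variance bound holds for each $f_\epsilon$, a standard application of dominated convergence (using that $|f(\x)-f(\mathbf{0})|\leq L\|\x\|$ is integrable against the Gaussian measure) yields $\mathrm{Var}(f(\g)) \leq L^2$.

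\textbf{Step 2: The Poincar\'e inequality via a rotation argument.} For smooth $f$ with bounded gradient, let $\g, \g'$ be independent $\mathcal{N}(0,\mathbf{I}_p)$ vectors and define the interpolation $\g_t = \cos(t)\g + \sin(t)\g'$ for $t\in[0,\pi/2]$. By rotational invariance, $\g_t \sim \mathcal{N}(0,\mathbf{I}_p)$ for every $t$, and $\g_0 = \g$, $\g_{\pi/2} = \g'$. The key observation is that $\frac{d}{dt}\g_t = -\sin(t)\g + \cos(t)\g' =: \g_t^\perp$ is itself standard normal and, crucially, \emph{independent} of $\g_t$ (both are Gaussian with zero cross-covariance). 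Writing
\begin{align*}
f(\g') - f(\g) = \int_0^{\pi/2}\!\!\nabla f(\g_t)^T \g_t^\perp\, dt,
\end{align*}
squaring, and taking expectations, we obtain by Cauchy–Schwarz (in the integral over $t$, using that the total length is $\pi/2$) and Fubini:
\begin{align*}
\E\bigl[(f(\g')-f(\g))^2\bigr] \;\leq\; \frac{\pi}{2}\int_0^{\pi/2}\!\!\E\bigl[(\nabla f(\g_t)^T \g_t^\perp)^2\bigr]\, dt \;=\; \frac{\pi}{2}\int_0^{\pi/2}\!\!\E\bigl[\|\nabla f(\g_t)\|^2\bigr]\,dt,
\end{align*}
where the last equality uses the independence of $\g_t^\perp$ from $\g_t$ and $\E[\g_t^\perp (\g_t^\perp)^T]=\mathbf{I}_p$. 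Since $\g_t \sim \mathcal{N}(0,\mathbf{I}_p)$ for every $t$, the right-hand side equals $(\pi/2)^2 \E[\|\nabla f(\g)\|^2]$. The left-hand side equals $2\,\mathrm{Var}(f(\g))$ by independence of $\g,\g'$. This gives a Poincar\'e-type bound, but with a suboptimal constant of $\pi^2/8$ rather than the sharp $1$.

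\textbf{Step 3: Sharpening the constant.} To recover the sharp constant $1$, I would replace the naive Cauchy–Schwarz bound with a more refined argument: observe that $\E[\nabla f(\g_t)^T \g_t^\perp \cdot \nabla f(\g_s)^T \g_s^\perp]$ can be computed using the joint Gaussianity of $(\g_t^\perp,\g_s^\perp)$, whose covariance is $\cos(t-s)\mathbf{I}_p$. Expanding $\E[(f(\g')-f(\g))^2]$ directly as a double integral and evaluating the cross-terms using Gaussian integration by parts (Stein's identity) yields the optimal constant. Alternatively, and more cleanly, one can use the Ornstein–Uhlenbeck semigroup $P_t f(\x) = \E[f(e^{-t}\x + \sqrt{1-e^{-2t}}\g)]$ and the identity $\frac{d}{dt}\E[(P_t f(\g))^2] = -2\E[\|\nabla P_t f(\g)\|^2]$, combined with the contraction $\|\nabla P_t f\|^2 \leq e^{-2t} P_t(\|\nabla f\|^2)$, then integrate from $0$ to $\infty$.

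\textbf{Step 4: Conclusion.} For $L$-Lipschitz $f$, $\|\nabla f_\epsilon\|_\infty \leq L$, so $\E[\|\nabla f_\epsilon(\g)\|^2]\leq L^2$, and the Poincar\'e inequality gives $\mathrm{Var}(f_\epsilon(\g))\leq L^2$. Passing to the limit $\epsilon\to 0$ yields the claim.

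The main obstacle is obtaining the \emph{sharp} constant $L^2$ (as stated), rather than a constant multiple of $L^2$; the naive interpolation gives only $\pi^2 L^2/8$. Getting the sharp constant requires either the semigroup argument or careful evaluation of the cross-covariances in the interpolation, and this is where the bulk of the technical work lies.
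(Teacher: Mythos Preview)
The paper does not prove this statement; it is listed as a ``Fact'' in the appendix and treated as a standard result from the concentration-of-measure literature (the Gaussian Poincar\'e inequality; see, e.g., Ledoux \cite{Led01} or Bogachev \cite{Bod98}, which the paper cites). There is therefore no proof to compare against.

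On its own merits, your outline is sound. The mollification reduction in Step~1 is correct. The rotation/interpolation in Step~2 is clean and correctly yields the suboptimal constant $\pi^2/8$. Your Step~3 identifies the right tools for the sharp constant~$1$: the Ornstein--Uhlenbeck semigroup argument you sketch is the standard route, and the key identity $\frac{d}{dt}\E[(P_t f)^2] = -2\E[\|\nabla P_t f\|^2]$ together with the commutation/contraction $\nabla P_t f = e^{-t} P_t(\nabla f)$ gives, upon integrating in $t$,
\[
\mathrm{Var}(f(\g)) = 2\int_0^\infty \E[\|\nabla P_t f(\g)\|^2]\,dt \leq 2\int_0^\infty e^{-2t}\,\E[\|\nabla f(\g)\|^2]\,dt = \E[\|\nabla f(\g)\|^2].
\]
You are right that this is where the technical weight sits; as written, Step~3 is a sketch rather than a proof, but the ingredients you name are exactly the ones needed, and filling them in is routine once the semigroup machinery is set up. The alternative you mention---computing the cross-covariances $\E[\nabla f(\g_t)^T\g_t^\perp\,\nabla f(\g_s)^T\g_s^\perp]$ directly via Stein's identity---also works but is messier.
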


\begin{fact}[Gaussian concentration Inequality for Lipschitz functions] \label{fact:lipIneq}
Let $f(\cdot):\R^p\rightarrow \R$ be an $L$-Lipschitz function and $\g\sim \Nn(0,\mathbf{I}_{p})$. Then,
$$
\Pro\left(|f(\g)-\E[f(\g)]|\geq t\right)\leq 2\exp(-\frac{t^2}{2L^2}).
$$
\end{fact}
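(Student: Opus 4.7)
The plan is to establish this classical concentration inequality (known in this sharp form as the Borell--Tsirelson--Ibragimov--Sudakov inequality) via the log-Sobolev route, i.e.~Herbst's argument. First, I would perform two standard reductions. By mollifying $f$ with a small Gaussian and letting the scale go to zero, I reduce without loss of generality to the case where $f$ is smooth with $\|\nabla f(\x)\| \leq L$ at every point; the Lipschitz constant is preserved under convolution, and both sides of the desired inequality are continuous in $f$ with respect to uniform convergence on compacts. Then I center by writing $\tilde f = f - \E f(\g)$, so the target becomes the one-sided bound $\Pro(\tilde f(\g) \geq t) \leq \exp(-t^2/(2L^2))$; applying this with $-f$ in place of $f$ and taking a union bound produces the factor of $2$ and the absolute value in the statement.

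The main tool is the Gaussian logarithmic Sobolev inequality: for every smooth $F:\R^p \to \R$ with $\g \sim \Nn(0,\mathbf{I}_p)$,
\[ \E[F(\g)^2 \log F(\g)^2] \;-\; \E[F(\g)^2]\,\log \E[F(\g)^2] \;\leq\; 2\,\E[\|\nabla F(\g)\|^2]. \]
I would cite this as a known result (Gross, 1975). Applying it to $F = \exp(\lambda \tilde f/2)$ and using $\|\nabla \tilde f\|\leq L$ pointwise yields, for $\phi(\lambda) := \E[\exp(\lambda \tilde f(\g))]$, the Herbst differential inequality
\[ \lambda \phi'(\lambda) - \phi(\lambda) \log \phi(\lambda) \;\leq\; \frac{\lambda^2 L^2}{2}\,\phi(\lambda). \]
Dividing by $\lambda^2 \phi(\lambda)$ recognizes the left-hand side as $\frac{d}{d\lambda}\bigl(\lambda^{-1}\log \phi(\lambda)\bigr)$, and since $\lim_{\lambda\to 0^+}\lambda^{-1}\log\phi(\lambda) = \E[\tilde f(\g)] = 0$, integrating from $0$ to $\lambda$ gives the subgaussian moment bound $\log \phi(\lambda) \leq \lambda^2 L^2/2$ for all $\lambda > 0$.

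The proof then concludes with a Chernoff/Markov step: for $\lambda > 0$,
\[ \Pro(\tilde f(\g) \geq t) \;\leq\; e^{-\lambda t}\phi(\lambda) \;\leq\; \exp\!\left(\frac{\lambda^2 L^2}{2} - \lambda t\right), \]
optimized at $\lambda = t/L^2$ to yield $\exp(-t^2/(2L^2))$, after which the one-to-two-sided union bound finishes the argument. The only real obstacle is the log-Sobolev inequality itself; if one wanted a self-contained argument without invoking Gross's theorem, one could instead use the smart-path interpolation $\g_\theta = \sin\theta\,\g + \cos\theta\,\g'$ (with $\g,\g'$ independent standard Gaussians, exploiting that $\g_\theta$ and $d\g_\theta/d\theta$ are independent standard Gaussians for each fixed $\theta$) together with Jensen's inequality, but that route yields the inferior constant $\pi^2/8$ in the exponent rather than the sharp $1/2$ claimed here. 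All remaining steps---the Herbst ODE manipulation, the Chernoff optimization, and the symmetrization---are routine.
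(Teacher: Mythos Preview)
Your argument via the Gaussian log-Sobolev inequality and Herbst's method is correct and yields the sharp constant $1/(2L^2)$ in the exponent, exactly as stated. The reductions (mollification to smooth $f$, centering, one-sided to two-sided by symmetry) are all standard and valid.

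There is nothing to compare against, however: the paper does not prove this statement. It is listed in the appendix under ``Useful Facts'' with no proof, treated as a classical result and implicitly deferred to standard references such as Ledoux--Talagrand \cite{Tal} or Ledoux \cite{Led01}. So your proposal is not an alternative route---it simply supplies a proof where the paper chose to cite.
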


\section{Auxiliary Results}\label{sec:concentration}

\begin{lem}
\label{lemma:conc3}
Let $f(\cdot):\R^p\rightarrow \R$ be  an $L$-Lipschitz function and $\g\sim \Nn(0,I_{p})$. Then,
\begin{align*}
 \sqrt{ \E\left[ ( f(\g) )^2 \right] - L^2} - t \leq f(\g) \leq \sqrt{ \E\left[( f(\g) )^2 \right]} + t,
\end{align*}
with probability $1-2\exp(-\frac{t^2}{2L^2})$.
\end{lem}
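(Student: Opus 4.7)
The plan is to combine the two facts already at our disposal, namely the Gaussian concentration inequality for Lipschitz functions (Fact \ref{fact:lipIneq}) and the variance bound for Lipschitz functions (Fact \ref{fact:lipVar}). The former controls the deviation of $f(\g)$ from its mean $\E[f(\g)]$, while the latter lets us sandwich that mean between $\sqrt{\E[f(\g)^2]-L^2}$ and $\sqrt{\E[f(\g)^2]}$. Once both bounds are in hand, the result follows by a straightforward combination.

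More concretely, I would first write the standard identity
$
\left(\E[f(\g)]\right)^2 = \E[f(\g)^2] - \mathrm{Var}(f(\g)).
$
Fact \ref{fact:lipVar} gives $\mathrm{Var}(f(\g)) \leq L^2$, so
$\left(\E[f(\g)]\right)^2 \geq \E[f(\g)^2] - L^2$,
while Jensen's inequality (or non-negativity of variance) gives $\left(\E[f(\g)]\right)^2 \leq \E[f(\g)^2]$. Assuming (as is the case in every intended application, where $f$ represents a norm or distance) that $\E[f(\g)] \geq 0$, taking square roots yields
$
\sqrt{\E[f(\g)^2] - L^2} \;\leq\; \E[f(\g)] \;\leq\; \sqrt{\E[f(\g)^2]}.
$

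Second, I would invoke Fact \ref{fact:lipIneq} to deduce that, with probability at least $1 - 2\exp(-t^2/(2L^2))$,
$
\E[f(\g)] - t \;\leq\; f(\g) \;\leq\; \E[f(\g)] + t.
$
Chaining the two displays by substituting the lower sandwich bound on the left and the upper sandwich bound on the right gives exactly
$
\sqrt{\E[f(\g)^2] - L^2} - t \;\leq\; f(\g) \;\leq\; \sqrt{\E[f(\g)^2]} + t,
$
which is the claimed inequality, and the probability bound is inherited directly from Fact \ref{fact:lipIneq}.

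There is no serious obstacle here; the proof is essentially a two-line algebraic manipulation wrapped around two standard Gaussian-concentration facts. The only minor subtlety is the need for $\E[f(\g)^2] \geq L^2$ so that the square root on the left is real (otherwise the lower bound is vacuous and can simply be replaced by $0$ or any non-positive quantity), and the implicit sign convention $\E[f(\g)]\geq 0$; both are harmless in the contexts where this lemma is applied in the paper (distances, norms, projections onto convex sets).
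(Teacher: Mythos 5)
Your proposal is correct and follows exactly the paper's own argument: sandwich $(\E[f(\g)])^2$ between $\E[f(\g)^2]-L^2$ and $\E[f(\g)^2]$ via Fact \ref{fact:lipVar} and Jensen, then combine with the Lipschitz concentration inequality of Fact \ref{fact:lipIneq}. Your remark on the implicit sign convention $\E[f(\g)]\geq 0$ is a point the paper's proof passes over silently, but otherwise the two proofs coincide.
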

\begin{proof}
From Fact \ref{fact:lipIneq}, 
\beq\label{eq:kj1}
|f(\g)-\E[f(\g)]|\leq t,
\eeq
holds with probability $1-2\exp(-\frac{t^2}{2L^2})$.
Furthermore,  
\begin{align}\label{eq:fhg1}
 \E[ ( f(\g) ) ^2] - L^2  \leq ( \E[f(\g)] )^2 \leq \E[ ( f(\g) ) ^2] .
  \end{align}
 The left hand side inequality in \ref{eq:fhg1} follows from an application of Fact \ref{fact:lipVar} and the right hand side follows from Jensen's Inequality.

Combining \eqref{eq:kj1} and \eqref{eq:fhg1} completes the proof.
\end{proof}
For the statements of the lemmas below, recall the definitions of $\DC$,$\PC$ and $\CC$ in Section \ref{sec:snot}.
\begin{lem}
\label{lemma:conc4}
Let $\g\sim\Nn(0,\Iden_m),\h\sim\Nn(0,\Iden_n)$ and let $\Cc\in\R^n$ be a closed and convex set. Given $t>0$, each of the followings hold with probability $1-2\exp\left( \frac{-t^2}{2} \right)$.
\begin{itemize}
\item $\sqrt{m-1} - t \leq \| \g \|_2 \leq \sqrt{m} + t$
\item $\sqrt{\DC-1} - t \leq \dt(\h,\Cc )\leq \sqrt{\DC} + t$
\item $\sqrt{\PC-1} - t \leq \| \bu(\h,\Cc) \|_2 \leq \sqrt{\PC} + t$
\end{itemize}
\end{lem}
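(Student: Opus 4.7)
The plan is to prove each of the three bounds as a direct corollary of Lemma \ref{lemma:conc3} applied to a suitable $1$-Lipschitz function of the Gaussian input. The second moments required by Lemma \ref{lemma:conc3} match the definitions in \eqref{eq:defn01}: namely $\E[\|\g\|^2]=m$ and, by definition, $\E[\dt^2(\h,\Cc)]=\DC$ and $\E[\|\bu(\h,\Cc)\|^2]=\PC$. Hence, once Lipschitzness with constant $L=1$ is established for each of $f_1(\g)=\|\g\|$, $f_2(\h)=\dt(\h,\Cc)$, and $f_3(\h)=\|\bu(\h,\Cc)\|$, Lemma \ref{lemma:conc3} yields exactly the stated tail bounds with probability $1-2\exp(-t^2/2)$.

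First, I would verify that all three functions are $1$-Lipschitz. For $f_1$, this is the reverse triangle inequality: $|\|\x_1\|-\|\x_2\||\leq\|\x_1-\x_2\|$. For $f_2$, I would use that $\dt(\x,\Cc)=\min_{\s\in\Cc}\|\x-\s\|$, and a standard argument gives $\dt(\x_1,\Cc)-\dt(\x_2,\Cc)\leq\|\x_1-\x_2\|$: pick $\s^\star=\bu(\x_2,\Cc)$, and then $\dt(\x_1,\Cc)\leq\|\x_1-\s^\star\|\leq\|\x_1-\x_2\|+\|\x_2-\s^\star\|=\|\x_1-\x_2\|+\dt(\x_2,\Cc)$. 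For $f_3$, the non-expansiveness of the projection onto a closed convex set (Fact \ref{prom}) gives $\|\bu(\x_1,\Cc)-\bu(\x_2,\Cc)\|\leq\|\x_1-\x_2\|$, and then another application of the reverse triangle inequality yields $|\|\bu(\x_1,\Cc)\|-\|\bu(\x_2,\Cc)\||\leq\|\x_1-\x_2\|$.

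With Lipschitzness in hand, each claim follows by instantiating Lemma \ref{lemma:conc3} with $L=1$ and the respective second moment. Concretely, for $f_1$, Lemma \ref{lemma:conc3} gives $\sqrt{m-1}-t\leq\|\g\|\leq\sqrt{m}+t$ with probability $1-2\exp(-t^2/2)$; the analogous substitutions produce the bounds for $\dt(\h,\Cc)$ (with second moment $\DC$) and for $\|\bu(\h,\Cc)\|$ (with second moment $\PC$).

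There is no real obstacle here; the statement is essentially a packaging of three standard Gaussian concentration inequalities, and all the analytic content is already encapsulated in Lemma \ref{lemma:conc3} and Fact \ref{prom}. The only care needed is in the second of the three Lipschitz verifications, where the symmetric-in-indices argument must be invoked to upgrade the one-sided inequality $\dt(\x_1,\Cc)-\dt(\x_2,\Cc)\leq\|\x_1-\x_2\|$ to the absolute-value bound.
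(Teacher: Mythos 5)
Your proposal is correct and follows essentially the same route as the paper: both apply Lemma \ref{lemma:conc3} to the three $1$-Lipschitz functions $\|\cdot\|$, $\dt(\cdot,\Cc)$ and $\|\bu(\cdot,\Cc)\|$, using that their second moments are $m$, $\DC$ and $\PC$ by definition. The paper simply asserts the Lipschitzness without the explicit verifications you supply, so your write-up is a slightly more detailed version of the same argument.
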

\begin{proof}The result is an immediate application of Lemma \ref{lemma:conc3}. The functions $\|\cdot\|$, $ \|\bu(\cdot,\Cc )\|$ and $\dt(\cdot,\Cc)$  are all $1$-Lipschitz. Furthermore, $\E[\|\g \|_2^2]=m$ and $\E[\bu(\h,\Cc)^2] = \PC$,  $\E[\dt(\h,\Cc )^2] = \DC$ by definition. %
\end{proof}

\begin{lem}
\label{lemma:concentrationALL}
Let $\h\sim\Nn(0,\Iden_n)$ and let $\Cc\in\R^n$ be a convex and closed set. Then, given $t>0$,
\begin{itemize}
\item $|\dt(\h,\Cc)^2-\DC|\leq 2t\sqrt{\DC}+t^2+1$.
\item $|\|\bu(\h,\Cc)\|^2-\PC|\leq 3t\sqrt{n+\DC}+t^2+1$.
\item $|\corr(\h,\Cc)-\CC|\leq 3t\sqrt{n+\DC}+t^2+1$.
\end{itemize}
with probability $1-4\exp(-\frac{t^2}{2})$.
\end{lem}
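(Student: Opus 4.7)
The plan is to derive all three concentration bounds from the square-root versions already established in Lemma \ref{lemma:conc4}, essentially by squaring those inequalities and then invoking the decomposition identity for the correlation. First, I would tackle statement 1: Lemma \ref{lemma:conc4} gives $\sqrt{\DC-1}-t \leq \dt(\h,\Cc) \leq \sqrt{\DC}+t$ on an event of probability $1-2\exp(-t^2/2)$. Squaring both sides and using $(\sqrt{\DC-1}-t)^2 \geq \DC - 1 - 2t\sqrt{\DC} + t^2$ and $(\sqrt{\DC}+t)^2 = \DC + 2t\sqrt{\DC} + t^2$ yields $|\dt^2(\h,\Cc)-\DC| \leq 2t\sqrt{\DC}+t^2+1$ on the same event.

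Next, for statement 2, I would repeat the squaring argument on the $\|\bu(\h,\Cc)\|$ bound of Lemma \ref{lemma:conc4}, which gives $|\|\bu(\h,\Cc)\|^2-\PC|\leq 2t\sqrt{\PC}+t^2+1$. To reach the target form, I need the deterministic bound $\sqrt{\PC}\leq \sqrt{2}\sqrt{n+\DC}$. This follows from the triangle inequality $\|\bu(\h,\Cc)\|\leq \|\h\|+\dt(\h,\Cc)$ applied in expectation: by Jensen's inequality and Cauchy--Schwarz,
\[
\PC \leq \E\bigl[(\|\h\|+\dt(\h,\Cc))^2\bigr] \leq n + 2\sqrt{n\,\DC} + \DC \leq 2(n+\DC).
\]
Since $2\sqrt{2}<3$, one obtains $2t\sqrt{\PC}\leq 3t\sqrt{n+\DC}$, giving statement 2.

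For statement 3, the key observation is the Pythagorean-style identity $\h = \bu(\h,\Cc) + \bi(\h,\Cc)$, which after expansion yields $2\corr(\h,\Cc) = \|\h\|^2 - \|\bu(\h,\Cc)\|^2 - \dt^2(\h,\Cc)$; taking expectations gives $2\CC = n - \PC - \DC$ (consistent with the identity recorded in Section \ref{sec:snot}). Subtracting,
\[
2(\corr(\h,\Cc)-\CC) = (\|\h\|^2-n) - (\|\bu(\h,\Cc)\|^2-\PC) - (\dt^2(\h,\Cc)-\DC).
\]
I would then apply the squaring argument to the $\|\h\|$ concentration of Lemma \ref{lemma:conc4} (taking $\Cc=\{0\}$, so $\DC=n$) to control $|\|\h\|^2-n|\leq 2t\sqrt{n}+t^2+1$, combine with statements 1 and 2, and absorb $\sqrt{n}+\sqrt{\PC}+\sqrt{\DC}$ into a constant multiple of $\sqrt{n+\DC}$ using the bound $\sqrt{\PC}\leq \sqrt{2}\sqrt{n+\DC}$ together with $\sqrt{n},\sqrt{\DC}\leq \sqrt{n+\DC}$. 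A triangle inequality together with union-bounding the three high-probability events (each of probability $1-2\exp(-t^2/2)$) yields the desired form, with the stated failure probability coming from this union bound.

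The main obstacle is purely bookkeeping: verifying that all constants in the squared bounds and in $\sqrt{\PC}\leq \sqrt{2}\sqrt{n+\DC}$ combine cleanly into the advertised coefficients $3t\sqrt{n+\DC}+t^2+1$, and accounting for the union bound so that the joint failure probability fits within $4\exp(-t^2/2)$ (which may require a slightly tighter accounting than a naive union bound over three separate events, e.g.\ by noting that the upper and lower tails for a single Lipschitz function already share a single $2\exp(-t^2/2)$ budget). No deep inequalities beyond Gaussian Lipschitz concentration (Fact \ref{fact:lipIneq}) and elementary convex-geometric identities are required.
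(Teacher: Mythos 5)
Your argument for the first two statements is exactly the paper's: square the $\sqrt{\cdot}$-level bounds of Lemma \ref{lemma:conc4} and absorb $\sqrt{\PC}$ via $\PC\le 2(n+\DC)$ (this is Lemma \ref{DCbound} in the paper; your derivation of it is the same triangle-inequality/Cauchy--Schwarz computation). The only divergence is in the third statement. You use the exact pointwise identity $2\corr(\h,\Cc)=\|\h\|^2-\|\bu(\h,\Cc)\|^2-\dt^2(\h,\Cc)$ and therefore must control a third fluctuation $|\|\h\|^2-n|$; the paper instead writes $\corr(\h,\Cc)=\frac{n-\|\bu(\h,\Cc)\|^2-\dt^2(\h,\Cc)}{2}$, i.e.\ it silently replaces $\|\h\|^2$ by $n$, unions only the two events for $\dt$ and $\bu$, and thereby lands on $1-4\exp(-t^2/2)$ and on $t(\sqrt{\DC}+\sqrt{\PC})+t^2+1\le 3t\sqrt{n+\DC}+t^2+1$. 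Your version is the more careful one, but the accounting you defer to ``bookkeeping'' does not fully close as stated: the linear coefficient is fine, since $\sqrt{n}+\sqrt{\DC}\le\sqrt{2}\sqrt{n+\DC}$ and $\sqrt{\PC}\le\sqrt{2}\sqrt{n+\DC}$ give $t(\sqrt{n}+\sqrt{\PC}+\sqrt{\DC})\le 2\sqrt{2}\,t\sqrt{n+\DC}\le 3t\sqrt{n+\DC}$, but averaging three bounds of the form $2t\sqrt{\cdot}+t^2+1$ leaves an additive $\tfrac{3}{2}(t^2+1)$ rather than $t^2+1$, and a union over three Lipschitz events gives $1-6\exp(-t^2/2)$ rather than $1-4\exp(-t^2/2)$. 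These discrepancies are immaterial for every downstream use of the lemma (where $t=\Theta(\sqrt{m})$ and only the order of the bounds matters), and they can be repaired either by slightly loosening the stated constants or by comparing to $n$ directly as the paper does; but as written your statement 3 establishes a marginally weaker inequality than the one advertised.
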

\begin{proof} 
The first two statements follow trivially from Lemma \ref{lemma:conc4}. For the second statement, use again Lemma \ref{lemma:conc4} and also upper bound $\PC$ by $2(n+\DC)$ via Lemma \ref{DCbound}. To obtain the third statement, we write,
\beq
\corr(\h,\Cc)=\frac{n-(\|\bu(\h,\Cc)\|^2+\dt(\h,\Cc)^2)}{2}\nn
\eeq
and use the fact that first two statements hold with probability $1-4\exp(-\frac{t^2}{2})$. This will give,
\beq
|\corr(\h,\Cc)-\CC|\leq t(\sqrt{\DC}+\sqrt{\PC})+t^2+1,\nn
\eeq
which when combined  with Lemma \ref{DCbound} concludes the proof.
\end{proof}

\begin{lem}\label{DCbound} 
Let $\Cc\in\R^n$ be a convex and closed set. Then, the following holds,
$$\max\{\CC,\PC\}\leq 2(n+\DC).$$
\end{lem}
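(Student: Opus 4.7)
The plan is straightforward and splits into two independent bounds.

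First, I would attack the bound $\PC \leq 2(n+\DC)$ directly via the decomposition $\h = \bu(\h,\Cc) + \bi(\h,\Cc)$, i.e.\ $\bu(\h,\Cc) = \h - \bi(\h,\Cc)$. By the triangle inequality and the standard estimate $(a+b)^2 \leq 2a^2 + 2b^2$,
\[
\|\bu(\h,\Cc)\|^2 \;\leq\; \bigl(\|\h\| + \|\bi(\h,\Cc)\|\bigr)^2 \;\leq\; 2\|\h\|^2 + 2\,\dt^2(\h,\Cc).
\]
Taking expectation over $\h \sim \Nn(0,\Iden_n)$ gives $\PC \leq 2n + 2\DC = 2(n+\DC)$, which is the desired bound on $\PC$.

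For the bound on $\CC$, I would simply invoke the identity $n = \DC + \PC + 2\CC$ already recorded in Section~\ref{sec:snot} (obtained by squaring $\h = \bu(\h,\Cc) + \bi(\h,\Cc)$ and taking expectations). This gives
\[
\CC \;=\; \tfrac{1}{2}\bigl(n - \DC - \PC\bigr) \;\leq\; \tfrac{n}{2},
\]
because $\DC \geq 0$ and $\PC \geq 0$. Since $n/2 \leq 2(n+\DC)$, the desired inequality $\CC \leq 2(n+\DC)$ follows immediately.

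Combining the two bounds yields $\max\{\CC,\PC\} \leq 2(n+\DC)$. There is no real obstacle here: the only nontrivial ingredient is the triangle-inequality estimate for $\|\bu(\h,\Cc)\|$, and the bound on $\CC$ is essentially free from the Pythagorean-style identity. The proof is effectively two or three lines.
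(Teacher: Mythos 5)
Your proposal is correct. The bound on $\PC$ is essentially identical to the paper's: both start from the triangle inequality $\|\bu(\h,\Cc)\|\leq \|\h\|+\dt(\h,\Cc)$ and expand the square (the paper expands $(\|\h\|+\dt(\h,\Cc))^2$ and uses $2\E[\|\h\|\dt(\h,\Cc)]\leq \E[\|\h\|^2]+\E[\dt^2(\h,\Cc)]$, which is the same as your $(a+b)^2\leq 2a^2+2b^2$). For $\CC$ you take a genuinely different and cleaner route: the paper bounds $\CC\leq\E[\dt(\h,\Cc)\|\bu(\h,\Cc)\|]$ by Cauchy--Schwarz and then reuses the triangle inequality to get $\CC\leq\frac{n+3\DC}{2}$, whereas you read the bound $\CC\leq\frac{n}{2}$ directly off the identity $n=\DC+\PC+2\CC$ together with $\DC,\PC\geq 0$. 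Your bound is both shorter and tighter, and since the identity is already recorded in the paper it costs nothing; the paper's argument has the minor virtue of not depending on that identity, but nothing downstream needs the sharper constant, so either version suffices.
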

\begin{proof} From triangle inequality, for any $\h\in\mathbb{R}^n$,
\beq
\|\bu(\h,\Cc)\|\leq \|\h\|+\dt(\h,\Cc).\nn
\eeq
We also have,
\beq
\E[\|\h\|\cdot\dt(\h,\Cc)]\leq\frac{1}{2}( \E[\|\h\|^2]+\E[\dt(\h,\Cc)^2])=\frac{n+\DC}{2}\nn.
\eeq
From these, we may write,
\begin{align}
\CC&=\E[\li\Pi(\h,\Cc),\bu(\h,\Cc)\ri]\nn\\
&\leq \E[\dt(\h,\Cc)\|\bu(\h,\Cc)\|]\nn\\&\leq \frac{n+3\DC}{2}\nn.
\end{align}
Similarly, we have,
\beq
\PC = \E[\|\bu(\h,\Cc)\|^2]\le \E[\|\h\|+\dt(\h,\Cc)^2]\leq 2(n+\DC)\nn.
\eeq
\end{proof}
%
%
\begin{lem} \label{repeatlem}Let $\g\sim\Nn(0,\Iden_m)$ and $\h\sim\Nn(0,\Iden_n)$. Let $\Cc$ be a closed and convex set in $\R^n$. Assume $m(1-\eps_L)>\DC>\eps_Lm$ for some constant $\eps_L>0$ and $m$ is sufficiently large. Then, for any constant $\eps>0$, each of the following holds with probability $1-\exp(-\order{m})$,
\begin{itemize}
\item $\|\g\|>\dt(\h,\Cc)$.
\item $\big|\frac{\|\g\|^2-\dt^2(\h,\Cc)}{m-\DC}-1\big|<\eps$.
\item $\big|\frac{\dt^2(\h,\Cc)}{\|\g\|^2-\dt^2(\h,\Cc)}\times \frac{m-\DC}{\DC}-1\big|<\eps$.
\end{itemize}
\end{lem}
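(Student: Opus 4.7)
The plan is to derive all three bullets from standard Gaussian concentration applied separately to $\|\g\|^2$ and $\dt^2(\h,\Cc)$, and then combine them via a union bound. The key observation is that under the hypothesis $(1-\eps_L)m > \DC > \eps_L m$, the gap $m-\DC$ is of order $m$, so concentration errors of size $o(m)$ are negligible compared to the quantities of interest.

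First, I would fix a small auxiliary constant $\delta>0$ whose value will be chosen at the end in terms of $\eps$ and $\eps_L$. Applying Lemma \ref{lemma:conc4} with $t=\delta\sqrt{m}$ yields
\[
\bigl|\|\g\|^2 - m\bigr| \leq C_1 \delta\, m
\]
with probability $1-2\exp(-\delta^2 m/2)$, for some absolute constant $C_1$ (this follows by squaring $\sqrt{m-1}-t\leq\|\g\|\leq\sqrt{m}+t$). Similarly, applying Lemma \ref{lemma:concentrationALL} with $t=\delta\sqrt{m}$ and using $\DC\leq m$,
\[
\bigl|\dt^2(\h,\Cc) - \DC\bigr| \leq 2\delta\sqrt{m\DC} + \delta^2 m + 1 \leq C_2\delta\, m
\]
with probability $1-4\exp(-\delta^2 m/2)$. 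Union-bounding, both inequalities hold simultaneously with probability $1-6\exp(-\delta^2 m/2) = 1-\exp(-\order{m})$.

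Under this joint event, the three bullets follow by direct algebra. For the second bullet, write
\[
\bigl(\|\g\|^2-\dt^2(\h,\Cc)\bigr) - (m-\DC) = \bigl(\|\g\|^2-m\bigr) + \bigl(\DC - \dt^2(\h,\Cc)\bigr),
\]
whose absolute value is at most $(C_1+C_2)\delta m$. Dividing by $m-\DC\geq \eps_L m$ shows the relative error is at most $(C_1+C_2)\delta/\eps_L$, which can be made smaller than $\eps$ by choosing $\delta$ small enough. The first bullet is an immediate byproduct: on this event, $\|\g\|^2 - \dt^2(\h,\Cc) \geq (1-\eps)(m-\DC)>0$, so $\|\g\|>\dt(\h,\Cc)$. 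For the third bullet, the ratio factors as
\[
\frac{\dt^2(\h,\Cc)}{\|\g\|^2-\dt^2(\h,\Cc)}\cdot\frac{m-\DC}{\DC} = \frac{\dt^2(\h,\Cc)/\DC}{(\|\g\|^2-\dt^2(\h,\Cc))/(m-\DC)},
\]
and both numerator and denominator lie in $[1-\eps',1+\eps']$ for $\eps'=(C_1+C_2)\delta/\eps_L$ by the concentration results above, so the ratio lies in $[(1-\eps')/(1+\eps'), (1+\eps')/(1-\eps')]$. Shrinking $\delta$ a final time makes this subset of $[1-\eps,1+\eps]$.

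There is no serious obstacle; the only thing to be careful about is that Lemmas \ref{lemma:conc4} and \ref{lemma:concentrationALL} give concentration of $\|\g\|$ and $\dt(\h,\Cc)$ (not directly of their squares), so one needs the elementary squaring step, together with $\DC\leq m$, to absorb the additive and quadratic error terms into a single $O(\delta m)$ bound. Everything else is bookkeeping of constants, and the exponent $\delta^2 m / 2$ is of order $m$ with $\delta$ depending only on $\eps$ and $\eps_L$, giving the claimed $1-\exp(-\order{m})$ probability.
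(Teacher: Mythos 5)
Your proposal is correct and follows essentially the same route as the paper's proof: concentrate $\|\g\|^2$ and $\dt^2(\h,\Cc)$ around $m$ and $\DC$ with deviations $O(\delta m)$, union bound, and then use $m-\DC\geq\eps_L m$ and $\DC\geq\eps_L m$ to convert the additive errors into relative errors that can be driven below $\eps$ by shrinking $\delta$. The only cosmetic difference is that you obtain the first bullet as a byproduct of the second, while the paper checks it directly by comparing $\sqrt{m(1-\delta)}$ with $\sqrt{\DC+\delta m}$; both are fine.
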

\begin{proof} Let $\delta$ be a constant to be determined. For sufficiently large $m$, using Lemma \ref{lemma:conc4}, with probability $1-\exp(-\order{m})$, we have,
\beq
|\|\g\|^2-m|<\delta m,~~~|\dt(\h,\Cc)^2-\DC|<\delta m\nn
\eeq
Now, choose $\delta<\frac{\eps_L}{2}$, which gives,
\beq
\|\g\|\geq\sqrt{m(1-\delta)}>\sqrt{\DC+\eps_L m-\delta m}>\sqrt{\DC+\delta m}\geq \dt(\h,\Cc)\nn
\eeq
This gives the first statement. For the second statement, observe that,
\beq
1+\frac{2\delta}{\eps_L}\geq \frac{m-\DC+2\delta}{m-\DC}\geq \frac{\|\g\|^2-\dt^2(\h,\Cc)}{m-\DC}\geq \frac{m-\DC-2\delta}{m-\DC}\geq 1-\frac{2\delta}{\eps_L}.\nn
\eeq
Choose $\frac{\delta}{\eps_L}<\frac{\eps}{2}$ to ensure the desired result. For the last statement, we similarly have,
\beq
\frac{1+\frac{\delta}{\eps_L}}{1-\frac{2\delta}{\eps_L}}\geq \frac{\dt^2(\h,\Cc)}{\|\g\|^2-\dt^2(\h,\Cc)}\times  \frac{m-\DC}{\DC}\geq \frac{1-\frac{\delta}{\eps_L}}{1+\frac{2\delta}{\eps_L}}\label{eq:what2}
\eeq
To conclude, notice that we can choose $\frac{\delta}{\eps_L}$ sufficiently small (constant) to ensure that the left and right bounds in \eqref{eq:what2} above are between $1\pm \eps$.

\end{proof}

\begin{proof}[{\bf{Proof of Lemma \ref{useful lemon}}}]
We will show the results for $L_{\bf{P}}(\la)$ and $L_{\bf{D}}(\la)$. $L_{\bf{C}}(\la)$ follows from the fact that ${\bf{P}}(\la\Cc)+{\bf{D}}(\la\Cc)+2{\bf{C}}(\la\Cc)=n$. Let $\h\in\R^n$. Then, for $\la+\eps,\la> 0 $,
\beq
\|\bu(\h,(\la+\eps)\Cc)\|=\frac{\la+\eps}{\la}\|\bu(\frac{\la\h}{\la+\eps},\la\Cc)\|=\|\bu(\frac{\la\h}{\la+\eps},\la\Cc)\|+\frac{\eps}{\la}\|\bu(\frac{\la\h}{\la+\eps},\la\Cc)\|\nn
\eeq 
This gives,
\beq
\big|\|\bu(\h,(\la+\eps)\Cc)\|- \|\bu(\frac{\la \h}{\la+\eps},\la\Cc)\| \big |\leq |\eps| R\nn
\eeq
Next, observe that,
\beq
\big|\|\bu(\frac{\la \h}{\la+\eps},\la\Cc)\|-\|\bu(\h,\la\Cc)\|\big|\leq \frac{|\eps|\|\h\|}{\la+\eps}\nn
\eeq
Combining, letting $\h\sim\Nn(0,\Iden_n)$ and using $\|\bu(\h,\la\Cc)\|\leq \la R$, we find,
\begin{align}
{\bf{P}}((\la+\eps)\Cc)&\leq \E[(\|\bu(\h,\la\Cc)\|+\frac{|\eps|\|\h\|}{\la+\eps}+|\eps| R)^2]\nn\\
&\leq {\bf{P}}(\la\Cc)+2\la R|\eps|(\frac{\E[\|\h\|]}{\la+\eps}+R)+|\eps|^2\E[(\frac{\|\h\|}{\la+\eps}+R)^2]\nn
\end{align}
Obtaining the similar lower bound on ${\bf{P}}((\la+\eps)\Cc)$ and letting $\eps\rightarrow 0$,
\beq
L_{\bf{P}}(\la)=\lim_{\eps\rightarrow 0}\sup\left|\frac{{\bf{P}}((\la+\eps)\Cc)-{\bf{P}}(\la\Cc)}{\eps}\right|\leq \lim_{\eps\rightarrow 0}2\la R(\frac{\E[\|\h\|]}{\la+\eps}+R+\order{|\eps|})\leq 2R(\sqrt{n}+\la R)\nn
\eeq
For $\la=0$, observe that for any $\eps>0,\h\in\R^n$, $\|\bu(\h,\eps\Cc)\|\leq \eps R$ which implies ${\bf{P}}(\eps\Cc)\leq \eps^2R^2$. Hence,
\beq
L_{\bf{P}}(0)=\lim_{\eps\rightarrow 0^+}\eps^{-1}({\bf{P}}(\eps\Cc)-{\bf{P}}(0))= 0
\eeq
Next, consider ${\bf{D}}(\la \Cc)$. Using differentiability of ${\bf{D}}(\la \Cc)$, for $\la>0$,
\beq
L_{\bf{D}}(\la)=|{\bf{D}}(\la \Cc)'|=\frac{2}{\la}|{\bf{C}}(\la \Cc)|\leq \frac{2\cdot\E[\|\bu(\h,\la\Cc)\|\cdot\dt(\h,\la\Cc)]}{\la}\leq 2R\cdot\E[\dt(\h,\la\Cc)]\leq 2R(\sqrt{n}+\la R)\nn
\eeq
For $\la=0$, see the ``Continuity at zero'' part of the proof of Lemma B.2 in \cite{McCoy}, which gives the upper bound $2R\sqrt{n}$ on $L_{\bf{D}}(0)$.
\end{proof}
\section{Proof of (modified) Gordon's Lemma}\label{sec:proofGor}


In this section we prove the modified Gordon's Lemma \ref{lemma:Gor}. The Lemma is a consequence of Theorem \ref{thm:GordonMain}. We repeat the statement of the Lemma for ease of reference.
\Gor*
%


Our proof will closely parallel the proof of the original Gordon's Lemma $3.1$ in \cite{Gor}. 
%
%

\begin{proof}

For $\x\in\Phi_1$ and $\ab\in \Phi_2$ define the two processes,
\begin{align*}
Y_{\x,\ab} = \x^T \Gb \ab + \|\ab\| \|\x\| g\quad \text{ and }\quad
X_{\x,\ab} =  \|\x \| \g^T\ab  -  \|\ab\| \h^T\x
\end{align*}
where $\Gb,\g,\h$ are as defined in the statement of the lemma and $g\sim\mathcal{N}(0,1)$ and independent of the other. We show that the processes defined satisfy the conditions of Gordon's Theorem \ref{thm:GordonMain}:
\begin{align*}
\E[X_{\x,\ab}^2] =  \|\x \|^2 \|\ab\|^2 + \|\ab\|^2 \|\x\|^2 = \E[Y_{\x,\ab}^2],
\end{align*}
and
\begin{align*}
\E[ X_{\x,\ab}X_{\x',\ab'} ] - \E[ Y_{\x,\ab}Y_{\x',\ab'} ]  &= \|\x\|\|\x'\|(\ab^T\ab') + \|\ab\|^2(\x^T\x') -  (\x^T\x') (\ab^T\ab') - \|\ab\|\|\ab'\| \|\x\|\|\x'\| \notag \\
&= \left(\underbrace{ \|\x\|\|\x'\| -  (\x^T\x') }_{\geq 0} \right) \left( \underbrace{ (\ab^T\ab') - \|\ab\|\|\ab'\| }_{\leq 0} \right),
\end{align*}
which is non positive  and equal to zero when $x=x'$.
Also, on the way of applying Theorem \ref{thm:GordonMain} for the two processes defined above, let 
\begin{align*}
\la_{\x,\ab} = \psi(\x,\ab)+c.
\end{align*}

The only caveat in directly applying Theorem \ref{thm:GordonMain} is now that  it requires the processes to be discrete. This technicality is addressed by Gordon in \cite{Gor} (see  Lemma $3.1$ therein), for the case where $\Phi_1$ is arbitrary and $\Phi_2$ is a scaled unit sphere.
In Lemma \ref{discrete to cont}, we show that the minimax inequality can be translated from discrete to continuous processes, as well, in the case where both $\Phi_1$ and $\Phi_2$ are compact sets. 
To conclude, applying  Theorem \ref{thm:GordonMain} we have,
\begin{align}
&\Pro\left(  \min_{\x\in\Phi_1}~~\max_{\ab\in\Phi_2}~\left\{ \ab^T\Gb\x + \|\ab\|\|\x\| g  - \psi({\x,\ab} ) \right\} \geq c\right) \geq \notag \\
&\qquad\qquad\qquad\qquad\qquad\qquad\Pro\left(  \min_{\x\in\Phi_1}~~\max_{\ab\in\Phi_2}~\left\{ \|\x \| \g^T\ab  -  \|\ab\| \h^T\x - \psi({\x,\ab})  \right\}\geq c  \right) := q. \label{eq:bbone}
\end{align}
Since $g\sim\Nn(0,1)$, we can write the left hand side of \eqref{eq:bbone} as, $p=\frac{p_++p_-}{2}$ where we define $p_+,p_-,p_0$ as,
\begin{align}
&p_-=\Pro\left(  \min_{\x\in\Phi_1}~~\max_{\ab\in\Phi_2}~\left\{ \ab^T\Gb\x + \|\ab\|\|\x\| g  - \psi({\x,\ab} ) \right\} \geq c~\big|~g\leq 0\right),\nn\\
&p_+=\Pro\left(  \min_{\x\in\Phi_1}~~\max_{\ab\in\Phi_2}~\left\{ \ab^T\Gb\x + \|\ab\|\|\x\| g  - \psi({\x,\ab} ) \right\} \geq c~\big|~g>0\right),\nn\\
&p_0=\Pro\left(  \min_{\x\in\Phi_1}~~\max_{\ab\in\Phi_2}~\left\{ \ab^T\Gb\x   - \psi({\x,\ab} ) \right\} \geq c\right)\nn
\end{align}
By construction and independence of $g,\Gb$; $1\geq p_+\geq p_0\geq p_-$. On the other hand, $1-q\geq 1-p\geq \frac{1-p_-}{2}$ which implies, $p_-\geq 2q-1$. This further yields $p_0\geq 2q-1$, which is what we want.

%
\end{proof}

\begin{lem} \label{discrete to cont}Let $\Gb\in\R^{m\times n},\g\in\R^m,\h\in\R^n,g\in\R$ be independent with i.i.d. standard normal entries. Let $\Phi_1\subset\R^n,\Phi_2\subset\R^m$ be compact sets. Let $\psi(\cdot,\cdot):\R^{n}\times\R^m\rightarrow \R$ be a continuous function. Assume, for all finite sets $S_1\subset\Phi_1,S_2\subset \Phi_2$ and $c\in\R$, we have,
\beq
\Pro(\min_{\x\in S_1}~\max_{\ab\in S_2}~\{\ab^T\Gb\x-\psi(\x,\ab)\}\geq c)\geq \Pro(\min_{\x\in S_1}~\max_{\ab\in S_2}~\{\|\x\|\g^T\ab-\|\ab\|\h^T\x-\psi(\x,\ab)\}\geq c)\nn
\eeq
Then,
\beq
\Pro(\min_{\x\in\Phi_1}~\max_{\ab\in \Phi_2}~\{\ab^T\Gb\x-\psi(\x,\ab)\}\geq c)\geq \Pro(\min_{\x\in\Phi_1}~\max_{\ab\in \Phi_2}~\{\|\x\|\g^T\ab-\|\ab\|\h^T\x-\psi(\x,\ab)\}\geq c)\nn
\eeq
\end{lem}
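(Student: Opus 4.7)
\textbf{Proof plan for Lemma \ref{discrete to cont}.} My plan is to approximate the compact sets $\Phi_1,\Phi_2$ by increasing finite subsets and then to carefully pass the probabilistic inequality from the discrete hypothesis to the continuous conclusion. First, I would choose a countable dense subset $\{\x_i\}_{i\geq 1}\subset\Phi_1$ and $\{\ab_j\}_{j\geq 1}\subset\Phi_2$ (such dense sequences exist since $\Phi_1,\Phi_2$ are separable as compact subsets of Euclidean space), and set $S_1^N=\{\x_1,\dots,\x_N\}$, $S_2^N=\{\ab_1,\dots,\ab_N\}$. For each realization of the Gaussian data, write
\begin{align*}
h_L(\x,\ab) &:= \ab^T\Gb\x - \psi(\x,\ab), \\
h_R(\x,\ab) &:= \|\x\|\g^T\ab - \|\ab\|\h^T\x - \psi(\x,\ab),
\end{align*}
both of which are continuous on the compact set $\Phi_1\times\Phi_2$ since $\psi$ is continuous and the other terms are polynomial. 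Denote by $V_N^L, V_N^R$ the corresponding discrete min-max values over $S_1^N\times S_2^N$, and by $V^L, V^R$ the continuous min-max values over $\Phi_1\times\Phi_2$.

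Next I would show that $V_N^L\to V^L$ and $V_N^R\to V^R$ almost surely (in fact pointwise for every realization). The argument is the same for both, so I focus on $h_L$. Let $F_N(\x)=\max_{\ab\in S_2^N} h_L(\x,\ab)$ and $F(\x)=\max_{\ab\in\Phi_2} h_L(\x,\ab)$. Since $S_2^N$ is increasing and eventually dense in $\Phi_2$, and $h_L(\x,\cdot)$ is continuous, $F_N(\x)\nearrow F(\x)$ pointwise in $\x$; moreover, $F$ is continuous on the compact $\Phi_1$ (Berge's maximum theorem, or a direct uniform-continuity argument), and each $F_N$ is continuous as a finite maximum of continuous functions. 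Dini's theorem then yields $F_N\to F$ uniformly on $\Phi_1$. Combining this with the continuity of $F$ and the density of $S_1^N$ in $\Phi_1$ gives
\begin{align*}
|V_N^L - V^L| \leq \sup_{\x\in\Phi_1}|F_N(\x)-F(\x)| + \bigl|\min_{\x\in S_1^N}F(\x) - \min_{\x\in\Phi_1}F(\x)\bigr| \longrightarrow 0.
\end{align*}

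Now I would push the discrete inequality $\Pro(V_N^L\geq c')\geq \Pro(V_N^R\geq c')$ (which holds by hypothesis for every $c'\in\mathbb{R}$ and every $N$) to the limit. The standard technicality is that level sets are not continuity sets of the limiting distributions. I handle this with a small-$\epsilon$ buffer: for any $\epsilon>0$, apply the hypothesis at $c'=c-\epsilon$ and take $N\to\infty$. On the right side, $\{V_N^R\geq c-\epsilon\}$ eventually contains $\{V^R\geq c\}$ by the almost sure convergence $V_N^R\to V^R$, so by Fatou's lemma $\liminf_N\Pro(V_N^R\geq c-\epsilon)\geq \Pro(V^R\geq c)$. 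On the left side, if $V^L\geq c-\epsilon$ then for any $\delta>0$, $V_N^L\geq c-\epsilon-\delta$ eventually; the reverse-Fatou argument applied to $\{V_N^L\geq c-\epsilon\}$ gives $\limsup_N\Pro(V_N^L\geq c-\epsilon)\leq \Pro(V^L\geq c-2\epsilon)$, or more cleanly one simply combines the two bounds to obtain $\Pro(V^L\geq c-2\epsilon)\geq \Pro(V^R\geq c)$. Letting $\epsilon\downarrow 0$, the continuity-from-above of the probability measure gives $\Pro(V^L\geq c-2\epsilon)\downarrow \Pro(V^L\geq c)$, completing the proof.

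\textbf{Main obstacle.} The routine work is setting up Dini's theorem and verifying continuity; the only delicate point is the limiting of the probabilistic inequality, since level sets $\{V\geq c\}$ can have positive mass at $c$ (this is precisely where reverse Fatou fails to give an equality). The $\epsilon$-buffer trick above is what rescues the argument, and it succeeds because the discrete inequality is available at every threshold $c'$, not just the specific $c$ of interest.
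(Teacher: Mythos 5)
Your proof is correct, but it follows a genuinely different route from the paper's. The paper takes $S_1,S_2$ to be finite $\eps$-coverings of $\Phi_1,\Phi_2$ and bounds the gap between the discrete and continuous min-max values \emph{quantitatively}, by a term $C(t,\eps)$ involving $\|\Gb\|_2$, $\|\g\|$, $\|\h\|$ and the modulus of continuity of $\psi$; it then controls those Gaussian norms on a high-probability event of mass $1-2p(t)$, obtains $\Pro(V^L\geq c-C(t,\eps))\geq\Pro(V^R\geq c+C(t,\eps))-2p(t)$, and sends $\eps\to 0$ with $t=\eps^{-1/2}$. You instead work realization-wise: increasing finite subsets of a dense sequence, Dini's theorem to get sure convergence $V_N^L\to V^L$ and $V_N^R\to V^R$, and then Fatou / reverse Fatou with the $\epsilon$-buffer to transfer the inequality. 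Your argument is more elementary and more general --- it never uses Gaussianity or any concentration, only continuity and compactness, so it would apply verbatim to arbitrary random matrices and vectors --- whereas the paper's argument produces an explicit quantitative perturbation bound as a by-product (which it does not subsequently need). Both proofs hinge on the same final delicate step, the $\epsilon$-shift of the threshold followed by continuity from above of the measure, and you have identified and handled that point correctly; the only cosmetic remark is that your chain of inequalities should pass through $\limsup_N\Pro(V_N^L\geq c-\epsilon)\geq\liminf_N\Pro(V_N^R\geq c-\epsilon)$ explicitly, but this is immediate from the term-by-term hypothesis.
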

\begin{proof} 
Let $R(\Phi_i)=\sup_{\vb\in \Phi_i}\|\vb\|$ for $1\leq i\leq 2$. Let $S_1\subset \Phi_1,S_2\subset\Phi_2$ be arbitrary $\eps$-coverings of the sets $\Phi_1,\Phi_2$ so that, for any $\vb\in \Phi_i$, there exists $\vb'\in S_i$ satisfying $\|\vb'-\vb\|\leq \eps$. Furthermore, using continuity of $\psi$ over the compact set $\Phi_1\times \Phi_2$, for any $\delta>0$, we can choose $\eps$ sufficiently small to guarantee that $|\psi(\x,\ab)-\psi(\x',\ab')|<\delta$. Here $\delta$ can be made arbitrarily small as a function of $\eps$. Now, for any $\x\in \Phi_1,\ab\in \Phi_2$, pick $\x',\ab'$ in the $\eps$-coverings $S_1,S_2$. This gives,
\beq
|[\ab^T\Gb\x-\psi(\x,\ab)]-[{\ab'}^T\Gb\x'-\psi(\x',\ab')]| \leq \eps( R(\Phi_1)+R(\Phi_2)+\eps)\|\Gb\|_2+\delta\label{c.7}
\eeq
\beq
|[\|\x\|\g^T\ab-\|\ab\|\h^T\x-\psi(\x,\ab)]-[\|\x'\|\g^T\ab'-\|\ab'\|\h^T\x'-\psi(\x',\ab')]| \leq \eps( R(\Phi_1)+R(\Phi_2)+\eps)(\|\g\|+\|\h\|)+\delta\label{c.8}
\eeq
Next, using Lipschitzness of $\|\g\|,\|\h\|,\|\Gb\|_2$ and Lemma \ref{lemma:conc4}, for $t>1$, we have,
\beq
\Pro(\max\{\|\g\|+\|\h\|,\|\Gb\|_2\}\leq t(\sqrt{n}+\sqrt{m}))\geq 1-4\exp(-\frac{(t-1)^2(m+n)}{2}):=p(t)\label{abcpt}
\eeq 
Let $C(t,\eps)=t\eps( R(\Phi_1)+R(\Phi_2)+\eps)(\sqrt{m}+\sqrt{n})+\delta$. Then, since \eqref{c.7} and \eqref{c.8} holds for all $\ab,\x$, using \eqref{abcpt},
\begin{align}
& \Pro(\min_{\x\in\Phi_1}\max_{\ab\in \Phi_2}\{\ab^T\Gb\x-\psi(\x,\ab)\}\geq c-C(t,\eps))\geq \Pro(\min_{\x\in S_1}\max_{\ab\in S_2}\{\ab^T\Gb\x-\psi(\x,\ab)\}\geq c)-p(t)\label{c.10}\\
& \Pro(\min_{\x\in S_1}\max_{\ab\in S_2}\{\|\x\|\g^T\ab-\|\ab\|\h^T\x-\psi(\x,\ab)\}\geq c)\geq  \Pro(\min_{\x\in\Phi_1}\max_{\ab\in \Phi_2}\{\|\x\|\g^T\ab-\|\ab\|\h^T\x-\psi(\x,\ab)\}\geq c+C(t,\eps))-p(t)\label{c.11}
\end{align}
Combining \eqref{c.10} and \eqref{c.11}, for all $\eps>0,t>1$, the following holds,
\beq
 \Pro(\min_{\x\in\Phi_1}\max_{\ab\in \Phi_2}\{\ab^T\Gb\x-\psi(\x,\ab)\}\geq c-C(t,\eps))\geq \Pro(\min_{\x\in\Phi_1}\max_{\ab\in \Phi_2}\{\|\x\|\g^T\ab-\|\ab\|\h^T\x-\psi(\x,\ab)\}\geq c+C(t,\eps))-2p(t)\nn
\eeq
Setting $t=\eps^{-1/2}$ and letting $\eps\rightarrow 0$, we obtain the desired result as $C(t,\eps),p(t),\delta\rightarrow0$.
\end{proof}

\section{The Dual of the LASSO} 
\label{sec:dual}

To derive the dual  we write the problem in \eqref{eq:generic} equivalently as
\begin{align*}
\Fco(\A,\vb) = &\min_{\w,\bb} \left\{ \| \mathbf{b} \| + p(\w) \right\}\\
&~\text{s.t.}~~~~ \bb = \A\w - \sigma\vb,
\end{align*}
and then reduce it to 
\begin{align*}
&\min_{\w,\bb}\max_{\mub} \left\{ \| \mathbf{b} \| + \mub^T\left(\bb -  \A\w + \sigma\vb\right)+ p(\w) \right\}.
\end{align*}
The dual of the problem above is
\begin{align}\label{eq:comd1}
\max_{\mub}{\min_{\w,\bb}{ \{ \| \mathbf{b} \| + \mub^T(\bb -  \A\w + \sigma\vb ) + p(\w) \} }}.
\end{align}
The minimization over $\mathbf{b}$ above is easy to perform. A simple application of the Cauchy--Schwarz inequality gives
  \begin{align*}
 \| \mathbf{b} \| + \mub^T\mathbf{b} &\geq \| \mathbf{b} \| - \| \mathbf{b} \| \| \mub \| \\
 & = \left( 1 - \| \mub \| \right) \| \mathbf{b} \|.
 \end{align*}
Thus,
\begin{align*}
\min_{\mathbf{b}}{ \left\{ \| \mathbf{b} \| + {\mub}^T
 \mathbf{b} \right\} } = \begin{cases} 
   0  &, \|\mub\|\leq 1, \\
 -\infty &, o.w. .
  \end{cases}
  \end{align*}
Combining this with \eqref{eq:comd1} we conclude that the dual problem of the problem in \eqref{eq:generic} is the following:
\begin{align*}
&\max_{\| \mub \|\leq 1}\min_{\w} \left\{ \mub^T\left(- \A\w + \sigma\vb\right)+ p(\w) \right\}.
\end{align*}
We equivalently rewrite the dual problem in the format of  a minimization problem as follows:
\begin{align}\label{eq:comd}
-&\min_{\| \mub \|\leq 1}\max_{\w} \left\{ \mub^T\left( \A\w - \sigma\vb\right)- p(\w) \right\}.
\end{align}
If $p(\w)$ is a finite convex function from $\R^n\rightarrow \R$, the problem in \eqref{eq:generic} is convex and satisfies Slater's conditions. When $p(\w)$ is the indicator function of a convex set $\{\w\big|g(\w)\leq 0\}$, the problem can be viewed as $\min_{g(\w)\leq 0,\bb} \left\{\| \mathbf{b} \| + \mub^T\left(\bb -  \A\w + \sigma\vb\right)  \right\}$. For strong duality, we need strict feasibility, i.e., there must exist $\w$ satisfying $g(\w)<0$. In our setup, $g(\w)=f(\x_0+\w)-f(\x_0)$ and $\x_0$ is not a minimizer of $f(\cdot)$, hence strong duality holds and thus problems in \eqref{eq:generic} and \eqref{eq:comd} have the same optimal cost $\Fco(\A,\vb)$.

\section{Proofs for Section \ref{sec:KeyIdeas}}\label{sec:proofOfKeyOpt}

\subsection{Proof of Lemma \ref{lemma:lowKey}}

We prove the statements of the Lemma in the order that they appear.
\subsubsection{Scalarization}\label{sec:F_low scalar}
The first statement of Lemma \ref{lemma:lowKey} claims that the optimization problem  in \eqref{eq:lemFlow} can be reduced into a one dimensional optimization problem. To see this begin by evaluating the optimization over $\w$ for fixed $\|\w\|$:
\begin{align}
\Lc(\g,\h) &= \min_{\w}\left\{ \sqrt{\|\w\|^2 + \sigma^2}\|\g\| - \h^T\w + \max_{\s\in\Cc} \s^T\w \right\} \notag\\
&= \min_{\substack{
   \w:\|\w\|=\alpha \notag \\
   \alpha\geq 0
  }} \left\{ \sqrt{\|\w\|^2 + \sigma^2}\|\g\| - \h^T\w + \max_{\s\in\Cc} \s^T\w \right\} \\
  &= \min_{
   \alpha\geq 0
  } \left\{ \sqrt{\alpha^2 + \sigma^2}\|\g\| + \min_{\w:\|\w\|=\alpha} \left\{ - \h^T\w + \max_{\s\in\Cc} \s^T\w \right\}\right\} \notag\\
  &= \min_{
   \alpha\geq 0
  } \left\{ \sqrt{\alpha^2 + \sigma^2}\|\g\| - \max_{\w:\|\w\|=\alpha} \left\{\h^T\w - \min_{\s\in\Cc} \s^T\w\right\}\right\}\notag\\ \label{eq:r42}
    &= \min_{
   \alpha\geq 0
  } \left\{ \sqrt{\alpha^2 + \sigma^2}\|\g\| - \max_{\w:\|\w\|=\alpha}~\min_{\s\in\Cc}~\left\{(\h - \s)^T\w\right\}\right\}
\end{align}
To further simplify $\eqref{eq:r42}$, we use the following key observation as summarized in the Lemma below.
\begin{lem} \label{lemma:a property}Let $\Cc\in\R^n$ be a nonempty convex set in $\mathbb{R}^n$, $\h\in\R^n$ and $\alpha\geq0$. Then,
\begin{align}
\max_{\w:\|\w\|=\alpha}~\min_{\s\in\Cc}~\left\{(\h - \s)^T\w\right\} ~=~ \min_{\s\in\Cc}~\max_{\w:\|\w\|=\alpha}~\left\{(\h - \s)^T\w\right\}.\nn
\end{align}
Thus,
\begin{align}
\max_{\w:\|\w\|=\alpha}~\min_{\s\in\Cc}~\left\{(\h - \s)^T\w\right\} ~=~ \alpha\cdot\dt(\h,\Cc),\nn
\end{align}
and the optimum is attained at $\w^*=\alpha\cdot\frac{\Pi(\h,\Cc)}{\dt(\h,\Cc)}.$
\end{lem}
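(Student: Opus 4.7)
Proof plan. The overall strategy is to observe that the weak duality inequality
$\max_{\|\w\|=\alpha}\min_{\s\in\Cc}(\h-\s)^T\w \leq \min_{\s\in\Cc}\max_{\|\w\|=\alpha}(\h-\s)^T\w$
always holds, then compute the right-hand side directly, and finally establish the reverse inequality by exhibiting an explicit maximizer $\w^*$ and verifying its optimality through the projection variational inequality. This completely sidesteps any appeal to Sion's minimax theorem, which is awkward here since the feasible set $\{\w:\|\w\|=\alpha\}$ is not convex and $\Cc$ need not be compact.

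For the $\min_{\s}\max_{\w}$ side, I would fix any $\s\in\Cc$ and compute the inner maximization directly: by Cauchy--Schwarz, $\max_{\|\w\|=\alpha}(\h-\s)^T\w=\alpha\|\h-\s\|$ (for $\alpha\geq 0$), attained at $\w=\alpha(\h-\s)/\|\h-\s\|$ when $\h\neq\s$. Minimizing over $\s\in\Cc$ then yields $\alpha\,\dt(\h,\Cc)$ by definition of the distance.

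The main step is the matching lower bound on $\max_{\w}\min_{\s}$, where I exhibit the candidate $\w^*=\alpha\cdot \Pi(\h,\Cc)/\dt(\h,\Cc)$ (assume the nontrivial case $\h\notin\Cc$, so $\dt(\h,\Cc)>0$; otherwise $\alpha\cdot\dt=0$ is trivially an upper bound matched by weak duality after the obvious sign check). Clearly $\|\w^*\|=\alpha$. The key fact I would invoke is the standard variational characterization of the metric projection onto a closed convex set: writing $\p=\bu(\h,\Cc)$, one has $(\s-\p)^T(\h-\p)\leq 0$ for every $\s\in\Cc$. Using this, I compute
\begin{align*}
(\h-\s)^T\w^*
&= \tfrac{\alpha}{\dt(\h,\Cc)}\bigl[(\h-\p)^T(\h-\p) + (\p-\s)^T(\h-\p)\bigr] \\
&\geq \tfrac{\alpha}{\dt(\h,\Cc)}\|\h-\p\|^2
= \alpha\,\dt(\h,\Cc),
\end{align*}
uniformly in $\s\in\Cc$. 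Taking the minimum over $\s$ and then noting that this lower bound can be achieved at $\s=\p$ establishes $\min_{\s}(\h-\s)^T\w^*=\alpha\,\dt(\h,\Cc)$, and hence $\max_{\w}\min_{\s}\geq \alpha\,\dt(\h,\Cc)$, closing the inequality chain.

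I do not anticipate a serious obstacle here, since the projection variational inequality does all the work; the only subtlety to be careful about is handling the degenerate cases $\alpha=0$ and $\h\in\Cc$ (both sides reduce to $0$ after direct inspection), and ensuring $\Cc$ is taken closed so the projection $\bu(\h,\Cc)$ is well-defined (which holds in all the applications of this lemma, where $\Cc$ is either $\la\paf$ or $\cn(\paf)$).
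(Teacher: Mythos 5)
Your proposal is correct and follows essentially the same route as the paper's proof: weak duality plus a direct Cauchy--Schwarz evaluation of the min-max side, then matching it from below by plugging in $\w^*=\alpha\,\Pi(\h,\Cc)/\dt(\h,\Cc)$ and invoking the variational characterization of the projection (the paper uses the equivalent equality form $\li\bu(\h,\Cc),\h-\bu(\h,\Cc)\ri=\sup_{\s\in\Cc}\li\s,\h-\bu(\h,\Cc)\ri$ from its Fact on projections). Your explicit treatment of the degenerate cases $\alpha=0$ and $\h\in\Cc$ and the remark on closedness are minor additions the paper omits.
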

%
\begin{proof}
First notice that 
$$\min_{\s\in\Cc}~\max_{\w:\|\w\|=\alpha}~(\h - \s)^T\w = \min_{\s\in\Cc}~\alpha\|\h - \s\| =  \alpha\cdot\dt(\h,\Cc).
$$ Furthermore, MinMax is never less than MaxMin \cite{Boyd}.
Thus,
\begin{align*}
\max_{\w:\|\w\|=\alpha}~\min_{\s\in\Cc}~\left\{(\h - \s)^T\w\right\} ~\leq~ \min_{\s\in\Cc}~\max_{\w:\|\w\|=\alpha}~\left\{(\h - \s)^T\w\right\} = \alpha\cdot\dt(\h,\Cc).
\end{align*}
It suffices to prove that 
\begin{align}\nn
\max_{\w:\|\w\|=\alpha}~\min_{\s\in\Cc}~\left\{(\h - \s)^T\w\right\} ~\geq~  \alpha\cdot\dt(\h,\Cc).
\end{align}
Consider $\w^*=\alpha\cdot\frac{\bi(\h,\Cc)}{\dt(\h,\Cc)}$. Clearly,
\begin{align*}\label{eq:2prove53}
\max_{\w:\|\w\|=\alpha}~\min_{\s\in\Cc}~\left\{(\h - \s)^T\w\right\} ~\geq~  \min_{\s\in\Cc}~\left\{(\h - \s)^T\w^*\right\}. 
\end{align*}
But,
\begin{align}
\min_{\s\in\Cc}~\left\{(\h - \s)^T\w^*\right\} &= \frac{\alpha}{\dt(\h,\Cc)}\cdot\left( \h^T\bi(\h,\Cc) - \max_{\s\in\Cc}~\s^T \bi(\h,\Cc) \right)\\
 &= \frac{\alpha}{\dt(\h,\Cc)}\cdot\left( \h^T\bi(\h,\Cc) - \bu(\h,\Cc)^T \bi(\h,\Cc)\right)\label{eq:toproj}\\
 & = \alpha\cdot\dt(\h,\Cc),\notag
\end{align}
where \eqref{eq:toproj} follows from Fact \ref{prom}.
This completes the proof of the Lemma. 
\end{proof}

Applying the result of Lemma \ref{lemma:a property} to \eqref{eq:r42}, we conclude that
\begin{align}
\Lc(\g,\h) &= \min_{\w}\left\{ \sqrt{\|\w\|^2 + \sigma^2}\|\g\| - \h^T\w + \max_{\s\in\Cc} \s^T\w \right\} \notag\\
& = \min_{
   \alpha\geq 0
  } \left\{ \sqrt{\alpha^2 + \sigma^2}\|\g\| - \alpha\cdot\dt(\h,\Cc)\right\} \label{eq:oneDim}
\end{align}

\subsubsection{Deterministic Result}
The optimization problem in \eqref{eq:oneDim} is one dimensional and easy to handle. Setting the derivative of its objective function equal to zero and solving for the optimal $\alpha^*$, under the assumption that 
\begin{align}
\|\g\|^2 > \dt(\h,\Cc)^2\label{eq:ass_low1},
\end{align}
 it only takes a few simple calculations to prove the second statement of Lemma \ref{lemma:lowKey}, i.e. 
$$(\alpha^*)^2 = \|\w_{low}^*(\g,\h)\|^2 = \sigma^2\frac{\dt^2(\h,\Cc)}{\|\g\|^2 - \dt^2(\h,\Cc)}$$ and, 
\begin{align}
\Lc(\g,\h) = \sigma\sqrt{ \|\g\|^2 - \dt^2(\h,\Cc) }. \label{eq:det_low}
\end{align}

\subsubsection{Probabilistic Result}
%

Next, we prove the high probability lower bound for $\Lc(\g,\h)$ implied by the last statement of Lemma \ref{lemma:lowKey}. To do this, we will make use of concentration results for specific functions of Gaussian vectors as they are stated in Lemma \ref{lemma:concentrationALL}. Setting $t=\delta\sqrt{m}$ in Lemma \ref{lemma:concentrationALL}, with probability $1-8\exp(-c_0\delta^2m)$,
\begin{align*}
&|\|\g\|^2-m|\leq 2\delta m+\delta^2m+1,\\
&| \dt^2(\h,\Cc)-\DC|\leq 2\delta\sqrt{\DC m} +\delta^2m+1\leq  2\delta m +\delta^2m+1.
\end{align*}
Combining these and using the assumption  that $m\geq\DC+\eps_L m$, we find that
\begin{align*}
\|\g\|^2-\dt^2(\h,\Cc)&\geq m-\DC-[(2\delta^2+4\delta)m+2]\notag \\
&\geq m-\DC-[(2\delta^2+4\delta)\frac{m-\DC}{\eps_L}+2]\notag\\
&\geq (m-\DC)[1-\frac{(2\delta^2+4\delta)}{\eps_L}]-2,
\end{align*}
with the same probability.
Choose $\eps'$ so that $\sqrt{1-\eps'}=1-\eps$. Also, choose $\delta$ such that $\frac{(2\delta^2+4\delta)}{\eps_L}<\frac{\eps'}{2}$ and $m$ sufficiently large to ensure $\eps_L\eps' m>4$. Combined,
\begin{align}
\|\g\|^2-\dt^2(\h,\Cc)\geq  (m-\DC)(1-\frac{\eps'}{2})-2\geq (m-\DC)(1-\eps'), \label{eq:last24}
\end{align}
with probability $1-8\exp(-c_0\delta^2m)$.
Since the right hand side in \eqref{eq:last24} is positive, it follows from the second statement of Lemma \ref{lemma:lowKey} that 
 $$
 \Lc(\g,\h)\geq \sigma \sqrt{(m-\DC)(1-\eps')}=\sigma(1-\eps)\sqrt{m-\DC},
$$
with the same probability. This concludes the proof.
\subsection{Proof of Lemma \ref{lemma:upKey}}
\subsubsection{Scalarization}
We have
\begin{align}
\Uc(\g,\h) &= -\min_{\|\mu\| \leq 1}~\max_{\|\w\|=C_{up}}\left\{ \sqrt{C_{up}^2 + \sigma^2}~\g^T\mub + \|\mub\|\h^T\w - \max_{\s\in\Cc} \s^T\w \right\} \notag \\
&= -\min_{\|\mu\| \leq 1}\left\{ \sqrt{C_{up}^2 + \sigma^2}~\g^T\mub + \max_{\|\w\|=C_{up}}\left\{ \|\mub\|\h^T\w - \max_{\s\in\Cc} \s^T\w \right\}\right\}. \label{eq:kk0}
\end{align}
Notice that
\begin{align}
\max_{\|\w\|=C_{up}}\left\{ \|\mub\|\h^T\w - \max_{\s\in\Cc} \s^T\w \right\} &=
\max_{\|\w\|=C_{up}}~\min_{\s\in\Cc} ~( \|\mub\|\h - \s )^T \w \notag\\
&= C_{up} \dt(\|\mub\|\h,\Cc).\label{eq:kk1}
\end{align}
where \eqref{eq:kk1} follows directly from Lemma \ref{lemma:a property}.
Combine \eqref{eq:kk0} and \eqref{eq:kk1} to conclude that
\begin{align}
\Uc(\g,\h) &= -\min_{\|\mu\| \leq 1}\left\{ \sqrt{C_{up}^2 + \sigma^2}~\g^T\mub +   C_{up} \dt(\|\mub\|\h,\Cc)\right\}\notag\\
&= -\min_{0\leq \alpha\leq 1}\left\{ -\alpha\cdot\sqrt{C_{up}^2 + \sigma^2}~\|\g\| + C_{up}\dt(\alpha\h,\Cc)\right\}
\label{eq:scalar_fup}.
\end{align}

\subsubsection{Deterministic Result}
For convenience denote the  objective function of problem \eqref{eq:scalar_fup} as
$$
\phi(\alpha) = C_{up} \dt(\alpha\h,\Cc)  - \alpha\sqrt{C_{up}^2+\sigma^2}\|\g\|.
$$
Notice that $\phi(\cdot)$ is convex. By way of justification, $\dt(\alpha\h,\Cc)$ is a convex function for $\alpha\geq 0$ \cite{Roc70}, and $\alpha\sqrt{C^2+\sigma^2}\|\g\|$ is linear in $\alpha$. 
Denote $\alpha^* = \operatorname{argmin}\phi{(\alpha)}$. Clearly, it  suffices to show that $\alpha^*=1$. First, we prove that $\phi(\alpha)$ is differentiable as a function of $\alpha$ at $\alpha=1$. For this, we make use of the following lemma.
\begin{lem}\label{lemma:dist_diff}
Let $C$ be a nonempty closed and convex set and $\h\notin C$. Then 
\begin{align*}
\lim_{\eps\rightarrow 0}\frac{ \dt(\h + \eps \h,C) - \dt(\h,C) }{\eps} = \langle \h, \frac{\Pi(\h,C)}{\|\Pi(\h,C)\|} \rangle,
\end{align*}
\end{lem}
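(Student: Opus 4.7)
My plan is to reduce the claim to the well-known differentiability of the squared distance function and then apply the chain rule. Specifically, let $F(\x):=\dt(\x,\Cc)$ and $G(\x):=F(\x)^2=\|\bi(\x,\Cc)\|^2$. The key fact I would invoke is that $G(\cdot)$ is continuously differentiable on all of $\R^n$ with $\nabla G(\x)=2\bi(\x,\Cc)$; this is a standard result in convex analysis (Moreau's theorem for the squared distance to a nonempty closed convex set) and follows from the non-expansiveness of the projection combined with the optimality condition $\langle \x-\bu(\x,\Cc),\p-\bu(\x,\Cc)\rangle\le 0$ for every $\p\in\Cc$.

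Given $\h\notin\Cc$, we have $G(\h)=\dt(\h,\Cc)^2>0$, so in a neighborhood of $\h$ the map $F=\sqrt{G}$ is the composition of $G$ with the smooth function $t\mapsto\sqrt{t}$ on $\{t>0\}$. The chain rule then yields
\[
\nabla F(\h)=\frac{\nabla G(\h)}{2\sqrt{G(\h)}}=\frac{2\bi(\h,\Cc)}{2\|\bi(\h,\Cc)\|}=\frac{\bi(\h,\Cc)}{\|\bi(\h,\Cc)\|}.
\]
Consequently, the directional derivative of $F$ at $\h$ in the direction $\h$ is
\[
\lim_{\eps\to 0}\frac{F(\h+\eps\h)-F(\h)}{\eps}=\langle \nabla F(\h),\h\rangle=\Big\langle \h,\frac{\bi(\h,\Cc)}{\|\bi(\h,\Cc)\|}\Big\rangle,
\]
which is precisely the claim.

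If one prefers a self-contained derivation without invoking the gradient formula for $G$, the same conclusion can be reached by sandwiching the difference quotient between matching upper and lower bounds. For the upper bound, using $\bu(\h,\Cc)\in\Cc$ as a feasible candidate in the definition of $F(\h+\eps\h)$,
\[
F(\h+\eps\h)\le \|\h+\eps\h-\bu(\h,\Cc)\|=\|\bi(\h,\Cc)+\eps\h\|,
\]
and a first-order Taylor expansion of $\|\bi(\h,\Cc)+\eps\h\|$ around $\eps=0$ (which is valid because $\bi(\h,\Cc)\neq 0$) gives the desired linear term $\eps\,\langle \h,\bi(\h,\Cc)/\|\bi(\h,\Cc)\|\rangle$ up to $O(\eps^2)$. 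The matching lower bound follows from convexity of $F$: using the subgradient inequality $F(\h+\eps\h)\ge F(\h)+\eps\langle \g,\h\rangle$ for every $\g\in\partial F(\h)$, and noting that for $\h\notin\Cc$ the subdifferential reduces to the singleton $\{\bi(\h,\Cc)/\|\bi(\h,\Cc)\|\}$.

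I do not anticipate any real obstacle here; the only subtlety is verifying that $G$ is differentiable at $\h$ with the stated gradient, but this is textbook material (e.g., Hiriart-Urruty--Lemar\'echal, or Moreau's work on proximal mappings) and the condition $\h\notin\Cc$ is precisely what avoids the nondifferentiability of $F=\sqrt{G}$ at zero.
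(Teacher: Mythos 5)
Your proposal is correct, but your primary route differs from the paper's. The paper proves the lemma by a direct geometric sandwich: it takes the supporting hyperplane $H$ of $\Cc$ at $\bu(\h,\Cc)$ orthogonal to $\Pi(\h,\Cc)$, lower-bounds $\dt((1+\eps)\h,\Cc)$ by the distance to $H$ (which is exactly linear in $\eps$ with the claimed slope), and upper-bounds it by $\|(1+\eps)\h-\bu(\h,\Cc)\|$, expanding the latter via $\sqrt{a^2+b^2}\leq a+\tfrac{b^2}{2a}$ to see the same linear term plus $O(\eps^2)$. Your second, "self-contained" sandwich is essentially this same argument (your convexity/subgradient lower bound is the hyperplane bound in disguise, since verifying that $\Pi(\h,\Cc)/\|\Pi(\h,\Cc)\|$ is a subgradient of $\dt(\cdot,\Cc)$ at $\h$ amounts to the supporting-hyperplane inequality; note you only need membership in the subdifferential, not that it is a singleton, which would be circular if taken as the thing to prove). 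Your first route—invoking the $C^1$ smoothness of $\tfrac12\dt^2(\cdot,\Cc)$ with gradient $\x-\bu(\x,\Cc)$ and composing with $\sqrt{\cdot}$ away from zero—is a genuinely different and arguably cleaner derivation: it outsources the geometry to a standard fact about the Moreau envelope and immediately gives full (two-sided) differentiability of $\dt(\cdot,\Cc)$ at every $\h\notin\Cc$, whereas the paper's computation is written only for $\eps>0$ and is specific to the radial direction $\h$. What the paper's approach buys is self-containedness at the level of elementary Euclidean geometry, consistent with its use of Fact \ref{prom}; what yours buys is brevity and a stronger conclusion, at the cost of citing the differentiability of the squared distance.
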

\begin{proof}
Let $H$ be a hyperplane of $\Cc$ at $\bu(\h,\Cc)$ orthogonal to $\Pi(\h,C)$. Using the second statement of Fact \ref{prom}, $H$ is a supporting hyperplane and $\h$ and $C$ lie on different half planes induced by $H$ (also see \cite{Boyd}).
Also, observe that $\Pi(\h,\Cc)=\Pi(\h,H)$ and $\bu(\h,\Cc)=\bu(\h,H)$. Choose $\eps>0$ sufficiently small such that $(1+\eps)\h$ lies on the same half-plane as $\h$. We then have,
\beq\label{eq:l11}
\|\Pi((1+\eps)\h,\Cc)\|\geq \|\Pi((1+\eps)\h,H)\|=\|\Pi(\h,\Cc)\|+\li\eps\h,\frac{\Pi(\h,\Cc)}{\|\Pi(\h,\Cc)\|}\ri.
\eeq
Denote the $n-1$ dimensional subspace that is orthogonal to $\bi(\h,H)$ and parallel to $H$ by $H_0$. Decomposing $\eps\h$ to its orthonormal components along $\Pi(\h,H)$ and $H_0$, we have
\beq\label{eq:l22}
\|\Pi((1+\eps)\h,C)\|^2\leq \| (1+\eps)\h - \bu(\h,C) \|^2 =\left(\|\Pi(\h,C)\|+\li\eps\h,\frac{\Pi(\h,C)}{\|\Pi(\h,C)\|}\ri\right)^2+\eps^2\|\bu(\h,H_0)\|^2.
\eeq
Take square roots in both sides of \eqref{eq:l22} and apply on the right hand side the useful inequality $\sqrt{a^2+b^2}\leq a+\frac{b^2}{2a}$, which is true for all $a,b\in\mathbb{R}^+$. Combine the result with the lower bound in \eqref{eq:l11} and let $\eps\rightarrow0$ to conclude the proof.
%
\end{proof}

Since  $\h\notin\Cc$, it follows from Lemma \ref{lemma:dist_diff}, that $\dt(\alpha\h,\Cc)$ is differentiable as a function of $\alpha$ at $\alpha=1$, implying the same result for $\phi(\alpha)$. In fact, we have
$$
\phi'(1) = C_{up}\dt(\h,\Cc) + C_{up}\frac{ \li \bi(\h,\Cc) , \bu(\h,\Cc) \ri  }{ \dt(\h,\Cc) } - \sqrt{C_{up}^2+\sigma^2}\|\g\| < 0,
$$
where  the negativity follows from assumption \eqref{eq:ass_up1}. To conclude the proof, we make use of the following simple lemma.

\begin{lem}\label{lemma:simpleConvex_1}
Suppose $f:\mathbb{R}\rightarrow\mathbb{R}$ is a convex function, that is differentiable at $x_0\in\mathbb{R}$ and $f'(x_0)<0$. Then, 
$f(x) \geq f(x_0) $ for all $\ x\leq x_0.$
\end{lem}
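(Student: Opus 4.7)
The plan is to invoke the standard first-order characterization of convex functions at a point of differentiability. Specifically, for any convex function $f:\mathbb{R}\rightarrow\mathbb{R}$ that is differentiable at $x_0$, the supporting (tangent) line inequality gives
\begin{equation}\notag
f(x) \;\geq\; f(x_0) + f'(x_0)(x - x_0), \quad \text{for all } x\in\mathbb{R}.
\end{equation}
This is a textbook fact from convex analysis (e.g.\ \cite{Roc70,Boyd}), so no real work is needed to establish it.

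Once this inequality is in hand, the conclusion follows by a one-line sign argument. For any $x \leq x_0$, the factor $(x-x_0)$ is nonpositive, and by assumption $f'(x_0) < 0$, so the product $f'(x_0)(x-x_0)$ is nonnegative. Substituting back yields $f(x) \geq f(x_0) + f'(x_0)(x-x_0) \geq f(x_0)$, which is exactly the claim.

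There is no genuine obstacle here; the statement is a direct corollary of the supporting-line property for convex functions combined with a sign check. The only minor point to be careful about is to use the form of the first-order inequality that holds at a single point of differentiability (rather than assuming differentiability everywhere or convexity of the whole graph beyond what is needed), but this is standard. Accordingly, the proof will consist of two sentences: invoking the supporting-line inequality, then observing that both $f'(x_0)<0$ and $x-x_0\leq 0$ make the linear correction term nonnegative.
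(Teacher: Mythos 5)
Your proposal is correct and follows exactly the same route as the paper: the paper's proof also applies the first-order inequality $f(x) \geq f(x_0) + f'(x_0)(x-x_0)$ and then observes that $f'(x_0)<0$ and $x - x_0 \leq 0$ make the correction term nonnegative. Nothing is missing.
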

\begin{proof}
By convexity of $f(\cdot)$, for all $x\leq x_0$:
\begin{align*}
f(x) &\geq f(x_0) + \underbrace{f'(x_0)}_{<0}\underbrace{(x-x_0)}_{\leq 0}\\
&\geq f(x_0)
\end{align*}
\end{proof}
Applying Lemma \ref{lemma:simpleConvex_1} for the convex function $\phi(\cdot)$ at $\alpha=1$, gives that $\phi(\alpha)\geq\phi(1)$ for all $\alpha\in[0,1]$. Therefore, $\alpha^*=1.$


%
\subsubsection{Probabilistic Result}
We consider the setting where $m$ is sufficiently large and,
\begin{align}\label{eq:regime}
(1-\eps_L)m\geq \max\left({\DC+\CC,\DC}\right),~~~\DC\geq \eps_Lm
\end{align}
Choose $C_{up}=\sigma\sqrt{\frac{\DC}{m-\DC}}$ which would give $C_{up}^2+\sigma^2=\sigma^2\frac{m}{m-\DC}$. Hence, the assumption \eqref{eq:ass_up1} in the second statement of Lemma \ref{lemma:upKey} can be rewritten as,
\begin{align}\label{eq:ass_again}
\sqrt{m}\| \g \|\dt(\h,\Cc) > \sqrt{\DC}( \dt(\h,\Cc)^2 + \corr(\h,\Cc) ).
\end{align} 
 The proof technique is as follows. We first show that \eqref{eq:ass_again} (and thus \eqref{eq:ass_up1}) holds with high probability. Also, that $\h\notin\Cc$ with high probability. Then, as a last step we make use of the second statement of Lemma \ref{lemma:upKey} to compute the lower bound on $\Uc$.

\noindent $\bullet$ \emph{\eqref{eq:ass_up1} holds with high probability:}

Using standard concentration arguments (see Lemma \ref{lemma:conc4}), we have
\begin{align*}
\sqrt{m}\| \g \|\dt(\h,\Cc) \geq \sqrt{m}(\sqrt{m-1}-t)(\sqrt{\DC-1}-t)
\end{align*}
with probability $1-4\exp\left( \frac{-t^2}{2} \right)$. Choose a sufficiently small constant $\delta>0$ and set $t=\delta\sqrt{\DC}$ to ensure,
\beq
\sqrt{m}\| \g \|\dt(\h,\Cc)\geq (1-\frac{\eps_L}{2})m\sqrt{\DC}\label{eq:rr2}
\eeq
with probability $1-\exp(-\order{m})$, where we used $(1-\eps_L)\geq \DC\geq \eps_Lm$. In particular, for sufficiently large $\DC$ we need $(1-\delta)^2>1-\frac{\eps_L}{2}$.

Equation \eqref{eq:rr2} establishes a high probability lower bound for the expression at the left hand side of \eqref{eq:ass_again}.  Next, we show that the expression at the right hand side of \eqref{eq:ass_again} is upper bounded with high probability by the same quantity.

{\bf{Case 1:}} If $\Cc$ is a cone, $\corr(\h,\Cc)=0$ and using Lemma \ref{lemma:concentrationALL} $\dt(\h,\Cc)^2\leq \DC+2t\sqrt{\DC}+t^2\leq (1-\eps_L)m+2t\sqrt{m}+t^2$ with probability $1-2\exp(-\frac{t^2}{2})$. Hence, we can choose $t=\eps\sqrt{m}$ for a small constant $\eps>0$ to ensure, $\dt(\h,\Cc)^2<(1-\frac{\eps_L}{2})m$ with probability $1-\exp(-\order{m})$. This gives \eqref{eq:ass_again} in combination with \eqref{eq:rr2}.

{\bf{Case 2:}} Otherwise, from Lemma \ref{DCbound}, we have that $\PC\leq 2(n+\DC)$ and from \eqref{eq:regime}, $m\geq \DC$. Then, applying Lemma \ref{lemma:concentrationALL}, we have
\begin{align*}
\dt(\h,\Cc)^2 + \corr(\h,\Cc)  &\leq \DC + \CC + 3t\underbrace{\sqrt{\DC}}_{\leq \sqrt{m}} + t\underbrace{\sqrt{\PC}}_{\leq\sqrt{2(n+m)}}  +2(t^2+1)\notag \\
&\leq \DC + \CC + 3t\sqrt{m} + t\sqrt{2(n+m)}  +2(t^2+1) \\
&\leq (1-\eps_L)m + 3t\sqrt{m} + t\sqrt{2(n+m)}  +2(t^2+1). 
\end{align*}
with probability $1-4\exp\left( \frac{-t^2}{2} \right)$. Therefore, with the same probability,
\begin{align}\label{eq:rr1}
\sqrt{\DC}( \dt(\h,\Cc)^2 + \corr(\h,\Cc) )  \leq (1-\eps_L)m\sqrt{\DC} + 3t\sqrt{m}\sqrt{\DC} + t\sqrt{2(n+m)}\sqrt{\DC}  +2(t^2+1)\sqrt{\DC}
\end{align}
Comparing the right hand sides of inequalities \ref{eq:rr2} and \ref{eq:rr1} , we need to ensure that,
\beq
 3t\sqrt{m}\sqrt{\DC} + t\sqrt{2(n+m)}\sqrt{\DC}  +2(t^2+1)\sqrt{\DC}\leq \frac{\eps_L}{2}m\sqrt{\DC} \iff 3t\sqrt{m} + t\sqrt{2(n+m)}  +2(t^2+1)\leq \frac{\eps_L}{2}m.\label{upeq1}
\eeq
Choose $t=\eps\min\{\sqrt{m},\frac{m}{\sqrt{n}}\}$ for sufficiently small $\eps$ such that \eqref{upeq1} and \eqref{eq:ass_again} then hold with probability $1-\exp\left( -\order{\min\{\frac{m^2}{n},m\}} \right)$.

Combining Case 1 and Case 2, \eqref{eq:ass_again} holds with probability $1-\exp\left( -\order{\gamma(m,n)} \right)$ where $\gamma(m,n)=m$ when $\Cc$ is cone and $\gamma(m,n)=\min\{\frac{m^2}{n},m\}$ otherwise.

\noindent$\bullet$ \emph{$\h\not\in\Cc$ with high probability:} 

Apply Lemma \ref{lemma:conc4} on $\dt(\h,\Cc)$ with $t=\eps\sqrt{\DC}$ to show that $\dt(\h,\Cc)$ is strictly positive. This proves that $\h\notin\Cc$, with probability $1-\exp(-\order{\DC})$=$1-\exp(-\order{m})$.

\noindent$\bullet$ \emph{High probability lower bound for $\Uc$:} 

Thus far we have proved that assumptions $\h\not\in\Cc$ and \eqref{eq:ass_up1} of the second statement in Lemma \ref{lemma:upKey} hold with the desired probability. Therefore, \eqref{alpha=1} holds with the same high probability, namely,
\beq\label{eq:das1}
\Uc(\g,\h) =  \frac{\sigma}{\sqrt{m - \DC}}\left( \sqrt{m}\|\g\|- \sqrt{\DC}\dt(\h,\Cc)\right)
\eeq


We will use similar concentration arguments as above to upper bound the right hand side of \eqref{eq:das1}. For any $t>0$:
\begin{align*}
\sqrt{m}\|\g\| &\leq m + t\sqrt{m} \\
\sqrt{\DC}\dt(\h,\Cc) &\geq \sqrt{\DC}(\sqrt{\DC-1}-t)
\end{align*}
with probability $1-4\exp(-\frac{t^2}{2})$. Thus,
\beq
\sqrt{m}\|\g\|-\sqrt{\DC}\dt(\h,\Cc)\leq m-\DC+t(\sqrt{m}+\sqrt{\DC})+1.\label{combine6}
\eeq
For a given constant $\eps>0$, substitute \eqref{combine6} in \eqref{eq:das1} and choose $t=\eps'\sqrt{m}$ (for some sufficiently small constant $\eps'>0$), to ensure that,
$$
\Uc(\g,\h) \leq (1+\epsilon)\sigma\sqrt{m-\mathbf{D}(\x_0,\la)}
$$
with probability $1 - 4\exp{(\frac{-\eps'^2 m}{2})}$. Combining this with the high probability events of all previous steps, we obtain the desired result.

%
\subsection{Proof of Lemma \ref{lemma:dev}}
\subsubsection{Scalarization}
The reduction of $\Lc_{dev}(\g,\h)$ to an one-dimensional optimization problem follows identically the steps as in the proof for $\Lc(\g,\h)$ in Section \ref{sec:F_low scalar}. 

\subsubsection{Deterministic Result}\label{sec:detRes_dev}
From the first statement of Lemma \ref{lemma:dev},
\begin{align}
\Lc_{dev}(\g,\h) = \min_{ \alpha\in S_{dev} } \left\{ \underbrace{ \sqrt{ \alpha^2+\sigma^2}\|\g\| - \alpha\cdot\dt(\h,\Cc) }_{:=L(\alpha)} \right\},\label{eq:hara3}
\end{align}
where we have denoted the objective function as $L(\alpha)$ for notational convenience. It takes no much effort (see also statements $1$ and $2$ of Lemma \ref{lemma:pert1}) to prove that $L(\cdot)$: 
\begin{itemize}
\item
is a strictly convex function,
\item
attains its minimum at 
$$
\alpha^*(\g,\h) = \frac{\sigma\cdot\dt(\h,\Cc)}{\sqrt{\| \g \|^2-\dt^2(\h,\Cc)}}.
$$
\end{itemize}
The minimization of $L(\alpha)$ in \eqref{eq:hara3} is restricted to the set $S_{dev}$. Also, by assumption \eqref{eq:ass_dev},  $\alpha^*(\g,\h)\notin S_{dev}$.
Strict convexity implies then that  the minimum of $L(\cdot)$ over $\alpha\in S_{dev}$ is attained at the boundary points of the set $S_{dev}$, i.e. at $(1\pm\delta_{dev}) C_{dev}$ \cite{Boyd}. Thus, $\Lc_{dev}(\g,\h)= L( (1\pm\delta_{dev}) C_{dev} )$, which completes the proof.


%
\subsubsection{Probabilistic Result}
%

Choose $C_{dev} = \sigma\sqrt{\frac{\DC}{m-\DC}}$ and consider the regime where 
$(1-\eps_L)m>\DC>\eps_L m$ for some constant $\eps_L>0$. $\delta_{dev}>0$ is also a constant.

$\bullet$ {\bf{Mapping $\Lc_{dev}$ to Lemma \ref{lemma:pert1}:}} It is helpful for the purposes of the presentation to consider the function 
\begin{align}
L(x) := L(x;a,b) = \sqrt{x^2+\sigma^2}a-xb,
\end{align}
over $x\geq 0$, and $a,b$ are positive parameters. Substituting $a,b,x$ with $\|\g\|,\dt(\h,\Cc),\alpha$, we can map $L(x;a,b)$ to our function of interest,
\beq\notag
L(\alpha;\|\g\|,\dt(\h,\Cc)) = \sqrt{\alpha^2+\sigma^2}\|\g\|-\alpha \dt(\h,\Cc).
\eeq
In Lemma \ref{lemma:pert1} we have analyzed useful properties of the function $L(x;a,b)$, which are of key importance for the purposes of this proof. This lemma focuses on perturbation analysis and investigates $L(x';a',b')-L(x;a,b)$ where $x',a',b'$ are the perturbations from the fixed values $x,a,b$. In this sense, $a',b'$ correspond to $\|\g\|,\dt(\h,\Cc)$ which are probabilistic quantities and $a,b$ correspond to $\sqrt{m},\sqrt{\DC}$, i.e. the approximate means of the former ones.

In what follows, we refer continuously to statements of Lemma \ref{lemma:pert1} and use them to complete the proof of the ``Probabilistic result'' of Lemma \ref{lemma:dev}. Let us denote the minimizer of $L(x;a,b)$ by $x^*(a,b)$.
To see how the definitions above are relevant to our setup, it follows from the first statement of Lemma \ref{lemma:pert1} that,  
\begin{align}\label{eq:sw5}
L\left( x^*(\sqrt{m},\sqrt{\DC}); \sqrt{m},\sqrt{\DC}\right) = \sigma\sqrt{m-\DC},
\end{align}
and
\begin{align}\label{eq:sw2}
x^*(\sqrt{m},\sqrt{\DC}) = \sigma\sqrt{\frac{\DC}{m-\DC}} = C_{dev},
\end{align}

$\bullet$ {\bf{Verifying assumption \eqref{eq:ass_dev}:}} Going back to the proof, we begin by proving that assumption \eqref{eq:ass_dev} of the second statement of Lemma \ref{lemma:dev} is valid with high probability. Observe that from the definition of $S_{dev}$ and \eqref{eq:sw2}, assumption \eqref{eq:ass_dev} can be equivalently written as 
\begin{align}\label{eq:sw3}
\left| \frac{x^*(\|\g\|,\dt(\h,\Cc))}{x^*(\sqrt{m},\sqrt{\DC})} -1 \right| \leq \delta_{dev}.
\end{align} 
On the other hand, from the third statement of Lemma \ref{lemma:pert1} there exists sufficiently small constant $\eps_1>0$ such that \eqref{eq:sw3} is true
for all $\g$ and $\h$ satisfying 
\begin{align}\label{eq:trr0}
| \|\g\| - \sqrt{m}| \leq \eps_1\sqrt{m}\quad \text{ and } \quad | \dt(\h,\Cc) - \sqrt{\DC}| \leq \eps_1\sqrt{m}.
\end{align}
Furthermore, for large enough $\DC$ and from basic concentration arguments (see Lemma \ref{lemma:conc4}), $\g$ and $\h$ satisfy \eqref{eq:trr0} with probability $1-2\exp(-\frac{\eps_1^2m}{2})$. This proves that assumption \eqref{eq:ass_dev} holds with the same high probability.


$\bullet$ {\bf{Lower bounding $\Lc_{dev}$:}} From the deterministic result of Lemma \ref{lemma:dev}, once \eqref{eq:ass_dev} is satisfied then 
\begin{align}\label{eq:opa}
\Lc_{dev}(\g,\h)=L\left((1\pm\delta_{dev})C_{dev};\|\g\| , \dt(\h,\Cc)\right).
\end{align}
Thus, to prove \eqref{eq:conc_res1} we will show that there exists $t>0$ such that
\begin{align}
L\left((1\pm\delta_{dev})C_{dev};\|\g\| , \dt(\h,\Cc)\right) \geq (1+t)\sigma\sqrt{m-\DC},\label{eq:sw4}
\end{align}
with high probability. Equivalently, using \eqref{eq:sw5}, it suffices to show that there exists a constant $t>0$ such that
\begin{align}\label{eq:sw6}
L\left( (1\pm\delta_{dev})x^*(\sqrt{m},\sqrt{\DC}); \|\g\| , \dt(\h,\Cc) \right) - L\left( x^*(\sqrt{m},\sqrt{\DC}); \sqrt{m} , \sqrt{\DC} \right) \geq t\sigma\sqrt{m},
\end{align}
with high probability.
Applying the sixth statement of Lemma \ref{lemma:pert1} with $\gamma\gets\delta_{dev}$, for any constant $\delta_{dev}>0$, there exists constants $t,\eps_2$ such that \eqref{eq:sw6} holds for all $\g$ and $\h$ satisfying 
$$
| \|\g\| - \sqrt{m}| \leq \eps_2\sqrt{m}\quad \text{ and } \quad | \dt(\h,\Cc) - \sqrt{\DC}| \leq \eps_2\sqrt{m},
$$
which holds with probability $1-2\exp(-\frac{\eps_2^2m}{2})$ for sufficiently large $\DC$. Thus, \eqref{eq:sw6} is true with the same high probability.

Union bounding over the events that \eqref{eq:sw3} and \eqref{eq:sw6} are true, we end up with the desired result. The reason is that with high probability \eqref{eq:opa} and \eqref{eq:sw6} hold, i.e.,
\beq\notag
\Lc_{dev}(\g,\h)=L\left((1\pm\delta_{dev})C_{dev};\|\g\| , \dt(\h,\Cc)\right)\geq  L\left( x^*(\sqrt{m},\sqrt{\DC}); \sqrt{m} , \sqrt{\DC} \right) + t\sigma\sqrt{m}=\sigma \sqrt{m-\DC}+t\sigma\sqrt{m}.
\eeq

%

\section{Deviation Analysis: Key Lemma}

\begin{lem} \label{lemma:pert1}
Consider the following function over $x\geq 0$:
\beq
\fone(x) := \fone(x;a,b) =  \sqrt{x^2+\sigma^2}a-xb\notag
\eeq
where $\sigma>0$ is constant  and $a,b$ are positive parameters satisfying 
$(1-\eps)a>b>\eps a$ for some constant $\eps>0$.
Denote the minimizer of $\fone(x;a,b)$ by $x^*(a,b)$. Then,
\begin{enumerate} 
\item $x^*(a,b)=\frac{\sigma b}{\sqrt{a^2-b^2}}$ and $\fone(x^*(a,b);a,b) = \sigma\sqrt{a^2 - b^2}$.
\item For fixed $a$ and $b$, $\fone(x;a,b)$ is strictly convex in $x\geq0$.
\item For any constant $\eta>0$, there exists sufficiently small constant $\eps_1>0$, such that
\begin{align}
&\left|\frac{x^*(a',b')}{x^*(a,b)}-1\right|\leq \eta,\notag
\end{align}
for all $a',b'$ satisfying $|a'-a|<\eps_1a$ and $|b'-b|<\eps_1a$.
\item There exists positive constant $\eta>0$, such that, for sufficiently small constant $\eps_1>0$,
\begin{align}\notag
&\left|\fone(x^*(a,b);a',b')-\fone(x^*(a,b),a,b)\right|\leq \eta \eps_1 \sigma a,
\end{align}
for all $a',b'$ satisfying $|a'-a|<\eps_1a$ and $|b'-b|<\eps_1a$.
\item For any constant $\gamma>0$, there exists a constant $\eps_2>0$ such that for sufficiently small constant $\eps_1>0$,
\begin{align}\notag
&\fone(x;a',b')-\fone(x^*(a,b);a',b')\geq \eps_2 \sigma a,
\end{align}
for all $x,a'$ and $b'$ satisfying $|x-x^*(a,b)|>\gamma x^*(a,b)$, $|a'-a|<\eps_1a$ and $|b'-b|<\eps_1a$.

\item For any constant $\gamma>0$, there exists a constant $\eps_2>0$ such that for sufficiently small constant $\eps_1>0$,
\begin{align}\notag
&\fone(x;a',b')-\fone(x^*(a,b);a,b)\geq \eps_2 \sigma a,
\end{align}
for all $x,a'$ and $b'$ satisfying $|x-x^*(a,b)|>\gamma x^*(a,b)$, $|a'-a|<\eps_1a$ and $|b'-b|<\eps_1a$.
\item Given $c_{low}>0$, consider the restricted optimization, $\min_{x\geq c_{low}}\fone(x;a,b)$. We have,
\beq
\lim_{c_{low}\rightarrow\infty} \min_{x\geq c_{low}}\fone(x;a,b)\rightarrow \infty
\eeq
\end{enumerate}
\end{lem}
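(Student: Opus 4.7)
The plan is to dispatch the seven parts in order, since each builds on earlier ones. Statements $1$ and $2$ are direct calculus: differentiate $L(x;a,b)=\sqrt{x^2+\sigma^2}\,a-xb$ to get $L'(x) = xa/\sqrt{x^2+\sigma^2}-b$ and $L''(x) = a\sigma^2/(x^2+\sigma^2)^{3/2}>0$. The second derivative is strictly positive for every $x\geq 0$, giving strict convexity. Setting $L'(x)=0$ and squaring gives $x^2(a^2-b^2)=b^2\sigma^2$, hence $x^*(a,b)=\sigma b/\sqrt{a^2-b^2}$, which is well defined because $a>b$ by hypothesis. Substituting back and using $(x^*)^2+\sigma^2 = \sigma^2 a^2/(a^2-b^2)$ yields $L(x^*;a,b)=\sigma\sqrt{a^2-b^2}$.

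For statement $3$, observe that on the compact region $\{(a,b):\eps a\leq b\leq (1-\eps)a\}$ we have $\sqrt{a^2-b^2}\geq a\sqrt{\eps(2-\eps)}$, so the function $x^*(a,b)=\sigma b/\sqrt{a^2-b^2}$ is $C^\infty$ there with derivatives bounded uniformly (after scaling out $\sigma$ and normalizing by $a$). Writing $x^*(a',b')/x^*(a,b) = (b'/b)\sqrt{(a^2-b^2)/(a'^2-b'^2)}$ and Taylor expanding around $(a,b)$ shows the ratio equals $1$ up to $O(\eps_1)$ in the perturbation. Choosing $\eps_1$ small enough makes this smaller than $\eta$. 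Statement $4$ is similar: since $L(x;a',b')-L(x;a,b)=(a'-a)\sqrt{x^2+\sigma^2}-(b'-b)x$ and the evaluation point $x=x^*(a,b)$ satisfies $x^*(a,b)\leq \sigma/\sqrt{\eps(2-\eps)}$ (bounded by a constant times $\sigma$), the right-hand side is bounded in modulus by $\eps_1 a(\sqrt{(x^*)^2+\sigma^2}+x^*)\leq \eta\,\eps_1\sigma a$ for a suitable $\eta=\eta(\eps)$.

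Statement $5$ is the main technical step. Using strict convexity, combined with the uniform lower bound $L''(x;a',b') = a'\sigma^2/(x^2+\sigma^2)^{3/2}\geq c(\eps)/\sigma$ valid on the bounded interval $|x-x^*(a,b)|\leq \text{const}\cdot x^*(a,b)$, a second-order Taylor expansion of $L(\cdot;a',b')$ around its own minimizer $x^*(a',b')$ gives the quadratic lower bound
\[
L(x;a',b')-L(x^*(a',b');a',b')\;\geq\;\tfrac{c(\eps)}{2\sigma}\,(x-x^*(a',b'))^2.
\]
By statement $3$, $|x^*(a',b')-x^*(a,b)|\leq \eta\, x^*(a,b)$ can be made much smaller than $\gamma\, x^*(a,b)$ by shrinking $\eps_1$, so $|x-x^*(a',b')|\geq (\gamma/2)x^*(a,b)\gtrsim \sigma$. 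Plugging in and using $L(x^*(a',b');a',b')\leq L(x^*(a,b);a',b')$ (by optimality), the right-hand side is at least $\eps_2\,\sigma a$ for a fresh constant $\eps_2$ proportional to $\gamma^2$. The hard part here is matching a deviation bound stated in terms of $x^*(a,b)$ to a Taylor expansion naturally centered at $x^*(a',b')$; controlling the displacement via statement $3$ is the trick that resolves it.

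Statement $6$ follows by splitting
\[
L(x;a',b')-L(x^*(a,b);a,b)=\bigl[L(x;a',b')-L(x^*(a,b);a',b')\bigr]+\bigl[L(x^*(a,b);a',b')-L(x^*(a,b);a,b)\bigr],
\]
lower-bounding the first bracket by statement $5$ (applied with a slightly larger constant on the left, since we center at $x^*(a,b)$ not $x^*(a',b')$) and the second bracket by statement $4$. Choosing $\eps_1$ small enough so that the $\eta\eps_1$ correction from statement $4$ is dominated by half of the $\eps_2\sigma a$ gain from statement $5$ yields the desired bound with a reduced constant. Finally, statement $7$ is immediate: for $x\geq c_{\text{low}}$ we have $\sqrt{x^2+\sigma^2}\geq x$, so $L(x;a,b)\geq x(a-b)\geq c_{\text{low}}\cdot\eps a$, which tends to $+\infty$ as $c_{\text{low}}\to\infty$.
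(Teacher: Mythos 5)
Your handling of statements 1, 2, 4, 6 and 7 matches the paper's proof essentially step for step, and your statement 3 (Taylor expansion of the explicit ratio $x^*(a',b')/x^*(a,b)=(b'/b)\sqrt{(a^2-b^2)/(a'^2-b'^2)}$ on the compact region where $\sqrt{a^2-b^2}\geq a\sqrt{\eps(2-\eps)}$) is an equally valid alternative to the paper's monotonicity-sandwich argument. The real divergence is statement 5: the paper reduces to the two boundary points $(1\pm\gamma)x^*(a,b)$ by convexity and then lower-bounds the increment by a bare-hands expansion of $\sqrt{a^2+(\pm 2\gamma+\gamma^2)b^2}$, whereas you invoke strong convexity, i.e.\ the uniform bound $L''(x;a',b')=a'\sigma^2/(x^2+\sigma^2)^{3/2}\gtrsim a/\sigma$ on a $\sigma$-scale neighborhood of the minimizer. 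Your route is conceptually cleaner and makes the $\gamma^2$ scaling of $\eps_2$ transparent; just note that the quadratic lower bound cannot hold globally (the left side grows linearly in $x$), so the reduction to the boundary points via convexity — which you gesture at but do not state — must be made explicit exactly as in the paper.

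There is, however, a genuine directional error in your final step of statement 5. You establish $L(x;a',b')-L(x^*(a',b');a',b')\geq \eps_2'\sigma a$ and then invoke $L(x^*(a',b');a',b')\leq L(x^*(a,b);a',b')$ to conclude the target bound $L(x;a',b')-L(x^*(a,b);a',b')\geq\eps_2\sigma a$. This inequality points the wrong way: subtracting the \emph{larger} quantity $L(x^*(a,b);a',b')$ only gives $L(x;a',b')-L(x^*(a,b);a',b')\leq L(x;a',b')-L(x^*(a',b');a',b')$, so your lower bound does not transfer. What you actually need is an \emph{upper} bound on the loss $L(x^*(a,b);a',b')-L(x^*(a',b');a',b')$. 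This is available: since $L'(\cdot\,;a',b')$ vanishes at $x^*(a',b')$ and $|x^*(a,b)-x^*(a',b')|\leq\eta\, x^*(a,b)=O(\eta\sigma)$ by your statement 3, a second-order expansion gives $L(x^*(a,b);a',b')-L(x^*(a',b');a',b')=O(\eta^2\sigma a)$, which is dominated by $\eps_2'\sigma a/2$ once $\eps_1$ (hence $\eta$) is small enough. (Equivalently, center your strong-convexity bound at $x^*(a,b)$ itself and absorb the $O(\eps_1\sigma a)$ linear term coming from $L'(x^*(a,b);a',b')=b(a'-a)/a+(b-b')$.) With that patch your argument goes through; without it, the step as written fails.
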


\begin{proof}
%
\emph{First statement:} The derivative (w.r.t. $x$) of $\fone(x;a,b)$ is:
\beq\notag
\fone'(x;a,b)=\frac{ax}{\sqrt{x^2+\sigma^2}}-b.
\eeq
Setting this to $0$, using strict convexity and solving for $x$, we obtain the first statement.

\emph{Second statement:} The second derivative is,
\beq\notag
\fone''(x;a,b)=\frac{a\sqrt{x^2+\sigma^2}-\frac{ax^2}{\sqrt{x^2+\sigma^2}}}{x^2+\sigma^2}=\frac{a\sigma^2}{(x^2+\sigma^2)^{3/2}}>0,
\eeq
for all $x\geq 0$.
Consequently, $f$ is strictly convex.

\emph{Third statement:}
We can write,
\beq\notag
\left|x^*(a',b')-x^*(a,b)\right|=\sigma\left|\frac{b'}{\sqrt{a'^2-b'^2}}-\frac{b}{\sqrt{a^2-b^2}}\right|.
\eeq
Observe that $x^*(a,b)=\frac{b}{\sqrt{a^2-b^2}}$ is decreasing in $a$ and increasing in $b$ as long as $a>b\geq 0$. Also, for sufficiently small constant $\eps_1$, we have, $a',b'>0$  for all $|a'-a|<\eps_1a,|b'-b|<\eps_1a$. Therefore,
\beq\notag
\frac{b-\eps_1a}{\sqrt{(a+\eps_1a)^2-(b-\eps_1a)^2}}\leq \frac{b'}{\sqrt{a'^2-b'^2}}\leq \frac{b+\eps_1a}{\sqrt{(a-\eps_1a)^2-(b+\eps_1a)^2}}.
\eeq
Now, for any constant $\delta>0$, we can choose $\eps_1$ sufficiently small such that both ${b-\eps_1a}$ and ${b+\eps_1a}$ lie in the interval $(1\pm \delta)b$. Similarly, $(a\pm \eps_1a)^2-(b\mp\eps_1a)^2$ can be also chosen to lie in the interval $(1\pm \delta)(a^2-b^2)$. Combining, we obtain,
\beq\notag
\left| \frac{b'}{\sqrt{a'^2-b'^2}}- \frac{b}{\sqrt{a^2-b^2}}\right|<\eta(\delta)  \frac{b}{\sqrt{a^2-b^2}},
\eeq
as desired.

\emph{Fourth statement:}  For $|a-a'|<\eps_1a$ and $|b-b'|<\eps_1a$, we have,
\beq\notag
|\fone(x^*(a,b);a',b')-\fone(x^*(a,b);a,b)|=\frac{\sigma}{\sqrt{a^2-b^2}}|(aa'-bb')-(a^2-b^2)|\leq\eps_1 \sigma \frac{|a^2+ab|}{a^2-b^2}.
\eeq
By assumption, $(1-\eps)a>b>\eps a$. Thus,  
$$\eps_1 \sigma\frac{|a^2+ab|}{a^2-b^2}\leq \eps_1\sigma\frac{2a^2}{2\eps a^2} =\frac{\eps_1\sigma}{\eps}.$$
Choosing $\eps_1$ sufficiently small, we conclude with the desired result.

\emph{Fifth statement:} We will show the statement for a sufficiently small $\gamma$. Notice that, as $\gamma$ gets larger, the set $|x-x^*(a,b)|\geq \gamma x^*(a,b)$ gets smaller hence, proof for small $\gamma$ implies the proof for larger $\gamma$.

Using the Third Statement, choose $\eps_1$ to ensure that $|x^*(a',b')-x^*(a,b)|<\gamma x^*(a,b)$ for all $|a'-a|<\eps_1a$ and $|b'-b|<\eps_1 a$. 
For each such $a',b'$, since $\fone(x,a',b')$ is a strictly convex function of $x$ and the  minimizer $x^*(a',b')$ lies between $(1\pm\gamma) x^*(a,b)$ we have,
\beq\notag
\fone(x,a',b')\geq \min\{\fone((1-\gamma)x^*(a,b),a',b'),\fone((1+\gamma)x^*(a,b),a',b')\},
\eeq
for all $|x-x^*(a,b)|>\gamma x^*(a,b)$.
In summary, we simply need to characterize the increase in the function value at the points $(1\pm \gamma)x^*(a,b)$.

We have that,
\begin{align}\label{eq:laou1}
 \fone((1\pm \gamma)x^*(a,b);a',b')=\frac{\sigma}{\sqrt{a^2-b^2}}(\sqrt{a^2+(\pm2 \gamma+\gamma^2)b^2}a'-(1\pm\gamma)bb'),
 \end{align}
 and
 \begin{align}\label{eq:laou2}
 \fone(x^*(a,b);a',b')=\frac{\sigma}{\sqrt{a^2-b^2}}(aa'-bb').
 \end{align}
In the following discussion, without loss of generality, we consider only the ``$+\gamma$" case  in \eqref{eq:laou1} since the exact same argument works for the ``$-\gamma$" case as well.

Subtracting \eqref{eq:laou2} from \eqref{eq:laou1} and discarding the constant in front, we will focus on the following quantity,
\begin{align}
\text{diff}( \gamma)&=(\sqrt{a^2+(2 \gamma+\gamma^2)b^2}a'-(1+\gamma)bb')-(aa'-bb')\notag\\
&=(\underbrace{\sqrt{a^2+(2 \gamma+\gamma^2)b^2}}_{:=g(\gamma)}-a)a'-\gamma bb'.\label{eq:rhss0}
\end{align}
To find a lower bound for $g(\gamma)$, write
\begin{align}
g(\gamma)&=\sqrt{a^2+(2 \gamma+\gamma^2)b^2}\notag\\
&=\sqrt{(a+\gamma\frac{b^2}{a})^2+\gamma^2(b^2-\frac{b^4}{a^2})}\notag\\
&\geq (a+\gamma \frac{b^2}{a})+\frac{\gamma^2(b^2-\frac{b^4}{a^2})}{4(a+\gamma\frac{b^2}{a})},\label{eq:rhss1}
\end{align}
where we have assumed $\gamma\leq 1$ and used the fact that $(a+\gamma \frac{b^2}{a})^2\geq a^2\geq b^2-\frac{b^4}{a^2}$. Equation \eqref{eq:rhss1} can be further lower bounded by,
\beq\notag
g(\gamma)\geq (a+\gamma \frac{b^2}{a})+\frac{\gamma^2(a^2b^2-b^4)}{8a^3}
\eeq
Combining with \eqref{eq:rhss0} , we find that,
\beq\label{eq:rhss2}
\text{diff}(\gamma)\geq\gamma (\frac{b^2}{a}a'-bb')+\gamma^2\frac{a^2b^2-b^4}{8a^3}a'.
\eeq
Consider the second term on the right hand side of the inequality in \eqref{eq:rhss2}. Choosing $\eps_1<1/2$, we ensure, $a'\geq a/2$, and thus,
\beq\label{eq:rhss3}
\gamma^2\frac{a^2b^2-b^4}{8a^3}a'\geq \gamma^2\frac{a^2b^2-b^4}{16a^2}\geq  \gamma^2\frac{\eps a^2b^2}{16a^2}=\gamma^2\eps \frac{b^2}{16}.
\eeq
Next, consider the other term in \eqref{eq:rhss2}. We have,
\beq\notag
\left(\frac{b^2}{a}a'-bb'\right)=\frac{b^2}{a}(a'-a)-b(b'-b)\geq  -\left(\left|\frac{b^2}{a}(a'-a)\right|+|b(b'-b)|\right).
\eeq
Choosing $\eps_1$ sufficiently small (depending only on $\gamma$), we can ensure that, 
 \beq\label{eq:rhss4}
\left |\frac{b^2}{a}(a'-a)\right|+|b(b'-b)|<\gamma\eps \frac{b^2}{32}.
 \eeq
 Combining \eqref{eq:rhss2}, \eqref{eq:rhss3} and \eqref{eq:rhss4}, we conclude that there exists sufficiently small constant $\eps_1>0$ such that,
 \beq
 \text{diff}(\gamma)\geq \gamma^2\eps \frac{b^2}{32}.
 \eeq
 Multiplying with $\frac{\sigma}{\sqrt{a^2-b^2}}$, we end up with the desired result since $\frac{b^2}{\sqrt{a^2-b^2}}\geq \frac{\eps^2}{\sqrt{1-\eps^2}}a$.

\emph{Sixth statement:} The last statement can be deduced from the fourth and fifth statements. Given $\gamma>0$, choose $\eps_1>0$ sufficiently small to ensure,
\beq
\fone(x;a',b')-\fone(x^*(a,b),a',b')\geq \eps_2\sigma a
\eeq
and 
\beq
|\fone(x^*(a,b);a,b)-\fone(x^*(a,b),a',b')|\geq \eta\eps_1 \sigma a
\eeq
Using the triangle inequality,
\begin{align}
\fone(x;a',b')-\fone(x^*(a,b),a,b)&\geq \fone(x;a',b')-\fone(x^*(a,b),a',b')-|\fone(x^*(a,b),a',b')-\fone(x^*(a,b),a,b)|\notag\\
&\geq (\eps_2-\eta\eps_1)\sigma a.\label{eq:rhss5}
\end{align}
Choosing $\eps_1$ to further satisfy $\eta\eps_1<\frac{\eps_2}{2}$, \eqref{eq:rhss5} is guaranteed to be larger than $\frac{\eps_2}{2}\sigma a$ which gives the desired result.

\emph{Seventh statement:} To show this, we may use $a>b$ and simply write,
\beq
\fone(x;a,b)\geq (a-b)x\implies \lim_{c_{low}\rightarrow\infty} \min_{x\geq c_{low}}\fone(x;a,b)\geq\lim_{c_{low}\rightarrow\infty} (a-b)c_{low}=\infty
\eeq
\end{proof}
\section{Proof of Lemma \ref{lemma:keyProp}}\label{appendixe}
Proof of the Lemma requires some work. We prove the statements in the specific order that they appear.

\noindent{\textbf{Statement 1}:} 
We have
\begin{align*}
n = \E\left[ \|\h\|^2 \right] = \E\left[ \| \bu_\la(\h) + \h-\bu_\la(\h)  \|^2 \right] &= \E[\|\bu_\la(\h)\|^2] + \E[\|\bi_\la(\h)\|^2] + 2\E[\li\bi_\la(\h), \bu_\la(\h)  \ri] \\
&=\Plf + \Dlf + 2\Clf.
\end{align*}

\noindent{\textbf{Statement 2}:}
We have
$\bu_0(\h) = \mathbf{0}$ and $\bi_0(\h) = \h$, and the statement follows easily.

\noindent{\textbf{Statement 3}:}
Let $r=\inf_{\s\in\paf}\|\s\|$. Then, for any $\la\geq 0$, $\|\bu_\la(\vb)\|\geq \la\|\s\|$, which implies $\Plf\geq \la^2\|\s\|^2$. Letting $\la\rightarrow\infty$, we find $\Plf\rightarrow\infty$.

Similarly, for any $\h$, application of the triangle inequality gives
\beq\notag
\|\Pi_\la(\h)\|\geq \la r-\|\h\|\implies \|\Pi_\la(\h)\|^2\geq \la^2r^2-2\la r\|\h\|.
\eeq
Let $\h\sim\Nn(0,I)$ and take expectations in both sides of the inequality above. Recalling that $\E[\|\h\|]\leq \sqrt{n}$, and letting $\la\rightarrow\infty$, we find $\Dlf\rightarrow\infty$. 

Finally, since $\Dlf+\Plf+2\Clf=n$, $\Clf\rightarrow-\infty$ as $\la\rightarrow\infty$. This completes the proof.

\noindent{\textbf{Statement 4}:}
Continuity of $\Dlf$ follows from Lemma $B.2$ in Amelunxen et al. \cite{McCoy}. We will now show continuity of $\Plf$ and continuity of $\Clf$ will follow from the fact that $\Clf$ is a continuous function of $\Dlf$ and $\Plf$. 

Recall that $\bu_\la(\vb)=\la\bu_1(\frac{\vb}{\la})$. Also, given $\vb_1,\vb_2$, we have,
\beq
\|\bu_\la(\vb_1)-\bu_\la(\vb_2)\|\leq \|\vb_1-\vb_2\|
\eeq
Consequently, given $\la_1,\la_2>0$,
\begin{align}
\|\bu_{\la_1}(\vb)-\bu_{\la_2}(\vb)\|&=\|\la_1\bu_1(\frac{\vb}{\la_1})-\la_2\bu_1(\frac{\vb}{\la_2})\|\\
&\leq |\la_1-\la_2|\|\bu_1(\frac{\vb}{\la_1})\|+\|\la_2(\bu_1(\frac{\vb}{\la_1})-\bu_1(\frac{\vb}{\la_2}))\|\\
&\leq |\la_1-\la_2|\|\bu_1(\frac{\vb}{\la_1})\|+\la_2\|\vb\|\frac{|\la_1-\la_2|}{\la_1\la_2}\\
&=|\la_1-\la_2|(\|\bu_1(\frac{\vb}{\la_1})\|+\frac{\|\vb\|}{\la_1})
\end{align}
Hence, setting $\la_2=\la_1+\eps$,
\beq
\|\bu_{\la_2}(\vb)\|^2\leq [\|\bu_{\la_1}(\vb)\|+\eps(\|\bu_1(\frac{\vb}{\la_1})\|+\frac{\|\vb\|}{\la_1})]^2
\eeq
which implies,
\beq
\|\bu_{\la_2}(\vb)\|^2-\|\bu_{\la_1}(\vb)\|^2\leq 2\eps(\|\bu_1(\frac{\vb}{\la_1})\|+\frac{\|\vb\|}{\la_1})\|\bu_{\la_1}(\vb)\|+\eps^2(\|\bu_1(\frac{\vb}{\la_1})\|+\frac{\|\vb\|}{\la_1})
\eeq
Similarly, using $\|\bu_{\la_2}(\vb)\|\geq \|\bu_{\la_1}(\vb)\|-\eps(\|\bu_1(\frac{\vb}{\la_1})\|+\frac{\|\vb\|}{\la_1})$, we find,
\beq
\|\bu_{\la_1}(\vb)\|^2-\|\bu_{\la_2}(\vb)\|^2\leq 2\eps(\|\bu_1(\frac{\vb}{\la_1})\|+\frac{\|\vb\|}{\la_1})\|\bu_{\la_1}(\vb)\|{\la_1})
\eeq
Combining these, we always have,
\beq
|\|\bu_{\la_2}(\vb)\|^2-\|\bu_{\la_1}(\vb)\|^2|\leq 2\eps(\|\bu_1(\frac{\vb}{\la_1})\|+\frac{\|\vb\|}{\la_1})\|\bu_{\la_1}(\vb)\|+\eps^2(\|\bu_1(\frac{\vb}{\la_1})\|+\frac{\|\vb\|}{\la_1})
\eeq
Now, letting $\vb\sim\Nn(0,I)$ and taking the expectation of both sides and letting $\eps\rightarrow 0$, we conclude with the continuity of $\Plf$ for $\la>0$. 

To show continuity at $0$, observe that, for any $\la>0$, we have, $\|\bu_\la(\vb)\|\leq R\la$ where $R=\sup_{\s\in \paf}\|\s\|$. Hence,
\beq
|\Plf-\Pf0)|=\Plf\leq R^2\la^2
\eeq
As $\la\rightarrow 0$, $\Plf=0$.

\noindent{\textbf{Statement 5}:} 
For a proof see Lemma $B.2$ in \cite{McCoy}.

\noindent{\textbf{Statement 6}:} 
Based on Lemma \ref{angle1}, given vector $\vb$, set $\Cc$ and scalar $1\geq c>0$, we have,
\beq
\frac{\|\bu(c\vb,\Cc)\|}{c}\geq \|\bu(\vb,\Cc)\|
\eeq
Given $\la_1>\la_2>0$, this gives,
\beq
\|\bu(\vb,\la_1\paf)\|=\la_1\|\bu(\frac{\vb}{\la_1},\paf)\|\geq \la_1\frac{\la_2}{\la_1}\|\bu(\frac{\vb}{\la_2},\paf)\|=\|\bu(\vb,\la_2\paf)\|
\eeq
Since this is true for all $\vb$, choosing $\vb\sim \Nn(0,I)$, we end up with $\Df\la_1)\geq \Df\la_2)$.

Finally, at $0$ we have $\Df0)=0$ and by definition $\Dlf\geq 0$ which implies the increase at $\la=0$.
For the rest of the discussion, given three points $A,B,C$ in $\R^n$, the angle induced by the lines $AB$ and $BC$ will be denoted by $A\hat{B}C$.
\begin{lem} \label{angle1} Let $\Cc$ be a convex and closed set in $\R^n$. Let $\z$ and $0<\alpha<1$ be arbitrary, let $\p_1=\bu(\z,\Cc)$, $\p_2=\bu(\alpha\z,\Cc)$. Then,
\beq
\|\p_1\|\leq \frac{\|\p_2\|}{\alpha}
\eeq
\end{lem}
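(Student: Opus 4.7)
The plan is to exploit the variational characterization of the projection onto a closed convex set, namely that $\p = \bu(\vb,\Cc)$ is equivalent to $\li \vb - \p, \s - \p \ri \leq 0$ for every $\s \in \Cc$. I would first apply this characterization with the choice $\s = \p_2$ in the inequality coming from $\p_1 = \bu(\z,\Cc)$, and with the choice $\s = \p_1$ in the inequality coming from $\p_2 = \bu(\alpha\z,\Cc)$. This yields two linear inequalities that both involve the same inner product $\li \z, \p_1 - \p_2 \ri$ up to the factor $\alpha$, so eliminating this term is straightforward.

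The second step would be to combine the two inequalities (after multiplying the first by $-\alpha$ to match scales) and rearrange, which produces a clean relation of the form
\begin{align*}
\alpha \|\p_1\|^2 + \|\p_2\|^2 \leq (1+\alpha)\li \p_1, \p_2 \ri.
\end{align*}
Applying the Cauchy--Schwarz inequality to the right-hand side and writing $a = \|\p_1\|$, $b = \|\p_2\|$, I would then observe that the resulting inequality factors exactly as $(\alpha a - b)(a - b) \leq 0$, which is the key algebraic identity driving the argument.

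The final step is a short case analysis of the factored inequality. One case forces $a \geq b$ and $\alpha a \leq b$, giving directly $\|\p_1\| \leq \|\p_2\|/\alpha$ as claimed. The other case would force $b \geq a$ together with $\alpha a \geq b$, which is incompatible with $\alpha < 1$ unless $a = b = 0$, in which case the conclusion is trivial. I do not expect a serious obstacle here: the only mildly delicate point is remembering to flip the inequality when multiplying by the negative quantity $-\alpha$, and checking that the Cauchy--Schwarz step does not lose the necessary sharpness, which it does not because the inequality we need to deduce is in the form of norms rather than inner products.
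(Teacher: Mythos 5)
Your proof is correct, and it takes a genuinely different route from the one in the paper. The paper proves this lemma by first projecting $\Cc$ onto the two-dimensional plane spanned by $\z$, $\p_1$ and the origin, and then running an elaborate synthetic-geometry case analysis on whether various angles ($Z\hat{P_1}O$, $Z'\hat{P_2'}P_1$, etc.) are acute, right, or wide, supported by two figures. Your argument instead applies the variational inequality $\li \vb - \bu(\vb,\Cc), \s - \bu(\vb,\Cc)\ri \leq 0$ twice with the cross-substitutions $\s=\p_2$ and $\s=\p_1$, eliminates $\li\z,\p_1-\p_2\ri$, and lands on $\alpha\|\p_1\|^2+\|\p_2\|^2\leq(1+\alpha)\li\p_1,\p_2\ri$; after Cauchy--Schwarz this factors as $(\alpha a-b)(a-b)\leq 0$, and the case $\alpha a>b$ forces $a\leq b<\alpha a<a$, a contradiction, so $\alpha a\leq b$ always. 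I checked the algebra and the case analysis; both are sound (in the degenerate branch one indeed gets $a=b=0$, where the claim is trivial). What your approach buys is substantial: it is a few lines instead of a page, it avoids the dimension reduction and the figure-dependent reasoning entirely, and it works verbatim in any Hilbert space since it never uses finite-dimensionality or planarity. The paper's approach offers geometric intuition for why the projection of the scaled point cannot shrink too much, but as a proof it is considerably more fragile.
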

\begin{proof}

Denote the points whose coordinates are determined by $0,\p_1,\p_2,\z$ by $O, P_1,P_2$ and $Z$ respectively. We start by reducing the problem to a two dimensional one. Obtain $\Cc'$ by projecting the set $\Cc$ to the $2D$ plane induced by the points $Z,P_1$ and $O$. Now, let $\p_2'=\bu(\alpha\z,\Cc')$. Due to the projection, we still have:
$\|\z-\p_2'\|\leq\|\z-\p_2\|$ and $\|\p_2'\|\leq\|\p_2\|$. We wish to prove that $\|\p_2'\|\geq \|\alpha\p_1\|$. Figures \ref{cases} and \ref{cases2} will help us explain our approach. 

Let the line $UP_1$ be perpendicular to $ZP_1$. Let $P'Z'$ be parallel to $P_1Z_1$. Observe that $P'$ corresponds to $\alpha \p_1$. $H$ is the intersection of $P'Z'$ and $P_1U$. Denote the point corresponding to $\p_2'$ by $P_2'$. Observe that $P_2'$ satisfies the following:
\begin{itemize}
\item $P_1$ is the closest point to $Z$ in $\Cc$ hence $P_2'$ lies on the side of $P_1U$ which doesn't include $Z$.
\item $P_2$ is the closest point to $Z'$. Hence, $Z'\hat{P_2}P_1$ is not acute angle. Otherwise, we can draw a perpendicular to $P_2P_1$ from $Z'$ and end up with a shorter distance. This would also imply that $Z'\hat{P_2'}P_1$ is not acute as well as $Z'P_1$ stays same but $|Z'P_2'|\leq |Z'P_2|$ and $|P_2'P_1|\leq |P_2P_1|$.

\end{itemize}

We will do the proof case by case.

 \noindent{\bf{When $Z\hat{P_1}O$ is wide angle:}} Assume $Z\hat{P_1}O$ is wide angle and $UP_1$ crosses $ZO$ at $S$. 

Based on these observations, we investigate the problem in two cases illustrated by Figure \ref{cases}.
\begin{figure}

  \begin{center}
{\includegraphics[scale=0.22]{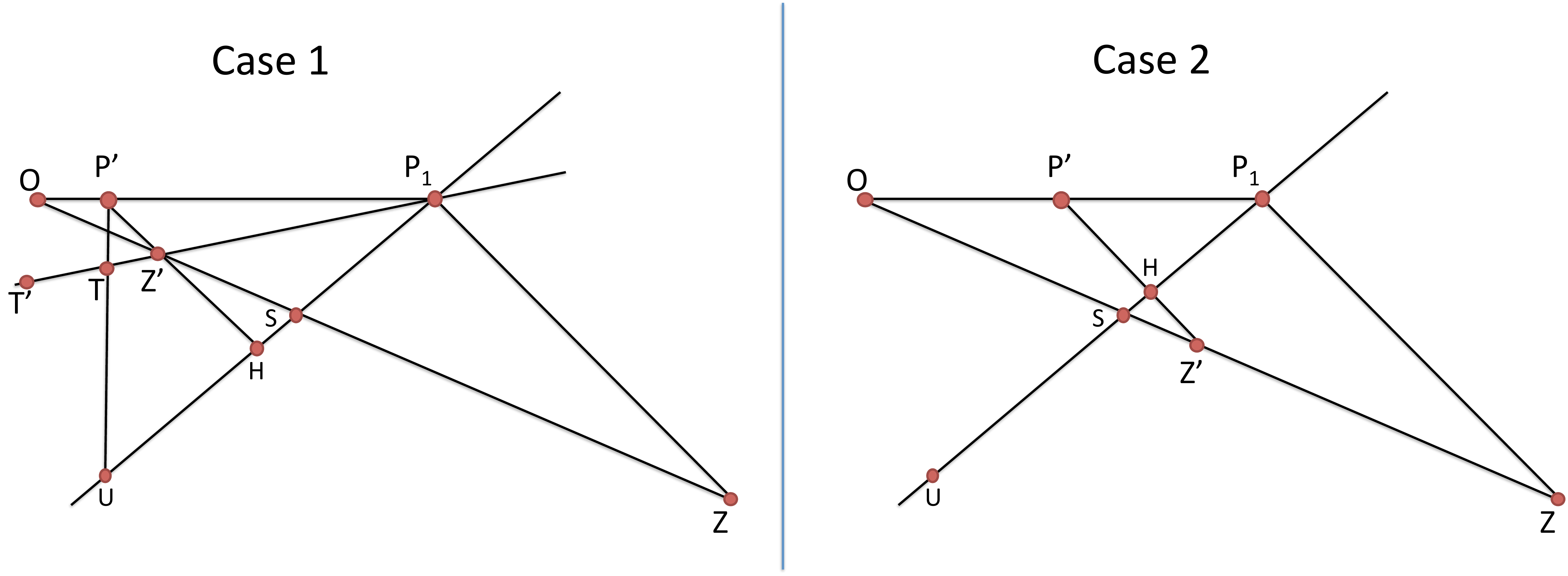}}  
  \end{center}
  \caption{Possible configurations of the points in Lemma \ref{angle1} when $Z\hat{P_1}O$ is wide angle.}
 \label{cases}
\end{figure}

\begin{figure}

  \begin{center}
{\includegraphics[scale=0.25]{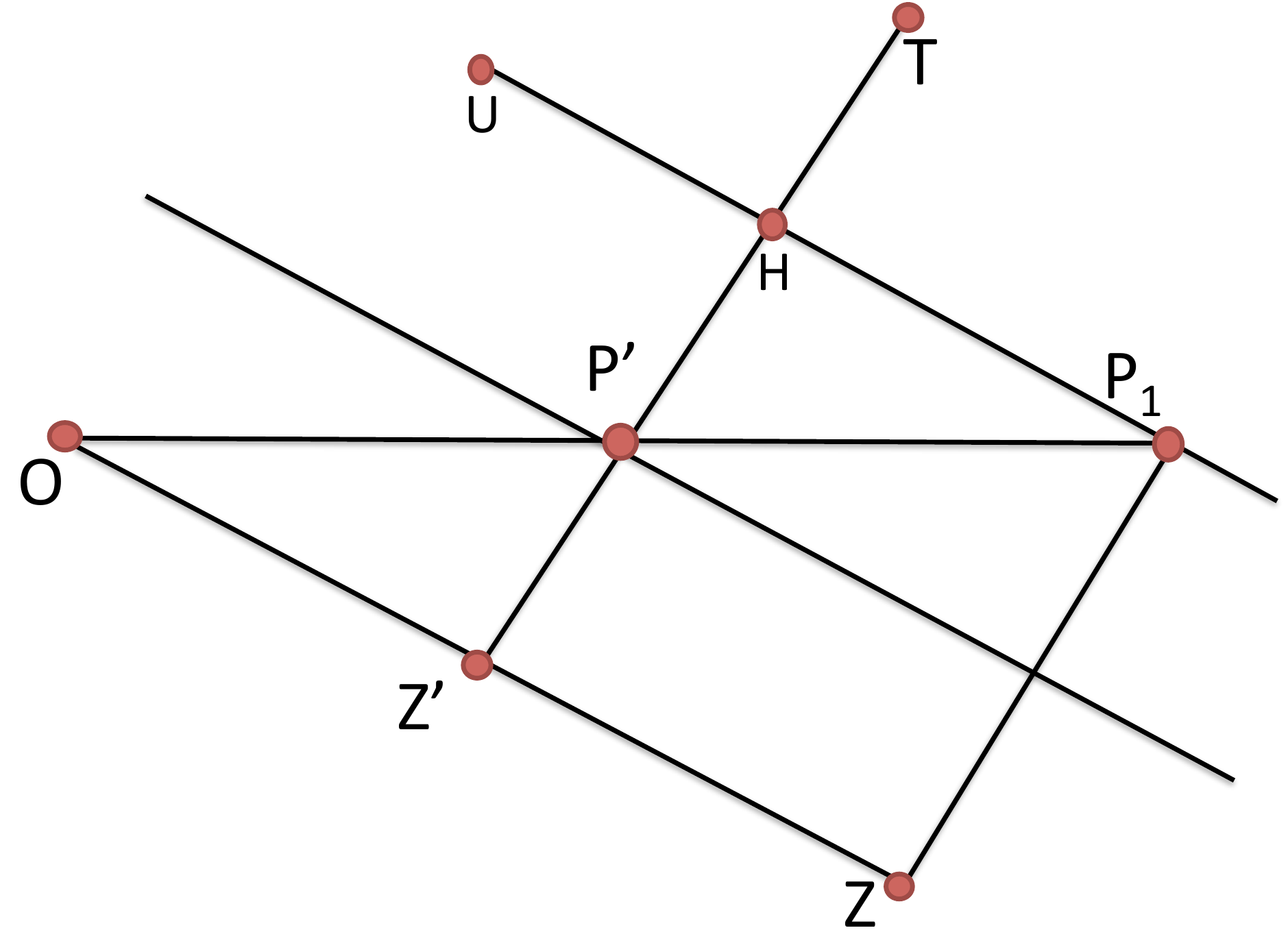}}  
  \end{center}
  \caption{Lemma \ref{angle1} when $Z\hat{P_1}O$ is acute or right angle.}
 \label{cases2}
\end{figure}

\noindent {\bf{Case 1 ($S$ lies on $Z'Z$):}} Consider the lefthand side of Figure \ref{cases}. If $P_2'$ lies on the triangle $P'P_1H$ then $O\hat{P}'P_2'>O\hat{P}'Z$ which implies $O\hat{P}'P_2'$ is wide angle and $|OP_2'|\geq|OP'|$. If $P_2'$ lies on the region induced by $OP'Z'T'$ then $P_1\hat{P}_2'Z'$ is acute angle as $P_1\hat{Z}'P_2'>P_1\hat{Z}'O$ is wide, which contradicts with $P_1\hat{P}_2'Z'$ is not acute.

Finally, let $U$ be chosen so that $P'U$ is perpendicular to $OP_1$. Then, if $P_2'$ lies on the quadrilateral $UTZ'H$ then $|OP_2'|\geq |OP'|$ as $O\hat{P}'P_2'$ is wide or right angle. If it lies on the remaining region $T'TU$, then $Z'\hat{P}_2'P_1$ is acute. The reason is, $P'_2\hat{Z}'P_1$ is wide as follows:
\beq
P'_2\hat{Z}'P_1\geq U\hat{Z}'P_1>U\hat{T}P_1>U\hat{P}'P_1=\frac{\pi}{2}
\eeq

\noindent {\bf{Case 2 ($S$ lies on $OZ'$):}} Consider the righthand side of Figure \ref{cases}. Due to location restrictions, $P_2'$ lies on either $P_1P'H$ triangle or the region induced by $OP'HU$. If it lies on $P_1P'H$ then, $O\hat{P'}P_2'>O\hat{P'}H$ which implies $|OP_2'|\geq |OP'|$ as $O\hat{P}'P_2'$ is wide angle.

If $P_2'$ lies on $OP'HU$ then, $P_1\hat{P}_2'Z'<P_1\hat{H}Z'=\frac{\pi}{2}$ hence $P_1\hat{P}'_2Z'$ is acute angle which cannot happen as it was discussed in the list of properties of $P_2'$.

 \noindent{\bf{When $Z\hat{P_1}O$ is right or acute angle:}} Consider Figure \ref{cases2}. $P_2'$ lies above $UP_1$. It cannot belong to the region induced by $UHT$ as it would imply $Z'\hat{P_2'}P_1< Z'\hat{H} P_1\leq \frac{\pi}{2}$. Then, it belongs to the region induced by $THP_1$ which implies the desired result as $O\hat{P'}P_2'$ is at least right angle.

In all cases, we end up with $|OP_2'|\geq |OP'|$ which implies $\|\p_2\|\geq \|\p_2'\|\geq\alpha\|\p_1\|$ as desired.

\end{proof}

\noindent{\textbf{Statement 7}:} 
For a proof see Lemma $B.2$ in \cite{McCoy}.

\noindent{\textbf{Statement 8}:} 
From Statement 7, $\Clf = -\frac{\la}{2}{\frac{d\Dlf}{d\la}}$. Also from Statement 5, $\Dlf$ is strictly convex. Thus, ${\frac{d\Dlf}{d\la}}\leq0$ for all $\la\in[0,\labb]$ which yields $\Clf\geq0$ for all $\la\in[0,\labb]$. Similarly,  ${\frac{d\Dlf}{d\la}}\geq0$ for all $\la\in[\labb,\infty)$ which yields $\Clf\leq0$ for all $\la\in[\labb,\infty)$. Finally, $\labb$ minimizes $\Dlf$. Hence ${\frac{d\Dlf}{d\la}}|_{\la=\labb}=0$ which yields $\Cf\labb)=0$.

\noindent{\textbf{Statement 9}:} 
We prove that for any $0\leq \la_1<\la_2\leq\labb$,
\begin{align}\label{eq:o67}
\Df\la_1) + \Cf\la_1)> \Df\la_2) + \Cf\la_2).
\end{align}
From Statement 5, $\Dlf$ is strictly decreasing for $\la\in[0,\labb]$. Thus,
\begin{align}\label{eq:o68}
\Df\la_1) > \Df\la_2).
\end{align}
Furthermore, from Statement 6, $\Plf$ is an increasing function of $\la$. Thus, 
\begin{align}\label{eq:o69}
\Df\la_1) + 2\Cf\la_1)  \geq \Df\la_2) + 2\Cf\la_2).
\end{align} 
where we have used  Statement 1.
Combining \eqref{eq:o68} and \eqref{eq:o69}, we conclude with \eqref{eq:o67}, as desired.


\section{Explicit formulas for well-known functions}\label{explicit_form}

\subsection{$\ell_1$ minimization}
Let $\x_0\in\R^n$ be a $k$ sparse vector and let $\beta=\frac{k}{n}$. Then, we have the following when $f(\cdot)=\|\cdot\|_1$,
\begin{itemize}
\item $\frac{\Dlf}{n}=(1+\la^2)(1-(1-\beta)\text{erf}(\frac{\la}{\sqrt{2}}))-\sqrt{\frac{2}{\pi}}(1-\beta)\la\exp(-\frac{\la^2}{2})$
\item $\frac{\Plf}{n}=\beta\la^2+(1-\beta)[\text{erf}(\frac{\la}{\sqrt{2}})+\la^2\text{erfc}(\frac{\la}{\sqrt{2}})-\sqrt{\frac{2}{\pi}}\la\exp(-\frac{\la^2}{2})]$
\item $\frac{\Clf}{n}=-\la^2\beta+(1-\beta)[\sqrt{\frac{2}{\pi}}\la\exp(-\frac{\la^2}{2})-\la^2\text{erfc}(\frac{\la}{\sqrt{2}})]$
\end{itemize}
These are not difficult to obtain. For example, to find $\Dlf$, pick $\g\sim\Nn(0,\Iden)$ and consider the vector $\Pi(\g,\la \paf)$. The distance vector to the subdifferential of the $\ell_1$ norm takes the form of soft thresholding on the entries of $\g$. In particular,
\beq
(\Pi(\g,\la \paf))_i=\begin{cases}\g(i)-\la \cdot\text{sgn}(\x_0(i)) &\text{if}~\x_0(i)\neq 0,\\\text{shrink}_\la(\g(i))& \text{otherwise}.\end{cases}\nn
\eeq
where $\text{shrink}_\la(\g(i))$ is the soft thresholding operator defined as,
\beq
\text{shrink}_\la (x)=\begin{cases}
x-\la & \text{if}~~~x>\la,\\
0 &\text{if}~~~|x|\leq \la,\\
 x+\la &\text{if}~~~x<-\la.\end{cases}\nn
\eeq
Consequently, we obtain our formulas after taking the expectation of $\g(i)-\la\cdot\text{sgn}(\x_0(i))$ and $\text{shrink}_\la(\g(i))$. For more details on these formulas, the reader is referred to \cite{DonPhaseTrans,DonCentSym,Sto1,soft-thresh} which calculate the phase transitions of $\ell_1$ minimization.

\subsubsection{Closed form bound}
We will now find a closed form bound on $\Dlf$ for the same sparse signal $\x_0$. In particular, we will show that $\Dlf\leq (\la^2+2)k$ for $\la\geq \sqrt{2\log\frac{n}{k}}$. Following the above discussion and letting $\g\sim\Nn(0,\Iden_n)$, first observe that, $\E[(\g_i-\la\cdot\text{sgn}(\x_0(i)))^2]=\la^2+1$
\beq
\Dlf=\sum\E[(\g(i)-\la\cdot\text{sgn}(\x_0(i)))^2]+(n-k)\E[\text{shrink}_\la (\g(i))^2]\label{combine4}
\eeq
The sum on the left hand side is simply $(\la^2+1)k$. The interesting term is $\text{shrink}_\la (\g(i))$. To calculate this, we will use the following lemma.
\begin{lem} \label{useful lem inf}Let $x$ be a nonnegative random variable. Assume, there exists $c>0$ such that for all $t>0$,
\beq
\Pro(x\geq c+t)\leq \exp(-\frac{t^2}{2})
\eeq
For any $a\geq 0$, we have, \beq\E[\text{shrink}_{a+c}(x)^2]\leq \frac{2}{a^2+1}\exp(-\frac{a^2}{2}).\eeq
\end{lem}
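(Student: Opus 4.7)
My plan is to treat the quantity $\E[\text{shrink}_{a+c}(x)^2]$ using the standard layer-cake (tail) representation and then invoke a Gaussian-type Mills-ratio bound.

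First, since $x \geq 0$, I note that $\text{shrink}_{a+c}(x) = (x-(a+c))^+$, so $Y:=\text{shrink}_{a+c}(x)$ is itself a nonnegative random variable with tail
\[
\Pro(Y > y) \;=\; \Pro(x > a+c+y) \;\leq\; \exp\!\left(-\tfrac{(a+y)^2}{2}\right)\qquad\text{for all }y\geq 0,
\]
by the hypothesis (applied with $t=a+y$). Using $\E[Y^2] = \int_0^\infty 2y\,\Pro(Y>y)\,dy$, I would therefore get
\[
\E[Y^2]\;\leq\;\int_0^\infty 2y\exp\!\left(-\tfrac{(a+y)^2}{2}\right)\,dy.
\]

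Second, I would perform the change of variable $u=a+y$ to rewrite the right-hand side as
\[
\int_a^\infty 2(u-a)\exp\!\left(-\tfrac{u^2}{2}\right)\,du
\;=\;2\exp\!\left(-\tfrac{a^2}{2}\right)\;-\;2a\int_a^\infty \exp\!\left(-\tfrac{u^2}{2}\right)\,du,
\]
where the first term comes from the exact integral $\int_a^\infty u\,e^{-u^2/2}\,du = e^{-a^2/2}$.

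Finally, the main step is to control the remaining Gaussian tail integral from below. I would invoke the standard Mills ratio estimate
\[
\int_a^\infty \exp\!\left(-\tfrac{u^2}{2}\right)\,du \;\geq\; \frac{a}{a^2+1}\exp\!\left(-\tfrac{a^2}{2}\right),\qquad a\geq 0,
\]
which is classical and can be proved in one line by checking that $f(a):=\frac{a}{a^2+1}e^{-a^2/2}$ satisfies $f'(a)\geq -e^{-a^2/2}$ and $f(\infty)=0$. Plugging this back in gives
\[
\E[Y^2]\;\leq\;2\exp\!\left(-\tfrac{a^2}{2}\right)\left(1-\frac{a^2}{a^2+1}\right)\;=\;\frac{2}{a^2+1}\exp\!\left(-\tfrac{a^2}{2}\right),
\]
which is exactly the claimed bound. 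The case $a=0$ works too (the Mills bound degenerates to $0\geq 0$, and one directly checks $\E[Y^2]\leq 2$ from the first inequality).

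The only mildly delicate point is verifying the Mills-ratio lower bound, but this is a well-known one-line computation and is the main (and essentially only) obstacle; everything else is bookkeeping with the tail formula and a linear change of variables.
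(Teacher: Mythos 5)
Your proof is correct and follows essentially the same route as the paper: the layer-cake identity $\E[Y^2]=\int_0^\infty 2y\,\Pro(Y>y)\,dy$ is exactly the integration-by-parts step the paper performs on the Stieltjes integral, and both arguments then reduce to the integral $\int_a^\infty 2(u-a)e^{-u^2/2}\,du$ and invoke the same Mills-ratio lower bound $\int_a^\infty e^{-u^2/2}\,du\geq \frac{a}{a^2+1}e^{-a^2/2}$.
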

\begin{proof} Let $Q(t)=\Pro(x\geq t)$.

\begin{align}
\E[\text{shrink}_{a+c}(x)^2]&=\int_{a+c}^\infty (x-a-c)^2d(-Q(x))\\
&\leq-[Q(x)(x-a-c)^2]_{a+c}^\infty+\int_{a+c}^\infty Q(x)d(x-a-c)^2=\int_{a+c}^\infty Q(x)d(x-a-c)^2\label{lastone2}\\
&\leq \int_{a+c}^\infty 2(x-a-c)Q(x)d(x-a-c)\leq  2\int_{a+c}^\infty (x-a-c)\exp(-\frac{(x-c)^2}{2}) d(x-a-c)\nn\\
& \leq 2 \int_{a}^\infty (u-a)\exp(-\frac{u^2}{2}) du\leq  2\exp(-\frac{a^2}{2})-2a\frac{a}{a^2+1}\exp(-\frac{a^2}{2})=\frac{2}{a^2+1}\exp(-\frac{a^2}{2})\label{lastone4}
\end{align}
\eqref{lastone2} follows from integration by parts and \eqref{lastone4} follows from the standard result on Gaussian tail bound, $ \int_{a}^\infty \exp(-\frac{u^2}{2}) du\geq \frac{a}{a^2+1}\exp(-\frac{a^2}{2})$
\end{proof}

To calculate $\E[\text{shrink}_{\la}(g)^2]$ for $g\sim\Nn(0,1)$ we make use of the standard fact about Gaussian distribution, $\Pro(|g|>t)\leq \exp(-\frac{t^2}{2})$. Applying the Lemma \ref{useful lem inf} with $c=0$ and $a=\la$ yields, $\E[|\text{shrink}_{\la}(g)|^2]\leq \frac{2}{\la^2+1}\exp(-\frac{\la^2}{2})$. Combining this with \eqref{combine4}, we find,
\beq
\Dlf\leq (\la^2+1)k+\frac{2n}{\la^2+1}\exp(-\frac{\la^2}{2})
\eeq
For $\la\geq \sqrt{2\log\frac{n}{k}}$, $\exp(-\frac{\la^2}{2})\leq \frac{k}{n}$. Hence, we obtain,
\beq
\Dlf\leq (\la^2+1)k+\frac{2k}{\la^2+1}\leq (\la^2+3)k
\eeq

\subsection{Nuclear norm minimization}
Assume $\X_0$ is a $d\times d$ matrix of rank $r$ and $\x_0$ is its vector representation where $n=d^2$ and we choose nuclear norm to exploit the structure. Denote the spectral norm of a matrix by $\|\cdot\|_2$. Assume $\X_0$ has skinny singular value decomposition $\U\Sigma\V^T$ where $\Sigma\in\R^{r\times r}$. Define the ``support'' subspace of $\X_0$ as,
\beq
S_{\X_0}=\{{\bf{M}}\big|(\Iden-\U\U^T){\bf{M}}(\Iden-\V\V^T)=0\}
\eeq
The subdifferential of nuclear norm is given as,
\beq
\pa\|\X_0\|_\st=\{\Sb\in\R^{d\times d}\big|\bu(\Sb,S_{\X_0})=\U\V^T,~\text{and}~\|\bu(\Sb,\bar{S}_{\X_0})\|_2\leq 1\}
\eeq
Based on this, we wish to calculate $\dt(\Gb,\la\paf)$ when $\Gb$ has i.i.d. standard normal entries. As it has been discussed in \cite{Matesh2,Oym,Oym2}, $\Pi(\Gb,\la\paf)$ effectively behaves as singular value soft thresholding. In particular, we have,
\beq
\Pi(\Gb,\la\paf)=(\bu(\Gb,S_{\X_0})-\la\U\V^T)+\sum_{i=1}^{n-r}\text{shrink}_\la(\sigma_{\Gb,i})\ub_{\Gb,i}\vb_{\Gb,i}^T
\eeq
where $\bu(\Gb,\bar{S}_{\X_0})$ has singular value decomposition $\sum_{i=1}^{n-r}\sigma_{\Gb,i}\ub_{\Gb,i}\vb_{\Gb,i}^T$.

Based on this behavior, $\dt(\Gb,\la\paf)$ has been analyzed in various works in the linear regime where $\frac{r}{d}$ is constant. This is done by using the fact that the singular value distribution of a $d\times d$ matrix approaches to quarter circle law when singular values are normalized by $\sqrt{d}$.
\begin{align}
\psi(x)=\begin{cases}\frac{1}{\pi}\sqrt{4-x^2}~\text{if}~0\leq x\leq 2\\ 0~\text{else}\end{cases}
\end{align}
Based on $\psi$, define the quantities related to the moments of tail of $\psi$. Namely,
\begin{align}
\Psi_i(x)=\int_{x}^\infty x^i\psi(x)dx
\end{align}
We can now give the following explicit formulas for the asymptotic behavior of $\pa\|\X_0\|_\st$ where $\frac{r}{d}=\beta$ is fixed. Define,
\begin{align}
&\upsilon=\frac{\la}{2\sqrt{1-\beta}}
\end{align}
\begin{itemize}
\item $\frac{\Df\la\sqrt{d})}{n}=[2\beta-\beta^2+\beta\la^2]+[(1-\beta)\la^2\Psi_0(\upsilon)+(1-\beta)^2\Psi_2(\upsilon)-2(1-\beta)^{3/2}\la\Psi_1(\upsilon)]$
\item $\frac{\Pf\la\sqrt{d})}{n}=\beta\la^2+(1-\beta)\la^2\Psi_0(\upsilon)+(1-\beta)^2(1-\Psi_2(\upsilon))$
\item $\frac{\Cf\la\sqrt{d})}{n}=-\la^2\beta-(1-\beta)\la^2\Psi_0(\upsilon)+(1-\beta)^{3/2}\la\Psi_1(\upsilon)$
\end{itemize}

\subsubsection{Closed form bounds}
Our approach will exactly follow the proof of Proposition 3.11 in \cite{Cha}. Given $\Gb$ with i.i.d. standard normal entries, the spectral norm of the off-support term $\bu(\Gb,\bar{S}_{\X_0})$ satisfies,
\beq
\Pro(\|\bu(\Gb,\bar{S}_{\X_0})\|_2\geq 2\sqrt{d-r}+t)\leq \exp(-\frac{t^2}{2})
\eeq
It follows that all singular values of $\bu(\Gb,\bar{S}_{\X_0})$ satisfies the same inequality as well. Consequently, for any singular value and for $\la\geq 2\sqrt{d-r}$, applying Lemma \ref{useful lem inf}, we may write,
\beq
\E[\text{shrink}_\la(\sigma_{\Gb,i})^2]\leq \frac{2}{(\la-2\sqrt{d-r})^2+1}\exp(-\frac{(\la-2\sqrt{d-r})^2}{2})\leq 2
\eeq
It follows that,
\beq
\sum_{i=1}^{d-r}\E[\text{shrink}_\la(\sigma_{\Gb,i})^2]\leq 2(d-r)
\eeq
To estimate the in-support terms, we need to consider $\bu(\Gb,S_{\X_0})-\la\U\V^T$. Since $\la\U\V^T$ and $\bu(\Gb,S_{\X_0})$ are independent, we have,
\beq
\|\bu(\Gb,S_{\X_0})-\la\U\V^T\|_F^2=\la^2r+|S_{\X_0}|=\la^2r+2dr-r^2
\eeq
Combining, we find,
\beq
\Dlf\leq \la^2r+2dr-r^2+2d-2r\leq(\la^2+2d)r+2d
\eeq
\subsection{Block sparse signals}
Let $n=t\times b$ and assume entries of $\x_0\in\R^n$ can be partitioned into $t$ blocks of size $b$ so that only $k$ of these $t$ blocks are nonzero. To induce the structure, use the $\ell_{1,2}$ norm which sums up the $\ell_2$ norms of the blocks, \cite{StojBlock,Yonina,RechtBlock}. In particular, denoting the subvector corresponding to $i$'th block of $\x$ by $\x_i$
\beq
\|\x\|_{1,2}=\sum_{i=1}^t\|\x_i\|
\eeq
To calculate $\Dlf,\Clf,\Plf$ with $f(\cdot)=\|\cdot\|_{1,2}$, pick $\g\sim\Nn(0,\Iden_n)$ and consider $\Pi(\g,\la\partial \|\x_0\|_{1,2})$ and $\bu(\g,\la\partial \|\x_0\|_{1,2})$. Similar to $\ell_1$ norm and the nuclear norm, distance to subdifferential will correspond to a ``soft-thresholding''. In particular, $\Pi(\g,\la\partial \|\x_0\|_{1,2})$ has been studied in \cite{StojBlock,RechtBlock} and is given as,
\beq
\Pi(\g,\la\partial \|\x_0\|_{1,2})=\begin{cases}\g_i-\la\frac{\x_{0,i}}{\|\x_{0,i}\|}~~~\text{if}~~~\x_{0,i}\neq 0\\\text{vshrink}_{\la}(\g_i)~~~\text{else}\end{cases}
\eeq
where the vector shrinkage $\text{vshrink}_{\la}$ is defined as,
\beq
\text{vshrink}_{\la}(\vb)=\begin{cases}\vb(1-\frac{\la}{\|\vb\|})~~~\text{if}~~~\|\vb\|> \la\\0~~~\text{if}~~~\|\vb\|\leq \la\end{cases}
\eeq
When $\x_{0,i}\neq 0$ and $\g_i$ is i.i.d. standard normal, $\E[\|\g_i-\la\frac{\x_{0,i}}{\|\x_{0,i}\|^2}\|^2]=\E[\|\g_i\|^2]+\la^2=b+\la^2$. Calculation of $\text{vshrink}_{\la}(\g_i)$ and has to do with the tails of $\chi^2$-distribution with $b$ degrees of freedom (see Section $3$ of \cite{RechtBlock}). Similar to previous section, define the tail function of a $\chi^2$-distribution with $b$ degrees of freedom as,
\beq
\Psi_i(x)=\int_{x}^{\infty}x^i \frac{1}{2^{\frac{k}{2}}\Gamma(\frac{k}{2})}x^{\frac{k}{2}-1}\exp(-\frac{x}{2})dx
\eeq
Then, $\E[\|\text{vshrink}_{\la}(\g_i)\|^2]=\Psi_1(\la^2)+\Psi_0(\la^2)\la^2-2\Psi_\frac{1}{2}(\la^2)\la$. Based on this, we calculate $\Dlf,\Plf$ and $\Clf$ as follows.

\begin{itemize}
\item $\Dlf=k(b+\la^2)+[\Psi_1(\la^2)+\Psi_0(\la^2)\la^2-2\Psi_\frac{1}{2}(\la^2)\la](t-k)$
\item $\Plf=\la^2k+[(\Psi_1(0)-\Psi_1(\la^2))+\la^2\Psi_0(\la^2)](t-k)$
\item $\Clf=-\la^2k+[\la\Psi_\frac{1}{2}(\la^2)-\la^2\Psi_0(\la^2)](t-k)$
\end{itemize}
\subsubsection{Closed form bound}
Similar to Proposition 3 of \cite{Foygel}, we will make use of the following bound for a $x$ distributed with $\chi^2$-distribution with $b$ degrees of freedom.
\beq
\Pro(\sqrt{x}\geq \sqrt{b}+t)\leq \exp(-\frac{t^2}{2})~~~\text{for all}~t>0\label{tail bound}
\eeq
Now, the total contribution of nonzero blocks to $\Dlf$ is simply $(\la^2+b)k$ as $\E[\|\g_i-\la\frac{\x_{0,i}}{\|\x_{0,i}\|}\|^2]=\la^2+b$. For the remaining, we need to estimate $\E[\|\text{vshrink}_{\la}(\g_i)\|^2]$ for an i.i.d. standard normal $\g_i\in\R^d$. Using Lemma \ref{useful lem inf}, with $c=\sqrt{b}$ and $a=\la-\sqrt{b}$ and using the tail bound \eqref{tail bound}, we obtain,
\beq
\E[\|\text{vshrink}_{\la}(\g_i)\|^2]\leq \frac{2}{(\la-\sqrt{b})^2+1}\exp(-\frac{(\la-\sqrt{b})^2}{2})
\eeq
Combining everything,
\beq
\Dlf\leq k(\la^2+b)+ \frac{2t}{(\la-\sqrt{b})^2+1}\exp(-\frac{(\la-\sqrt{b})^2}{2})
\eeq
Setting $\la\geq\sqrt{b}+\sqrt{2\log{\frac{t}{k}}}$, we ensure, $\exp(-\frac{(\la-\sqrt{b})^2}{2})\leq \frac{k}{t}$, hence,
\beq
\Dlf\leq k(\la^2+b)+ \frac{2k}{(\la-\sqrt{b})^2+1}\leq k(\la^2+b+2)
\eeq
\section{Gaussian Width of the Widened Tangent Cone}\label{widewidth}
The results in this appendix will be useful to show the stability of $\ell_2^2$-LASSO for all $\tau>0$. To state the results, we will first define the Gaussian width which has been the topic of closely related papers \cite{Gor,Oymak,Cha,McCoy}. 
\begin{defn}
Let $S\subseteq\R^n$. Let $\g\sim\Nn(0,\Iden_n)$. Then, the Gaussian width of $S$ is given as,
\beq
\omega(S)=\E[\sup_{\vb\in S} \li\vb,\g\ri]
\eeq
\end{defn}
Let us also state a standard result on the Gaussian width and cones that can be found in \cite{Foygel,McCoy}.
\begin{propo}
\end{propo}
The following lemma provides a Gaussian width characterization of ``widening of a tangent cone''.
\begin{lem} \label{widen cone}Assume $f(\cdot)$ is a convex function and $\x_0$ is not a minimizer of $f(\cdot)$. Given $\eps_0>0$, consider the $\eps_0$-widened tangent cone defined as,
\beq
\Tc_f(\x_0,\eps_0)=\text{Cl}(\{\alpha\cdot\w\big|f(\x_0+\w)\leq f(\x_0)+\eps_0\|\w\|,~\alpha\geq 0\})\label{wide cone}
\eeq
Let $R_{min}=\min_{\s\in\paf}\|\s\|$ and $\Bc^{n-1}$ be the unit $\ell_2$-ball in $\R^n$. Then,
\beq
\omega(\Tc_f(\x_0,\eps_0)\cap \Bc^{n-1})\leq \omega(\Tc_f(\x_0)\cap \Bc^{n-1})+\frac{\eps_0\sqrt{n}}{R_{min}}
\eeq
\end{lem}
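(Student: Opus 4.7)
}

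The plan is to show that every unit vector in the widened cone $\Tc_f(\x_0,\eps_0)$ decomposes as the sum of a vector lying in the ordinary tangent cone $\Tc_f(\x_0)\cap\Bc^{n-1}$ and a small ``residual'' of norm at most $\eps_0/R_{min}$. Once this geometric containment is in hand, the inequality on Gaussian widths follows by splitting the supremum in the definition of $\omega(\cdot)$ and using $\E[\|\g\|]\leq\sqrt{n}$.

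First I would establish the key subdifferential bound: for every $\w\in\Tc_f(\x_0,\eps_0)$ with $\|\w\|\leq 1$,
\[
\max_{\s\in\paf}\langle\s,\w\rangle \;\leq\; \eps_0\|\w\| \;\leq\;\eps_0.
\]
This uses only the max formula (Proposition \ref{max form}), the defining inequality $f(\x_0+\w')-f(\x_0)\leq \eps_0\|\w'\|$ in \eqref{wide cone}, positive homogeneity of $\max_{\s\in\paf}\langle\s,\cdot\rangle$ to handle the scaling $\alpha\geq 0$ in \eqref{wide cone}, and continuity to pass to the closure.

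Next, since $\x_0$ is not a minimizer of $f$, Lemma \ref{tsg} gives $(\Tc_f(\x_0))\pol=\text{cone}(\paf)$. Apply Moreau's decomposition (Fact \ref{more}) to write
\[
\w \;=\; \bu(\w,\Tc_f(\x_0)) \;+\; \bi(\w,\Tc_f(\x_0)),
\]
where $\bi(\w,\Tc_f(\x_0))\in\text{cone}(\paf)$ and the two components are orthogonal. The projection onto the cone has norm at most $\|\w\|\leq 1$, so it lies in $\Tc_f(\x_0)\cap\Bc^{n-1}$. The main step is to bound the polar component: writing $\bi(\w,\Tc_f(\x_0))=\lambda\s$ with $\lambda\geq 0$ and $\s\in\paf$, the orthogonality identity gives $\|\bi(\w,\Tc_f(\x_0))\|^2=\langle\bi(\w,\Tc_f(\x_0)),\w\rangle=\lambda\langle\s,\w\rangle\leq\lambda\eps_0$, while $\|\bi(\w,\Tc_f(\x_0))\|^2=\lambda^2\|\s\|^2\geq\lambda^2R_{min}^2$. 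Combining these two and simplifying yields $\lambda\|\s\|\leq\eps_0/\|\s\|\leq \eps_0/R_{min}$, hence
\[
\|\bi(\w,\Tc_f(\x_0))\| \;\leq\; \frac{\eps_0}{R_{min}}.
\]

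Finally I would assemble the width bound. For each $\w\in\Tc_f(\x_0,\eps_0)\cap\Bc^{n-1}$,
\[
\langle\w,\g\rangle \;\leq\; \langle\bu(\w,\Tc_f(\x_0)),\g\rangle + \langle\bi(\w,\Tc_f(\x_0)),\g\rangle,
\]
and taking sup over $\w$ and then expectation gives
\[
\omega(\Tc_f(\x_0,\eps_0)\cap\Bc^{n-1}) \;\leq\; \omega(\Tc_f(\x_0)\cap\Bc^{n-1}) + \E\Big[\sup_{\|\vb\|\leq\eps_0/R_{min}}\langle\vb,\g\rangle\Big],
\]
and the last expectation equals $(\eps_0/R_{min})\E[\|\g\|]\leq(\eps_0/R_{min})\sqrt{n}$.

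The only nontrivial step is the norm bound on the polar component, which requires carefully combining the orthogonality identity, the first-order inequality from the first paragraph, and the hypothesis $R_{min}=\min_{\s\in\paf}\|\s\|>0$ (ensured because $\x_0$ is not a minimizer of $f$, so $0\notin\paf$). The rest is routine manipulation of Gaussian widths and Moreau's decomposition.
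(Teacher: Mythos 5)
Your proposal is correct and follows essentially the same route as the paper's proof: establish $\sup_{\s\in\paf}\langle\s,\w\rangle\leq\eps_0\|\w\|$ via the max formula, apply Moreau's decomposition with $(\Tc_f(\x_0))\pol=\text{cone}(\paf)$, bound the polar component by $\eps_0/R_{min}$ using orthogonality and $\|\s\|\geq R_{min}$, and split the Gaussian width. The only cosmetic difference is that you bound the residual via the identity $\|\bi(\w,\Tc_f(\x_0))\|^2=\langle\bi(\w,\Tc_f(\x_0)),\w\rangle$, whereas the paper computes $\w^T\s(\w_2)=\|\w_2\|\|\s(\w_2)\|$ directly; both reduce to the same algebra.
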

\begin{proof} Let $\w\in \Tc_f(\x_0,\eps_0)$. Write $\w=\w_1+\w_2$ via Moreau's decomposition theorem (Fact \ref{more}) where $\w_1\in \Tc_f(\x_0)$ and $\w_2\in \text{cone}(\paf)$ and $\w_1^T\w_2=0$. Here we used the fact that $\x_0$ is not a minimizer and $\Tc_f(\x_0)^*=\text{cone}(\paf)$. To find a bound on $\Tc_f(\x_0,\eps_0)$ in terms of $\Tc_f(\x_0)$, our intention will be to find a reasonable bound on $\w_2$ and to argue $\w$ cannot be far away from its projection on the tangent cone.

To do this, we will make use of the followings.
\begin{itemize}
\item If $\w_2\neq 0$, since $\w_1^T\w_2=0$, $\max_{\s\in\paf} \w_1^T\s=0$.
\item Assume $\w_2\neq 0$. Then $\w_2=\alpha\s(\w_2)$ for some $\alpha>0$ and $\s(\w_2)\in\paf$.
\end{itemize}
From convexity, for any $1>\eps>0$, $\eps\eps_0\|\w\|\geq f(\eps\w+\x_0)-f(\x_0)$. Now, using Proposition \ref{prop:F3T} with $\delta\rightarrow0$, we obtain,
\begin{align}
\eps_0\|\w\|\geq\lim_{\eps\rightarrow 0} \frac{f(\eps\w+\x_0)-f(\x_0)}{\eps}&= \sup_{\s\in\paf}\w^T\s\nn\\
&\geq  \w^T\s(\w_2)=\w_1^T\s(\w_2)+\w_2^T\s(\w_2)\nn\\
&=\|\w_2\|\|\s(\w_2)\|\geq \|\w_2\|R_{min}
\end{align}
This gives, $\frac{\|\w_2\|}{\|\w\|}\leq\frac{\eps_0}{R_{min}}$. Equivalently, for a unit size $\w$, $\|\w_2\|\leq \frac{\eps_0}{R_{min}}$.

What remains is to estimate the Gaussian width of $\Tc_f(\x_0,\eps_0)\cap \Bc^{n-1}$. Let $\g\sim\Nn(0,\Iden_n)$. $\w_1,\w_2$ still denote the projection of $\w$ onto $\Tc_f(\x_0)$ and $\text{cone}(\paf)$ respectively.
\begin{align}
\omega(\Tc_f(\x_0,\eps_0)\cap \Bc^{n-1})&=\E[\sup_{\w\in \Tc_f(\x_0,\eps_0)\cap \Bc^{n-1} }\w^T\g]\\
&\leq \E[\sup_{\w\in \Tc_f(\x_0,\eps_0)\cap \Bc^{n-1} }\w_1^T\g]+\E[\sup_{\w\in \Tc_f(\x_0,\eps_0)\cap \Bc^{n-1}}\w_2^T\g]
\end{align}
Observe that, for $\w\in\Tc_f(\x_0,\eps_0)\cap\Bc^{n-1}$, $\|\w_2\|\leq \frac{\eps_0}{R_{min}}$,
\beq
\E[\sup_{\w\in \Tc_f(\x_0,\eps_0)\cap \Bc^{n-1}}\w_2^T\g]\leq \E[\sup_{\w\in \Tc_f(\x_0,\eps_0)\cap \Bc^{n-1}}\|\w_2\|\|\g\|]\leq \frac{\eps_0}{R_{min}}\E\|\g\|\leq \frac{\eps_0\sqrt{n}}{R_{min}}
\eeq
For $\w_1$, we have $\w_1\in\Tc_f(\x_0)$ and $\|\w_1\|\leq \|\w\|\leq 1$ which gives,
\beq
 \E[\sup_{\w\in \Tc_f(\x_0,\eps_0)\cap \Bc^{n-1} }\w_1^T\g]\leq  \E[\sup_{\w'\in \Tc_f(\x_0)\cap \Bc^{n-1} }{\w'}^T\g]=\omega(\Tc_f(\x_0)\cap \Bc^{n-1})
\eeq
Combining these individual bounds, we find,
\beq
\omega(\Tc_f(\x_0,\eps_0)\cap \Bc^{n-1})\leq \omega(\Tc_f(\x_0)\cap \Bc^{n-1})+\frac{\eps_0\sqrt{n}}{R_{min}}
\eeq

\end{proof}

\begin{lem} \label{wide lem}Let $\Tc_f(\x_0,\eps_0)$ denote the widened cone defined in \eqref{wide cone} and consider the exact same setup in Lemma \ref{widen cone}. Fix $\eps_1>0$. Let $\A\in\R^{m\times n}$ have i.i.d. standard normal entries. Then, whenever,
\beq
\gamma(m,f,\eps_0,\eps_1):=\sqrt{m-1}-\sqrt{\Delxf}-\frac{\eps_0\sqrt{n}}{R_{min}}-\eps_1>0
\eeq
we have,
\beq
\Pro(\min_{\vb\in \Tc_f(\x_0,\eps_0)\cap \Bc^{n-1}}\|\A\vb\|\geq \eps_1)\geq 1-2\exp(-\frac{1}{2}\gamma(m,f,\eps_0,\eps_1)^2)
\eeq

\end{lem}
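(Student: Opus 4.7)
The plan is to derive this as a direct consequence of Gordon's ``escape through a mesh'' lemma, after bounding the Gaussian width of the widened cone by combining Lemma \ref{widen cone} with a standard Gaussian-width/statistical-dimension inequality.

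First, I would record the three ingredients that will be combined. (i) Gordon's escape through a mesh lemma (a corollary of Lemma \ref{lemma:Gordon}): if $S\subset \Sc^{n-1}$ and $\A\in\R^{m\times n}$ has i.i.d.\ standard normal entries, then for every $t>0$,
\[
\Pro\!\left(\min_{\vb\in S}\|\A\vb\|\;\geq\;\lambda_m-\omega(S)-t\right)\;\geq\;1-\exp(-t^2/2),
\]
where $\lambda_m=\E\|\g\|\geq \sqrt{m-1}$ for $\g\sim\Nn(0,\Iden_m)$. (ii) Lemma \ref{widen cone}, which gives
\[
\omega\!\left(\Tc_f(\x_0,\eps_0)\cap\Bc^{n-1}\right)\;\leq\;\omega\!\left(\Tc_f(\x_0)\cap\Bc^{n-1}\right)+\frac{\eps_0\sqrt{n}}{R_{min}}.
\]
(iii) The standard cone-width inequality: for any closed convex cone $K$, $\omega(K\cap\Bc^{n-1})\leq \sqrt{\E[\dt^2(\h,K^\circ)]}$ with $\h\sim\Nn(0,\Iden_n)$. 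Applied with $K=\Tc_f(\x_0)$, Lemma \ref{tsg} gives $K^\circ=\text{cone}(\paf)$, so $\omega(\Tc_f(\x_0)\cap\Bc^{n-1})\leq\sqrt{\Delxf}$.

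Next I would chain the three ingredients. Combining (ii) and (iii) yields the Gaussian-width estimate
\[
\omega\!\left(\Tc_f(\x_0,\eps_0)\cap\Bc^{n-1}\right)\;\leq\;\sqrt{\Delxf}+\frac{\eps_0\sqrt{n}}{R_{min}}.
\]
Note that $\Tc_f(\x_0,\eps_0)$ is a closed cone, so the minimum of $\|\A\vb\|$ over its unit-norm elements coincides with the corresponding minimum in the lemma statement, and the supremum defining $\omega(\cdot)$ over the ball is attained on the sphere. Now apply ingredient (i) to $S=\Tc_f(\x_0,\eps_0)\cap\Sc^{n-1}$ with $t:=\gamma(m,f,\eps_0,\eps_1)>0$ (which is legitimate by the hypothesis). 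Since
\[
\lambda_m-\omega(S)-t\;\geq\;\sqrt{m-1}-\sqrt{\Delxf}-\frac{\eps_0\sqrt{n}}{R_{min}}-\gamma\;=\;\eps_1,
\]
we obtain the required lower bound $\min_{\vb}\|\A\vb\|\geq \eps_1$ with probability at least $1-\exp(-\gamma^2/2)$. The extra factor of $2$ in the claimed probability is harmless slack, obtained either by absorbing a concentration bound for $\|\g\|$ about $\sqrt{m-1}$ or by applying the Lipschitz-concentration form of escape through the mesh (the function $\A\mapsto \min_{\vb\in S}\|\A\vb\|$ is $1$-Lipschitz in the Frobenius norm of $\A$, and $\E[\min_{\vb\in S}\|\A\vb\|]\geq \sqrt{m-1}-\omega(S)$ by Gordon's inequality), yielding the symmetric tail bound.

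The main obstacle is really conceptual rather than computational: one must verify that $\Tc_f(\x_0,\eps_0)$ behaves well enough for ingredient (iii) to apply after widening. That is precisely why Lemma \ref{widen cone} is needed --- it quantifies how the Gaussian width of the widened cone degrades, via the geometry of $\paf$ (specifically through $R_{min}$). Once that lemma is in hand, the remaining steps are a routine assembly of Gordon's escape-through-the-mesh lemma with the cone-width bound controlled by $\Delxf$, followed by a one-line arithmetic.
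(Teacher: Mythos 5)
Your proposal is correct and follows essentially the same route as the paper: Gordon's comparison bound $\E[\min_{\vb}\|\A\vb\|]\geq\sqrt{m-1}-\omega(\cdot)$, the width estimate of Lemma \ref{widen cone}, the bound $\omega(\Tc_f(\x_0)\cap\Bc^{n-1})=\E\|\bu(\g,\Tc_f(\x_0))\|\leq\sqrt{\Delxf}$ via Moreau decomposition and Jensen, and Lipschitz concentration of $\A\mapsto\min_{\vb}\|\A\vb\|$ in Frobenius norm (the paper's two-sided concentration is exactly the source of the factor $2$ you flag as slack).
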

\begin{proof} Our proof will follow the same lines as the proof of Corollary 3.3 of Chandrasekaran et al. \cite{Cha}. For this proof, we will make use of the following lemma of Gordon \cite{Gor} (Corollary 1.2).
\begin{propo} Let $\Cc\in\R^n$ be a closed and convex subset of $\Bc^{n-1}$. Then,
\beq
\E[\min_{\vb\in\Cc}\|\A\vb\|]\geq\sqrt{m-1}-\omega(\Cc)
\eeq
\end{propo}
Pick $\Cc=\Tc_f(\x_0,\eps_0)\cap \Bc^{n-1}$ in the above proposition. Combined with Lemma \ref{widen cone}, this gives,
\beq
\E[\min_{\vb\in \Tc_f(\x_0,\eps_0)\cap \Bc^{n-1}}\|\A\vb\|]\geq\sqrt{m-1}-\omega(\Tc_f(\x_0))-\frac{\eps_0\sqrt{n}}{R_{min}}\label{eqeps1}
\eeq
Following \cite{Cha}, the function $\min_{\vb\in \Tc_f(\x_0,\eps_0)\cap \Bc^{n-1}}\|\A\vb\|$ is $1$-Lipschitz function of $\A$ in Frobenius norm. Using Lemma \ref{fact:lipIneq}, for $\eps_1$ smaller than the right hand side of \eqref{eqeps1}, we find,
\beq
\Pro(\min_{\vb\in \Tc_f(\x_0,\eps_0)\cap \Bc^{n-1}}\|\A\vb\|\geq \eps_1)\geq 1-2\exp(-\frac{1}{2}(\sqrt{m-1}-\omega(\Tc_f(\x_0)\cap \Bc^{n-1})-\frac{\eps_0\sqrt{n}}{R_{min}}-\eps_1)^2)
\eeq

To conclude, we will use $\omega(\Tc_f(\x_0)\cap \Bc^{n-1})\leq \Delxf$. To see this, applying Moreau's decomposition theorem (Fact \ref{more}), observe that for a closed and convex cone $\Kc$ and an arbitrary vector $\g$,
\beq
\|\bu(\g,\Kc)\|=\sup_{\vb\in \Kc\cap \Bc^{n-1}}\vb^T\g
\eeq
Picking $\Kc=\Tc_f(\x_0)$ and $\g\sim\Nn(0,\Iden_n)$,
\beq
\omega(\Tc_f(\x_0)\cap \Bc^{n-1})=\E[\|\bu(\g,\Tc_f(\x_0))\|]\leq\sqrt{ \E[\|\bu(\g,\Tc_f(\x_0))\|^2]}=\Delxf
\eeq
\end{proof}

\end{document}